\pdfoutput=1
\documentclass[draft]{lmcs}
\usepackage{preamble}
\usepackage{jon-tikz}
\usepackage{iblock}
\usepackage{macros}
\usepackage{categories}

\addbibresource{refs.bib}
\addbibresource{temp-refs.bib}

\begin{document}

\author{Daniel Gratzer\lmcsorcid{0000-0003-1944-0789}}
\address{Aarhus University}
\email{gratzer@cs.au.dk}

\author{Jonathan Weinberger\lmcsorcid{0000-0003-4701-3207}}
\address{Chapman University}
\email{jweinberger@chapman.edu}

\author{Ulrik Buchholtz\lmcsorcid{0000-0002-5944-6838}}
\address{University of Nottingham}
\email{ulrik.buchholtz@nottingham.ac.uk}

\title{The Yoneda embedding in simplicial type theory}

\begin{abstract}
  \textcite{riehl:2017} introduced \emph{simplicial type theory} (\STT), a variant of homotopy type
  theory which aimed to study not just homotopy theory, but its fusion with category theory:
  $\prn{\infty,1}$-category theory. While notoriously technical, manipulating $\infty$-categories in
  simplicial type theory is often easier than working with ordinary categories, with the type theory
  handling infinite stacks of coherences in the background. We capitalize on recent work by
  \textcite{gratzer:2024} defining the $\prn{\infty,1}$-category of $\infty$-groupoids in \STT{} to
  define presheaf categories within \STT{} and systematically develop their theory. In particular,
  we construct the Yoneda embedding, prove the universal property of presheaf categories, refine the
  theory of adjunctions in \STT{}, introduce the theory of Kan extensions, and prove Quillen's
  Theorem A. In addition to a large amount of category theory in \STT{}, we offer substantial
  evidence that \STT{} can be used to produce difficult results in $\infty$-category theory at a
  fraction of the complexity.
\end{abstract}

\maketitle

\begin{center}
  \emph{Dedicated to the dear memory of Thomas Streicher (1958--2025)}
\end{center}

\section{Introduction}

\textcite{russell:1919} famously described two styles of formalizing mathematics as the difference
between \emph{theft} and \emph{honest toil}. Both approaches can be seen in the present use of
dependent type theory. Honest toil involves proceeding \emph{analytically}: treating types as
basic objects equivalent to sets and defining and reasoning about objects like the real numbers,
groups, and topological spaces as one would ordinarily. This is what is done in \eg{}, the Coq proof
of the Odd Order Theorem~\parencite{gonthier:2013}. The more expeditious route of theft involves
treating type theory as a bespoke \emph{synthetic} language for a particular kind of mathematical
object and postulating their basic properties. This narrows the scope of type theory but, by the
same token, makes proofs about those particular objects far more concise. For instance,
\emph{homotopy type theory} (\HoTT{})~\parencite{hottbook} postulates various axioms that ensure that types
behave like spaces (up to homotopy), making it possible to prove theorems from algebraic topology
without ever introducing an explicit description of a space. In reality, the synthetic approach is
less akin to theft than a loan; one pays for the customized type theory
with a semantic model that interprets types as the intended objects and validates the additional
axioms.

In this work, we embrace the synthetic methodology to use type theory to study category
theory. In particular, we add various axioms to homotopy type theory in order to construct a system
where \HoTT{}'s slogan ``all types are spaces and all functions are continuous'' is replaced by
``(some) types are ($\infty$-)categories and all functions are functors''.\footnote{%
  In this paper, by $\infty$-category we mean $\prn{\infty,1}$-category.}
This extension of
type theory is called \emph{simplicial} type theory (\STT{}) and was introduced by
\textcite{riehl:2017}.

While knowledge of $\infty$-categories is not necessary to use our theory, rough intuition for them
is helpful for understanding \STT{}. We therefore recall the following fuzzy definition. An
$\infty$-category $C$ is a collection of objects with a \emph{space} of arrows between objects $c$
and $d$, $\Hom{c}{d}$, rather than a set, equipped with a continuous composition operation and assignment
of identity arrows. Crucially, the composition operation need only be associative and unital up to
homotopy, but with the constraint that those homotopies themselves satisfy \emph{coherence} laws in
the form of additional homotopies, and so on with coherences between coherences, \etc{} As a loose
analogy, just as a monoidal category relaxes monoids by allowing $\otimes$ to be associative up to
isomorphisms satisfying certain coherence equations, $\infty$-categories weaken ordinary categories
to allow for the category laws to only hold up to (infinitely coherent) isomorphisms.

Remarkably, essentially every theorem one might hope for of ordinary categories holds for
$\infty$-categories.\footnote{At least, as one of our reviewers remarked, provided one correctly calibrates one's hopes.} However, the
proofs are vastly more complex as they work with \emph{models} of $\infty$-categories (tools used to
organize and manage the tower of coherences~\parencite{bergner:2018}). The goal of \STT{} is to use
type theory to hide coherences from the user and to allow for proofs that are no more difficult than
the classical arguments for $1$-categories.

In this work, we provide substantial evidence of this hypothesis by developing a large swathe of
category theory---several of the main results of \emph{Categories for the Working
  Mathematician}~\parencite{maclane:1978}---purely within \STT{}.

\subsection{Simplicial type theory}

To construct a type theory for synthetic category theory, one may hope to interpret type theory into
the category of categories ($\infty$ or otherwise) to ensure that types realize categories. However,
the category $\CAT$ of small categories is too poorly behaved to form a model of Martin-L{\"o}f type
theory (\MLTT{}). Instead, \textcite{riehl:2017} enlarge $\CAT$ and embed it as a reflective
subcategory in the ($\infty$-)presheaf category on the simplex category $\PSH{\SIMP}$ which is rich
enough to model \HoTT{}. \STT{} then axiomatizes some of $\PSH{\SIMP}$ to isolate $\CAT$ as a
reflective subuniverse within the type theory~\parencite{rijke:2020}.

We will introduce the full suite of additions in Section~\ref{sec:prelims} (collected in Appendix~\ref{sec:axioms}
for convenience), but the most important among them is the postulated \emph{interval type}
$\Int : \Uni[0]$. We further assume that $\Int$ is a bounded linear order with endpoints
$0,1 : \Int$. Intuitively, $\Int$ is meant to capture the category $\brc{0 \to 1}$---it is
interpreted as such in $\PSH{\SIMP}$---and we may use this to define and probe the type of
\emph{synthetic morphisms} in an arbitrary type $X$: an arrow in $X$ corresponds to an ordinary
function $\Int \to X$ with evaluation at $0,1$ yielding the domain and codomain. For instance, the
identity arrow at $x : X$ is given by $\lambda \_.\,x$.

However, just as the intended model $\PSH{\SIMP}$ is strictly larger than $\CAT$, not all types in \STT{}
faithfully model categories. In particular, while one is always able to construct identity
morphisms, not all types enjoy a composition operator. Remarkably, however, composition operators
are unique when they exist and their existence for a type $X$ is captured by a relatively short
proposition (Definition~\ref{def:prelims:category}). With a composition operation for $X$ to hand, we can
define the type of isomorphisms in $X$ and we define a category to be a type where (1) the
composition operation exists uniquely up to homotopy, and (2) the type of isomorphisms in $X$ is equivalent to the identity type $=_X$.

\begin{rem}
  This last point hinges crucially on \emph{not} assuming the uniqueness of identity proofs lest we
  accidentally forbid any synthetic category from having an object with a non-trivial
  automorphism. However, by assuming isomorphisms and identify proofs coincide, we are able to
  leverage type theory's support for replacing equals by equals to seamlessly transport proofs along
  isomorphisms. This is why working with \HoTT{}/intensional type theory when formulating synthetic
  category theory proves more convenient than extensional type theory, even if one is unconcerned
  with $\infty$-categories.
\end{rem}

\subsection{Category theory inside of \texorpdfstring{\STT{}}{STT}}

While some recent work has investigated \STT{} for its applications to programming
languages~\parencite{weaver:2020,gratzer:2024,weaver:phd}, the majority of work on simplicial type theory has
focused on proving results from category theory inside of type
theory~\parencite{riehl:2017,riehl:2023,bardomiano:2021,weinberger:phd,buchholtz:2023,weinberger:twosided:2024,weinberger:sums:2024}.
To this end, the theory of adjunctions, discrete and Grothendieck fibrations, and (co)limits have
been introduced and studied within simplicial type theory. Some of these results, \eg{}, a
fibrational Yoneda lemma~\parencite{riehl:2017}, were subsequently
mechanized~\parencite{kudasov:2024}.

Until recently, however, there were no closed types in \STT{} which represented non-trivial
categories. As a result, while an excellent definition of adjunctions is presented by
\textcite{riehl:2017}, no examples can be given. In previous work, we changed this by
extending \STT{} to construct $\Space$, the category of $\mathcal{S}$paces, which is the homotopical analog of $\SET$~\parencite{gratzer:2024}.
Objects of $\Space$ are elements of $\Uni[0]$ that encode $\infty$-groupoids and morphisms in
$\Space$ correspond to functions thereof. \emph{Op. cit.} uses $\Space$ as a building block to
recover algebraic categories (groups, rings) as well as other examples (posets, the
simplex category, \etc{}).

Our extension of \STT{} employed various \emph{modalities} on top of \HoTT{} to construct $\Space$. Here we take $\Space$ wholesale, but some of the modalities we used are still critical for stating natural theorems
in category theory. Accordingly, we also work within a modal extension of \HoTT{} based on
\MTT{}~\parencite{gratzer:2020} within this paper.

\subsection{Contributions}
We revisit the basic category theory in light of the construction of $\Space$ and show that the
majority of classical results one encounters in category theory are now within reach of simplicial
type theory. For the first time, we show that \STT{} can be used to prove vital theorems in
$\infty$-category theory without recourse to complex models.  Many of these theorems
(\eg{}, fully-faithful essentially surjective functors are equivalences) do not explicitly mention
$\Space$, but crucially rely on the reasoning principles enabled by $\Space$. We prove two
workhorse results from presheaf categories $\PSH{C}$:
\begin{itemize}
\item We construct a fully-faithful function $\Yo : C \to \PSH{C}$.
\item We prove that $\PSH{C}$ is the ``free cocompletion of $C$''.
\end{itemize}
The key technical innovation for these is the \emph{twisted arrow category}, which we integrate into
\STT{} as a modality. We are then able to deduce various classical results, \eg{}:
\begin{itemize}
\item that pointwise invertible maps in $C \to D$ are invertible;
\item that pointwise left adjoints are left adjoints;
\item that (co)limits are computed pointwise in $C \to D$;
\item the theory and existence of pointwise Kan extensions;
\item Quillen's theorem A;
\item the properness of cocartesian fibrations.
\end{itemize}

The synthetic approach yields concise proofs for many of these theorems compared with classical
expositions in $1$-category theory, but our proofs apply to $\infty$-categories as well and there the
improvements are far more radical: it takes hundreds of pages for \textcite{lurie:2009} to prove
that $\Yo$ is fully-faithful and the proof that pointwise natural transformations are isomorphisms
takes nearly five pages of effort by \textcite{cisinski:2019}. By dividing work between a
construction within \STT{} and the already-existing model of \STT{}, we are able to avoid many of
these technicalities and give proofs more familiar to $1$-category theorists. In particular, we show
that just as homotopy type theory allowed type theorists to produce new arguments in algebraic
topology, simplicial type theory enables type theorists to do the same with $\infty$-category
theory.

\begin{rem}
  Given that \STT{} extends \HoTT{} with a number of axioms, it is natural to ask whether these axioms
  are \emph{complete} in any sense. Our present suite of axioms is not complete for the intended
  models of simplicial objects in an $\infty$-topos (though they are sound) but this is neither
  surprising nor undesirable: \HoTT{} itself is not complete for its intended models ($\infty$-topoi)
  and its exotic models are a source of considerable interest. Similarly, we expect \STT{} to have
  interesting exotic models and cannot reasonably hope for a finite set of axioms to be complete for
  standard models. What is far more important is whether these axioms suffice to derive the standard
  results in category theory, an empirical rather than a mathematical question. Indeed, in
  related synthetic approaches to domain theory~\parencite{hyland:1991}, differential
  geometry~\parencite{kock:2006}, and algebraic geometry~\parencite{cherubini:2023}, the precise
  axioms arose over the course of multiple years and several iterations. To this end, we view our
  results as providing firm evidence towards the expressivity of this axiom set.
\end{rem}

\subsection{Organization}
In Section~\ref{sec:prelims} we review the highlights of the basis of this work: homotopy type
theory, basic simplicial type theory, modal homotopy type theory, and their synthesis: \STT{}. In
Section~\ref{sec:tw}, we study the \emph{twisted arrow} category and use it to construct the Yoneda
embedding. We prove several increasingly sophisticated versions of the Yoneda lemma and conclude
with a fully functorial version (Theorem~\ref{thm:tw:strong-yoneda}). In Section~\ref{sec:adj} we
put the Yoneda lemma to work to revisit the theory of adjunctions given by \textcite{riehl:2017}. We
develop several tools for constructing adjunctions and use them to give the first non-trivial
examples of adjunctions in \STT{}. We also use this machinery to show that $\PSH{C}$ is the free
cocompletion of $C$ (Theorem~\ref{thm:adj:cc}). In Section~\ref{sec:kan} we develop the theory of
Kan extensions in \STT{} and prove several vital results: the existence of pointwise Kan extensions
(Theorem~\ref{thm:kan:left-kan-exists}), Quillen's theorem A (Theorem~\ref{thm:kan:theorem-a}), and
the properness of cocartesian maps (Theorem~\ref{thm:kan:cocartesian-is-proper}). Our proof of the
last fact is particularly notable, as our use of type theory led us to a far simpler proof than
those we are aware of in the literature.

\section*{Acknowledgments}
For interesting and helpful discussions and their comments around the material of this work we
would like to thank
Mathieu Anel,
Carlo Angiuli,
Steve Awodey,
Denis-Charles Cisinski,
Bastiaan Cnossen,
Nicola Gambino,
Andr{\'e} Joyal,
Emily Riehl,
Mike Shulman,
Sam Staton,
Jonathan Sterling,
Thomas Streicher,
and Dominic Verity.
We also thank the anonymous reviewers for feedback and careful reading.

Jonathan Weinberger is grateful to the Fletcher Jones Foundation for their generous financial
support and fellowship at Chapman University's Fowler School of Engineering. He also thanks the US
Army Research Office for the support of some stages of this work under MURI Grant
W911NF-20-1-0082, hosted through the Department of Mathematics at Johns Hopkins University.

\section{Modal and simplicial type theory}
\label{sec:prelims}

In this paper we take \STT{} largely for granted and focus on working within the theory. However, to
make this paper more self-contained, we devote this section to carefully explaining the novel
constructs of modal homotopy type theory and the axioms supplementing it which form simplicial type
theory.

\subsection{Homotopy type theory}
We begin by recalling the basic concepts and notation from homotopy type theory we use in this
paper. The canonical reference is the \HoTT{} book~\parencite{hottbook}. We work within intensional
Martin-L{\"of} type theory and note how \HoTT{} extends this.

\begin{nota}
  We write $a =_A b$ for the identity type (often suppressing $A$). Given $p : a =_A b$ and
  $B : A \to \Uni$, we write $p_!$ for the map $B\prn{a} \to B\prn{b}$.
\end{nota}

\begin{defi}
  We say that a function $f : A \to B$ is an equivalence if $f$ admits both a left and a right
  inverse:
  \[
    \IsEquiv\prn{f} =
    \Sum{g,h : B \to A} \prn{g \circ f = \ArrId{}} \times \prn{f \circ h = \ArrId{}}
  \]
  We write $A \Equiv B$ for the sum $\Sum{f : A \to B} \IsEquiv\prn{f}$.
\end{defi}
\HoTT{} is an extension of intensional type theory with a hierarchy of universes satisfying the
univalence axiom:
\[
  \Con{univ}_i
  :
  \Prod{A, B : \Uni[i]} \IsEquiv\prn{
    \lambda p.\,\prn{p_!, \cdots} : A =_{\Uni[i]} B \to A \Equiv B
  }
\]
We shall suppress the $i$ in $\Con{univ}_i$ and $\Uni[i]$ and ignore size issues unless they are
relevant. Univalence produces a great number of paths in $\Uni$ that are distinct from $\Refl$. We
are often interested in types that are trivial, have only trivial paths, or trivial paths between
paths, \etc{} These conditions are organized into a family of predicates referred to as the truncation
level ($-2,-1,0,\ldots$) of a type.
We will only use the first three levels, stating that a type is
contractible or a (homotopy) proposition or set:
\begin{mathparpagebreakable}
  \IsContr\prn{A} = \Sum{a : A} \Prod{b : A} a = b
  \and
  \IsProp\prn{A} = \Prod{a,b : A} \IsContr\prn{a = b}
  \and
  \IsSet\prn{A} = \Prod{a,b : A} \IsProp\prn{a = b}
\end{mathparpagebreakable}

\begin{prop}[\textcite{shulman:2019}, see also \textcite{riehl:2024}]
  \label{thm:prelims:hott-models}
  All type-theoretic model topoi
  \textup(and, therefore, Grothendieck $\infty$-topoi\textup) model \HoTT{}.
\end{prop}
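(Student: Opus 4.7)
The plan is to defer the bulk of the work to the cited theorems of \textcite{shulman:2019}, while explaining the key conceptual ingredients and highlighting where the main technical obstacles lie.

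First, I would recall the definition of a type-theoretic model topos: a combinatorial, simplicially enriched model category that is right proper, in which cofibrations are monomorphisms, and whose underlying $\infty$-category is an $\infty$-topos, equipped with the additional strictification data (Moore path objects, strict pullback-stable weak factorization systems, etc.) that allow the construction of a split comprehension category. The first step of the proof is to verify that any such structure gives rise to a model of intensional Martin-L{\"o}f type theory with $\Pi$, $\Sigma$, and identity types. This is handled by interpreting contexts as fibrant objects, types as fibrations, and terms as sections, with the non-trivial coherence issues resolved using local universes or Voevodsky's strictification, depending on presentation.

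Second, I would construct a tower of universes $\Uni[i]$ by fixing a sufficiently large inaccessible cardinal $\kappa$ and taking the classifier for $\kappa$-small fibrations; fibrancy of this classifier follows from the model-categorical machinery of \textcite{shulman:2019}. Univalence is then verified by showing that the classifier represents the equivalences in the appropriate sense, which reduces to a small-object computation with the interval.

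Third, to obtain the ``therefore'' clause, I would invoke Shulman's presentation theorem: every Grothendieck $\infty$-topos arises as the underlying $\infty$-category of some type-theoretic model topos (via injective model structures on simplicial presheaves and left Bousfield localization). Combined with the first two steps, this delivers a model of \HoTT{} in every Grothendieck $\infty$-topos.

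The main obstacle, and the reason this proposition is really a citation rather than a proof, is the strictification problem: syntactic type theory demands that substitution be strictly functorial and that type formers commute with substitution on the nose, whereas the semantic operations in a model topos are only coherent up to homotopy. Overcoming this requires either the local universes construction or the strictification of fibrations via Moore paths, both of which are highly technical. Since the body of our paper works entirely \emph{within} \STT{} rather than with its models, I would treat this proposition as a black-box soundness result and cite \textcite{shulman:2019} and \textcite{riehl:2024} for the full argument.
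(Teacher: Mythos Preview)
The paper gives no proof of this proposition at all: it is stated purely as a citation to \textcite{shulman:2019} and \textcite{riehl:2024}, with no accompanying argument or sketch. Your final paragraph arrives at exactly this conclusion---treat it as a black-box soundness result and cite the references---which is precisely what the paper does. The preceding paragraphs of your proposal, outlining the strictification, universe construction, and presentation theorem, are accurate as a summary of what goes into Shulman's argument, but they go well beyond anything the paper itself offers; in the paper this is a one-line attribution, not a proof.
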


We shall also have occasion to use various \emph{higher inductives types} (HITs). The semantics of
HITs is complex and not directly addressed by the above result~\parencite{lumsdaine:2019}. In
particular, while \textcite{shulman:2019} shows that the above model supports all higher inductive
types, he does not show that universes are strictly closed under these constructions. While it is
work-in-progress to obtain this result, it is easy to show that universes are \emph{weakly} closed
under these constructions. For instance, there exists a type $D : \Uni[0]$ such that
$D \Equiv A \Pushout{C} B$ whenever $A,B,C : \Uni[0]$.
Accordingly, we shall assume that our
universes are closed under higher inductive types, albeit only with propositional
$\beta$-rules.

\subsection{Simplicial type theory}
With \HoTT{} to hand, we turn to simplicial type theory. This is an extension of \HoTT{} by a handful
of axioms that allow us to treat (certain) types as $(\infty,1$)-categories, henceforth just referred to as categories. We will consequently drop the $(\infty,1)$- or $\infty$-prefix everywhere. First and most fundamentally, we
add the following:

\begin{restatable}{axiom}{intaxiom}
  \label{ax:int}
  There is a set\/ $\Int$ that forms a bounded distributive lattice $\prn{0,1,\lor,\land}$ such
  that $\Prod{i,j : \Int} i \le j \lor j \le i$ holds.
\end{restatable}
We view $\Int$ as a \emph{directed interval}, and \textcite{riehl:2017} use this to equip every type
with a notion of synthetic morphism:
\begin{defi}
  A synthetic morphism $f : \Hom[X]{x}{y}$ where $x,y : X$ is a function $f : \Int \to X$ together
  with propositional equalities $f\,0 =_X x$ and $f\,1 =_X y$.
\end{defi}

\begin{rem}
  In \textcite{riehl:2017}, the synthetic interval is defined as more primitive judgmental structure
  and $\Hom{x}{y}$ uses strict extension types. This yields more definitional equalities: $f\,0$ and
  $x$ would coincide definitionally when $f : \Hom{x}{y}$ (and similarly for $f\,1$ and $y$). However, the judgmental approach does not
  straightforwardly include $\Int$ as a normal type and its interactions with modalities
  (Section~\ref{sec:prelims:modal-stt}) are complex. For these reasons, we work with the simpler but less
  strict definition of $\Hom{x}{y}$. In a system where both are available these two notions are
  equivalent~\parencite{buchholtz:2023}.
\end{rem}

One can define the identity morphism $\ArrId{x} : x \to x$ as $\lambda \_.\,x$. Moreover, every
function $f : X \to Y$ automatically has an action on synthetic morphisms $\alpha : \Int \to X$ by
post-composition $f \circ \alpha : \Int \to Y$. In this case, we often write $f\prn{\alpha}$.

From $\Int$ we immediately obtain the $n$-cubes $\Int^n$ and from them we can isolate simplices
$\Delta^n$, boundaries $\partial \Delta^n$, and horns $\Lambda_k^n$. In particular, $\Delta^2 \to X$
represents an $2$-cell in $X$ witnessing the composite of two arrows,
and $\Lambda^2_1 \to X$ represents a pair of composable arrow (without a composite). We recall the definitions
of these types below:
\begin{mathparpagebreakable}
  \Delta^n = \Compr{\prn{i_1, \dots, i_n} : \Int^n}{i_1 \ge i_2 \ge \dots \ge i_n}
  \and
  \Lambda^2_1 = \Compr{\prn{i,j} : \Int^2}{i = 1 \lor j = 0}
\end{mathparpagebreakable}

\begin{nota}
  We write $i : \Delta^n$ ($0 \le i \le n$) as shorthand for the sequence of $i$ copies of $1$ followed
  by $0$: $\prn{1, 1, \dots, 0, \dots}$.
\end{nota}

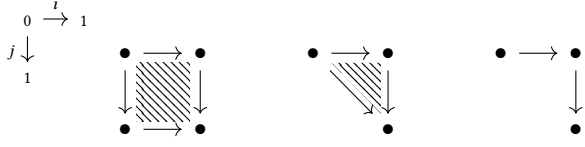
\begin{figure}
  \begin{tikzpicture}[diagram]
    \node (Ax00) {\tiny $0$};
    \node[right = 0.75cm of Ax00] (Ax10) {\tiny $1$};
    \node[below = 0.75cm of Ax00] (Ax01) {\tiny $1$};
    \path[->] (Ax00) edge node[above] {\tiny $i$} (Ax10);
    \path[->] (Ax00) edge node[left] {\tiny $j$} (Ax01);
    \SpliceDiagramSquare{
      nw/style = {below right = -0.3cm and 1.3cm of Ax01},
      height = 1cm,
      width = 1cm,
      nw = {\bullet},
      ne = {\bullet},
      sw = {\bullet},
      se = {\bullet},
    }
    \fill[pattern=north west lines]
    ([xshift=-0.1cm,yshift=0.1cm]nw.south east) --
    ([xshift=0.1cm,yshift=0.1cm]ne.south west) --
    ([xshift=0.1cm,yshift=-0.1cm]se.north west) --
    ([xshift=-0.1cm,yshift=-0.1cm]sw.north east) -- cycle;
    \node[right = 1.5cm of ne] (T0) {$\bullet$};
    \node[right = 1cm of T0] (T1) {$\bullet$};
    \node[below = 1cm of T1] (T2) {$\bullet$};
    \path[->] (T0) edge (T1);
    \path[->] (T0) edge (T2);
    \path[->] (T1) edge (T2);
    \fill[pattern=north west lines]
    ([yshift=0.1cm,]T0.south east) --
    ([yshift=0.1cm,xshift=0.15cm]T1.south west) --
    ([xshift=0.15cm]T2.north west) -- cycle;
    \node[right = 1.5cm of T1] (H0) {$\bullet$};
    \node[right = 1cm of H0] (H1) {$\bullet$};
    \node[below = 1cm of H1] (H2) {$\bullet$};
    \path[->] (H0) edge (H1);
    \path[->] (H1) edge (H2);
  \end{tikzpicture}
  \caption{Visualization of $\Int^2$, $\Delta^2$, and $\Lambda^2_1$.}
  \label{fig:prelims:viz}
\end{figure}

A map $f : \Delta^2 \to X$ is said to witness that the composite of $f\prn{-,0}$
followed by $f\prn{1,-}$ is $\lambda i.\,f\prn{i,i}$. We emphasize that this is \emph{data}; there
can be many distinct $f$'s witnessing the same composition as $X$ may have many non-equivalent
$2$-cells with the same boundary. By the same token however, it is not always the case that a pair of
composable morphisms $\Lambda^2_1 \to X$ extends to a composition datum $\Delta^2 \to X$. This is
precisely because not every type in \STT{} can be regarded as a category; even though we have
defined $\Hom[X]{x}{y}$ for every $X$, there is no \emph{a priori} way of \emph{composing} these morphisms.
Precategories are types for which all composites exist:
\begin{defi}
  A \emph{precategory} is a type $X$ satisfying the Segal condition:
  the inclusion $\Lambda^2_1 \to \Delta^2$ induces an equivalence $\IsEquiv\prn{X^{\Delta^2} \to X^{\Lambda^2_1}}$.
\end{defi}
Roughly, the Segal condition ensures that every pair of composable morphisms in $X$ extends
(uniquely) to a $2$-cell witnessing their composition and, in particular, there is an induced
composition function $\Hom{x}{y} \times \Hom{y}{z} \to \Hom{x}{z}$. Uniqueness automatically ensures
that this operation is associative and unital. The definition of a category refines this
slightly. In a precategory $X$ we are able to define the type of isomorphisms $x \cong y$ between
$x,y : X$ and so there are two potentially distinct types of evidence for $x$ and $y$ being
identical: $x =_X y$ and $x \cong_X y$. A category is a precategory for which these two types are
canonical equivalent.
\begin{defi}
  $\alpha : \Hom{x}{y}$ is an isomorphism ($\IsIso\prn{\alpha}$) if there exist
  $\beta_0,\beta_1 : \Hom{y}{x}$ such that $\beta_0 \circ f =
  \ArrId{}$,$f \circ \beta_1 = \ArrId{}$.\footnote{This is precisely the \HoTT{} equivalence but recast
    into synthetic morphisms.} We write $\Iso{x}{y}$ or $x \cong y$ for the subtype of isomorphisms.
\end{defi}
\begin{defi}
  \label{def:prelims:category}
    A precategory $C$ is a \emph{category} if it satisfies the Rezk condition:
    \[
      \Prod{\DeclVar{x,y}{}{C}} \IsEquiv\prn{\DeclVar{\mathsf{idtoiso}}{}{\prn{x=y} \to \Iso{x}{y}}}
    \]
    where $\mathsf{idtoiso}(\mathsf{refl}) := \ArrId{}$.
    If every morphism in $C$ is an isomorphism, then $C$ is a \emph{groupoid}.
\end{defi}
\begin{exa}
  $\Int,\Delta^n,\Int^n$ are all categories~\parencite{gratzer:2024}.
\end{exa}

\begin{lem}
  $C$ is a groupoid if and only if $\IsEquiv\prn{\DeclVar{C^{!_\Int}}{}{C \to C^\Int}}$
  \textup($C$ is \emph{$\Int$-null} \textup{\cite{rijke:2020})}.
\end{lem}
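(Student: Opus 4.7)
The plan is to factor the map in question through the subtype of invertible arrows and then apply two-out-of-three for equivalences. Write $C^\simeq$ for the subtype $\Sum{\alpha : C^\Int} \IsIso\prn{\alpha}$. Since $\IsIso$ is a proposition, the first projection $C^\simeq \to C^\Int$ is a subtype inclusion, and is therefore an equivalence precisely when every arrow in $C$ is an isomorphism, i.e., when $C$ is a groupoid. The map $C \to C^\Int$ sending $c$ to $\lambda \_.\,c$ factors through $C^\simeq$ via the assignment $c \mapsto \ArrId{c}$ (which is its own two-sided inverse); call the factored map $e : C \to C^\simeq$. It then suffices to show that $e$ is an equivalence for every category $C$.

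To establish this, I would first re-index $C^\simeq$ by endpoints. Unfolding the definition of $\Hom{x}{y}$ and contracting the two based-path spaces $\Sum{x : C}\prn{\alpha\,0 =_C x}$ and $\Sum{y : C}\prn{\alpha\,1 =_C y}$ yields
\[
  C^\simeq \Equiv \Sum{x,y : C} \Iso{x}{y}.
\]
Next, the Rezk condition of Definition~\ref{def:prelims:category} supplies equivalences $\prn{x =_C y} \Equiv \Iso{x}{y}$ sending $\Refl$ to $\ArrId{x}$, so
\[
  C^\simeq \Equiv \Sum{x,y : C} \prn{x =_C y} \Equiv C,
\]
the last step contracting the based-path space $\Sum{y : C} \prn{x =_C y}$. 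Tracing the composite shows that it sends $c$ to $\ArrId{c}$ and so agrees with $e$ up to homotopy, proving $e$ an equivalence.

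With $e$ established as an equivalence, two-out-of-three for equivalences identifies invertibility of $C \to C^\Int$ with that of the inclusion $C^\simeq \to C^\Int$, yielding both directions of the lemma. The main piece of bookkeeping is the first equivalence above: an arrow $\alpha : \Int \to C$ carries only propositional endpoint witnesses, so re-indexing $C^\simeq$ over its endpoints is a small \HoTT{} computation rather than a definitional rearrangement. I do not expect anything conceptual to block the argument beyond this.
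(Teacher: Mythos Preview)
The paper states this lemma without proof, so there is nothing to compare your argument against. Your proof is correct: factoring $C \to C^\Int$ through the subtype $C^\simeq$ of invertible arrows and identifying $e : C \to C^\simeq$ with the chain $C \simeq \Sum{x,y}(x=y) \simeq \Sum{x,y}\Iso{x}{y} \simeq C^\simeq$ via singleton contraction and the Rezk condition is the standard route. You are right that the only bookkeeping is the endpoint re-indexing, and you are also right to restrict attention to ``every category $C$''---as placed immediately after Definition~\ref{def:prelims:category}, the lemma presupposes $C$ is a category, so that $\IsIso$ and the Rezk condition are available.
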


\textcite{riehl:2017} develop the basic theory of these synthetic categories.
As noted above, every
function has an action on morphisms and \opcit{} shows that this action preserves compositions and
identities and therefore defines a functor. They also show that $C \to D$ is then a category
whenever $D$ is, and that synthetic morphisms $\Hom[D^C]{f}{g}$ are precisely natural
transformations. One can reformulate various classical categorical notions rather directly:
\begin{defiC}[\parencite{bardomiano:2021}]
  A natural transformation $\alpha : \Hom[C^I]{\Const\prn{c}}{F}$ witnesses $c$ as the limit of
  $F : C^I$ if $\alpha$ induces an equivalence $\Hom{c'}{c} \Equiv \Hom{\Const\prn{c'}}{F}$ for all $c'$.
\end{defiC}

\begin{defi}
  An adjunction between two categories $C,D$ consists of a pair of functions $f : C \to D$ and
  $g : D \to C$ with a natural isomorphism
  $\iota : \Prod{c,d} \Hom{f\prn{c}}{d} \Equiv \Hom{c}{g\prn{d}}$.
\end{defi}

While we have given a few examples of categories above, a notable type that is \emph{not} category
is the universe $\Uni$. Maps $A : \Int \to \Uni$ are too unstructured to compose and, in particular,
correspond neither to functions $A\prn{0} \to A\prn{1}$ nor $A\prn{1} \to A\prn{0}$ (consider
$\lambda i.\, i = 0$ or $\lambda i.\, i = 1$). In Section~\ref{sec:prelims:modal-stt}, we shall discuss the
subuniverse $\Space$ constructed previously~\parencite{gratzer:2024}, which is a category of groupoids whose
morphisms correspond to functions.  To properly situate this definition, we recall what it means for
$X : A \to \Uni$ to be covariant~\parencite{riehl:2017}, giving an assignment from morphisms
$\Hom{a_0}{a_1}$ to functions $X\prn{a_0} \to X\prn{a_1}$.

\begin{nota}
  Given $X : A \to \Uni$ we write $\TotalType{X}$ for $\Sum{a : A} X\prn{a}$.
\end{nota}

\begin{defi}
  A family $X : A \to \Uni$ is \emph{covariant}
  if for every $a : \Hom{a_0}{a_1}$ and $x_0 : X\prn{a_0}$,
  the following is contractible:
  \[
    \Con{Lift}\prn{a,x_0} =
    \Sum{x_1 : X\prn{a_1}}
    \Sum{x : \Hom{\prn{a_0,x_0}}{\prn{a_1,x_1}}}
    \Proj[1]\prn{x} =_{\Hom{a_0}{a_1}} a
  \]
  Here, $x$ is a morphism in $\TotalType{X}$. We further say the projection
  $\Sum{a : A} X\prn{a} \to A$ is covariant when $X$ is. For a general map $\pi : X \to A$ we write
  $X_a$ for $\Sum{x : X} \pi\prn{x} = a$ and say $\pi$ is covariant when $\lambda a.\,X_a$ is.
\end{defi}

Since $\Con{Lift}\prn{a,x_0}$ is contractible it has an inhabitant $x_1$. This yields a function
$a,x_0 \mapsto x_1$ which defines $a_! : X\prn{a_0} \to X\prn{a_1}$. The contractibility of
$\Con{Lift}\prn{a,x_0}$ ensures that these functions compose correctly, \etc{}

\begin{lem}
  \label{lem:prelims:covariant}
  A family $\DeclVar{X}{}{A \to \UU}$ is covariant if and only if the map
  $\bar{\pi} := \lambda p.\,\prn{p\prn{0}, \Proj[1] \circ p} : \TotalType X^\Int \to \TotalType X \times_A A^\Int$
  is an equivalence.
\end{lem}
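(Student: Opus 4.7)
The plan is to reduce both sides of the biconditional to the same contractibility statement by computing the fibers of $\bar{\pi}$. Since a map is an equivalence exactly when all of its fibers are contractible, I would fix a basepoint $\prn{\prn{a_0, x_0}, a} : \TotalType{X} \times_A A^\Int$, noting that the pullback condition forces $a\prn{0} = a_0$, and then compute $\bar{\pi}^{-1}\prn{\prn{a_0, x_0}, a}$.

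Unfolding the definition, this fiber consists of triples $\prn{p, e_0, e_1}$ where $p : \Int \to \TotalType{X}$, $e_0 : p\prn{0} = \prn{a_0, x_0}$, and $e_1 : \Proj[1] \circ p = a$. Setting $a_1 := a\prn{1}$ and transporting $\Proj[2]\prn{p\prn{1}}$ along $e_1$ evaluated at $1$ to obtain $x_1 : X\prn{a_1}$, I would then repackage this data as: an element $x_1 : X\prn{a_1}$ together with a morphism in $\TotalType{X}$ from $\prn{a_0, x_0}$ to $\prn{a_1, x_1}$ whose first projection equals $a$ in $\Hom{a_0}{a_1}$. This is manifestly $\Con{Lift}\prn{a, x_0}$, and the identification is natural in $a$ and $x_0$.

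With this fiberwise equivalence in hand, both directions follow at once. If $\bar{\pi}$ is an equivalence then each of its fibers is contractible, hence so is each $\Con{Lift}\prn{a, x_0}$, which is exactly covariance of $X$. Conversely, covariance gives contractibility of every $\Con{Lift}\prn{a, x_0}$, so every fiber of $\bar{\pi}$ is contractible, and therefore $\bar{\pi}$ is an equivalence.

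The main obstacle is a bookkeeping one rather than a conceptual one: matching the loose propositional equalities $p\prn{0} = \prn{a_0, x_0}$ and $\Proj[1] \circ p = a$ that appear in the fiber with the presentation inside $\Con{Lift}$, where the equality $\Proj[1]\prn{x} = a$ is taken in $\Hom{a_0}{a_1}$ and thus implicitly carries the boundary data of a $\Hom$-type. This repackaging is straightforward once one works consistently up to homotopy and uses that transport along $e_1$ preserves the endpoints; it constitutes essentially the only computational content of the proof.
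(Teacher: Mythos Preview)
The paper states this lemma without proof, presumably regarding it as a routine unfolding of definitions. Your approach is the natural one and is correct: identifying the fibers of $\bar{\pi}$ with $\Con{Lift}\prn{a, x_0}$ reduces the biconditional to the standard characterization of equivalences via contractible fibers. The bookkeeping you flag---reconciling the bare path $\Proj[1] \circ p = a$ in $A^\Int$ with the equality taken in $\Hom{a_0}{a_1}$, and absorbing the contractible singleton $\Sum{x_1 : X\prn{a_1}} p\prn{1} = \prn{a_1, x_1}$---is indeed the only content, and it goes through by iterated $\Sigma$-type reassociation and singleton contraction.
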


In Sections~\ref{sec:adj:pw} and \ref{sec:kan:quillen-a}, we shall briefly use a weakening of covariance:
\begin{defi}
  A family $X : A \to \Uni$ is cocartesian if
  ${\TotalType{X}}^\Int \to A^\Int \times_{A^{\brc{1}}} {\TotalType{X}}^{\brc{1}}$ is a right
  adjoint $\ell \Adjoint \bar{\pi}$ such that $\bar{\pi} \circ \ell = \ArrId{}$.
\end{defi}
One can give an equivalent characterization in terms of cocartesian morphisms and show that \eg{},
every morphism in $D$ can be factored as a cocartesian morphism followed by a vertical morphism:
\begin{thm}[\textcite{buchholtz:2023}]
  If a map $\pi : D \to C$ is cocartesian then for every $c : \Int \to C$, $x_0 : C_{c(0)}$, the
  category $\Con{Lift}\prn{c,x_0}$ has an initial object.
\end{thm}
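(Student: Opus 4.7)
The plan is to recognize $\Con{Lift}(c,x_0)$ as a fiber of $\bar\pi$ and exhibit an initial object using the left adjoint $\ell$ guaranteed by cocartesianness. Unpacking the hypothesis, $\pi : D \to C$ cocartesian gives a left adjoint $\ell \Adjoint \bar\pi$ with $\bar\pi \circ \ell = \ArrId{}$, where $\bar\pi : D^\Int \to C^\Int \times_C D$. The pair $(c,x_0)$ lies in the pullback $C^\Int \times_C D$, so $\ell(c,x_0) : \Int \to D$ is a morphism in $D$; the strict section condition $\bar\pi\circ\ell = \ArrId{}$ says $\pi \circ \ell(c,x_0) = c$ and $\ell(c,x_0)(0) = x_0$. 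Setting $x_1 := \ell(c,x_0)(1)$ then yields a candidate element of $\Con{Lift}(c,x_0)$.

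Next, I would identify $\Con{Lift}(c,x_0)$ with the fiber $\bar\pi^{-1}(c,x_0)$. By unfolding, a point of the fiber is a morphism $\tilde c : \Int \to D$ with $\bar\pi(\tilde c) = (c,x_0)$, which repackages exactly as an element $x_1 = \tilde c(1) : D_{c(1)}$ with a morphism $\tilde c : \Hom{x_0}{x_1}$ lifting $c$. Morphisms in this fiber between $\tilde c_0$ and $\tilde c_1$ correspond to $\alpha : \Hom[D^\Int]{\tilde c_0}{\tilde c_1}$ equipped with a homotopy $\bar\pi_*(\alpha) = \ArrId{(c,x_0)}$.

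Finally, I would show $\ell(c,x_0)$ is initial. For any $\tilde c \in \bar\pi^{-1}(c,x_0)$, the Hom space in the fiber is the fiber of
\[
  \bar\pi_* : \Hom[D^\Int]{\ell(c,x_0)}{\tilde c} \to \Hom[C^\Int \times_C D]{(c,x_0)}{(c,x_0)}
\]
at $\ArrId{(c,x_0)}$. The adjunction supplies an equivalence $\Hom[D^\Int]{\ell(c,x_0)}{\tilde c} \simeq \Hom{(c,x_0)}{\bar\pi(\tilde c)} = \Hom{(c,x_0)}{(c,x_0)}$, and since $\bar\pi \circ \ell = \ArrId{}$ strictifies the unit to the identity, this equivalence intertwines $\bar\pi_*$ with the identity. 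Hence its fiber at $\ArrId{(c,x_0)}$ is contractible, giving both existence and uniqueness of the comparison map from $\ell(c,x_0)$.

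The main technical obstacle is the third step: carefully showing that the adjunction isomorphism intertwines $\bar\pi_*$ on the left with the identity on the right. The strict condition $\bar\pi\circ\ell = \ArrId{}$ trivializes one triangle identity, reducing the check to naturality of the adjunction, but in a homotopy-invariant framework one must still track that the identifications compose coherently. A secondary, purely bookkeeping point is verifying that Hom-spaces in the subtype $\bar\pi^{-1}(c,x_0) \subseteq D^\Int$ really coincide with the fiber of $\bar\pi_*$ at the identity, which follows from the characterization of paths-over in a fiber.
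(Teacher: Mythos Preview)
The paper does not prove this statement; it is attributed to \textcite{buchholtz:2023} without argument. Your approach---recognising $\Con{Lift}(c,x_0)$ as the fiber of $\bar\pi$ and invoking the general fact that a left-adjoint-right-inverse produces initial objects in the fibers of the right adjoint---is exactly the standard proof and is correct.

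Two small remarks. First, you silently take $\bar\pi$ to be formed with evaluation at $0$, so that its fiber over $(c,x_0)$ is $\Con{Lift}(c,x_0)$; the paper's displayed definition of cocartesian writes $\brc{1}$ in the pullback, which appears to be a typo (the theorem statement and the earlier covariance lemma both use $0$), and your reading is the one that makes the theorem follow. Second, the only genuine content is precisely the point you flag: the hypothesis $\bar\pi\circ\ell=\ArrId{}$ must be read as asserting that the \emph{unit} of the adjunction is the identity, not merely that some unrelated homotopy exists. Under that reading the transposition equivalence $\Hom(\ell(c,x_0),\tilde c)\simeq\Hom((c,x_0),\bar\pi\tilde c)$ is literally $\bar\pi_*$, and the fiber over $\ArrId{(c,x_0)}$ is contractible as you say.
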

Dually, one can consider contravariant and cartesian families and fibrations.

Finally, we note that since categories and groupoids are defined by certain
\emph{orthogonality} conditions, by \textcite{rijke:2020} they define reflective subuniverses.
\begin{prop}
  There are idempotent monads $\Rezkify,\Grpdify$ such that, \eg{}, $\Rezkify X$ is a category and
  $C^{\Rezkify X} \Equiv C^X$ when $C$ is a category.
\end{prop}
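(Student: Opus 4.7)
The strategy is to reduce the statement to the general machinery of \textcite{rijke:2020} on localization at a family of maps. That result states: given any (small) family of maps $\brc{f_i : A_i \to B_i}_{i : I}$, the subuniverse of $f_i$-local types, i.e.\ types $X$ for which each $X^{f_i} : X^{B_i} \to X^{A_i}$ is an equivalence, forms a reflective subuniverse of $\Uni$. Write $L$ for its reflector; $L$ is then an idempotent monad and for any local type $Y$ we have a canonical equivalence $\prn{LX \to Y} \Equiv \prn{X \to Y}$.

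First I would present the Segal and Rezk conditions as localization data. The Segal condition is by definition locality at the single map $\Lambda^2_1 \hookrightarrow \Delta^2$. For the Rezk condition, the equivalence $\prn{x = y} \Equiv \Iso{x}{y}$ says that $X$ is local at the inclusion $\Delta^0 \hookrightarrow E$, where $E$ is the ``walking isomorphism'' type; more precisely, $E$ can be presented as a pushout / small colimit built from $\Int$ that has a (closed) inhabitant classifying isomorphisms, so that maps $E \to X$ are the same as isomorphisms in $X$, and the inclusion picks out the endpoint. A short unwinding shows that $X$-locality at this map is exactly the Rezk condition. Taking the two maps together and invoking the localization result yields the reflective subuniverse of categories; call its reflector $\Rezkify$. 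This is automatically an idempotent monad, and $\Rezkify X$ is a category by construction. Similarly, groupoids are local at the map $\Int \to \Delta^0$ (this is the displayed $\Int$-null condition), and localization produces $\Grpdify$.

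For the universal property $C^{\Rezkify X} \Equiv C^X$ when $C$ is a category, I would use the following soft argument. Because $C$ is local at each generating map $f : A \to B$, the exponential $C^{-}$ sends the pushouts defining these generators to products of equivalences, so $C^X$ is again local at every $f$ for any base type $X$ (that is, the class of local types is closed under exponentials by arbitrary types, since $\prn{C^X}^B \Equiv C^{X \times B}$ and similarly for $A$). Applied with $X$ and $\Rezkify X$, the reflector's universal property gives $\prn{\Rezkify X \to C} \Equiv \prn{X \to C}$, and the same reasoning upgrades this to the exponential equivalence $C^{\Rezkify X} \Equiv C^X$ once one notes both sides are categories and the map induced by the unit $X \to \Rezkify X$ is an equivalence on underlying types (checked pointwise via the universal property for $C$, which is itself local).

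The only real obstacle is presenting the Rezk condition as locality at a single map of types, since $E$ must be defined inside \STT{}. The cleanest route is to define $E$ as a small higher-inductive type (or pushout) built from $\Int$ and two copies of $\Delta^2$ encoding the two triangle identities of an isomorphism, so that $E \to X$ is equivalent to $\Sum{x,y:X} \Iso{x}{y}$; the inclusion of either endpoint then corresponds to $\Delta^0 \hookrightarrow E$, and locality at it is the Rezk equivalence. Once $E$ is in hand everything else is routine application of \parencite{rijke:2020}.
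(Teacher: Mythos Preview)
Your approach is essentially the same as the paper's: the paper simply remarks that since the Segal, Rezk, and $\Int$-null conditions are orthogonality conditions, the reflective subuniverse machinery of \textcite{rijke:2020} applies directly. Your proposal fleshes out this one-line justification with the explicit generating maps (including the walking-isomorphism type $E$ for the Rezk condition), which is exactly what one would do to make the citation precise; note though that the equivalence $C^{\Rezkify X} \Equiv C^X$ is immediate from the universal property of a reflector once $C$ is local, so your extra reasoning about closure of local types under exponentials is unnecessary.
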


\begin{prop}[\textcite{riehl:2017}]
  \label{thm:prelims:stt-models}
  When Theorem~\ref{thm:prelims:hott-models} is specialized to simplicial spaces
  \textup($\PSH{\SIMP}$\textup), the
  resulting model validates Axiom~\ref{ax:int} and in this model categories are realized by
  $\infty$-categories \textup(modeled by complete Segal spaces\textup)
  and groupoids by $\infty$-groupoids.
\end{prop}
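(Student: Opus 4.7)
The plan is to upgrade the HoTT model from Theorem~\ref{thm:prelims:hott-models} to the presheaf $\infty$-topos $\PSH{\SIMP}$, to specify an interpretation of $\Int$ there, and then to match the internal Segal and Rezk conditions with the classical external definitions.

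First I would interpret $\Int$ as the presheaf represented by $[1] \in \SIMP$. Since this simplicial space is levelwise a discrete set, it is $0$-truncated and hence internally a set. The distributive lattice operations $\prn{0,1,\lor,\land}$ transport across the Yoneda embedding from the usual lattice structure on the poset $[1]$; these are represented by monotone maps and so extend cleanly. Totality, $\prod_{i,j}\prn{i \le j} \lor \prn{j \le i}$, requires more care: one checks externally that the two representable subobjects $\brc{i \le j}, \brc{j \le i} \hookrightarrow \Int \times \Int$ are jointly covering as an effective epimorphism (which is how propositional truncation and disjunction are interpreted), and this holds level-by-level because $[1]$ is totally ordered.

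Next I would unfold the Segal and Rezk conditions. Under the interpretation above, $\Delta^n$ in \STT{} becomes the representable at $[n]$, and $\Lambda^2_1$ becomes the pushout of two copies of $\Delta^1$ along a vertex. The internal Segal condition thus becomes precisely the classical Segal condition on a simplicial space. A short calculation using $\Delta^3$ shows that the internally defined type of isomorphisms in a Segal-type $X$ matches the externally defined space of homotopy equivalences; consequently the Rezk condition unwinds to completeness. Hence a type is a category in \STT{} iff its interpretation is a complete Segal space, exactly Rezk's model of $\infty$-categories. Groupoids, defined as those types null with respect to $\Int$, correspond externally to complete Segal spaces local with respect to the representable at $[1]$, which are the constant simplicial objects, i.e., $\infty$-groupoids.

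The main obstacle is the careful bookkeeping required to align the internal notions — propositional truncation, orthogonality, exponentials, and universes — with their external avatars in $\PSH{\SIMP}$, and to verify that the HoTT universes are closed under the exponentials appearing in the Segal and Rezk conditions. This matching has essentially been carried out in the work of \textcite{riehl:2017}, so the plan is to invoke their analysis rather than redevelop it from scratch.
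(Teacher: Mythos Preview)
The paper does not give its own proof of this proposition: it is stated as a citation of \textcite{riehl:2017} and left without argument. Your outline is a reasonable sketch of what that reference establishes (interpret $\Int$ as the representable $\Delta^1$, check the lattice axioms hold as maps of representables with totality via an effective-epi cover, and identify the internal Segal/Rezk conditions with the external complete-Segal-space conditions), and you correctly end by deferring the technical bookkeeping to that source, which is exactly what the paper does.
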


\subsection{Modal homotopy type theory}
Many theorems in category theory require the ability to quantify over the objects in a category,
\eg{}, ``if $\alpha : F \to G$ is a natural transformation of functors $\CC \to \DD$ and each
$\alpha_c$ is invertible, then $\alpha$ is invertible''. A version of this is proven by
\textcite{riehl:2017}:
$\prn{\Prod{c : C} \IsIso\prn{\lambda i.\,\alpha\,i\,c}} \to \IsIso\prn{\alpha}$, but this is subtly
different as we discuss below. In fact, as it stands we cannot directly capture the classical
statement in \STT{}.

To understand the divergence between the \STT{} and classical results, note that by working
internally to type theory when proving $\Prod{c : C} \IsIso\prn{\lambda i.\,\alpha\,i\,c}$ we
cannot assume that $c$ is just an object in $C$: since it is an arbitrary element, we have to assume
it is constructed in an arbitrary context which might contain, \eg{}, a copy of $\Int$ such that $c$
represents a synthetic morphism. In fact, if we unfold the above type into the model we find that
constructing $\Prod{c : C} \IsIso\prn{\lambda i.\,\alpha\,i\,c}$ already entails proving that the
chosen inverses are natural. A great deal of the power of simplicial type theory comes from this
implicit naturality, but it makes this particular result weaker. After all, its purpose in standard
category theory was that in this particular situation, \emph{a priori} unnatural choices of inverses
will automatically be natural. Moreover, we shall encounter theorems that are simply false when
naively translated in this way.

Accordingly, to make \STT{} practical we must extend it with \emph{modalities}: unary type
constructors distinguished by their failure to respect substitution or apply in arbitrary
contexts. For instance, we shall eventually equip \STT{} with a modality $\Modify[\GM]{-}$ which
discards all non-invertible synthetic morphisms from a type to produce its \emph{core},
which we then
use to faithfully encode pointwise invertibility (see Example~\ref{ex:prelim:pw-naturality}).

A complete reference to the modal type theory we use---\MTT{}~\parencite{gratzer:2020}---is given by
\textcite{gratzer:phd} and we record formal rules in Section~\ref{sec:mtt-rules}. Fortunately, the rules
for, \eg{}, $\Sum{}$-types are unaffected by the addition of modalities. Accordingly, for brevity we only recall the new rules which must be added to \MLTT{} to extend \HoTT{} with modalities
\`a la \MTT{}.

\MTT{} is parameterized by a \emph{mode theory}: a strict $2$-category describing the collection
of modalities (the morphisms) available along with the natural transformations between them (the
$2$-cells).  We use $\mu,\nu,\xi$ to range over modalities. In the case of simplicial type theory,
our mode theory will have only one object along with a handful of generating modalities and
$2$-cells. There are four generating modalities $\GM,\SM,\OM,\TM$ subject to the following
equations:
\begin{mathparpagebreakable}
  \GM = \GM \circ \GM = \GM \circ \SM = \GM \circ \OM = \TM \circ \GM %
  \and
  \SM = \SM \circ \SM = \SM \circ \GM = \SM \circ \OM %
  \and
  \OM \circ \OM = \ArrId{}
\end{mathparpagebreakable}

We further require the following generating 2-cells:
\begin{mathparpagebreakable}
  \epsilon : \GM \to \ArrId{} \qquad \zeta : \ArrId{} \to \SM
  \and
  \tau : \TM \cong \TM \circ \OM \quad\text{(with $\tau^{-1}$)}
  \and
  \pi_0^\TM : \TM \to \OM
  \and
  \pi_1^\TM : \TM \to \ArrId{}
\end{mathparpagebreakable}

These 2-cells are likewise subject to a number of equations. For $\epsilon$ and $\zeta$, we
require $\zeta \Whisker \SM = \SM \Whisker \zeta = \ArrId{}$ viewing all of these as 2-cells
$\SM \to \SM$ and $\GM \Whisker \zeta = \ArrId{} : \GM \to \GM$ (using $\Whisker$ to denote
whiskering).  We require the dual equations on $\epsilon$. For the remaining four 2-cells, we
require that the following diagrams commute:
\[
  \begin{tikzpicture}[diagram]
    \node (T) {$\TM$};
    \node [below = of T] (OT) {$\TM \circ \OM$};
    \node [left = 2.5cm of OT] (OC) {$\OM$};
    \node [right = 2.5cm of OT] (C) {$\ArrId{}$};
    \path[->] (T) edge node[right] {$\tau$} (OT);
    \path[->] (OT) edge node[below] {$\pi^\TM_0 \Whisker \OM$} (C);
    \path[->] (OT) edge node[below] {$\pi^\TM_1 \Whisker \OM$} (OC);
    \path[->] (T) edge node[above] {$\pi^\TM_1$} (C);
    \path[->] (T) edge node[above] {$\pi^\TM_0$} (OC);
  \end{tikzpicture}
  \qquad
  \begin{tikzpicture}[diagram]
    \node (T) {$\GM$};
    \node [below = of T] (OT) {$\TM$};
    \node [left = 2.5cm of OT] (OC) {$\OM$};
    \node [right = 2.5cm of OT] (C) {$\ArrId{}$};
    \path[->] (T) edge node[near end,right] {$\TM\Whisker\epsilon$} (OT);
    \path[->] (OT) edge node[below] {$\pi^\TM_0$} (C);
    \path[->] (OT) edge node[below] {$\pi^\TM_1$} (OC);
    \path[->] (T) edge node[above] {$\epsilon$} (C);
    \path[->] (T) edge node[above=5pt] {$\epsilon\Whisker\OM$} (OC);
  \end{tikzpicture}
\]

Each morphism $\mu$ in the mode theory induces a modal type $\Modify[\mu]{-}$. We will describe the
rules for these modal types in a moment, but first we give some idea of what they are intended to
denote. For now this is merely intuition, though the axioms and model described in
Section~\ref{sec:prelims:modal-stt} will make it so. As already mentioned, $\Modify[\GM]{-}$ removes all
non-identity synthetic morphisms from a type. $\Modify[\SM]{-}$ is the right adjoint to this
operation and so it discards all non-identity morphisms but then freely adds all morphisms so that
an $n$-simplex $\Delta^n \to \Modify[\SM]{X}$ is exactly a collection of $n$ points in
$X$.\footnote{Note that while $\Modify[\GM]{X}$ will always be an $\infty$-category, in fact an
  $\infty$-groupoid, the same is not true of $\Modify[\SM]{X}$. In particular, $\Modify[\SM]{X}$
  will hardly ever satisfy the Rezk condition.} Next, $\Modify[\OM]{-}$ sends a type to its opposite
and, in particular, reverses the directions of all synthetic morphisms. Finally, $\Modify[\TM]{-}$
sends a type to its corresponding type of \emph{twisted arrows}; we shall analyze it in more depth
in Section~\ref{sec:tw}.

The formation rule for $\Modify[\mu]{-}$ is complex: the entire point of modalities is that
$\Gamma \vdash A$ does not imply $\Gamma \vdash \Modify{A}$. Instead, \MTT{} introduces a novel form
of context operation which acts like a ``left adjoint'' to $\Modify[\mu]{-}$:
\begin{mathparpagebreakable}
  \inferrule{
    \IsCx{\Gamma}
  }{
    \IsCx{\LockCxV{\Gamma}}
  }
  \and
  \inferrule{
    \IsTy[\LockCxV{\Gamma}]{A}
  }{
    \IsTy{\Modify{A}}
  }
  \and
  \inferrule{
    \IsTm[\LockCxV{\Gamma}]{a}{A}
  }{
    \IsTm{\MkMod{a}}{\Modify{A}}
  }
\end{mathparpagebreakable}
We refer to $\brc{\mu}$ as a \emph{modal restriction}. It is helpful to compare $\Modify[\mu]{A}$
with dependent products and, therefore, to see $\LockCxV{-}$ as extending the context by something
akin to a substructural ``$\mu$ variable''~\parencite{bahr:2017,birkedal:2020}. The real force of
modalities comes through how these $\brc{\mu}$s interact with variables. In particular, it is not the
case that $\Gamma, x : A, \brc{\mu} \vdash x : A$; since $\LockCxV{-}$ is intended to model a left
adjoint, we cannot generally assume that there is a weakening substitution
$\LockCxV{\Gamma} \to \Gamma$. Instead, we alter the rule extending a context with a variable so
that each variable is annotated with a modality:
\begin{mathparpagebreakable}
  \inferrule{
    \IsCx{\Gamma}
    \\
    \IsTy[\LockCxV{\Gamma}]{A}
  }{
    \IsCx{\MECxV{\Gamma}{x}{A}}
  }
  \and
  \inferrule{
    \alpha : \mu \to \Con{mods}\prn{\Gamma_1}
  }{
    \IsTm[\Gamma_0, \DeclVar{x}{\mu}{A}, \Gamma_1]{x^\alpha}{A^\alpha}
  }
\end{mathparpagebreakable}
In the above, $x^\alpha$ is the new form of variable rule while $A^\alpha$ is an admissible
operation on the syntax which traverses the term $A$ and appropriately updates all free variables
$y^\beta$ occurring within $A$ and modifying $\beta$ appropriately using $\alpha$. In particular, if
$A$ is closed then $A^\alpha = A$. In the formal syntax, both $x^\alpha$ and $A^\alpha$ are realized
by form of substitution, see \textcite{gratzer:mtt-journal:2021} for further details.

The original context extension is given by taking $\mu = \ArrId{}$. In the second rule,
$\Con{mods}\prn{\Gamma_1}$ is the composite $\nu_0 \circ \nu_1 \circ \cdots$ of all the
$\brc{\nu_i}$s occurring in $\Gamma_1$ (and is $\ArrId{}$ if there are no such occurrences). In
other words, a variable with annotation $\mu$ can be used precisely when it occurs behind a series
of modal restrictions for which there is a 2-cell navigating from $\mu$ to this composite. It is
therefore in the variable rule where the 2-cells comes into play.
\begin{lem}
  If $\IsTy[\Gamma, \DeclVar{x}{\mu}{A}]{B}$, $\IsTm[\Gamma, \DeclVar{x}{\mu}{A}]{b}{B}$,
  and $\IsTm[\LockCxV{\Gamma}]{a}{A}$, then $\IsTy{\Sb{B}{a/x}}$ and $\IsTm{\Sb{b}{a/x}}{\Sb{B}{a/x}}$.
\end{lem}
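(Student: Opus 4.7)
The plan is to prove the result by simultaneous structural induction on the derivations of $\IsTy[\Gamma, \DeclVar{x}{\mu}{A}]{B}$ and $\IsTm[\Gamma, \DeclVar{x}{\mu}{A}]{b}{B}$. As with most MTT-style substitution lemmas, the statement needs to be strengthened to allow $B$ and $b$ to live under an additional context extension $\Gamma_1$ after the variable declaration, so that one can descend under modal restrictions during the induction. Specifically, I would prove: for any $\Gamma_1$ such that $\Gamma, \DeclVar{x}{\mu}{A}, \Gamma_1$ is well-formed, and any $a$ in $\LockCxV{\Gamma}$ of type $A$, the substituted judgment $\Gamma, \Sb{\Gamma_1}{a/x} \vdash \Sb{B}{a/x} : \Sb{b}{a/x}$ is derivable. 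The original statement is recovered by taking $\Gamma_1$ empty.

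The cases split into three groups. The routine group consists of all non-modal constructors ($\Sum{}$, $\Pi$, $=$, universes, and so on): for these, the substitution commutes with the constructor and the induction hypothesis closes the case immediately. The second group handles modal restrictions: when recursing into $\Modify[\nu]{-}$ or $\MkMod[\nu]{-}$, one extends $\Gamma_1$ with the lock $\brc{\nu}$ before applying the induction hypothesis, so that $\mathsf{mods}(\Gamma_1)$ is updated accordingly. The crux is the variable group. When we encounter a free occurrence of $x$ it necessarily has the form $x^\alpha$ for some $\alpha : \mu \to \mathsf{mods}(\Gamma_1)$, and the type $A^\alpha$ annotating it is obtained from $A$ by the admissible action of $\alpha$ described in the excerpt. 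Its substitute is defined to be $a^\alpha$: since $a$ is derivable in $\LockCxV{\Gamma}$ and the 2-cell $\alpha$ witnesses that $\brc{\mu}$ factors through $\LockCxV[\mathsf{mods}(\Gamma_1)]$, the admissible action produces a derivation of $\Gamma, \LockCxV[\mathsf{mods}(\Gamma_1)] \vdash a^\alpha : A^\alpha$, which is precisely what is required. For any other variable $y^\beta$, the substitution acts as the identity.

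The main obstacle is not any single case but the bureaucracy of setting up the admissible operation $(-)^\alpha$ so that it commutes with ordinary substitution and with itself under composition of 2-cells. Concretely, one needs functoriality equations $t^{\ArrId{}} = t$ and $t^{\alpha \cdot \beta} = (t^\alpha)^\beta$, together with a Beck–Chevalley style compatibility between $(-)^\alpha$ and the standard syntactic substitutions, in order to verify the typing of $a^\alpha$ and to ensure the substituted term satisfies the expected equational rules. Once this infrastructure is in place, checking the variable case above is immediate, and the remaining cases are mechanical. Since these are exactly the coherence conditions established in \textcite{gratzer:mtt-journal:2021} for the general substitution calculus of \MTT{}, I would appeal to that machinery rather than redevelop it here.
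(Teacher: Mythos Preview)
The paper states this lemma without proof; it is recorded as a basic syntactic fact about \MTT{} immediately after introducing modal context extension, with the surrounding text pointing to \textcite{gratzer:mtt-journal:2021} for the formal substitution calculus. Your proposal is correct and is the standard admissibility argument, and since you too end by deferring to that reference, you are in agreement with the paper's (implicit) treatment.

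One remark on efficiency: the formal presentation in Appendix~\ref{sec:mtt-rules} uses \emph{explicit} substitutions, with a judgment $\IsSb{\delta}{\Delta}$ and a primitive extension rule deriving $\IsSb{\ESb{\delta}{a}}{\MECx{\Delta}{A}}$ from $\IsSb{\delta}{\Delta}$ and $\IsTm[\LockCx{\Gamma}]{a}{A}$. In that setting the lemma is essentially one line: take $\delta = \ISb$ to form $\ESb{\ISb}{a} : \Gamma \to \prn{\Gamma, \DeclVar{x}{\mu}{A}}$ and apply the (primitive) action of substitutions on types and terms. The structural induction you describe is what one does when substitution is a \emph{meta}-operation to be shown admissible; it is correct, but heavier machinery than the explicit-substitution framework requires. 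The coherence equations you flag (functoriality of $(-)^\alpha$, Beck--Chevalley compatibility) are exactly the equations imposed on $\KeySb{\alpha}{\Gamma}$ and $\LockSb{\delta}$ in the appendix, so in the explicit calculus they are axioms rather than lemmas to be proved.
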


The final piece of the puzzle is the elimination rule for modalities. Roughly, this rule says that
modal annotations are equivalent to modal types ``from the perspective of a type'', \ie{}, that giving
an element in context $\MECxV{\Gamma}{x}[\nu]{\Modify[\mu]{A}}$ is the same as giving one in
$\MECxV{\Gamma}{x}[\nu\circ\mu]{A}$. This concretely amounts to the following pattern-matching rule
which allows us to assume that $\DeclVar{x}{\nu}{\Modify[\mu]{A}}$ is of the form $\MkMod{y}$ where
$\DeclVar{y}{\nu\circ\mu}{A}$:
\begin{mathparpagebreakable}
  \inferrule{
    \IsTy[\MECxV{\Gamma}{x}[\nu]{\Modify{A}}]{B}
    \\
    \IsTm[\MECxV{\Gamma}{y}[\nu\circ\mu]{A}]{b}{\Sb{B}{\MkMod[\mu]{y}/x}}
    \\
    \IsTm[\LockCxV{\Gamma}[\nu]]{a}{\Modify{A}}
  }{
    \IsTm{\LetMod{a}[y]{b}}{\Sb{B}{a/x}}
  }
  \and
  \inferrule{
    \IsTy[\MECxV{\Gamma}{x}[\nu]{\Modify{A}}]{B}
    \\
    \IsTm[\MECxV{\Gamma}{y}[\nu\circ\mu]{A}]{b}{\Sb{B}{\MkMod[\mu]{y}/x}}
    \\
    \IsTm[\LockCxV{\Gamma}[\nu\circ\mu]]{a}{A}
  }{
    \EqTm{\prn{\LetMod{\MkMod{a}}[y]{b}}}{\Sb{b}{a/y}}{\Sb{B}{\MkMod{a}/x}}
  }
\end{mathparpagebreakable}

While these rules account for all of the necessary extensions to handle modal types, we avail
ourselves of a convenience feature as well, modal $\Prod{}$-types:
\begin{mathparpagebreakable}
  \inferrule{
    \IsTm[\MECxV{\Gamma}{x}{A}]{b}{B}
  }{
    \IsTm{\lambda x.b}{\Prod{\DeclVar{x}{\mu}{A}} B}
  }
  \and
  \inferrule{
    \IsTm{f}{\Prod{\DeclVar{x}{\mu}{A}} B}
    \quad
    \IsTm[\LockCxV{\Gamma}]{a}{A}
  }{
    \IsTm{f\prn{a}}{\Sb{B}{a/x}}
  }
\end{mathparpagebreakable}

\begin{nota}
  ``If $\DeclVar{c}{\GM}{C}$, then $\Phi\prn{c}$'' signifies
  $\Prod{\DeclVar{c}{\GM}{C}} \Phi\prn{c}$.
\end{nota}

\begin{exa}
  \label{ex:prelim:pw-naturality}
  A faithful translation of ``pointwise invertibility implies invertibility'' where
  $\DeclVar{C,D}{\GM}{\Uni}$ and $\alpha : C \times \Int \to D$ is
  $\prn{\Prod{\DeclVar{c}{\GM}{C}} \IsIso\prn{\lambda i.\,\alpha\,i\,c}}
  \to \IsIso\prn{\alpha}$
\end{exa}

Immediately from these rules, we may prove the following:
\begin{prop}[\textcite{gratzer:2020}]
  \label{thm:prelims:mtt-facts}
  \leavevmode
  \begin{itemize}
  \item $\Modify[\mu]{-}$ commutes with $\Sum{}$ and $\ObjTerm{}$
  \item $\Con{comp} : \Modify[\mu]{\Modify[\nu]{-}} \Equiv \Modify[\mu\circ\nu]{-}$ and
    $\Modify[\ArrId{}]{-} \Equiv \ArrId{}$
  \item If $\alpha : \mu \to \nu$, then there is a map $\Coe^{\alpha} : \Modify[\mu]{{-}} \to \Modify[\nu]{-^\alpha}$.
  \item $\Con{transp} : \Modify[\GM]{\Modify[\GM]{A} \to B} \Equiv \Modify[\GM]{A \to \Modify[\SM]{B}}$
  \item $\Con{transp} : \Modify[\GM]{\Modify[\OM]{A} \to B} \Equiv \Modify[\GM]{A \to \Modify[\OM]{B}}$
  \end{itemize}
  The first point yields a function $\prn{\circledast} : \Modify{A \to B} \to \Modify{A} \to \Modify{B}$.
\end{prop}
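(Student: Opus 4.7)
The plan is to discharge each clause by routine application of the modal eliminator combined with the variable rule, since every statement is essentially a bookkeeping consequence of the rules for $\Modify{-}$ and its interaction with $2$-cells. I would first do the easy structural isomorphisms (commutation with $\Sum{}$ and $\ObjTerm{}$, composition, and the identity modality), then handle the $2$-cell coercion, and finally the two ``transposition'' equivalences, which use the specific generating $2$-cells $\epsilon,\zeta,\tau$.

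\emph{Commutation, composition, unit.} For $\Modify[\mu]{\Sum{x:A} B\prn{x}} \Equiv \Sum{x : \Modify[\mu]{A}} \Modify[\mu]{\Sb{B}{\MkMod{x}/x}}$, I would go left-to-right by pattern matching $p : \Modify[\mu]{\Sum{x:A}B}$ as $\MkMod{\prn{a,b}}$ and returning $\prn{\MkMod{a},\MkMod{b}}$; right-to-left, I pattern match each component in the appropriate modal context. The round-trips follow from the $\beta$-rule for $\LetMod{-}{-}$. The case of $\ObjTerm{}$ is immediate by contractibility. For $\Con{comp}$ I double-eliminate: from $\Modify[\mu]{\Modify[\nu]{A}}$ one pattern match produces a variable $\DeclVar{y}{\mu}{\Modify[\nu]{A}}$, and a second pattern match produces $\DeclVar{z}{\mu\circ\nu}{A}$, which I then repackage as $\MkMod[\mu\circ\nu]{z}$; the inverse uses a single match and reassociation of the lock. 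For $\Modify[\ArrId{}]{-} \Equiv \ArrId{}$, I note that $\LockCxV{-}[\ArrId{}]$ is the identity on contexts so $\LetMod[\ArrId{}]{a}[y]{y}$ gives the identity up to definitional equality.

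\emph{Coercion along $2$-cells.} Given $\alpha : \mu \to \nu$ and $p : \Modify[\mu]{A}$, I pattern match $p$ to assume $p = \MkMod[\mu]{y}$ with $\DeclVar{y}{\mu}{A}$. To produce $\Modify[\nu]{A^\alpha}$ I must produce an element of $A^\alpha$ under $\LockCxV{-}[\nu]$; in that extended context the modality suffix after $y$'s declaration is exactly $\nu$, so the variable rule applied to the $2$-cell $\alpha : \mu \to \nu$ yields $y^\alpha : A^\alpha$. Packaging gives $\Coe^\alpha\prn{p} := \LetMod{p}[y]{\MkMod[\nu]{y^\alpha}}$.

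\emph{Transposition equivalences.} These encode the adjunctions $\GM \Adjoint \SM$ and $\OM \Adjoint \OM$ internally, crisply. For (4) the forward map sends $\MkMod[\GM]{f}$ to the crisp function $\MkMod[\GM]{\lambda a.\,\MkMod[\SM]{f\,\MkMod[\GM]{a}}}$; here the inner $\MkMod[\GM]{a}$ is legal because $a : A$ lives under a $\LockCxV{-}[\SM]$ and the $2$-cell $\zeta : \ArrId{} \to \SM$ together with $\epsilon : \GM \to \ArrId{}$ allows the requisite variable access via $\Coe$. The backward map takes $\MkMod[\GM]{g} : \Modify[\GM]{A \to \Modify[\SM]{B}}$, a crisp $p : \Modify[\GM]{A}$, and produces $\LetMod[\GM]{p}[a]{\LetMod[\SM]{g\prn{a}}[b]{b}}$, using $\GM \Whisker \zeta = \ArrId{}$ and $\epsilon \Whisker \SM = \ArrId{}$ to verify that the $\LetMod$ at $\SM$ escapes into a $\GM$-crisp context. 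The round-trips then reduce by the $\beta$-rule and the equations on $\epsilon,\zeta$. Case (5) is the same pattern with $\tau, \tau^{-1}$ in place of $\zeta,\epsilon$ and the involutive equation $\OM\circ\OM = \ArrId{}$ standing in for the unit/counit equations. Finally, $\circledast$ is immediate: $\LetMod{f}[g]{\LetMod{a}[x]{\MkMod{g\,x}}}$.

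The genuine obstacle is the second transposition equivalence and, more generally, keeping the traffic of locks and $2$-cells straight when verifying the round-trips for (4)--(5); the key coherence lemmas are the whiskering equations relating $\epsilon,\zeta,\tau,\pi_i^\TM$ listed just before the statement, and one must invoke them exactly at the steps where a $\LetMod$ at one modality is absorbed into a crisp binder at another.
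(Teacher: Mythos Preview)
The paper does not actually prove this proposition: it is stated with a citation to \textcite{gratzer:2020} and prefaced only by ``Immediately from these rules, we may prove the following.'' Your sketch is therefore more detailed than anything the paper offers, and your overall strategy---discharge each clause by modal elimination, the variable rule, and the $\beta$-law for $\LetMod{-}{-}$---is exactly the intended one and is correct for the first four items.

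One genuine slip: in clause (5) you invoke $\tau,\tau^{-1}$, but $\tau : \TM \cong \TM \circ \OM$ concerns the twisted-arrow modality and plays no role whatsoever in the $\OM$-transposition. The equivalence $\Modify[\GM]{\Modify[\OM]{A} \to B} \Equiv \Modify[\GM]{A \to \Modify[\OM]{B}}$ rests solely on the mode-theory \emph{equations} $\OM \circ \OM = \ArrId{}$ and $\GM \circ \OM = \GM$; no non-identity $2$-cell is required. Concretely, the forward map sends $\MkMod[\GM]{f}$ to $\MkMod[\GM]{\lambda a.\,\MkMod[\OM]{f\prn{\MkMod[\OM]{a}}}}$: under the two nested $\{\OM\}$ locks the suffix after $a$ collapses to $\ArrId{}$, so the innermost $a$ typechecks with the identity $2$-cell. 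The backward map first eliminates $p : \Modify[\OM]{A}$ to obtain $\DeclVar{a}{\OM}{A}$, then builds $\MkMod[\OM]{g\prn{a}} : \Modify[\OM]{\Modify[\OM]{B}}$ (here $a$ is used under a single $\{\OM\}$, so its suffix is $\OM$ and the identity $2$-cell suffices) and applies $\Con{comp}$. Your parenthetical that ``$\OM\circ\OM = \ArrId{}$ stands in for the unit/counit equations'' is the right intuition; simply drop the reference to $\tau$.
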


When it will not confusion, we will suppress the equivalences
$\Modify[\mu]{\Modify[\nu]{A}} \Equiv \Modify[\mu\circ\nu]{A}$ and $\Modify[\ArrId{}]{A} \Equiv
A$. Furthermore, as there is no ambiguity, we suppress $\epsilon$ (and its whiskerings) and simply
write $x$ instead of $x^\epsilon$ and similarly for $\zeta : \ArrId{} \to \SM$. Consequently, if
$\DeclVar{A}{\GM}{\Uni}$ then we are able to simply write $\Modify[\TM]{A}$ rather than
$\Modify[\TM]{A^{\TM\Whisker\epsilon}}$. By convention, we also avoid writing $x^{\ArrId{}}$.

\begin{nota}
  If $\IsTm[\LockCxV{\Gamma}]{f}{A \to B}$, we write $f^\dagger$ for the function
  $\MkMod[\mu]{f} \mathbin{\circledast} -$.
\end{nota}

\begin{rem}\label{rem:prelims:coe-natural}
  Since we shall capitalize on the fact repeatedly, we note that $\Coe^\alpha$ is always suitably
  natural. For instance, fix $\DeclVar{f}{\GM}{A \to B}$. Then we construct a path
  $\alpha : \Coe^{\pi_1^\TM} \circ f^\dagger = f \circ \Coe^{\pi_1^\TM}$ as follows:
  \[
    \alpha = \Con{funext}\prn{\lambda x.\,\LetMod[\TM]{x}[x_0]{\Refl}}
  \]
  Since this path is essentially a commuting conversion (it is given by induction to allow
  $\Coe^{\pi_1^\TM}$ and $f^\dagger$ to reduce) it is fully coherent, with higher paths being
  likewise constructed by induction and then reflexivity.
\end{rem}

In general, $\Modify[\mu]{-}$ need not commute with propositional equality. However, this is true in
our intended models and so we impose it as an axiom:

\begin{restatable}{axiom}{crispidaxiom}
  The map $\MkMod[\mu]{a} = \MkMod[\mu]{b} \to \Modify{a = b}$ sending $\Refl$ to $\MkMod{\Refl}$ is
  an equivalence for all $\DeclVar{a,b}{\mu}{A}$.%
\end{restatable}

To be very precise, this map is defined by path induction in the family of types
$\lambda\prn{x,y:\Modify{A}}.\,\LetMod{x}[a]{\LetMod{y}[b]{\Modify{a = b}}}$. By
\textcite{gratzer:normalization:2022}, there is a computational account of this principle.

\begin{cor}
  Each $\Modify[\mu]{-}$ commutes with pullbacks
  $A \times_C B = \Sum{a : A} \Sum{b : B} f\prn{a} =_C g\prn{b}$.
\end{cor}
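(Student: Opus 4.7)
The plan is to unfold the pullback as the $\Sigma$-type $\Sum{a:A}\Sum{b:B} f(a) =_C g(b)$ and combine the first bullet of Proposition~\ref{thm:prelims:mtt-facts} (commutation of $\Modify[\mu]{-}$ with $\Sum{}$) with the axiom immediately preceding the corollary (commutation of $\Modify[\mu]{-}$ with identity types).

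First, I would apply the $\Sum{}$-commutation equivalence twice to rewrite
\[
\Modify[\mu]{\Sum{a:A}\Sum{b:B} f(a) =_C g(b)}
\]
as
\[
\Sum{x : \Modify[\mu]{A}} \Sum{y : \Modify[\mu]{B}} \LetMod[\mu]{x}[a]{\LetMod[\mu]{y}[b]{\Modify[\mu]{f(a) =_C g(b)}}}.
\]
Second, I would apply the crisp identity axiom pointwise to replace $\Modify[\mu]{f(a) =_C g(b)}$ with $\MkMod[\mu]{f(a)} =_{\Modify[\mu]{C}} \MkMod[\mu]{g(b)}$. At this point the two surrounding $\Con{let}$-binders simplify: since $f$ is available modally, $\MkMod[\mu]{f(a)}$ is $f^\dagger\prn{\MkMod[\mu]{a}}$ by definition of $\circledast$, and hence the $\beta$-rule for modal types collapses $\LetMod[\mu]{x}[a]{f^\dagger\prn{\MkMod[\mu]{a}}}$ to $f^\dagger\prn{x}$, and symmetrically for $g$. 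What remains is exactly
\[
\Sum{x:\Modify[\mu]{A}}\Sum{y:\Modify[\mu]{B}} f^\dagger(x) =_{\Modify[\mu]{C}} g^\dagger(y),
\]
i.e.\ $\Modify[\mu]{A} \times_{\Modify[\mu]{C}} \Modify[\mu]{B}$. Composing the two equivalences produces the desired comparison map, which by construction agrees with the evident candidate sending $\MkMod{\prn{a,b,p}}$ to $\prn{\MkMod{a},\MkMod{b},\MkMod{p}}$.

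The only real subtlety — not so much an obstacle as a bookkeeping point — is that for the statement to even make sense, the maps $f : A \to C$ and $g : B \to C$ must be accessible in the locked context so that $f^\dagger, g^\dagger : \Modify[\mu]{A},\Modify[\mu]{B} \to \Modify[\mu]{C}$ exist and the equality $\MkMod[\mu]{f(a)} =_{\Modify[\mu]{C}} \MkMod[\mu]{g(b)}$ is well-typed. Thus the corollary should be read with the tacit hypothesis $\DeclVar{A,B,C}{\mu}{\UU}$ and $\DeclVar{f}{\mu}{A\to C}$, $\DeclVar{g}{\mu}{B \to C}$, after which the three steps above go through with no further work.
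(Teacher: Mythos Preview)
Your proposal is correct and is exactly the intended argument: the paper states this corollary with no proof, treating it as immediate from the preceding axiom (commutation of $\Modify[\mu]{-}$ with identity types) together with the first bullet of Proposition~\ref{thm:prelims:mtt-facts} (commutation with $\Sum{}$), which is precisely what you have unfolded. Your remark about the tacit $\mu$-annotation on $f$ and $g$ is also on point and is silently assumed in the paper.
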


\begin{rem}
  For readers familiar with \emph{spatial type theory}~\parencite{shulman:2018}, this modal type
  theory is an extension of spatial type theory to include two additional modalities ($\OM$,
  $\TM$). In particular, the results of \textcite{shulman:2018} that deal with $\GM$ and $\SM$ can
  be reproduced in this setting.
\end{rem}

\subsection{Modalities and simplicial type theory}
\label{sec:prelims:modal-stt}

To connect the modal and simplicial structures, we impose the following axioms motivated by the
intended model, as described in Theorem~\ref{thm:prelims:stt-models} (and more generally
$\mathcal E^{\Op{\SIMP}}$ for an $\infty$-topos $\mathcal E$); see also the work of
\textcite{myers:2023}. First, the opposite map should be an anti-equivalence of $\Int$:

\begin{restatable}{axiom}{intopaxiom}
  \label{ax:op-of-int-is-int}
  There is an equivalence $\neg : \Modify[\OM]{\Int} \to \Int$ which swaps $0$ for $1$ and $\lor$ for $\land$.
\end{restatable}
\begin{cor}
  We can extend $\neg$ to an equivalence $\neg : \Modify[\OM]{\Delta^n} \Equiv \Delta^n$.
\end{cor}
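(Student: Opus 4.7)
The plan is to build the equivalence componentwise on the ambient cube $\Int^n$ and then restrict it to the subtype $\Delta^n$, using that $\Modify[\OM]{-}$ preserves all the relevant limit structure.

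First, I would note that $\Modify[\OM]{-}$ commutes with $\Sum{}$ and $\ObjTerm{}$ by Proposition~\ref{thm:prelims:mtt-facts} (hence with finite products), so the axiomatized equivalence $\neg : \Modify[\OM]{\Int} \Equiv \Int$ extends pointwise to $\Modify[\OM]{\Int^n} \Equiv \Int^n$. Because $\neg$ swaps $\lor$ and $\land$, it reverses the induced lattice order: a chain $i_1 \ge i_2 \ge \cdots \ge i_n$ in $\Modify[\OM]{\Int}$ is sent to a chain $\neg i_1 \le \neg i_2 \le \cdots \le \neg i_n$ in $\Int$. Post-composing with the coordinate-reversal equivalence $(j_1, \dots, j_n) \mapsto (j_n, \dots, j_1)$ (which is self-inverse) then converts the ascending chain back into a descending one, landing in $\Delta^n \hookrightarrow \Int^n$.

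Second, I would argue that this actually restricts to the desired equivalence $\Modify[\OM]{\Delta^n} \Equiv \Delta^n$. Since $\Delta^n$ is cut out of $\Int^n$ by the conjunction of inequalities $i_k \ge i_{k+1}$, which can themselves be expressed as equalities $i_k \land i_{k+1} = i_{k+1}$ in the lattice $\Int$, the subtype inclusion $\Delta^n \hookrightarrow \Int^n$ is a finite iterated pullback of copies of $\Int$ along lattice operations. The corollary to the crisp identity axiom gives that $\Modify[\OM]{-}$ preserves such pullbacks, so $\Modify[\OM]{\Delta^n}$ is presented as the same subtype of $\Modify[\OM]{\Int^n}$. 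The chain-reversal analysis of the previous paragraph then shows that the cube-level equivalence restricts to and from this subtype. Its inverse is built by the same recipe, using $\neg \circ \neg \simeq \ArrId{}$ on $\Int$ and the self-inverseness of reversal, and when $n = 1$ the construction reduces to the original $\neg$.

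The main obstacle is purely bookkeeping: handling the modal restrictions so that a term $\DeclVar{x}{}{\Modify[\OM]{\Delta^n}}$ can be eliminated via $\LetMod[\OM]{x}[y]{-}$, exposing coordinates $y = (i_1, \dots, i_n)$ under an $\OM$-lock with the requisite $\ge$-chain, and then reassembled after componentwise application of $\neg$ and reversal without disturbing the propositional constraints. Once the three commutation facts ($\Modify[\OM]{-}$ with $\Sum{}$, with $\ObjTerm{}$, and with identity types) are in place, the construction itself is essentially a calculation.
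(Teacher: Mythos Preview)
Your proposal is correct. The paper states this corollary immediately after Axiom~\ref{ax:op-of-int-is-int} with no proof at all, so there is nothing to compare against; your argument---extending $\neg$ componentwise to $\Modify[\OM]{\Int^n}$ via the product-preservation of $\Modify[\OM]{-}$, using the order-reversal of $\neg$ together with coordinate reversal to land in $\Delta^n$, and invoking pullback-preservation (from the crisp identity axiom) to identify $\Modify[\OM]{\Delta^n}$ as the corresponding subtype---is exactly the natural way to unfold the omitted details.
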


Next, we require the two possible notions of discreteness (being $\Int$-null or $\GM$-modal) to
coincide:

\begin{restatable}{axiom}{intdiscaxiom}
  \label{ax:discrete-iff-crisp}
  If $\DeclVar{A}{\GM}{\Uni}$, then $\Modify[\GM]{A} \to A$ is an equivalence \textup($A$ is \emph{discrete}\textup) if
  and only if $A \to A^\Int$ is an equivalence \textup($A$ is \emph{$\Int$-null}\textup).
\end{restatable}
\begin{restatable}{axiom}{intglobalaxiom}
  \label{ax:global-points}
  The canonical map $\Bool \to \Int$ is injective and induces an equivalence $\Bool \Equiv \Modify[\GM]{\Int}$.
\end{restatable}
Motivated by our intended class of models, we insist that equivalences are jointly detected by
$\Delta^n$:

\begin{restatable}{axiom}{cubessepaxiom}
  \label{ax:cubes-separate}
  $\DeclVar{f}{\GM}{A \to B}$ is an equivalence if and only if
    the following holds:
    \[
      \Prod{\DeclVar{n}{\GM}{\Nat}}
      \IsEquiv\prn{
        \prn{f_*}^\dagger : \Modify[\GM]{\Delta^n \to A}
        \to
        \Modify[\GM]{\Delta^n \to B}
      }
    \]
\end{restatable}
\noindent
Note that since there is a section-retraction pair $\Delta^n \to \Int^n \to \Delta^n$, we can
replace $\Delta^n$ with $\Int^n$ in the above principle.

One useful application is the following:
\begin{lem}
  \label{lem:prelims:covariance-simplices}
  A map $\DeclVar{\pi}{\GM}{X \to A}$ is covariant if and only if the map
  $\Modify[\GM]{X^{\Delta^n}} \to \Modify[\GM]X \times_{\Modify[\GM]A} \Modify[\GM]{A^{\Delta^n}}$
  induced by $\prn{0^*}^\dagger$ and $\prn{\pi_*}^\dagger$ is an equivalence
  for all $\DeclVar n\GM\Nat$.
\end{lem}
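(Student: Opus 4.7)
The plan is to recognize the condition in the statement as equivalent to ``$\Modify[\GM]{\bar\pi_n}$ is an equivalence for each crisp $n$'', where $\bar\pi_n : X^{\Delta^n} \to X \times_A A^{\Delta^n}$ generalizes the map $\bar\pi$ from Lemma~\ref{lem:prelims:covariant}. This reformulation uses only that $\Modify[\GM]{-}$ commutes with pullbacks. Since covariance is, by Lemma~\ref{lem:prelims:covariant}, the special case $n = 1$ \emph{without} the $\Modify[\GM]{-}$-wrapping, both directions amount to bridging the wrapped and unwrapped conditions.

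For the forward direction, I would in fact show $\bar\pi_n$ itself is an equivalence, constructing its inverse using the deformation retraction $H : \Delta^n \times \Int \to \Delta^n$ defined coordinate-wise by $H((i_1, \dots, i_n), t) = (t \land i_1, \dots, t \land i_n)$; this is available because $\Int$ carries a lattice structure by Axiom~\ref{ax:int}, and $H$ satisfies $H(-, 1) = \mathsf{id}$ and $H(-, 0) \equiv 0$. Given $(x_0, g) : X \times_A A^{\Delta^n}$, the lift $f : \Delta^n \to X$ is defined pointwise by
\[
  f(p) = \bar\pi^{-1}(x_0, g \circ H(p, -))(1),
\]
where $\bar\pi^{-1}$ is supplied by covariance. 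The endpoint conditions $f(0) = x_0$ and $\pi \circ f = g$ and uniqueness reduce to the corresponding properties of $\bar\pi^{-1}$: the lift of a constant edge is constant, and if $f'$ is another lift then applying $\bar\pi^{-1}$ to $f' \circ H(p, -)$ shows $f'(p) = f(p)$. Applying $\Modify[\GM]{-}$ then delivers the stated condition.

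For the backward direction, by Lemma~\ref{lem:prelims:covariant} it suffices to show $\bar\pi$ is an equivalence. Since $\bar\pi$ is crisp, Axiom~\ref{ax:cubes-separate} reduces this further to showing
\[
  \Modify[\GM]{X^{\Int \times \Delta^k}}
  \to
  \Modify[\GM]{X^{\Delta^k}} \times_{\Modify[\GM]{A^{\Delta^k}}} \Modify[\GM]{A^{\Int \times \Delta^k}}
\]
is an equivalence for every crisp $k$. The bridge to the hypothesis is the classical prism triangulation: $\Int \times \Delta^k$ is presented as an iterated pushout of $k+1$ copies of $\Delta^{k+1}$ glued along $\Delta^k$-faces. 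Under $X^{(-)}$ this becomes an iterated pullback, so the displayed map is assembled from pullbacks of instances of $\Modify[\GM]{\bar\pi_{k+1}}$ and $\Modify[\GM]{\bar\pi_k}$, which are equivalences by hypothesis.

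The principal obstacle is the backward direction: internalizing the prism triangulation as an iterated pushout inside STT, and tracking the combinatorics of boundary maps carefully enough to exhibit the displayed map as a pullback of hypothesis instances. Once that structural input is in place, the forward direction is essentially an unwinding, and the result follows since pullbacks of equivalences are equivalences.
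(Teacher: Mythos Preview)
The paper states this lemma without proof, so there is no in-paper argument to compare against. Your proposal is sound and the two directions are handled by the natural tools.

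For the forward direction, your deformation-retraction argument via $H(p,t) = (t \wedge p_1,\dots,t \wedge p_n)$ is the standard way to upgrade the $\Int$-lifting condition to a $\Delta^n$-lifting condition, and it works exactly as written: each $g \circ H(p,-)$ is an edge in $A$ based at $g(0)=\pi(x_0)$, the inverse $\bar\pi^{-1}$ supplied by Lemma~\ref{lem:prelims:covariant} lifts it, and evaluating at $1$ gives $f(p)$. The only point worth making explicit is that ``the lift of a constant edge is constant'' is itself a consequence of the contractibility of $\Con{Lift}$ (the constant map at $x_0$ is already a lift), so $f(0)=x_0$ holds on the nose. The uniqueness argument via $f' \circ H(p,-)$ is correct for the same reason.

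For the backward direction, you have correctly identified both the reduction (Axiom~\ref{ax:cubes-separate}) and the bridge (the prism triangulation), and you are right that the latter is the only real obstacle. It does go through in the paper's setup: since $\Int$ is a set with a total order (Axiom~\ref{ax:int}), the canonical map from the iterated pushout of $(k{+}1)$ copies of $\Delta^{k+1}$ along $\Delta^k$-faces to $\Int \times \Delta^k$ has contractible fibres. Indeed, by descent each fibre is an iterated pushout of order-propositions, and using antisymmetry the gluing propositions coincide with the pairwise conjunctions, so each such pushout is a join of propositions, hence a proposition equivalent to the corresponding disjunction, which is inhabited by totality. Once the triangulation is in hand, the displayed map becomes an iterated pullback of instances of $\Modify[\GM]{\bar\pi_{k+1}}$ and $\Modify[\GM]{\bar\pi_k}$ exactly as you say, and the conclusion follows.
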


\subsubsection*{The axiom for $\Modify[\TM]{-}$}
Finally, we add a new axiom to \STT{} that governs $\TM$. For motivation, we recall some facts
about the external definition of the twisted arrow functor $\PSH{\SIMP} \to \PSH{\SIMP}$ which
$\Modify[\TM]{-}$ is intended to internalize. Classically, $\Con{Tw} : \PSH{\SIMP} \to \PSH{\SIMP}$
is defined as follows:
\[
  \Con{Tw}\prn{X}\prn{\brk{n}} = X\prn{\Op{\brk{n}} \ast \brk{n}}
\]
Here we have written $\Op{\brk{n}} \ast \brk{n}$ instead of the equivalent $\brk{2n + 1}$ to clarify
the action of this functor on morphisms: $f \mapsto \Op{f} \ast f$. (I.e., this corresponds to the
join operation on finite linear orders.)

As this functor is defined by precomposition, it is a right adjoint whose left adjoint is defined by
left Kan extension. In particular, it sends $\Delta^n : \PSH{\SIMP}$ to $\Delta^{2n + 1}$ (again,
with the functorial action given by twisting). As such, there is a universal map
$\Mor[\eta_n]{\Delta^n}{\Con{Tw}\prn{\Delta^{2n+1}}}$ which, when unfolded, is given by the identity
$\Mor{\brk{2n + 1}}{\brk{2n + 1}}$. Universality of $\eta_n$ amounts to the requirement that each
morphism $\Mor[f]{\Delta^n}{\Con{Tw}\prn{C}}$ factors as $\Con{Tw}\prn{\hat{f}} \circ \eta_n$ for
some unique $\Mor[\hat{f}]{\Delta^{2n + 1}}{C}$. Our axiom governing $\Modify[\TM]{-}$ axiomatizes
this $\eta_n$ along with the property that
$\Con{Tw}\prn{-} \circ \eta_n : \Hom{\Delta^{2n+1}}{C} \to \Hom{\Delta^n}{\Con{Tw}\prn{C}}$ is an
equivalence. With this external motivation to hand, we proceed to fix some notation and state the
axiom which governs $\Modify[\TM]{-}$.

\begin{nota}
  If $\DeclVar{n}{\GM}{\Nat}$, we have canonical maps $i_l : \Delta^n \to \Delta^{2n + 1}$,
  $i_r : \Delta^n \to \Delta^{2n + 1}$, and $i_m : \Delta^1 \to \Delta^{2n + 1}$ which picks out
  $\brc{0, \dots, n}$, $\brc{n + 1, \dots, 2n + 1}$, and $\brc{n, n + 1}$ respectively.  For
  convenience, we write
  $\bar{i}_l = i_l^\dagger \circ \neg : \Delta^n \to \Modify[\OM]{\Delta^{2n + 1}}$.
\end{nota}

Finally, in the statement of new axiom, we require a procedure which extends a map
$f : {\Delta^n \to \Delta^m}$ to a map ${\Delta^{2n + 1} \to \Delta^{2m + 1}}$ which acts
appropriately on the images of two inclusions $i_l,i_r : \Delta^n \to \Delta^{2n + 1}$. To justify
this formally, we introduce the \emph{blunt join} $X \BluntJoin Y$:
\[
  X \BluntJoin Y = X \Pushout{X \times \brc{0} \times Y} \prn{X \times \Int \times Y} \Pushout{X \times \brc{1} \times Y} Y
\]
This is the directed version of the join $X \star Y$~\parencite[Ch 6]{hottbook} such that
$X \BluntJoin Y$ is roughly $X \Coprod{} Y$ with morphisms adjoined to connect each $x:X$ to each $y:Y$.
\begin{lem}
  \label{lem:prelims:blunt-join-of-simplices}
  If $C$ is a category, then
  $C^{\Delta^{m+1+n}} \Equiv C^{\Delta^{m} \BluntJoin \Delta^n}$.
\end{lem}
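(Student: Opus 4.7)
The plan is to reformulate both sides via the universal property of pushouts, construct an explicit comparison map from the lattice structure of $\Int$, and prove it is an equivalence by induction, with the Segal condition doing the real work.

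First, since $C^{(-)}$ turns pushouts to pullbacks, the defining double pushout for $X \BluntJoin Y$ yields
\[
  C^{\Delta^m \BluntJoin \Delta^n}
  \Equiv
  C^{\Delta^m}
  \times_{C^{\Delta^m \times \Delta^n}}
  C^{\Delta^m \times \Int \times \Delta^n}
  \times_{C^{\Delta^m \times \Delta^n}}
  C^{\Delta^n},
\]
where the outer legs restrict along the projections of $\Delta^m \times \Delta^n$ and the inner legs restrict along the endpoints $\{0\}, \{1\} \hookrightarrow \Int$. Second, I would define a comparison map $\phi : \Delta^m \BluntJoin \Delta^n \to \Delta^{m+1+n}$ by sending $\Delta^m$ and $\Delta^n$ through the obvious left and right face inclusions and sending the middle piece via
\[
  \bigl((i_1, \dots, i_m), t, (j_1, \dots, j_n)\bigr)
  \;\mapsto\;
  (i_1 \lor t, \dots, i_m \lor t, t, t \land j_1, \dots, t \land j_n).
\]
Weak decreasingness of the output (and hence membership in $\Delta^{m+1+n}$) follows from monotonicity of $\lor, \land$ together with the weak decreasingness of the inputs, and at $t = 0, 1$ one recovers the two inclusions, so $\phi$ descends from the pushout.

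Third, I would prove $\phi^\ast : C^{\Delta^{m+1+n}} \to C^{\Delta^m \BluntJoin \Delta^n}$ is an equivalence by induction on $m + n$. The base case $m = n = 0$ is immediate because $\Delta^0 \BluntJoin \Delta^0$ is literally $\Int = \Delta^1$. The inductive step reduces to the essential special case $m = 0$: namely, identifying a map $g : \Int \times \Delta^k \to C$ with $g(0, -) = \Const_c$ with an extension of the $\Delta^k \to C$ at $g(1, -)$ to a $\Delta^{k+1}$-simplex whose initial vertex is $c$. This, in turn, follows by another induction on $k$ using the Segal decomposition $C^{\Delta^{k+1}} \Equiv C^{\Delta^1} \times_C C^{\Delta^k}$ together with the fact that $\Int \times \Delta^k$ admits a "staircase" decomposition into $(k{+}1)$-simplices. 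The general $m$ case then follows by applying the $m = 0$ result with $C$ replaced by the category $C^{\Delta^m}$, after noting that the relevant pullbacks commute with the exponential.

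The main obstacle is the inductive step, namely the cocone-to-simplex correspondence: blunt cocones parametrized by $\Int$ must be recognized as simplicial faces, which is precisely where the categorical structure (Segal + Rezk) of $C$ is needed to collapse the ``fat'' $\Int$-parametrization down to the ``thin'' $\Delta^1$-edge. I expect the coherence bookkeeping here to require the most care, though the geometric idea—that the quotient $\Int \times \Delta^k / (\{0\} \times \Delta^k \sim \mathrm{pt})$ has the universal property of $\Delta^{k+1}$ when mapping into a category—is essentially a restatement of the Segal condition.
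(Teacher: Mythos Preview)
The paper states this lemma without proof, so there is nothing to compare against directly; I will just assess your argument.

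Your overall plan is sound and your comparison map $\phi$ is correct: it lands in $\Delta^{m+1+n}$, it agrees with the two face inclusions at $t=0,1$, and it has the obvious retraction $\psi\colon (k_1,\dots,k_{m+1+n})\mapsto\bigl((k_1,\dots,k_m),\,k_{m+1},\,(k_{m+2},\dots,k_{m+1+n})\bigr)$ into the middle summand, with $\phi\circ\psi=\ArrId{}$ on the nose. The base case and the ``essential special case $m=0$'' are also fine; the latter is precisely the statement that for a category $D$ one has $\Hom[D^{\Delta^n}]{\Const_c}{g}\Equiv\Hom[D]{c}{g(0)}$, and your staircase/Segal sketch proves exactly that.

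The gap is in your reduction of the general $(m,n)$ case to $m=0$ by ``replacing $C$ with $C^{\Delta^m}$.'' Applying your $m=0$ result to the category $D=C^{\Delta^m}$ yields $(C^{\Delta^m})^{\Delta^{n+1}}\Equiv(C^{\Delta^m})^{\Delta^0\BluntJoin\Delta^n}$, i.e.\ $C^{\Delta^m\times\Delta^{n+1}}\Equiv C^{\Delta^m\times(\Delta^0\BluntJoin\Delta^n)}$. After reinstating the remaining constraint $h(-,1,-)=g\circ\pi_2$ you get
\[
  C^{\Delta^m\BluntJoin\Delta^n}\;\Equiv\;C^{\Delta^m\times\Delta^{n+1}}\times_{C^{\Delta^m\times\Delta^n}}C^{\Delta^n},
\]
which is not yet $C^{\Delta^{m+1+n}}$; a single application of the one-sided case collapses only one end of the blunt join. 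The fix is to use the one-sided case \emph{twice}: first view $h$ as a morphism $\Const\circ f\Rightarrow\Const_g$ in $(C^{\Delta^n})^{\Delta^m}$ and apply the dual $n=0$ case (i.e.\ $\Hom[D^{\Delta^m}]{F}{\Const_d}\Equiv\Hom[D]{F(m)}{d}$ with $D=C^{\Delta^n}$) to reduce to $\Hom[C^{\Delta^n}]{\Const_{f(m)}}{g}$, and then apply your $m=0$ case to $C$ to get $\Hom[C]{f(m)}{g(0)}$. This identifies $C^{\Delta^m\BluntJoin\Delta^n}$ with $C^{\Delta^m}\times_C C^{\Delta^1}\times_C C^{\Delta^n}$, which is $C^{\Delta^{m+1+n}}$ by Segal. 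So your proof goes through once you invoke the one-sided lemma on both ends rather than on one.
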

\begin{defi}
  If $\DeclVar{f}{\GM}{\Delta^n \to \Delta^m}$ and we take
  $\Con{twist}\prn{f} : {\Delta^{2n + 1} \to \Delta^{2m + 1}}$ to be the map given given by uniquely
  extending the map
  $f^\dagger \BluntJoin f : \Modify[\OM]{\Delta^n} \BluntJoin {\Delta^n} \to \Modify[\OM]{\Delta^m}
  \BluntJoin {\Delta^m}$ along the categorical equivalences
  $\Modify[\OM]{\Delta^i} \BluntJoin {\Delta^i} \to \Delta^{2i + 1}$.
\end{defi}

\begin{restatable}{axiom}{twarraxiom}
  \label{ax:twisted-arrow-equiv}
  For each $\DeclVar{n}{\GM}{\Nat}$, there is a \textup(necessarily unique\textup) function
  $\DeclVar{\eta_n}{\GM}{\Delta^n \to \Modify[\TM]{\Delta^{2n+1}}}$ such that the following map is an
  equivalence, for each category $\DeclVar{C}{\GM}{\Uni}$:
  \[
    \iota \defeq \lambda \MkMod[\GM]{f}.\,\MkMod[\GM]{f^\dagger \circ \eta_n}
    :
    \Modify[\GM]{\Delta^{2n + 1} \to C}
    \to
    \Modify[\GM]{\Delta^{n} \to \Modify[\TM]{C}}
  \]
  Additionally, we require that
  $\tau = \prn{\Coe^{\neg}}^\dagger : \Modify[\TM]{\Delta^{n}} \to \Modify[\TM]{\Modify[\OM]{\Delta^{n}}}$ and
  that the diagrams in Figure~\ref{fig:prelims:tm-diagrams} commute \textup(these are mere properties---all
  objects are sets since $\Modify[\mu]{-}$ preserves h-level\textup).
\end{restatable}
\begin{figure}
  \centering
  \[
    \begin{tikzpicture}[diagram]
      \node (A) {$\Delta^n$};
      \node [right = 3cm of A] (T) {$\Modify[\TM]{\Delta^{2n + 1}}$};
      \node [above = of T] (B0) {$\Modify[\OM]{\Delta^{2n + 1}}$};
      \node [below = of T] (B1) {$\Delta^{2n + 1}$};
      \path[->] (A) edge node[above] {$\eta_n$} (T);
      \path[->] (T) edge node[right] {$\Coe^{\pi^\TM_0}$} (B0);
      \path[->] (T) edge node[right] {$\Coe^{\pi^\TM_1}$} (B1);
      \path[->] (A) edge node[below] {$i_r$} (B1);
      \path[->] (A) edge node[above] {$\bar{i}_l$} (B0);
    \end{tikzpicture}
    \qquad
    \begin{tikzpicture}[diagram]
      \SpliceDiagramSquare{
        width = 3cm,
        nw = \Delta^n,
        sw = \Delta^m,
        ne = \Modify[\TM]{\Delta^{2n + 1}},
        se = \Modify[\TM]{\Delta^{2m + 1}},
        north = \eta_n,
        south = \eta_m,
        west = f,
        east = \Con{twist}\prn{f}^\dagger
      }
    \end{tikzpicture}
  \]

  \caption{Laws for Axiom~\ref{ax:twisted-arrow-equiv}.}
  \label{fig:prelims:tm-diagrams}
\end{figure}

One may visualize $\iota$ as ensuring that $\Modify[\GM]{\Delta^n \to \Modify[\TM]{C}}$ is
isomorphic to a $2n + 1$ simplex in $C$:
\begin{equation*}
  \begin{tikzpicture}[diagram]
    \node (A0) {$c_n$};
    \node [right = of A0] (A1) {$c_{n-1}$};
    \node [right = of A1] (A2) {$\cdots$};
    \node [right = of A2] (A3) {$c_0$};
    \node [below = 1cm of A0] (B0) {$c_{n+1}$};
    \node [right = of B0] (B1) {$c_{n+2}$};
    \node [right = of B1] (B2) {$\cdots$};
    \node [right = of B2] (B3) {$c_{2n}$};
    \path[<-] (A0) edge (A1);
    \path[<-] (A1) edge (A2);
    \path[<-] (A2) edge (A3);
    \path[->] (B0) edge (B1);
    \path[->] (B1) edge (B2);
    \path[->] (B2) edge (B3);
    \path[->] (A0) edge (B0);
  \end{tikzpicture}
\end{equation*}

Under this correspondence, $\eta$ is the unique map $\Delta^n \to \Modify[\TM]{\Delta^{2n + 1}}$
given by the identity $\ArrId{} : \Delta^{2n + 1} \to \Delta^{2n + 1}$ and is thus the universal
$n$-simplex. The map $\pi^\TM_1$ picks out the bottom row and $\pi^\TM_0$ selects the top but
\emph{twisted} so that it lands in $\Modify[\OM]{C}$ rather than $C$. This axiom will only be used
in the proof of Theorem~\ref{thm:tw:tw}, where we use $\Modify[\TM]{-}$ to construct a bifunctorial version
of $\hom$.

\begin{prop}[\textcite{gratzer:2024}]
  The model constructed in Theorem~\ref{thm:prelims:stt-models} extends to a model of modal \HoTT{}
  validating our axioms.
\end{prop}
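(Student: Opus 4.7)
The plan is in two stages: first build the underlying \MTT{} model, then verify each axiom. For the first stage, I would apply the general framework of \textcite{gratzer:mtt-journal:2021}, which reduces the construction to giving a strict 2-functor from the mode theory into a suitable 2-category of models of \HoTT{}. Since the mode theory has a single object, this amounts to interpreting each generating modality as an endofunctor on $\mathcal{E}^{\Op{\SIMP}}$ together with a left adjoint interpreting its lock. The intended interpretations are: $\GM$ and $\SM$ come from the two adjoints of the constant-object functor $\mathcal{E} \hookrightarrow \mathcal{E}^{\Op{\SIMP}}$, arranged so that $\GM \dashv \SM$; $\OM$ is precomposition with the duality endofunctor of $\SIMP$; and $\TM$ is precomposition with the join endofunctor $\brk{n} \mapsto \Op{\brk{n}} \ast \brk{n}$. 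The generating 2-cells $\epsilon, \zeta, \tau, \pi_0^\TM, \pi_1^\TM$ are then interpreted as the obvious natural transformations (unit/counit, the self-duality of $\Op{\brk{n}} \ast \brk{n}$, and the two coprojections).

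The mode-theoretic equations among these functors and 2-cells hold up to natural isomorphism but must be made to hold strictly; this rigidification is the main technical hurdle at the 2-categorical level and is absorbed by the existing \MTT{} machinery of \textcite{gratzer:mtt-journal:2021,gratzer:phd}. With the model in hand, each axiom is verified by unwinding its meaning. Axioms~\ref{ax:int} and~\ref{ax:op-of-int-is-int} follow from $\Int = \Delta^1$ being a bounded distributive lattice with an order-reversing automorphism. Axiom~\ref{ax:global-points} is a direct calculation of global sections of $\Delta^1$. Axiom~\ref{ax:discrete-iff-crisp} holds because constant simplicial objects agree with the $\Delta^1$-null ones in $\mathcal{E}^{\Op{\SIMP}}$. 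Axiom~\ref{ax:cubes-separate} is the Yoneda lemma specialised to the subcategory of representables. The crisp identity axiom holds because each $\Modify[\mu]{-}$ is interpreted as a right adjoint and so preserves path objects.

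The hard part is Axiom~\ref{ax:twisted-arrow-equiv}. I would derive it from the external adjoint characterisation of $\Con{Tw}$: its left adjoint is left Kan extension along the join endofunctor, so that $\eta_n$ is precisely the unit of this adjunction evaluated at the representable $\Delta^n$, and the required equivalence $\iota$ is the corresponding hom-bijection. The commuting diagrams of Figure~\ref{fig:prelims:tm-diagrams} then express naturality of the unit together with compatibility with $\tau, \pi_0^\TM, \pi_1^\TM$, each of which can be read off directly from the explicit join description. The principal obstacle is coordinating the intrinsically weak adjoint data on $\mathcal{E}^{\Op{\SIMP}}$ with the strict 2-categorical structure required by the model-construction theorem — a difficulty the framework of \textcite{gratzer:mtt-journal:2021} is explicitly designed to handle.
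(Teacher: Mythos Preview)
The paper does not actually prove this proposition; it states it as a result of \textcite{gratzer:2024} and appends a remark noting that while \opcit{} did not treat $\Modify[\TM]{-}$, the methods there extend, citing \textcite{mukherjee:2023} for the fact that the twisted arrow operation is a Quillen right adjoint. Your proposal therefore goes well beyond what the paper supplies, giving an outline of what such a proof would look like rather than matching any argument in the text.

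Your outline is broadly the right shape, but there is one point worth flagging. You describe the interpretation of the modalities at the level of $\infty$-topoi and their endofunctors, and invoke the \MTT{} machinery to strictify the 2-categorical data. What the paper's remark singles out, however, is that the relevant functors must be \emph{Quillen} right adjoints on the presenting type-theoretic model topos, not merely right adjoints at the $\infty$-categorical level: this is what ensures they interact correctly with the fibrancy and universe structure used to model \HoTT{}. For $\GM$, $\SM$, and $\OM$ this is handled in \textcite{gratzer:2024}; for $\TM$ it is precisely the content of the \textcite{mukherjee:2023} reference. Your sketch elides this model-categorical layer, treating the construction as if it were purely about $\infty$-functors between $\infty$-topoi, which is where the actual technical work lives.
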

\begin{rem}
  While our previous work~\parencite{gratzer:2024} did not handle $\Modify[\TM]{-}$, the methods
  employed there scale directly to this situation. In particular, \textcite{mukherjee:2023} give an
  explicit description of the necessary twisted arrow operation and show it is a Quillen right
  adjoint as required to extend the model.
\end{rem}

With modalities to hand, a number of results from classical category theory can be proven
directly. For instance, the so-called fundamental theorem of $\infty$-category theory:

\begin{restatable}{thm}{ffesssurj}
  \label{thm:prelims:ff-ess-surj-to-equiv}
  If $\DeclVar{C,D}{\GM}{\Uni}$ are categories, then $\DeclVar{F}{\GM}{C \to D}$ is an equivalence if
  (1) the induced map $\Modify[\GM]{C} \to \Modify[\GM]{D}$ is surjective, and (2) for any
  $\DeclVar{c,c'}{\GM}{C}$ the map $\Hom{c}{c'} \to \Hom{F\prn{c}}{F\prn{c'}}$ is an equivalence.
\end{restatable}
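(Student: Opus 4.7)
The plan is to verify the criterion of Axiom~\ref{ax:cubes-separate}: for each $\DeclVar{n}{\GM}{\Nat}$, the induced map $(F_*)^\dagger : \Modify[\GM]{\Delta^n \to C} \to \Modify[\GM]{\Delta^n \to D}$ is an equivalence. Since $C$ and $D$ satisfy the Segal condition, induction identifies $C^{\Delta^n}$ with the iterated pullback $C^{\Delta^1} \times_C \cdots \times_C C^{\Delta^1}$, and hence with $\Sum{c_0, \ldots, c_n : C} \Hom{c_0}{c_1} \times \cdots \times \Hom{c_{n-1}}{c_n}$. Because $\Modify[\GM]{-}$ commutes with $\Sum{}$, pullbacks, and binary products (Proposition~\ref{thm:prelims:mtt-facts} and its corollary), applying it yields an analogous decomposition of $\Modify[\GM]{\Delta^n \to C}$ in terms of $\Modify[\GM]{C}$ and families $\Modify[\GM]{\Hom{c_i}{c_{i+1}}}$ over crisp tuples, and similarly for $D$. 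The map induced by $F$ respects this decomposition, so it suffices to establish (a) that $F^\dagger : \Modify[\GM]{C} \to \Modify[\GM]{D}$ is an equivalence, and (b) that for all $\DeclVar{c,c'}{\GM}{C}$, the map $\Modify[\GM]{\Hom{c}{c'}} \to \Modify[\GM]{\Hom{F\prn{c}}{F\prn{c'}}}$ is an equivalence.

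Claim (b) is immediate from hypothesis (2), as $\Modify[\GM]{-}$ preserves equivalences. For (a), hypothesis (1) supplies surjectivity, so it remains to show that $F^\dagger$ induces equivalences on identity types. Using the elimination rule for $\Modify[\GM]{-}$, we may restrict attention to points of the form $\MkMod[\GM]{c}$ and $\MkMod[\GM]{c'}$ for $\DeclVar{c,c'}{\GM}{C}$. The axiom identifying $\MkMod[\GM]{c} = \MkMod[\GM]{c'}$ with $\Modify[\GM]{c =_C c'}$ (and its counterpart for $D$), together with the Rezk conditions on $C$ and $D$, reduces the map on identity types to $\Modify[\GM]{\Iso{c}{c'}} \to \Modify[\GM]{\Iso{F\prn{c}}{F\prn{c'}}}$. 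The equivalence $\Hom{c}{c'} \Equiv \Hom{F\prn{c}}{F\prn{c'}}$ provided by (2) restricts to an equivalence on the isomorphism subtypes---being invertible is a property preserved under any equivalence of $\Hom$-types---so applying $\Modify[\GM]{-}$ finishes the argument.

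The main obstacle is the bookkeeping around the modal structure: one must verify that $\Modify[\GM]{-}$ commutes with each stage of the Segal decomposition and that every hypothesis can be invoked at crisp endpoints, with the axiom on modal identity types and the elimination rule for $\Modify[\GM]{-}$ doing the heavy lifting. Modulo these technicalities, the argument faithfully mirrors the classical $1$-categorical proof that fully-faithful essentially-surjective functors are equivalences, with (1) playing the role of essential surjectivity and (2) that of full faithfulness.
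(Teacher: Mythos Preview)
Your proof is correct and follows the same approach as the paper: invoke Axiom~\ref{ax:cubes-separate}, use the Segal condition plus the fact that $\Modify[\GM]{-}$ commutes with pullbacks to reduce to the cases $n=0$ and $n=1$, and handle $n=0$ via hypothesis~(1) for surjectivity together with hypothesis~(2) and the Rezk condition for the embedding property. The paper's version is terser---it treats $n=0$ and $n=1$ separately and then reduces $n\ge 2$ to those, whereas you run the Segal decomposition uniformly for all $n$ and spell out the $n=0$ embedding argument in more detail (via the crisp-identity axiom and the restriction to isomorphism subtypes)---but these are cosmetic differences.
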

\begin{proof}
  Suppose (1) and (2) holds. We prove that $F$ is an equivalence using Axiom~\ref{ax:cubes-separate} and
  fix $\DeclVar{n}{\GM}{\Nat}$ such that it suffices to show
  $\IsEquiv\prn{F_*^\dagger : \Modify[\GM]{\Delta^n \to C} \to \Modify[\GM]{\Delta^n \to D}}$.

  If $n = 0$, then by (1) $F_*^\dagger$ is surjective and by (2) combined with the Rezk condition,
  it is an embedding. Accordingly, $F_*^\dagger$ is an equivalence in this case. The case for
  $n = 1$ is an immediate consequence of the cases for $n = 0$ along with (2). In general,
  since $C^{\Delta^n} \Equiv C^{\Delta^1} \times_{C} \dots \times_C C^{\Delta^1}$ by the Segal
  condition and likewise for $D$, and $\Modify[\GM]{-}$ commutes with pullbacks, the case for
  $n \ge 2$ follows from $n=0,1$.
\end{proof}

\subsection{Basic building blocks for categories}
Finally, we recall two results from our earlier work~\parencite{gratzer:2024} that will be used repeatedly within this work
to construct new categories. The first is a construction of \emph{full subcategories} using $\SM$:

\begin{prop}\label{thm:prelims:full-subcat}
  If $\DeclVar{C}{\GM}{\Uni}$ is a category and $\DeclVar{\phi}{\GM}{\Modify[\GM]{C} \to \Prop}$ is
  a predicate, then 
  \begin{enumerate}
  \item $C_\phi = \Sum{c : C} \Modify[\SM]{\phi\prn{\MkMod[\GM]{c}}}$ is a category,

  \item the projection map $C_\phi \to C$ induces an equivalence on hom-types,

  \item $\Modify[\GM]{C_\phi} \Equiv \Sum{c : \Modify[\GM]{C}} \phi\prn{c}$, and 

  \item a map $\DeclVar{F}{\GM}{D \to C}$ factors through $C_\phi$ if and only if
    $\phi\prn{\MkMod[\GM]{F\prn d}}$ holds for all $\DeclVar{d}{\GM}{D}$.
\end{enumerate}
\end{prop}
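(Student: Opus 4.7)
The plan is to establish (2) first, derive (1) from it, and obtain (3) and (4) by direct modal calculations.

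For (2), the projection on hom-types $\Hom[C_\phi]{\prn{c,p}}{\prn{c',p'}} \to \Hom[C]{c}{c'}$ has propositional fibers: $\Modify[\mu]{-}$ preserves h-levels, so a section of an $\Int$-indexed family of propositions is again a proposition. Hence the equivalence reduces to surjectivity: given $f : \Hom[C]{c}{c'}$, I must construct a section of $\lambda i.\,\Modify[\SM]{\phi\prn{\MkMod[\GM]{f\prn{i}}}}$ over $\Int$ extending $p$ at $0$ and $p'$ at $1$. The key claim is that this family is constant in $i$: its values lie in $\Modify[\SM]$-modal propositions, and by the adjunction $\Modify[\GM] \Adjoint \Modify[\SM]$ recorded in Proposition~\ref{thm:prelims:mtt-facts}, such values depend only on the $\Modify[\GM]$-image of their argument, which is insensitive to the interior of the arrow $f$. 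Either endpoint witness then extends to the required section.

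Part (1) then follows from (2). Unfolding the Segal condition gives $C_\phi^{\Delta^n} \Equiv \Sum{f : C^{\Delta^n}} \Prod{x : \Delta^n}\Modify[\SM]{\phi\prn{\MkMod[\GM]{f\prn{x}}}}$, and similarly for horns. The $C$-component inherits the Segal condition from $C$, and the propositional fiber component is again constant over each simplex by the same argument as in (2), so extension from horn to simplex is automatic. The Rezk condition follows because both $\prn{c,p} = \prn{c',p'}$ and $\Iso{\prn{c,p}}{\prn{c',p'}}$ decompose as their $C$-counterparts paired with propositional $\Modify[\SM]$-data, so $\mathsf{idtoiso}$ for $C_\phi$ factors through $\mathsf{idtoiso}$ for $C$, which is an equivalence by hypothesis.

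For (3), I compute $\Modify[\GM]{C_\phi}$ using commutation of $\Modify[\GM]$ with $\Sum{}$ (Proposition~\ref{thm:prelims:mtt-facts}), then apply $\Modify[\GM]{\Modify[\SM]{A}} \Equiv \Modify[\GM \circ \SM]{A} \Equiv \Modify[\GM]{A}$ from the mode-theoretic equation $\GM = \GM \circ \SM$, and finally use that $\Modify[\GM]$ acts as the identity on crisp propositions (propositions are $\Int$-null and hence discrete by Axiom~\ref{ax:discrete-iff-crisp}). For (4), the forward direction projects and extracts a crisp witness from $\Modify[\SM]{\phi\prn{\MkMod[\GM]{F\prn{d}}}}$ at crisp $d$, again using $\GM \circ \SM = \GM$. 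The backward direction constructs $\tilde F\prn{d} = \prn{F\prn{d}, w\prn{d}}$, where $w\prn{d} : \Modify[\SM]{\phi\prn{\MkMod[\GM]{F\prn{d}}}}$ is obtained by transposing the crisp hypothesis along $\Modify[\GM] \Adjoint \Modify[\SM]$.

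The principal technical obstacle is making the constancy argument of (2) rigorous within \STT{}: one must verify that the internal family $\lambda c.\,\Modify[\SM]{\phi\prn{\MkMod[\GM]{c}}}$ really does depend only on the $\Modify[\GM]$-image of $c$, so that endpoint witnesses extend over interior points of $\Int$ and, more generally, of arbitrary simplices. Everything else amounts to methodical bookkeeping in the modal calculus.
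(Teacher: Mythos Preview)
The paper does not prove this proposition: it is explicitly introduced as a result ``recalled from our earlier work~\parencite{gratzer:2024}'' and stated without argument. There is therefore no in-paper proof to compare your proposal against.

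That said, your outline is structurally what one expects from the cited source. You are right that (2) is the load-bearing item, that (1) follows by running the same mechanism over $\Delta^n$ and $\Lambda^2_1$, and that (3) and (4) are modal bookkeeping using $\GM \circ \SM = \GM$ and the $\GM \dashv \SM$ transpose. Your justification in (3) is also correct as stated: every proposition $P$ is $\Int$-null (both $P \to P^\Int$ and $P^\Int \to P$ are maps between propositions), so Axiom~\ref{ax:discrete-iff-crisp} applies to the crisp proposition $\phi\prn{c}$ and gives $\Modify[\GM]{\phi\prn{c}} \Equiv \phi\prn{c}$.

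The gap you flag in (2) is the real content, and you have not actually discharged it. The informal claim that the family ``depends only on the $\Modify[\GM]$-image'' is exactly what needs proof, and the adjunction $\GM \dashv \SM$ alone does not give it internally at a non-crisp $i : \Int$. What makes it go through is the transpose $\Modify[\GM]{A \to \Modify[\SM]{B}} \Equiv \Modify[\GM]{\Modify[\GM]{A} \to B}$ from Proposition~\ref{thm:prelims:mtt-facts}: instantiating with $A = \Delta^n$ and using $\Modify[\GM]{\Delta^n}$ finite discrete (Axiom~\ref{ax:global-points}) shows that, under $\GM$, a section of $\lambda x.\,\Modify[\SM]{\phi\prn{\MkMod[\GM]{f\,x}}}$ over $\Delta^n$ is determined by its values at the vertices; since the fiber is a proposition, Axiom~\ref{ax:cubes-separate} then removes the outer $\GM$. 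Your sketch gestures at this but stops short of invoking the right combination of tools.
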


\begin{cor}\label{cor:prelims:crisp-ff-is-ff}
  If $\DeclVar{C,D}{\GM}{\Uni}$ are categories and $\DeclVar{F}{\GM}{C \to D}$, then the canonical
  map $\Hom{c}{c'} \to \Hom{F\prn{c}}{F\prn{c'}}$ is an equivalence for all $c,c' : C$ (notice the
  lack of $\GM$!) if and only if it is an equivalence when $\DeclVar{c,c'}{\GM}{C}$.
\end{cor}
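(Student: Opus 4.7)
The forward direction follows by specialization of the hypothesis to crisp $c, c'$. For the converse, I would define a predicate $\phi : \Modify[\GM]{C \times C} \to \Prop$ by pattern matching, $\phi(\MkMod[\GM]{(c, c')}) := \IsEquiv(F_{c, c'})$, whose well-formedness rests on the crispness of $F$, $C$, and $D$. The hypothesis of the corollary is precisely the statement that $\phi(x)$ is inhabited for every crisp $x : \Modify[\GM]{C \times C}$.

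Next, I invoke Proposition~\ref{thm:prelims:full-subcat} applied to $\phi$ to form $(C \times C)_\phi$, obtaining a crisp category and a fully faithful inclusion $\iota : (C \times C)_\phi \to C \times C$. Part (4) of that proposition, applied to $\ArrId{C \times C}$, yields a section $s : C \times C \to (C \times C)_\phi$ of $\iota$ precisely because $\phi$ is inhabited on every crisp pair. Passing through $\Modify[\GM]{-}$, the section $\Modify[\GM]{s}$ forces $\Modify[\GM]{\iota}$ to be surjective; combined with fully-faithfulness from part (2), Theorem~\ref{thm:prelims:ff-ess-surj-to-equiv} upgrades $\iota$ to an equivalence.

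Inverting $\iota$ then exhibits, for every $(c, c') : C \times C$, an element of $\Modify[\SM]{\phi(\MkMod[\GM]{(c, c')})}$. The last step is to translate this back into $\IsEquiv(F_{c, c'})$ itself, and I expect this to be the main obstacle: it amounts to showing that $\Modify[\SM]{-}$ acts as the identity on propositions so that the $\Modify[\SM]{-}$-wrapping may be stripped. This is transparent in the intended model since propositions there are subterminal and hence pointwise-determined, but in the bare syntax it likely needs to be justified via cubical separation (Axiom~\ref{ax:cubes-separate}) or an auxiliary lemma that $\zeta : P \to \Modify[\SM]{P}$ is an equivalence for any proposition $P$.
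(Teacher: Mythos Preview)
Your strategy of invoking Proposition~\ref{thm:prelims:full-subcat} is right, but you apply it to the wrong base category. The obstacle you flag at the end is genuine and is not merely a matter of showing that $\zeta : P \to \Modify[\SM]{P}$ is an equivalence for propositions: the real difficulty is that for non-crisp $(c,c')$ the expression $\phi(\MkMod[\GM]{(c,c')})$ is not even well-formed outside the $\Modify[\SM]{-}$ wrapper, so there is no evident comparison map between the family $(c,c') \mapsto \Modify[\SM]{\phi(\MkMod[\GM]{(c,c')})}$ and the family $(c,c') \mapsto \IsEquiv(F_{c,c'})$ that you actually care about. Knowing that the first family is everywhere contractible does not, on its own, say anything about the second.

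The paper states the result as a bare corollary of Proposition~\ref{thm:prelims:full-subcat} without proof, but the natural derivation applies the proposition to $D$ rather than to $C\times C$. Take $\psi : \Modify[\GM]{D} \to \Prop$ to be the image of $F^\dagger : \Modify[\GM]{C} \to \Modify[\GM]{D}$. Part~(4) then factors $F$ through $D_\psi$; part~(3) makes $\Modify[\GM]{C} \to \Modify[\GM]{D_\psi}$ surjective by construction; and the crisp fully-faithfulness hypothesis combined with part~(2) yields crisp fully-faithfulness of $F : C \to D_\psi$. Theorem~\ref{thm:prelims:ff-ess-surj-to-equiv} now makes $F : C \to D_\psi$ an equivalence, and composing with the inclusion $D_\psi \hookrightarrow D$---which is fully faithful for \emph{all} elements by part~(2)---gives full faithfulness of $F$ for arbitrary $c,c'$. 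This route uses the strength of part~(2) directly and never needs to unwrap a $\Modify[\SM]{-}$.
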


Next, we recall the construction of the category of groupoids which plays the role of the
category of sets in simplicial type theory, \eg{}, we shall use this category to define presheaves:

\begin{prop}
  \label{thm:prelims:space} There is a category $\DeclVar{\Space_i}{\GM}{\Uni[i + 1]}$ with an
  embedding $\Space_i \to \Uni[i]$ such that:
  \begin{itemize}
    \item If $X : A \to \Space_i$, then the composite $A \to \Uni[i]$ is covariant.
    \item The converse holds for $\DeclVar{A}{\GM}{\Uni[i]}$, $\DeclVar{X}{\GM}{A \to \Uni[i]}$: if
      $X$ is covariant, then $X$ factors through $\Space_i$.
  \end{itemize}
\end{prop}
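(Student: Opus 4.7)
The plan is to recall the construction from \textcite{gratzer:2024}. The guiding principle is that $\Space_i$ should be the classifier for covariant families: maps into $\Space_i$ should correspond exactly to covariant families of types. Since $\Uni[i]$ itself fails to be a category precisely because a generic family $\Int \to \Uni[i]$ is not a function but some unstructured data, the fix is to restrict attention to types on which covariance is automatic, namely $\infty$-groupoids ($\Int$-null types), and use Axiom~\ref{ax:discrete-iff-crisp} to identify this with being $\GM$-modal on crisp inputs.

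Concretely, I would first define $\Space_i$ as the subtype of $\Uni[i]$ picked out by the crisp predicate ``being a groupoid'' (equivalently, being $\Int$-null) using the machinery of Proposition~\ref{thm:prelims:full-subcat} applied to a suitable ambient category built out of $\Modify[\SM]{-}$ and $\Modify[\GM]{-}$. The embedding into $\Uni[i]$ is then the evident projection, and by construction $\Space_i$ is a category whose morphisms between $A,B$ are equivalent to functions $A \to B$, since for $\Int$-null $B$ any family $\Int \to B$ collapses to a point via the equivalence $B \Equiv B^\Int$.

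For the first bullet I would argue that if $X : A \to \Space_i$ then each fiber of the composite $A \to \Uni[i]$ is a groupoid, so via Lemma~\ref{lem:prelims:covariant} one checks that $\TotalType X^\Int \to \TotalType X \times_A A^\Int$ is an equivalence by a fiberwise argument leveraging $\Int$-nullness of the fibers. For the converse bullet, a crisp covariant $X : A \to \Uni[i]$ forces each fiber $X(a)$ to be $\Int$-null: applying covariance to the identity arrow $\ArrId{a}$ shows $X(a) \to X(a)^\Int$ is an equivalence. Hence $X$ factors through the subtype of groupoids, i.e.\ through $\Space_i$.

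The main obstacle lies in verifying that this subtype genuinely carries the structure of a category, i.e.\ satisfies the Segal and Rezk conditions, and that the induced hom-types $\Hom[\Space_i]{A}{B}$ agree with the type of functions $A \to B$ between the underlying groupoids. Both require careful use of the modal structure, chiefly the interplay between $\Modify[\GM]{-}$ and $\Modify[\SM]{-}$ and their commutation with identity types from Axiom~\ref{ax:discrete-iff-crisp}. Since all of this has been worked out in detail in \opcit{}, I would defer the remaining bookkeeping and invoke the cited construction.
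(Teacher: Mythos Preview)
The paper does not prove this proposition at all: immediately after stating it, the authors remark that it was proven in \textcite{gratzer:2024} within a \emph{richer} variant of \STT{} (triangulated type theory), and that here it is taken ``as an `axiom' of sorts'' because the present, simpler type theory lacks the structure needed to carry out the construction. So there is no paper proof to compare against, and your final deferral to \opcit{} is in fact the entire content of the paper's treatment.

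That said, the sketch you give before deferring contains a genuine error worth flagging. You propose to realize $\Space_i$ as the subtype of $\Uni[i]$ spanned by groupoids and then argue the first bullet by a ``fiberwise argument leveraging $\Int$-nullness of the fibers.'' This cannot work: having groupoid fibers does not force a family to be covariant. The family $\lambda i.\,(i = 0) : \Int \to \Uni$ has propositional (hence $\Int$-null) fibers at every $i$, yet it is not covariant---there is no lift of the unique point of $(0=0)$ along the arrow $0 \to 1$ into $(1=0)$, which is empty by Axiom~\ref{ax:global-points}. Covariance is a condition on how the fibers vary, not on the fibers individually, so no fiberwise argument can establish it. Relatedly, Proposition~\ref{thm:prelims:full-subcat} only applies when the ambient type is already a category, which $\Uni[i]$ is not; your phrase ``a suitable ambient category built out of $\Modify[\SM]{-}$ and $\Modify[\GM]{-}$'' gestures at the real difficulty without resolving it.

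This gap is exactly why the cited work needs the extra modal machinery of triangulated type theory: one must build $\Space_i$ so that maps into it are covariant \emph{by construction}, not merely groupoid-valued. Your converse direction (covariant implies groupoid fibers) is fine, but the forward direction as sketched is false.
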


\begin{cor}[Directed univalence]
  \label{cor:prelims:dua}
  $\Space^\Int \Equiv \Sum{X_0,X_1 : \Space} X_1^{X_0}$ and composition in $\Space$ is the
  composition of functions.
\end{cor}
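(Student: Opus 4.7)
The plan is to instantiate Proposition~\ref{thm:prelims:space} at $A = \Int$ and classify covariant families over $\Int$ by pairs of groupoids together with a function. By Proposition~\ref{thm:prelims:space}, maps $\Int \to \Space$ correspond to covariant families $X : \Int \to \Uni$ whose fibers are groupoids: the correspondence is stated crisply but descends to the general case using Corollary~\ref{cor:prelims:crisp-ff-is-ff} and the fact that $\Space \to \Uni$ is an embedding. It thus suffices to exhibit an equivalence between the type of such covariant families and $\Sum{X_0, X_1 : \Space} X_1^{X_0}$.

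Given a covariant $X : \Int \to \Uni$, I take $X_0 \defeq X(0)$, $X_1 \defeq X(1)$, and define $f_X : X_0 \to X_1$ by $f_X(x_0) \defeq \Proj[2](\tilde{x}(1))$, where $\tilde{x} : \Int \to \TotalType{X}$ is the unique lift provided by Lemma~\ref{lem:prelims:covariant} of the tautological arrow $\lambda i.\,i : \Hom[\Int]{0}{1}$ at $(0, x_0)$. Conversely, from $(X_0, X_1, f)$ I build a covariant family whose total space is the directed mapping cylinder of $f$ fibered over $\Int$, verifying covariance via the pullback criterion in Lemma~\ref{lem:prelims:covariant}; the fibers at $0$ and $1$ recover $X_0$ and $X_1$. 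The two constructions are mutually inverse by contractibility of $\Con{Lift}$.

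For composition, any $2$-simplex $\sigma : \Delta^2 \to \Space$ corresponds to a covariant family over $\Delta^2$ whose restrictions along the three edges give two composable arrows in $\Space$ together with their composite. By the functoriality of lifting---each lift over a composite equals the composite of lifts, again by uniqueness in $\Con{Lift}$---the function assigned to the long edge coincides with the composite of the functions assigned to the two short edges. Since the Segal condition determines composition in $\Space$ from such $2$-simplices, composition in $\Space$ is ordinary composition of functions.

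The main obstacle is the backward direction of the classification: constructing a covariant family over $\Int$ from $(X_0, X_1, f)$ in a coherent way and verifying its properties. Once this is in place, both the forward direction and the identification of composition follow formally from the uniqueness of lifts and the Segal condition.
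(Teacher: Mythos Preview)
The paper gives no proof of this corollary: together with Proposition~\ref{thm:prelims:space} it is imported wholesale from \textcite{gratzer:2024}, where directed univalence is essentially part of the construction of $\Space$ rather than a consequence of the classifying property as stated. Your proposal, by contrast, tries to \emph{derive} the result internally from Proposition~\ref{thm:prelims:space}. That is a worthwhile exercise, but as written it has two genuine gaps.

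First, the descent from crisp to arbitrary $X$ is not justified by Corollary~\ref{cor:prelims:crisp-ff-is-ff}. That corollary says a fixed functor between categories is fully faithful as soon as it is fully faithful on $\GM$-annotated objects; it does not let you upgrade the implication ``$X : \Int \to \Uni$ covariant $\Rightarrow$ $X$ factors through $\Space$'' from crisp $X$ to all $X$. What you actually need is that the two sub\emph{types} of $\Int \to \Uni$ (covariant families versus maps landing in $\Space$) coincide, and for that you would have to argue via something like Axiom~\ref{ax:cubes-separate} or Theorem~\ref{thm:prelims:ff-ess-surj-to-equiv} after first checking both sides are categories---but identifying $\Sum{X_0,X_1}X_1^{X_0}$ with a category already presupposes directed univalence, so there is a circularity risk.

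Second, the backward direction---building a covariant family over $\Int$ from $(X_0,X_1,f)$ as a ``directed mapping cylinder''---is only named, not constructed. Writing down such a family explicitly in $\Int \to \Uni$, verifying Lemma~\ref{lem:prelims:covariant} for it, and checking that the round-trips are identities is exactly where the work lies, and nothing in the ambient axioms makes this automatic. In the source paper this is handled by the richer (triangulated) setting in which $\Space$ is built; reproducing it from Proposition~\ref{thm:prelims:space} alone would require you to actually carry out that construction.
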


\begin{cor}
  If $\DeclVar{X}{\GM}{\Uni}$ is a groupoid, then $X : \Space$.
\end{cor}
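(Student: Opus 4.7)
The plan is to reduce to Proposition~\ref{thm:prelims:space} by viewing $X$ as a constant crisp family over the point. Concretely, consider $F \defeq \lambda\_.\,X : \ObjTerm{} \to \Uni$. Since $X$ is crisp, so is $F$, and $\ObjTerm{}$ is trivially crisp, so the converse direction of Proposition~\ref{thm:prelims:space} applies provided we can verify that $F$ is covariant. Once $F$ is shown to factor through $\Space$, evaluating the resulting map at the unique point of $\ObjTerm{}$ recovers $X$ as an element of $\Space$.

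To verify covariance of $F$, I would appeal to Lemma~\ref{lem:prelims:covariant}: it suffices to show that the map $\bar\pi : \TotalType{F}^\Int \to \TotalType{F} \times_{\ObjTerm{}} \ObjTerm{}^\Int$ is an equivalence. Since $\TotalType{F} \Equiv X$ and $\ObjTerm{}^\Int \Equiv \ObjTerm{}$, this amounts to asking that evaluation at $0$, $\mathsf{ev}_0 : X^\Int \to X$, be an equivalence.

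For the final step, I would invoke the groupoid hypothesis directly: by assumption the constant map $c : X \to X^\Int$ with $c(x) \defeq \lambda\_.\,x$ is an equivalence. Because $\mathsf{ev}_0 \circ c = \mathrm{id}_X$, the map $\mathsf{ev}_0$ is a retraction of an equivalence and hence itself an equivalence by 2-out-of-3. This gives covariance of $F$ and completes the argument via Proposition~\ref{thm:prelims:space}.

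There is no serious obstacle; the only point requiring mild care is confirming that the crispness hypotheses of Proposition~\ref{thm:prelims:space} really are met by the constant family built from the crisp $X$, and that the reformulation in Lemma~\ref{lem:prelims:covariant} collapses to the groupoid condition once one observes that $\ObjTerm{}$ absorbs all the data on the base.
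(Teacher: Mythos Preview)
Your proposal is correct and is exactly the intended argument: the paper states this as an immediate corollary of Proposition~\ref{thm:prelims:space}, and the only way to invoke that proposition for a single type is to package it as the constant crisp family over $\ObjTerm{}$, where covariance reduces via Lemma~\ref{lem:prelims:covariant} to $\Int$-nullness, i.e., the groupoid hypothesis. There is nothing to add.
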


\begin{rem}
  We proved~\parencite{gratzer:2024} Proposition~\ref{thm:prelims:space} in a richer variation of \STT{}
  (\emph{triangulated type theory}). Since we only require the result here, we take it as an ``axiom''
  of sorts to work in a simpler type theory and note that one could extend \STT{} to triangulated
  type theory to prove this theorem outright.
\end{rem}

\section{The Yoneda embedding}
\label{sec:tw}

Within this section, we fix a category $\DeclVar{C}{\GM}{\Uni}$. Our goal is to study the type
$\PSH{C} = \Space^{\Modify[\OM]{C}}$ of presheaves on $C$. As $\Space$ is a category, so
is $\PSH{C}$, and by directed univalence:
\begin{lem}
  \label{lem:tw:hom-of-psh}
  If $F,G : \PSH{C}$ then
  $\Hom{F}{G} \Equiv \Prod{\DeclVar{c}{}{\Modify[\OM]{C}}}{F\,c \to G\,c}$
\end{lem}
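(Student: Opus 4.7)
The plan is to unfold the definition of the hom-type and move all the manipulations to the pointwise level via currying, so that directed univalence finishes the argument. Concretely, expand
\[
  \Hom[\PSH{C}]{F}{G} \;\defeq\; \Sum{\alpha : \Int \to \Space^{\Modify[\OM]{C}}} \prn{\alpha\,0 = F} \times \prn{\alpha\,1 = G}.
\]
Since $F,G : \Modify[\OM]{C} \to \Space$ and the codomain is a category, the usual equivalence $\Int \to \Space^{\Modify[\OM]{C}} \Equiv \Modify[\OM]{C} \to \Space^\Int$ obtained by swapping the two function arguments sends $\alpha$ to $\beta \defeq \lambda c\,i.\,\alpha\,i\,c$. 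Under this equivalence and function extensionality, the endpoint conditions $\alpha\,0 = F$ and $\alpha\,1 = G$ become the pointwise conditions $\beta(c)(0) = F(c)$ and $\beta(c)(1) = G(c)$ for every $c : \Modify[\OM]{C}$.

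Now I would apply directed univalence (Corollary~\ref{cor:prelims:dua}) to identify
\[
  \Space^\Int \;\Equiv\; \Sum{X_0,X_1 : \Space} X_1^{X_0},
\]
with the equivalence sending a path $\gamma : \Int \to \Space$ to its pair of endpoints equipped with the induced transport function $X_0 \to X_1$. Post-composing $\beta$ with this equivalence and distributing the $\Pi$-type over the $\Sigma$-type rewrites the total space as
\[
  \Sum{F',G' : \PSH{C}} \Prod{c : \Modify[\OM]{C}} F'(c) \to G'(c),
\]
and the pointwise endpoint constraints fix $F' = F$ and $G' = G$. Contracting the resulting based path types then leaves exactly $\Prod{c : \Modify[\OM]{C}} F(c) \to G(c)$, as desired.

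The only step that requires any real care is the passage between the path-level endpoint conditions $\alpha\,0 = F$ (a path in $\Space^{\Modify[\OM]{C}}$) and the pointwise conditions after currying; this is routine manipulation using function extensionality and the characterization of paths in function types, but it is the one place where small slips are easy. Everything else is either an adjunction between function types or a direct invocation of directed univalence, so once the endpoint bookkeeping is handled the result drops out.
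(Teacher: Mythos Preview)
Your proposal is correct and is essentially the argument the paper intends: the paper states the lemma as an immediate consequence of directed univalence (Corollary~\ref{cor:prelims:dua}) without spelling out a proof, and what you have written is exactly the natural unfolding of that claim via currying, function extensionality, and contraction of based path types. The aside that ``the codomain is a category'' is unnecessary for the currying step (swapping arguments of a two-variable function is a plain type-theoretic equivalence), but it does no harm.
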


\begin{rem}
  Just as with \eg{}, completeness, $\PSH{C}$ implicitly fixes a universe level such that
  $\PSH{C} = \Modify[\OM]{C} \to \Space_i$. We may regard $i$ as a parameter or simply take $i =
  0$. Occasionally, we shall need to insist that $C \Equiv C'$ where $C' : \Uni[i]$ and in such
  situations we shall say that $C$ is \emph{small}. We assume all categories are locally
  small---that each $\Hom[C]{c}{c'}$ is small.
\end{rem}

One may recast the \emph{fibrational} Yoneda lemma proven by \textcite{riehl:2017} to take advantage
of $\PSH{C}$ rather than quantifying over contravariant families as in \opcit:
\begin{lem}
  \label{lem:tw:fibrational-yoneda}
  If $F : \PSH{C}$ and $c : \Modify[\OM]{C}$ then
  $F\prn{c} \cong \Prod{c' : \Modify[\OM]{C}} \Hom[\Modify[\OM]{C}]{c}{c'} \to F\prn{c'}$
\end{lem}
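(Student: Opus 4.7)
The plan is to recast this as an immediate application of the fibrational Yoneda lemma of \textcite{riehl:2017}. That lemma is stated for covariant families $P : A \to \Uni$ and asserts $P\prn{a} \Equiv \Prod{a' : A} \Hom\prn{a, a'} \to P\prn{a'}$. The bridge to presheaves is Proposition~\ref{thm:prelims:space}: a function $A \to \Space$ is the same data as a covariant family $A \to \Uni$, so the only substantive work is translating from the $\Space$-valued perspective to the fibrational one.

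Concretely, I would post-compose $F$ with the embedding $\Space \hookrightarrow \Uni$ of Proposition~\ref{thm:prelims:space}, obtaining a type family $F^\sharp : \Modify[\OM]{C} \to \Uni$. The first clause of that proposition immediately tells us that $F^\sharp$ is covariant, so every $f : \Hom[\Modify[\OM]{C}]{c}{c'}$ induces a transport map $f_! : F^\sharp\prn{c} \to F^\sharp\prn{c'}$; by directed univalence (Corollary~\ref{cor:prelims:dua}) this transport agrees with the functorial action of $F$ on $f$, and in particular $F^\sharp\prn{c}$ coincides with $F\prn{c}$ up to the embedding.

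I would then instantiate the fibrational Yoneda lemma with $A := \Modify[\OM]{C}$, $P := F^\sharp$, and $a := c$, producing the forward map $x \mapsto \lambda c'\,f.\,f_!\prn{x}$ and its inverse $\phi \mapsto \phi\prn{c, \ArrId{c}}$. The only potential obstacle here is the bookkeeping between the $\Space$-valued and covariant $\Uni$-valued formulations of $F$; no new categorical content is required beyond what is already in op.\ cit., which is why the paper labels this statement merely as a recasting.
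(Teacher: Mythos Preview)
Your approach is exactly what the paper intends: it gives no explicit proof of this lemma, stating only that it is a recasting of the fibrational Yoneda lemma of \textcite{riehl:2017} using $\PSH{C}$ in place of a bare covariant family. Your plan of composing $F$ with the embedding $\Space \hookrightarrow \Uni$ from Proposition~\ref{thm:prelims:space} and then invoking their result with $A = \Modify[\OM]{C}$ is precisely the translation the paper has in mind.
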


\subsection{The twisted arrow category and the Yoneda embedding}

In light of this last result, the natural next step is to define a map $C \to \PSH{C}$
which sends $c : C$ to something like $\Hom{-}{c}$.\footnote{Here
  we see why $C$ must be flat: we wish to discuss both $C$ and $\Modify[\OM]{C}$. It is helpful to
  understand $\DeclVar{C}{\GM}{\Uni}$ as a \emph{closed} type which depends on nothing in the
  context and, in particular, need not be treated functorially.}
However, caution is required: $\Hom{-}{c}$ has type $C \to \Uni$ and not the
required $\Modify[\OM]{C} \to \Space$. Upon reflection, the reader should find it surprising that
$\Hom{-}{-} : C \times C \to \Uni$ at all; if all maps are functorial in \STT{} how can $\Hom{-}{-}$
be covariant in both arguments? In fact, this is a consequence of the strange behavior of synthetic
morphisms in $\Uni$. While $\Hom{-}{-}$ is functorial in both arguments, the lack of directed
univalence for $\Uni$ makes this useless. This strangeness ensures that $\Hom{-}{-}$ does not
restrict to a function into $\Space$.

What is required instead is a function $\Phi : \Modify[\OM]{C} \times C \to \Space$ such that
$\Phi\prn{\MkMod[\OM]{c},-} = \Hom{c}{-}$ whenever $\DeclVar{c}{\GM}{C}$,
\ie{}, a function that agrees on
objects with $\Hom{-}{-}$ and has the same functoriality in the second argument, but takes
$\Modify[\OM]{C}$ as its first argument. In fact, it is highly non-obvious where such a function
should come from; \textcite[p.~xii]{riehl:2022} specifically highlight this construction as
remarkably subtle in $\infty$-category theory. It is for this reason that we introduced
$\Modify[\TM]{-}$. Recall the visualization of $\Modify[\GM]{\Delta^n \to \Modify[\TM]{C}}$:
\begin{equation}
  \label{eq:tw:visual}
  \begin{tikzpicture}[diagram,baseline=(link)]
    \node (A0) {$c_n$};
    \node [right = of A0] (A1) {$c_{n-1}$};
    \node [right = of A1] (A2) {$\cdots$};
    \node [right = of A2] (A3) {$c_0$};
    \node [below = 1cm of A0] (B0) {$c_{n+1}$};
    \node [right = of B0] (B1) {$c_{n+2}$};
    \node [right = of B1] (B2) {$\cdots$};
    \node [right = of B2] (B3) {$c_{2n}$};
    \path[<-] (A0) edge (A1);
    \path[<-] (A1) edge (A2);
    \path[<-] (A2) edge (A3);
    \path[->] (B0) edge (B1);
    \path[->] (B1) edge (B2);
    \path[->] (B2) edge (B3);
    \path[->] (A0) edge coordinate[pos=.5] (link) (B0);
  \end{tikzpicture}
\end{equation}
The projection to $\Modify[\OM]{C}$ gives the top row and the map to $C$ yields the bottom. This
visualization for $n$-simplices is very similar to that of
$C^{\Int} = \Sum{c_0,c_1} \Hom{c_0}{c_1}$, but the top row has been twisted to ensure that
one restriction lands in $\Modify[\OM]{C}$ as required for a bifunctorial version of $\Hom{-}{-}$:

\begin{restatable}{thm}{twlfib}
  \label{thm:tw:tw}
  If $\DeclVar{C}{\GM}{\Uni}$ is a category, then the following holds:
  \begin{itemize}
  \item The map
    $\gl{\Coe^{\pi_0^\TM},\Coe^{\pi_1^\TM}} : \Modify[\TM]{C} \to \Modify[\OM]{C} \times C$
    straightens to $\Phi : \Modify[\TM]{C} \times C \to \Space$.
  \item For every $\DeclVar{c}{\GM}{C}$, the map
    $\alpha_c : \Hom{\Hom{c}{-}}{\Phi\prn{\MkMod[\OM]{c},-}}$ induced by the Yoneda lemma
    (Lemma~\ref{lem:tw:fibrational-yoneda})
    applied to $\iota\prn{\MkMod[\GM]{\ArrId{c}}} : \Phi\prn{\MkMod[\OM]{c},c}$ is
    an equivalence.
  \end{itemize}
\end{restatable}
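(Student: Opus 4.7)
The plan is to derive the first bullet from Proposition~\ref{thm:prelims:space}: straightening the projection $\pi \defeq \gl{\Coe^{\pi_0^\TM},\Coe^{\pi_1^\TM}}$ yields the desired $\Phi$ once $\pi$ is shown to be covariant. The second bullet is then a formal consequence of Yoneda, once the fibers of $\pi$ are identified with $\Hom$-types in $C$ so that $\iota\prn{\MkMod[\GM]{\ArrId{c}}}$ corresponds to $\ArrId{c}$.

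For covariance, I would invoke Lemma~\ref{lem:prelims:covariance-simplices}, reducing the problem to showing that for each crisp $n : \Nat$ the canonical map
\[
  \Modify[\GM]{\Delta^n \to \Modify[\TM]{C}}
  \to
  \Modify[\GM]{\Modify[\TM]{C}} \times_{\Modify[\GM]{\Modify[\OM]{C} \times C}} \Modify[\GM]{\Delta^n \to \Modify[\OM]{C} \times C}
\]
is an equivalence. By Axiom~\ref{ax:twisted-arrow-equiv} the left-hand side is $\Modify[\GM]{\Delta^{2n+1} \to C}$, while $\Modify[\GM]{-}$ commuting with products (Proposition~\ref{thm:prelims:mtt-facts}) combined with $\neg : \Modify[\OM]{\Delta^n} \Equiv \Delta^n$ presents the right-hand side as
\[
  \Modify[\GM]{\Delta^1 \to C} \times_{\Modify[\GM]{C}^2} \Modify[\GM]{\Delta^n \to C}^2.
\]
The naturality square of Figure~\ref{fig:prelims:tm-diagrams} identifies the comparison map with restriction along the middle edge $i_m : \Delta^1 \to \Delta^{2n+1}$ together with the two halves $i_l, i_r : \Delta^n \to \Delta^{2n+1}$. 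That this triple of restrictions is jointly an equivalence is the content of Lemma~\ref{lem:prelims:blunt-join-of-simplices} applied to $\Delta^{2n+1} \Equiv \Delta^n \BluntJoin \Delta^n$: giving a map from this blunt join into the category $C$ amounts to giving the two halves and the middle edge connecting them, since the Segal condition on $C$ propagates that edge functorially across $\Delta^n \times \Delta^n$.

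For the second bullet, since $\alpha_c$ is a morphism of $\Space$-valued functors on $C$, establishing that it is an equivalence reduces, by Axiom~\ref{ax:cubes-separate}, to checking for each crisp $c' : C$ that $\alpha_c\prn{c'} : \Hom{c}{c'} \to \Phi\prn{\MkMod[\OM]{c},c'}$ is an equivalence. Because $\Phi$ is the straightening of $\pi$, the fiber $\Phi\prn{\MkMod[\OM]{c},c'}$ coincides with $\pi^{-1}\prn{\MkMod[\OM]{c},c'}$; the $n=0$ case of Axiom~\ref{ax:twisted-arrow-equiv} together with Figure~\ref{fig:prelims:tm-diagrams} presents this fiber as $\Hom[C]{c}{c'}$, sending $\iota\prn{\MkMod[\GM]{\ArrId{c}}}$ to $\ArrId{c}$. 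Under this identification the transformation $\alpha_c$, defined as the Yoneda extension of $\ArrId{c}$, acts pointwise as the identity $\Hom{c}{c'} \to \Hom{c}{c'}$ and is therefore an equivalence.

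The main obstacle lies in the first bullet: one must verify that the abstract comparison map produced by Lemma~\ref{lem:prelims:covariance-simplices} combined with Axiom~\ref{ax:twisted-arrow-equiv} agrees, under $\Delta^{2n+1} \Equiv \Delta^n \BluntJoin \Delta^n$, with the concrete restriction along the middle edge and the two halves. Once that bookkeeping is settled, both parts are routine simplicial computations followed by an appeal to Yoneda.
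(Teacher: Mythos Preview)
Your outline for the first bullet matches the paper's approach closely: both invoke Lemma~\ref{lem:prelims:covariance-simplices}, rewrite the domain via Axiom~\ref{ax:twisted-arrow-equiv}, and identify the comparison map with restriction along the blunt-join decomposition $\Delta^{2n+1} \Equiv \Delta^n \BluntJoin \Delta^n$. The paper discharges the bookkeeping you flag by assembling the square explicitly, observing that all four edges are natural in $C$, and reducing to the universal case $C = \Delta^{2n+1}$ where everything is an h-set and commutativity becomes a direct calculation with the diagrams of Figure~\ref{fig:prelims:tm-diagrams}.

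The second bullet, however, has a real gap. The fiber identification $\Phi\prn{\MkMod[\OM]{c},c'} \Equiv \Hom{c}{c'}$ via the $n=0$ case of $\iota$ is only available for \emph{crisp} $c'$; it is not \emph{a priori} a natural transformation $\Hom{c}{-} \to \Phi\prn{\MkMod[\OM]{c},-}$ against which $\alpha_c$ could be compared and declared ``the identity''. What must actually be shown is that for crisp $f : \Hom{c}{c'}$ the covariant transport satisfies $f_*\prn{\iota\prn{\MkMod[\GM]{\ArrId{c}}}} = \iota\prn{\MkMod[\GM]{f}}$, so that on crisp objects $\tilde\alpha_c$ agrees with the known equivalence $\iota$. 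This transport computation is not formal: it requires exhibiting a specific arrow in $\Modify[\TM]{C}$ as the lift of $f$, which the paper does by writing down an explicit path via $\iota$ applied to a doubly-degenerate $3$-simplex in $C$ and checking its projections using the naturality of $\eta$ and $\Coe^{\pi_1^\TM}$. Your appeal to Axiom~\ref{ax:cubes-separate} to reduce to crisp $c'$ is also incomplete as stated---the axiom reduces to all $\Delta^n$, and one then uses covariance of both families (as the paper does, passing to total spaces and collapsing full faithfulness to the object case) to get down to $n=0$---but this is minor compared to the missing transport computation.
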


\begin{lem}
  \label{lem:tw:simple-coherence}
  Given $\DeclVar{f}{\GM}{\Delta^{1} \to C}$ and let
  $\bar{f} = \iota\prn{\MkMod[\GM]{f}} : \Modify[\TM]{C}$, then there exist  paths:
  \begin{gather*}
    \theta\prn{f}_0 :
    \prn{\Coe^{\pi_0^{\TM}} \circ \Con{extract}\prn{\bar{f}}}\prn{\ast}
    =
    \MkMod[\OM]{f\prn{0}}
    \quad
    \theta\prn{f}_1 :
    \prn{\Coe^{\pi_1^{\TM}} \circ \Con{extract}\prn{\bar{f}}}\prn{\ast}
    =
    f\prn{1}
  \end{gather*}
  These paths are natural in $C$ so that \eg{}, the two paths of the following shape induced by
  $\theta\prn{g \circ f}_1$ and $\theta\prn{f}_1$ with the naturality of $\Coe^{\pi_1^\TM}$ at $g$ agree:
  \[
    \prn{\Coe^{\pi_1^{\TM}} \circ \Con{extract}\prn{\iota\prn{\MkMod[\GM]{g \circ f}}}}\prn{\ast}
    =
    g\prn{f\prn{1}}
  \]
  Here $\ast : \Delta^0$ is the unique element of the unit type.
\end{lem}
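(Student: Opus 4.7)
The plan is to construct $\theta\prn{f}_0$ and $\theta\prn{f}_1$ by unwinding $\iota$ from Axiom~\ref{ax:twisted-arrow-equiv} at $n=0$ and then reading off the required paths from the commuting diagrams of Figure~\ref{fig:prelims:tm-diagrams} combined with the naturality square for $\Coe^{\alpha}$ recorded in Remark~\ref{rem:prelims:coe-natural}. At $n=0$, the defining equation for $\iota$ gives $\iota\prn{\MkMod[\GM]{f}} = \MkMod[\GM]{f^\dagger \circ \eta_0}$ where $\eta_0 : \Delta^0 \to \Modify[\TM]{\Delta^1}$ and $f^\dagger : \Modify[\TM]{\Delta^1} \to \Modify[\TM]{C}$. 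So up to the canonical identification that collapses $\Delta^0$-exponentials, $\Con{extract}\prn{\bar{f}}\prn{\ast}$ unfolds to $f^\dagger\prn{\eta_0\prn{\ast}}$.

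For $\theta\prn{f}_1$: apply the naturality path $\Coe^{\pi_1^\TM} \circ f^\dagger = f \circ \Coe^{\pi_1^\TM}$ from Remark~\ref{rem:prelims:coe-natural} to rewrite $\Coe^{\pi_1^\TM}\prn{f^\dagger\prn{\eta_0\prn{\ast}}}$ as $f\prn{\Coe^{\pi_1^\TM}\prn{\eta_0\prn{\ast}}}$. The right-hand triangle of Figure~\ref{fig:prelims:tm-diagrams} then identifies $\Coe^{\pi_1^\TM} \circ \eta_0$ with $i_r : \Delta^0 \to \Delta^1$, and evaluation at $\ast$ gives $1$, so the composite of these paths is $f\prn{1}$. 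The construction of $\theta\prn{f}_0$ is dual: naturality rewrites $\Coe^{\pi_0^\TM}\prn{f^\dagger\prn{\eta_0\prn{\ast}}}$ as $f^\dagger\prn{\Coe^{\pi_0^\TM}\prn{\eta_0\prn{\ast}}}$, now interpreting $f^\dagger$ as the $\OM$-lift, the left-hand triangle identifies the middle term with $\bar{i}_l\prn{\ast} = i_l^\dagger\prn{\neg\prn{\ast}}$, and unfolding $i_l$ and the action of $f^\dagger$ gives $\MkMod[\OM]{f\prn{0}}$.

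For the naturality in $C$: as noted in Remark~\ref{rem:prelims:coe-natural}, each elementary path entering the construction above is built by pattern matching on a modal element and invoking $\Refl$, so the composite is a chain of commuting conversions that trivializes under the relevant inductions. Consequently, given a further crisp $g$, the two routes described in the statement---unfolding $\theta\prn{g \circ f}_1$ directly versus combining the naturality of $\Coe^{\pi_1^\TM}$ at $g$ with $\theta\prn{f}_1$---both reduce to the same composite of reflexivities after the appropriate inductions on the modal elements, hence agree. The main obstacle is bookkeeping: each individual step is easy, but tracking how $\Coe$ interacts with $f^\dagger$ across the modalities $\TM$, $\OM$, and $\ArrId{}$ simultaneously is delicate. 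The saving grace is precisely that every path in sight ultimately reduces to $\Refl$ under modal induction, which makes the coherence assertion essentially automatic once the paths have been defined.
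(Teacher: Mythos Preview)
Your proposal is correct and follows essentially the same approach as the paper: unfold $\iota$ at $n=0$ to identify $\Con{extract}\prn{\bar{f}}\prn{\ast}$ with $f^\dagger\prn{\eta_0\prn{\ast}}$, apply the naturality of $\Coe^{\pi_i^\TM}$ from Remark~\ref{rem:prelims:coe-natural}, and then invoke the triangles in Figure~\ref{fig:prelims:tm-diagrams}; the coherence clause is handled exactly as you say, by observing that all paths involved reduce to $\Refl$ after modal induction on $\eta_0\prn{\ast}$. The paper only spells out $\theta_1$ and declares $\theta_0$ symmetric, whereas you sketch both, but the argument is the same.
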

\begin{proof}
  We show the second, as they are symmetric. We define $\theta_1$ using the naturality of
  $\Coe^{\pi_1^{\TM}}$ and the behavior of $\Coe^{\pi_1^{\TM}}$ on $\eta$
  from Figure~\ref{fig:prelims:tm-diagrams}:
  \begin{align*}
    &\prn{\Coe^{\pi_1^{\TM}} \circ \Con{extract}\prn{\iota\prn{\MkMod[\GM]{f}}}}\prn{\ast}
    \\
    &= \prn{\Coe^{\pi_1^{\TM}} \circ f^\dagger \circ \eta_0}\prn{\ast}
    \\
    &= \prn{f \circ \Coe^{\pi_1^{\TM}} \circ \eta_0}\prn{\ast} && \text{By naturality, Remark \ref{rem:prelims:coe-natural}}
    \\
    &= f\prn{1} && \text{By the first diagram in Figure~\ref{fig:prelims:tm-diagrams}}
  \end{align*}

  To prove that $\theta_1$ is natural in $C$, we observe that the terms agree up to a commuting
  conversion of elimination rules for modal types. Accordingly, we may prove that these two paths
  agree by induction on $\eta_0\prn{\ast}$ and then reflexivity.
\end{proof}

\begin{proof}[Proof of the Theorem~\ref{thm:tw:tw}]
  We begin by showing that the map
  $\gl{\Coe^{\pi^\TM_0},\Coe^{\pi^\TM_1}} : \Modify[\TM]{C} \to \Modify[\OM]{C} \times C$ is a
  covariant family. By Lemma~\ref{lem:prelims:covariance-simplices} the following map induced by
  $\brc{0} : \Delta^0 \to \Delta^n$ is an equivalence:
  \[
    \epsilon : \Modify[\GM]{\Modify[\TM]{C}^{\Delta^n}} \to
    \prn{
      \DelimMin{1}
      \Modify[\GM]{\Modify[\TM]{C}}
      \times_{\Modify[\GM]{\Modify[\OM]{C}} \times \Modify[\GM]{C}}
      \prn{\Modify[\GM]{\Modify[\OM]{C}^{\Delta^n}} \times \Modify[\GM]{C^{\Delta^n}}}
    }
  \]
  For convenience, we begin by applying a few modal transformations (in particular, using
  $\Con{transp} : \Modify[\GM]{A \to \Modify[\OM]{B}} \Equiv \Modify[\GM]{\Modify[\OM]{A} \to B}$)
  such that it suffices to show that the following map is an equivalence:
  \[
    \epsilon' :
    \Modify[\GM]{\Modify[\TM]{C}^{\Delta^n}} \to
    \prn{
      \DelimMin{1}
      \Modify[\GM]{\Modify[\TM]{C}}
      \times_{\Modify[\GM]{C} \times \Modify[\GM]{C}}
      \prn{\Modify[\GM]{C^{\Modify[\OM]{\Delta^n}}} \times \Modify[\GM]{C^{\Delta^n}}}
    }
  \]

  To prove this, we shall construct a commutative diagram:
  \begin{equation}
    \DiagramSquare{
      width = 7cm,
      sw = \Modify[\GM]{\Modify[\TM]{C}^{\Delta^n}},
      nw = \Modify[\GM]{C^{\Delta^{2n+1}}},
      se =
      \Modify[\GM]{\Modify[\TM]{C}}
      \times_{\Modify[\GM]{C} \times \Modify[\GM]{C}}
      \prn{\Modify[\GM]{C^{\Modify[\OM]{\Delta^n}}} \times \Modify[\GM]{C^{\Delta^n}}},
      ne =
      \Modify[\GM]{\Delta^1 \to C}
      \times_{\Modify[\GM]{C} \times \Modify[\GM]{C}}
      \prn{\Modify[\GM]{C^{\Modify[\OM]{\Delta^n}}} \times \Modify[\GM]{C^{\Delta^n}}},
      north = \phi,
      south = \epsilon',
      west = \iota,
      east = \prn{\iota,\ArrId{}}
    }
    \label{diag:key-diagram}
  \end{equation}
  We define $\phi$ momentarily, but we first remark that $\prn{\iota,\ArrId{}}$ is
  well-formed because of Lemma~\ref{lem:tw:simple-coherence}, which ensures that applying $\iota$ and then
  evaluating commutes appropriately with projection.

  Note also that the two vertical maps are equivalences because $\iota$ is an
  equivalence. Accordingly, by 3-for-2, to show $\epsilon'$ is an equivalence,
  it suffices to ensure that $\phi$ is an equivalence making the diagram commute.\footnote{%
    We emphasize that the filler for this square is irrelevant. Any filler
    suffices to show that $\epsilon'$ is an equivalence, which in turn implies that $\epsilon$ is an
    equivalence as required.}
  We now define $\phi$ as follows:
  \[
    \phi\prn{\MkMod[\GM]{f}} \defeq
    \prn{
      \MkMod[\GM]{\Restrict{f}{n \le n + 1}},
      \prn{
        \MkMod[\GM]{\Restrict{f}{0 \le \dots \le n} \circ \neg},
        \MkMod[\GM]{\Restrict{f}{n + 1 \le \dots \le 2n + 1}}
      },
      \Refl
    }
  \]
  $\phi$ is given by restricting along a categorical equivalence
  ($\Modify[\OM]{\Delta^{n}} \BluntJoin \Delta^n \to \Delta^{2n + 1}$),
  so it is an equivalence.

  Next, we note that all four of the maps in this diagram are weakly natural in $C$. For the bottom
  and top maps, this is an easy observation---the top is given by restriction and the bottom uses
  restriction along with $\Coe^{\pi_1^{\TM}}$, which is also natural
  by Remark~\ref{rem:prelims:coe-natural}. For the left-hand map, this is a consequence of the naturality
  of $\iota$. For the right-hand map, the only wrinkle is the paths used to witness that $\iota$
  commutes with evaluating on projections. This requires a filler for a certain path, but this is
  precisely the naturality coherence supplied by the latter part of Lemma~\ref{lem:tw:simple-coherence}.

  Finally, we argue that the diagram commutes.  Fix
  $\MkMod[\GM]{f} : \Modify[\GM]{\Delta^{2n + 1} \to C}$. We wish to show that
  $\epsilon'\prn{\iota\prn{\MkMod[\GM]{f}}}
  = \prn{\iota,\ArrId{}}\prn{\phi\prn{\MkMod[\GM]{f}}}$. By naturality, however, we
  may use $f$ to reduce to the case where $C = \Delta^{2n + 1}$ and $f = \ArrId{}$. In this case,
  everything involved is a set and so it suffices to argue the diagram commutes when we replace each
  pullback with a simple product. With this in place, we now calculate:
  \begin{align*}
    \epsilon'\prn{\iota\prn{\MkMod[\GM]{\ArrId{}}}}
    &= \epsilon'\prn{\MkMod[\GM]{\eta_n}}
    \\
    &=
      \prn{
        \MkMod[\GM]{\eta_n\,0},
        \prn{
          \Con{transp}\prn{\MkMod[\GM]{\Coe^{\pi_0^{\TM}}\circ \eta_n}},
          \MkMod[\GM]{\Coe^{\pi_1^{\TM}} \circ \eta_n}
        }
      }
    \\
    &=
      \prn{
        \MkMod[\GM]{\eta_n\,0},
        \prn{
          \Con{transp}\prn{\MkMod[\GM]{\bar{i}_l}},
          \MkMod[\GM]{i_r}
        }
      }
    \\
    &=
      \prn{
        \MkMod[\GM]{i_m\prn{\eta_0\,\ast}},
        \prn{
          \MkMod[\GM]{i_l \circ \neg},
          \MkMod[\GM]{i_r}
        }
      }
    \\
    &=
      \prn{\iota,\ArrId{}}\prn{
        \MkMod[\GM]{i_m},
        \prn{
          \MkMod[\GM]{i_l \circ \neg},
          \MkMod[\GM]{i_r}
        }
      }
    \\
    &=\prn{\iota,\ArrId{}}\prn{\phi\prn{\MkMod[\GM]{\ArrId{}}}}
  \end{align*}

  This completes the first step of the argument. The second is to show that for each
  $\DeclVar{c}{\GM}{C}$ the map
  $\alpha_c : \Hom[C \to \Space]{\Hom{c}{-}}{\Phi\prn{\MkMod[\OM]{c},-}}$ is an isomorphism. Passing
  to total spaces, it suffices to show the following map is an equivalence:
  \[
    \tilde{\alpha}_c = \lambda \prn{d,f}.\,\prn{d,f_*\prn{\iota\prn{\MkMod[\GM]{\ArrId{c}}}}}
    : \Sum{d : C} \Hom{c}{d} \to \Sum{d : C} \Phi\prn{\MkMod[\OM]{c},d}
  \]
  In the above, $f_*$ is the covariant transport operation on $\Phi\prn{\MkMod[\OM]{c},-}$.
  Since both sides of this map are categories, it suffices to show that this map is
  fully faithful and essentially surjective.

  In fact, $\tilde\alpha_c$ is an equivalence on objects. To this end, we observe that if
  $\DeclVar{f}{\GM}{\Hom{c}{d}}$ for some
  $\DeclVar{d}{\GM}{C}$, then we can construct the transport $f_*$ alternatively as follows.
  Define a path $h : \Delta^1 \to \Modify[\TM]{C}$ by $h \defeq\iota{\prn{\MkMod[\GM]{\lambda \_, \_, k.\, f\prn{k}}}}$,
  i.e.,
  $h$ corresponds to the following doubly degenerate 3-simplex in $C$:
  \[
    \begin{tikzcd}
      c\ar[d,"\ArrId{c}"'] & c\ar[l,"\ArrId{c}"'] \\
      c\ar[r,"f"'] & d
    \end{tikzcd}
  \]
  We then consider the morphism $\lambda i.\,\prn{\pi^\TM_1\prn{h\,i}, h\,i}$ in
  $\Sum{d : C} \Phi\prn{\MkMod[\OM]{c},d}$. Using the definition of $\iota$ and the naturality of
  $\eta$, this is a morphism from $\prn{c,\iota\prn{\MkMod[\GM]{\ArrId{c}}}}$ to
  $\prn{d, \iota\prn{\MkMod[\GM]{f}}}$. Moreover, the naturality of $\Coe^{\pi_1^{\TM}}$ ensures
  that it lies over $f$ in $C$. Consequently,
  $\tilde{\alpha}_c\prn{d,f} = \prn{d,\iota\prn{\MkMod[\GM]{f}}}$ when restricted to
  $\DeclVar{d}{\GM}{C}$ and $\DeclVar{f}{\GM}{\Hom{c}{d}}$, which is an equivalence because $\iota$
  is invertible.

  For fully-faithfulness, it suffices to show that the following map is an equivalence:
  \[
    \tilde{\alpha}_c :
    \Modify[\GM]{\Int \to \Sum{d : C} \Hom{c}{d}}
    \to
    \Modify[\GM]{\Int \to \Sum{d : C} \Phi\prn{\MkMod[\OM]{c},d}}
  \]
  However, as both sides are the total spaces of covariant families, it suffices to show that the
  following map is an equivalence:
  \[
    \tilde{\alpha}_c :
    \Modify[\GM]{\Sum{d : \Int \to C} \Hom{c}{d\,0}}
    \to
    \Modify[\GM]{\Sum{d : \Int \to C} \Phi\prn{\MkMod[\OM]{c},d\,0}}
  \]
  The conclusion now follows from the previous case.
\end{proof}

\begin{nota}
  We write $\Phi_D : \Modify[\OM]{D} \times D \to \Space$ for the same construction applied to some
  category $D$. Within this section, we continue to write $\Phi$ as shorthand for $\Phi_C$.
\end{nota}

\begin{cor}
  \label{cor:tw:tw-of-op}
  If $\DeclVar{c_0}{}{\Modify[\OM]{C}}$ and $c_1 : C$, then
  $\Phi\prn{c_0,c_1} = \Phi_{\Modify[\OM]{C}}\prn{\MkMod[\OM]{\MkMod[\OM]{c_1}},c_0}$.
\end{cor}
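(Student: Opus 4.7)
The plan is to exhibit a canonical equivalence between the two sides induced by the 2-cell $\tau : \TM \cong \TM \circ \OM$ from the mode theory. Both $\Phi\prn{c_0,c_1}$ and $\Phi_{\Modify[\OM]{C}}\prn{\MkMod[\OM]{\MkMod[\OM]{c_1}},c_0}$ are, by definition (Theorem~\ref{thm:tw:tw}), the values of straightenings of analogous covariant maps: respectively $\gl{\Coe^{\pi_0^\TM},\Coe^{\pi_1^\TM}} : \Modify[\TM]{C} \to \Modify[\OM]{C} \times C$ and its translate $\Modify[\TM]{\Modify[\OM]{C}} \to \Modify[\OM]{\Modify[\OM]{C}} \times \Modify[\OM]{C}$. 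Under the canonical identification $\Modify[\OM]{\Modify[\OM]{C}} \Equiv C$ coming from $\OM \circ \OM = \ArrId{}$, the element $\MkMod[\OM]{\MkMod[\OM]{c_1}}$ corresponds to $c_1$, so it suffices, by univalence in $\Space$, to produce an equivalence of the fibres at $\prn{c_0,c_1}$ and $\prn{c_1,c_0}$ respectively.

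The first step is to construct such an equivalence in the total space: since $\tau$ is invertible, $\Coe^{\tau} : \Modify[\TM]{C} \to \Modify[\TM]{\Modify[\OM]{C}}$ is an equivalence (after using $\Con{comp} : \Modify[\TM \circ \OM]{-} \Equiv \Modify[\TM]{\Modify[\OM]{-}}$). The second step is to check that $\Coe^{\tau}$ intertwines the two pairs of projections modulo the factor-swap $\Modify[\OM]{C} \times C \to C \times \Modify[\OM]{C}$. This is exactly what the first coherence diagram of Section~\ref{sec:prelims} buys us: $\pi^{\TM}_{0} \Whisker \OM \circ \tau = \pi^{\TM}_{1}$ and $\pi^{\TM}_{1} \Whisker \OM \circ \tau = \pi^{\TM}_{0}$, which under functoriality of $\Coe$ (Proposition~\ref{thm:prelims:mtt-facts}) give propositional identifications $\Coe^{\pi^{\TM}_0} \circ \Coe^{\tau} = \Coe^{\pi^{\TM}_1}$ and $\Coe^{\pi^{\TM}_1} \circ \Coe^{\tau} = \Coe^{\pi^{\TM}_0}$, after the $\Modify[\OM]{\Modify[\OM]{-}} \Equiv \ArrId{}$ reduction.

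Putting these together, the induced equivalence on total spaces restricts fibrewise, identifying the fibre of the first map over $\prn{c_0,c_1}$ with the fibre of the second over $\prn{c_1,c_0}$, which yields the desired equality in $\Space$. The main obstacle is purely bookkeeping: one must carefully track the $\Con{comp}$ isomorphisms and the whiskerings by $\OM$ so that the two mode-theoretic equations really do compose into the sought triangular identities; once this is done the argument proceeds mechanically, with no further appeal to the structure of $\Phi$ beyond its description as a straightening.
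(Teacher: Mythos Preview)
Your proposal is correct and follows essentially the same approach as the paper: pass to total spaces, use the equivalence $\Coe^{\tau} : \Modify[\TM]{C} \to \Modify[\TM]{\Modify[\OM]{C}}$ induced by the invertible 2-cell $\tau$, and invoke the mode-theory triangle relating $\tau$ to $\pi_0^\TM$ and $\pi_1^\TM$ to see that this equivalence lies over the swap $\Modify[\OM]{C} \times C \Equiv C \times \Modify[\OM]{C}$. The paper's proof is terser but identical in content; your version simply spells out the whiskering identities and the $\Con{comp}$ coherences that the paper leaves implicit.
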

\begin{proof}
  Passing to total spaces, it suffices to find an equivalence $\Modify[\TM]{C} \to
  \Modify[\TM]{\Modify[\OM]{C}}$ fitting into the following diagram:
  \[
    \begin{tikzpicture}[diagram]
      \node (T) {$\Modify[\TM]{C}$};
      \node[right = 5cm of T] (OT) {$\Modify[\TM]{\Modify[\OM]{C}}$};
      \node[below right = 1.75cm and 2.5cm of T] (C) {$\Modify[\OM]{C} \times C$};
      \path[->] (T) edge (C);
      \path[->] (OT) edge (C);
      \path[->] (T) edge (OT);
    \end{tikzpicture}
  \]

  The map $\Coe^\tau$ precisely satisfies this role: it is invertible because the 2-cell $\tau$
  is an isomorphism and it fits into the commuting diagram because of the corresponding diagram
  in the mode theory.
\end{proof}

\subsection{The Yoneda lemma}

With a bi-functorial version of $\Hom{-}{-}$ to hand, we can now straightforwardly define the Yoneda
embedding $\Yo$ and leverage Lemma~\ref{lem:tw:fibrational-yoneda} into a result about $\Yo$:
\begin{defi}[Yoneda]
   $\Yo = \lambda c.\,\Phi\prn{-,c} : C \to \PSH{C}$.
\end{defi}
\begin{lem}
  \label{lem:tw:weak-yoneda}
  $\Hom{\Yo{c}}{X} \cong X\prn{\MkMod[\OM]{c}}$ for all $X : \PSH{C}$ and $\DeclVar{c}{\GM}{C}$.
\end{lem}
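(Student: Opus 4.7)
The plan is to chain three previously established results: Lemma~\ref{lem:tw:hom-of-psh} to unfold the hom-type in $\PSH{C}$, Theorem~\ref{thm:tw:tw} together with Corollary~\ref{cor:tw:tw-of-op} to identify $\Yo\prn{c}\prn{d}$ with a representable hom-type living in $\Modify[\OM]{C}$, and finally Lemma~\ref{lem:tw:fibrational-yoneda} applied inside $\Modify[\OM]{C}$ to contract away the resulting quantifier.

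First, I would apply Lemma~\ref{lem:tw:hom-of-psh} and unfold the definition of $\Yo$ to obtain
\[
  \Hom[\PSH{C}]{\Yo{c}}{X} \Equiv \Prod{d : \Modify[\OM]{C}} \Phi\prn{d,c} \to X\prn{d}.
\]
Next, Corollary~\ref{cor:tw:tw-of-op} rewrites $\Phi\prn{d,c}$ as $\Phi_{\Modify[\OM]{C}}\prn{\MkMod[\OM]{\MkMod[\OM]{c}}, d}$. Since $c : C$ is crisp and $\GM \circ \OM = \GM$, the point $\MkMod[\OM]{c} : \Modify[\OM]{C}$ is itself crisp, so Theorem~\ref{thm:tw:tw} applies to the category $\Modify[\OM]{C}$ at this crisp point and yields a natural equivalence $\Phi_{\Modify[\OM]{C}}\prn{\MkMod[\OM]{\MkMod[\OM]{c}}, -} \Equiv \Hom[\Modify[\OM]{C}]{\MkMod[\OM]{c}}{-}$. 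Substituting, I obtain
\[
  \Hom[\PSH{C}]{\Yo{c}}{X} \Equiv \Prod{d : \Modify[\OM]{C}} \Hom[\Modify[\OM]{C}]{\MkMod[\OM]{c}}{d} \to X\prn{d}.
\]
Finally, Lemma~\ref{lem:tw:fibrational-yoneda} applied to $X : \PSH{C}$ at the point $\MkMod[\OM]{c} : \Modify[\OM]{C}$ identifies the right-hand side with $X\prn{\MkMod[\OM]{c}}$, closing the argument.

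The main thing to keep track of is the modal bookkeeping: one must verify that $\MkMod[\OM]{c}$ is genuinely crisp so that Theorem~\ref{thm:tw:tw} can be invoked inside $\Modify[\OM]{C}$, and that the chain of equivalences $\Phi\prn{-,c} \Equiv \Phi_{\Modify[\OM]{C}}\prn{\MkMod[\OM]{\MkMod[\OM]{c}},-} \Equiv \Hom[\Modify[\OM]{C}]{\MkMod[\OM]{c}}{-}$ is natural in $d$, so that the substitution under $\Prod{d : \Modify[\OM]{C}}$ is justified and the final application of the fibrational Yoneda lemma is legitimate. Neither requires any new construction beyond the machinery already in place; I expect the proof itself to be essentially a two-line composition of the results just above.
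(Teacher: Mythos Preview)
Your proposal is correct and matches the paper's own proof essentially line for line: the paper likewise combines Theorem~\ref{thm:tw:tw} with Corollary~\ref{cor:tw:tw-of-op} to identify $\Phi\prn{-,c}$ with $\Hom[\Modify[\OM]{C}]{\MkMod[\OM]{c}}{-}$, invokes Lemma~\ref{lem:tw:hom-of-psh} to unfold the presheaf hom, and finishes with Lemma~\ref{lem:tw:fibrational-yoneda}. Your explicit remarks on the crispness of $\MkMod[\OM]{c}$ and the naturality in $d$ are exactly the bookkeeping the paper leaves implicit.
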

\begin{proof}
    Since $c$ is $\GM$-annotated, using Theorem~\ref{thm:tw:tw} and Corollary~\ref{cor:tw:tw-of-op} we have the following
    identification
    $\Hom[\Modify[\OM]{C}]{\MkMod[\OM]{c}}{-} = \Phi\prn{-,c}$. Moreover, by Lemma~\ref{lem:tw:hom-of-psh} we
    additionally have the following:
    \[
      \Prod{c' : \Modify[\OM]{C}} \Phi\prn{c',c} \to X\prn{c'} \cong \Hom{\Yo{c}}{X}
    \]
    The conclusion now follows by Lemma~\ref{lem:tw:fibrational-yoneda}.
\end{proof}

A great deal of category theory is contained within Lemma~\ref{lem:tw:weak-yoneda}.
It shows that $\Yo$ is
fully-faithful on $\GM$-annotated elements of $C$ and that $C$ is a full subcategory of $\PSH{C}$:
\begin{lem}
  \label{lem:tw:repr}
  $\Yo : C \to \PSH{C}$ induces an equivalence $C \Equiv \PSH{C}_{\Con{isRepr}}$ where
  $\Con{isRepr} = \lambda X.\,\Sum{c : C} X = \Yo{c}$.\footnote{Note that $\Con{isRepr}\prn{X}$ is a
  proposition due to Lemma~\ref{lem:tw:weak-yoneda} and Corollary \ref{cor:prelims:crisp-ff-is-ff}.}
\end{lem}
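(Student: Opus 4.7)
The plan is to apply Theorem \ref{thm:prelims:ff-ess-surj-to-equiv}, which reduces proving an equivalence to establishing essential surjectivity and full-faithfulness on $\GM$-annotated objects. First, I would construct the factorization $\Yo' : C \to \PSH{C}_{\Con{isRepr}}$ by appealing to Proposition \ref{thm:prelims:full-subcat}(4): for any $\DeclVar{c}{\GM}{C}$, the presheaf $\Yo\prn{c}$ is manifestly representable, witnessed by $c$ itself together with the reflexivity path.

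Essential surjectivity at the $\GM$-level is then almost tautological. By Proposition \ref{thm:prelims:full-subcat}(3), an element of $\Modify[\GM]{\PSH{C}_{\Con{isRepr}}}$ unfolds to a crisp presheaf $X$ paired with a (propositional) witness that produces some $c : C$ and an identification $p : X = \Yo\prn{c}$. Transporting along $p$ shows that $\MkMod[\GM]{\Yo'\prn{c}}$ equals the given crisp element.

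For full-faithfulness, I would first invoke Corollary \ref{cor:prelims:crisp-ff-is-ff} to reduce to the case of $\DeclVar{c,c'}{\GM}{C}$, and then apply Proposition \ref{thm:prelims:full-subcat}(2) to replace $\Hom[\PSH{C}_{\Con{isRepr}}]{\Yo'\prn{c}}{\Yo'\prn{c'}}$ with $\Hom[\PSH{C}]{\Yo\prn{c}}{\Yo\prn{c'}}$. Now the weak Yoneda lemma (Lemma \ref{lem:tw:weak-yoneda}) gives an equivalence with $\Yo\prn{c'}\prn{\MkMod[\OM]{c}} = \Phi\prn{\MkMod[\OM]{c},c'}$, and the second bullet of Theorem \ref{thm:tw:tw} identifies $\Phi\prn{\MkMod[\OM]{c},-}$ with $\Hom[C]{c}{-}$ via the equivalence $\alpha_c$. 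Composing these produces an equivalence $\Hom[C]{c}{c'} \Equiv \Hom[\PSH{C}]{\Yo\prn{c}}{\Yo\prn{c'}}$.

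The main obstacle is verifying that this composite equivalence \emph{coincides} with the functorial action of $\Yo$ rather than merely showing the two hom-types have the same cardinality up to equivalence. I expect this to be a mild coherence check: because $\alpha_c$ is defined by transporting the distinguished element $\iota\prn{\MkMod[\GM]{\ArrId{c}}}$, tracing a morphism $g : \Hom{c}{c'}$ through the composite yields the natural transformation obtained by post-composing with $g$ under $\alpha$, which is by construction $\Yo\prn{g}$ up to an identification furnished by Lemma \ref{lem:tw:simple-coherence}. Once this compatibility is in hand, Theorem \ref{thm:prelims:ff-ess-surj-to-equiv} closes the argument.
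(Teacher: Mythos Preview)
Your proposal is correct and matches the argument the paper leaves implicit: the lemma is stated without proof, but the surrounding text and footnote make clear it is to be deduced from Lemma~\ref{lem:tw:weak-yoneda} together with Corollary~\ref{cor:prelims:crisp-ff-is-ff} and the full-subcategory machinery of Proposition~\ref{thm:prelims:full-subcat}, exactly as you outline. Your identification of the coherence obligation---that the composite equivalence on hom-types is the action of $\Yo$ itself---is the only nontrivial point, and it is indeed handled by the naturality built into $\alpha_c$ and the fibrational Yoneda lemma; note also that this same obligation is already implicit in the footnote's claim that $\Con{isRepr}$ is a proposition, since that amounts to $\Yo$ being an embedding, which by Rezk-completeness is equivalent to $\Yo$ being fully faithful in the functorial sense.
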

While Lemma~\ref{lem:tw:weak-yoneda} follows directly from
Lemma~\ref{lem:tw:fibrational-yoneda}, the above consequence can only be expressed once there exists a
\emph{category} of presheaves---something missing from \textcite{riehl:2017}. This opens up a new
proof strategy: to prove a result of $C$, we first prove that it holds for $\Space$, then
$\PSH{C}$, then that it restricts to the full subcategory. For instance, we may prove the
aforementioned characterization of natural isomorphisms:
\begin{thm}
  \label{thm:tw:nat-iso}
  If $\DeclVar{C,D}{\GM}{\Uni}$ are categories, $\DeclVar{F,G}{\GM}{C \to D}$, and
  $\DeclVar{\alpha}{\GM}{\Hom{F}{G}}$, then $\Prod{c : C} \IsIso\prn{\alpha\,c}$ if
  $\Prod{\DeclVar{c}{\GM}{C}} \IsIso\prn{\alpha\,c}$.
\end{thm}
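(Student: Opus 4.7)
The plan is to follow the proof strategy highlighted after Lemma~\ref{lem:tw:repr}: first establish the theorem in the special case $D = \Space$, and then bootstrap to arbitrary $D$ via the Yoneda embedding.

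For the $\Space$-case, the crisp covariant families $F, G : C \to \Space$ yield crisp categories $\TotalType{F}$ and $\TotalType{G}$ with covariant projections to $C$, and $\alpha$ induces a crisp map $\bar{\alpha} : \TotalType{F} \to \TotalType{G}$ lying over $C$. I would show that $\bar\alpha$ is an equivalence of categories by applying Theorem~\ref{thm:prelims:ff-ess-surj-to-equiv}: essential surjectivity on $\Modify[\GM]{-}$ holds because $\Modify[\GM]{-}$ preserves equivalences and the hypothesis supplies an equivalence $\alpha\,c$ for each crisp $c$; fully-faithfulness on crisp hom-sets reduces, via the covariant description $\Hom{\prn{c_0,x_0}}{\prn{c_1,x_1}} \Equiv \Sum{f : \Hom{c_0}{c_1}} f_*\,x_0 = x_1$, to checking that $\prn{\alpha\,c_1}_*$ acts as an equivalence on identity types, which is immediate when $c_1$ is crisp. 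Having established that $\bar\alpha$ is an equivalence over $C$ between two covariant families, each fibre map $\alpha\,c$ is an equivalence (using Lemma~\ref{lem:prelims:covariant} to extract fibrewise information), which by directed univalence (Corollary~\ref{cor:prelims:dua}) is exactly invertibility in $\Space$.

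For general $D$, I would post-compose $\alpha$ with the Yoneda embedding $\Yo_D : D \to \PSH{D}$. Lemma~\ref{lem:tw:weak-yoneda} together with Corollary~\ref{cor:prelims:crisp-ff-is-ff} ensures $\Yo_D$ is fully-faithful, so $\alpha\,c$ is iso in $D$ iff $\Yo_D\prn{\alpha\,c}$ is iso in $\PSH{D}$. Invoking the pointwise-iso result of \textcite{riehl:2017} for natural transformations in $\Space^X$, this further reduces to asking each component $\Phi_D\prn{d, \alpha\,c}$ to be an equivalence in $\Space$ for every $d : \Modify[\OM]{D}$. The original crisp hypothesis supplies this for every crisp $c$ and every $d$, and in particular for all crisp pairs $\prn{c,d} : C \times \Modify[\OM]{D}$, so the $\Space$-case applied to the crisp category $C \times \Modify[\OM]{D}$ delivers the desired conclusion.

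The principal technical obstacle is the total-space bookkeeping: confirming that covariance affords the clean hom-description above, and that $\bar\alpha$'s action on second components is precisely $\prn{\alpha\,c_1}_*$ composed with the naturality path, so that crispness of $c_1$ translates directly into an equivalence on identity types. The rest is then a routine combination of Theorem~\ref{thm:prelims:ff-ess-surj-to-equiv} with the standard fact that an equivalence between covariant families over $C$ is fibrewise an equivalence.
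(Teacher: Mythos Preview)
Your approach is correct and takes a genuinely different route from the paper's proof. The paper does not pass through total spaces or Theorem~\ref{thm:prelims:ff-ess-surj-to-equiv} at all. Instead, after embedding $D$ into $\PSH{D_0}$ via Lemma~\ref{lem:tw:repr} (as you also do), it rephrases the conclusion as the assertion that $\prn{\Sum{c : C} \IsIso\prn{\alpha\,c}} \to \prn{\Sum{c : C} \Modify[\SM]{\IsIso\prn{\alpha\,c}}}$ is an equivalence, and then invokes Axiom~\ref{ax:cubes-separate} \emph{directly}: testing on $\Modify[\GM]{\prn{-}^{\Delta^n}}$ and commuting $\Modify[\GM]{-}$ with $\Sum{}$ reduces to the special case $C = \Delta^n$, $D = \Space$, which is handled by directed univalence for $n \le 1$ and the Segal condition for larger $n$. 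Your route instead packages the invocation of cubes-separation inside Theorem~\ref{thm:prelims:ff-ess-surj-to-equiv}, trading a direct simplex argument for a more structural one using covariant total spaces; this is arguably more conceptual once that theorem is available.

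One point in your essential-surjectivity step deserves care. The hypothesis supplies $\IsIso\prn{\alpha\,c}$ for crisp $c$, but the \emph{witness} need not be crisp, so the slogan ``$\Modify[\GM]{-}$ preserves equivalences'' is not literally applicable: you need a crisp inverse to produce a crisp preimage in $\Modify[\GM]{\TotalType{F}}$. What saves you is that $F\prn{c}, G\prn{c} : \Space$ are $\Int$-null, hence so is the proposition $\IsEquiv\prn{\alpha\,c}$; since this type is also crisp, Axiom~\ref{ax:discrete-iff-crisp} lets you upgrade its inhabitant to a crisp one. This is a small patch, not a flaw in the strategy, but it should be stated explicitly.
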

\begin{proof}
  Note that this theorem is trivial for $C = \Delta^0$ and for $C = \Delta^1, D = \Space$ it is
  a consequence of Corollary~\ref{cor:prelims:dua}. The Segal condition for $\Space$ then implies the theorem
  for $C = \Delta^n,D = \Space$.

  By Lemma~\ref{lem:tw:repr}, it suffices to assume $D = \PSH{D_0}$.  By
  Axiom~\ref{ax:discrete-iff-crisp} and Theorem~\ref{thm:prelims:mtt-facts} it suffices to show that
  $\prn{\Sum{c : C} \IsIso\prn{\alpha\,c}} \to \prn{\Sum{c : C}
    \Modify[\SM]{\IsIso\prn{\alpha\,c}}}$ is an equivalence. By Axiom~\ref{ax:cubes-separate}, it
  suffices to prove for all $n$:
  \[
    \IsEquiv\Bigl(\Modify[\GM]{\prn{\Sum{c : C} \IsIso\prn{\alpha\,c}}^{\Delta^n}}
     \to \Modify[\GM]{\prn{\Sum{c : C} \Modify[\SM]{\IsIso\prn{\alpha\,c}}}^{\Delta^n}}\Bigr)
  \]
  Unfolding and commuting $\GM$ with $\Sum{}$, it suffices to show that
    for every $\DeclVar{c}{\GM}{\Delta^n \to C}$ the following holds:
    \[
      \Sum{\DeclVar{\sigma}{}{\Delta^n}} \IsIso\prn{\alpha\prn{c\,\sigma}}
      \Equiv
      \Sum{\DeclVar{\sigma}{\GM}{\Delta^n}} \IsIso\prn{\alpha\prn{c\,\sigma}}
    \]
  Replacing $\alpha$ with $\alpha \circ c$, however, reduces us
  to the already proven case of $C = \Delta^n, D = \Space$.
\end{proof}

Lemma~\ref{lem:tw:weak-yoneda} is already powerful. However, it does not capture that this
equivalence is \emph{natural} in both $c$ and $X$---or, more precisely, since $c$ is $\GM$-annotated
and the equivalence is in $\Uni$, the naturality it yields is trivial. We are able to prove a far
stronger version of the Yoneda lemma that (1) does not need to assume that $\DeclVar{c}{\GM}{C}$,
and (2) yields the desired functoriality in both $c$ and $X$. To do so, we replace $\Hom{-}{-}$
with $\Phi$:

\begin{restatable}[Functorial Yoneda lemma]{thm}{functorialyoneda}
  \label{thm:tw:strong-yoneda}
  There is a natural isomorphism
  $\Phi_{\PSH{C}}\prn{\Yo^\dagger\prn{-}, -} \cong \Con{eval} : \Modify[\OM]{C} \times \PSH{C} \to \Space$.
\end{restatable}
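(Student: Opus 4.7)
The plan is to upgrade the pointwise Yoneda equivalence from Lemma~\ref{lem:tw:weak-yoneda} to a natural isomorphism of functors $\Modify[\OM]{C} \times \PSH{C} \to \Space$. The strategy is to construct a natural transformation in one direction and then appeal to Theorem~\ref{thm:tw:nat-iso} to reduce invertibility to the crisp case already handled.

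First I would construct a natural transformation $\beta : \Con{eval} \to \Phi_{\PSH{C}}(\Yo^\dagger(-), -)$. Applying Theorem~\ref{thm:tw:tw} to $\PSH{C}$ supplies the classifying map $\Phi_{\PSH{C}}$ together with the identification $\Phi_{\PSH{C}}(\MkMod[\OM]{G}, -) \cong \Hom[\PSH{C}]{G}{-}$ for crisp $G$. To define $\beta_{(c,F)} : F(c) \to \Phi_{\PSH{C}}(\Yo^\dagger(c), F)$ uniformly in $c : \Modify[\OM]{C}$, I would exploit the covariant/straightening description of $\Phi_{\PSH{C}}$ via $\Modify[\TM]{\PSH{C}}$: an element $x : F(c)$ should determine a twisted arrow in $\Modify[\TM]{\PSH{C}}$ lying over $(\Yo^\dagger(c), F)$, built from the covariant action of $F$ along generalized morphisms into $c$ (the internal analogue of the classical recipe $x \mapsto (f \mapsto F(f)(x))$). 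Naturality in $c$ and $F$ is then automatic from the implicit functoriality of constructions in \STT{}.

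With $\beta$ in hand, I would apply Theorem~\ref{thm:tw:nat-iso} with the product category $\Modify[\OM]{C} \times \PSH{C}$ in place of $C$ and $\Space$ in place of $D$. This reduces invertibility of $\beta$ to pointwise invertibility at crisp pairs. Given $\DeclVar{\bar{c}}{\GM}{\Modify[\OM]{C}}$ and $\DeclVar{F}{\GM}{\PSH{C}}$, by modal elimination we may write $\bar{c} = \MkMod[\OM]{c_0}$ for some $\DeclVar{c_0}{\GM}{C}$, whence $\Yo^\dagger(\bar{c}) = \MkMod[\OM]{\Yo(c_0)}$. Combining Theorem~\ref{thm:tw:tw} for $\PSH{C}$ with Lemma~\ref{lem:tw:weak-yoneda} yields the composite
\[ \Phi_{\PSH{C}}(\Yo^\dagger(\bar{c}), F) \cong \Hom[\PSH{C}]{\Yo(c_0)}{F} \cong F(\MkMod[\OM]{c_0}), \]
and it remains to check that $\beta_{(\bar{c},F)}$ agrees (up to canonical identification) with the inverse of this composite, which should follow by unfolding definitions and the characterization of the Yoneda iso in terms of the universal element $\iota(\MkMod[\GM]{\ArrId{}})$.

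The hard part will be the first step: constructing $\beta$ uniformly at non-crisp $c$. The classical Yoneda recipe is inherently pointwise at objects of $C$, and internalizing it to arbitrary $c : \Modify[\OM]{C}$ demands careful use of the covariant straightening of $\Modify[\TM]{\PSH{C}}$ together with the blunt-join and twisting machinery behind Axiom~\ref{ax:twisted-arrow-equiv}, in a manner parallel to the construction of $\tilde{\alpha}_c$ inside the proof of Theorem~\ref{thm:tw:tw}. Pinning down the correct fiber of $\Modify[\TM]{\PSH{C}}$ into which $x : F(c)$ is sent, and then verifying that the crisp specialization coincides with the composite above, is the main technical obstacle.
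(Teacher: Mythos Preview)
Your overall plan---construct a comparison map and then invoke Theorem~\ref{thm:tw:nat-iso} to reduce invertibility to the crisp case handled by Lemma~\ref{lem:tw:weak-yoneda}---matches the paper's architecture. The divergence, and the genuine gap, is in how the comparison map is produced. You propose to build $\beta_{(c,F)} : F(c) \to \Phi_{\PSH{C}}(\Yo^\dagger(c),F)$ directly, by sending $x : F(c)$ to a point of the fiber of $\Modify[\TM]{\PSH{C}}$ over $(\Yo^\dagger(c),F)$ via some internal analogue of $x \mapsto (f \mapsto F(f)(x))$. You correctly flag this as the obstacle, and it really is one: the classical recipe is contravariant in $c$, and there is no evident way to make it uniform in a non-crisp $c : \Modify[\OM]{C}$ using only the twisted-arrow axioms. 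Nothing in the machinery of Axiom~\ref{ax:twisted-arrow-equiv} or the proof of Theorem~\ref{thm:tw:tw} supplies such a section; those tools produce $\Phi$, not maps \emph{into} fibers of $\Phi_{\PSH{C}}$ from arbitrary data.

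The paper sidesteps this entirely. Rather than constructing a map between the two total spaces $V = \TotalType{\Phi_{\PSH{C}}(\Yo^\dagger(-),-)}$ and $W = \TotalType{\Con{eval}}$ directly, it notes that both receive maps $v,w$ from $\Modify[\TM]{C}$ (over $\Modify[\OM]{C}\times\PSH{C}$, via $\ArrId{}\times\Yo$), and then replaces $\Modify[\TM]{C}$ by the \emph{free covariant family} $Z$ it generates (using \textcite[Theorem~2.41]{rijke:2020}). The induced maps $\bar v : Z \to V$ and $\bar w : Z \to W$ exist by universality, and the desired equivalence is $\bar v \circ \bar w^{-1}$---once one checks $\bar v,\bar w$ are equivalences. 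That check is again pointwise via Theorem~\ref{thm:tw:nat-iso}, but now it requires an explicit formula for the fibers of $Z$, which is exactly Corollary~\ref{cor:adj:push-pull-formula} (a forward reference the paper flags). So the missing idea in your proposal is: do not attempt the direct construction; instead interpose the free covariant fibration on $\Modify[\TM]{C}$ and use the push-pull formula to identify its fibers.
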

\begin{rem}
  This result uses a handful of results from Section~\ref{sec:adj}. These forward references are justified:
  we do not use Theorem~\ref{thm:tw:strong-yoneda} till Section~\ref{sec:adj:psh}. We present the proof here for
  conceptual coherence.
\end{rem}
\begin{proof}
  The central difficulty in this proof is to find a map
  $\Phi_{\PSH{C}}\prn{\Yo^\dagger\prn{-}, -} \to \Con{eval}$ which can then be checked to be an
  equivalence. To construct this map, we use the presentation of
  $\Modify[\OM]{C} \times \PSH{C} \to \Space$ as covariant families over
  $\Modify[\OM]{C} \times \PSH{C}$. In particular, we consider the following pullback diagrams:
  \[
    \begin{tikzpicture}[diagram]
      \node (PhiPsh) {$\Modify[\TM]{\PSH{C}}$};
      \node [below = 2cm of PhiPsh] (PshPsh) {$\Modify[\OM]{\PSH{C}} \times \PSH{C}$};
      \node [pullback,left = 3.6cm of PhiPsh] (V) {$V$};
      \node [left = 3.6cm of PshPsh] (CPsh) {$\Modify[\OM]{C} \times \PSH{C}$};
      \node [pullback,left = 3.6cm of V] (PhiC) {$\tilde{\Phi}_C$};
      \node [left = 3.6cm of CPsh] (CC) {$\Modify[\OM]{C} \times C$};
      \path[->] (PhiPsh) edge (PshPsh);
      \path[->] (V) edge (PhiPsh);
      \path[->] (V) edge (CPsh);
      \path[->] (PhiC) edge node[above] {$v$} (V);
      \path[->] (PhiC) edge (CC);
      \path[->] (CPsh) edge node[below] {$\Yo^\dagger \times \ArrId{}$} (PshPsh);
      \path[->] (CC) edge node[below] {$\ArrId{} \times \Yo$} (CPsh);
    \end{tikzpicture}
  \]
  \[
    \begin{tikzpicture}[diagram]
      \node (Spt) {$\Sum{A : \Space} A$};
      \node [below = 2cm of Spt] (S) {$\Space$};
      \node [pullback,left = 3.75cm of Spt] (V) {$W$};
      \node [left = 3.75cm of S] (CPsh) {$\Modify[\OM]{C} \times \PSH{C}$};
      \node [pullback,left = 3.75cm of V] (PhiC) {$\Modify[\TM]{C}$};
      \node [left = 3.75cm of CPsh] (CC) {$\Modify[\OM]{C} \times C$};
      \path[->] (Spt) edge (S);
      \path[->] (V) edge (Spt);
      \path[->] (V) edge (CPsh);
      \path[->] (PhiC) edge node[above] {$w$} (V);
      \path[->] (PhiC) edge (CC);
      \path[->] (CPsh) edge node[below] {$\Con{eval}$} (S);
      \path[->] (CC) edge node[below] {$\ArrId{} \times \Yo$} (CPsh);
    \end{tikzpicture}
  \]
  The claim is then that $V \Equiv W$. To show this, we argue that if we replace the composite
  $\Modify[\TM]{C} \to \Modify[\OM]{C} \times C \to \Modify[\OM]{C} \times \PSH{C}$ with the free
  covariant family, then the maps $\bar{v},\bar{w}$ induced by $v$ and $w$ are both equivalences. The
  conclusion then shows $\bar{v} \circ \bar{w}^{-1}$ is the desired equivalence.

  Using \textcite[Theorem 2.41]{rijke:2020} there is a free covariant fibration
  $Z : \Modify[\OM]{C} \times \PSH{C} \to \Uni$ and if $\DeclVar{c}{\GM}{C}$ and
  $\DeclVar{X}{\GM}{\PSH{C}}$ then by Corollary~\ref{cor:adj:push-pull-formula} we have the following:
  \begin{align*}
    &Z\prn{c, X} =
    \\
    &\ \Grpdify\prn{\Sum{c_0,c_1} \Hom{c_0}{c} \times \Hom{\Yo{c_1}}{X} \times \Phi\prn{c_0,c_1}}
  \end{align*}
  To show that \eg{}, $v$ induces an equivalence, we must show that the following map is an
  equivalence:
  \[
    Z\prn{c,X} \to \Hom{\Yo^\dagger\prn{c}}{X}
  \]
  We may use Theorem~\ref{thm:tw:nat-iso} and assume that there exists $\DeclVar{c'}{\GM}{C}$ such that
  $c = \MkMod[\OM]{c'}$ and that $\DeclVar{X}{\GM}{\PSH{C}}$. Moreover, since the right-hand side is
  a groupoid, this map is uniquely induced by extending the canonical map of the following type:
  \begin{align*}
    &\prn{\Sum{c_0,c_1} \Hom{c_0}{c} \times \Hom{\Yo{c_1}}{X} \times \Phi\prn{c_0,c_1}}
    \\
    &\ \to \Phi\prn{\Yo^\dagger\prn{c},X} \Equiv X\prn{c'}
  \end{align*}
  This map sends $\prn{c_0,c_1,f,\alpha,t}$ to $\alpha\,c\,\prn{\Phi\prn{f,\ArrId{}}\,t}$ and one
  may check directly that the assignment $x \mapsto \eta\prn{c,c',\ArrId{},\ArrId{},F_x}$ is a
  quasi-inverse to this map where $F_x : \Hom{\Yo\prn{c'}}{X}$ corresponds to
  $x : X\prn{\MkMod[\OM]{c'}}$ under Lemma~\ref{lem:tw:weak-yoneda}. The case for $w$ is similar.
\end{proof}

\section{Revisiting adjunctions}
\label{sec:adj}

With presheaves and the Yoneda embedding available, we now revisit the theory of adjoint functors
introduced by \textcite{riehl:2017} in \STT{}. They define a pair of functions $f : C \to D$ and
$g : D \to C$ to be adjoint when equipped with
$\iota : \Prod{c,d} \Hom{f\prn{c}}{d} \Equiv \Hom{c}{g\prn{d}}$. While they produce several
equivalent reformulations using a unit and counit natural transformations, no non-trivial examples
of adjunctions are given---unsurprisingly, since concrete examples of categories in \STT{} are relatively
recent. Even with $\Space$ available it is quite difficult to produce examples of such
adjunctions.

It is far more feasible to construct only $f$ and then show that
$\Phi\prn{f^\dagger\prn{-},d} : \PSH{C}$ is representable for every $\DeclVar{d}{\GM}{D}$. This is
comparable to Theorem~\ref{thm:tw:nat-iso}: we wish to give a functorial definition of either $f$ or $g$
and a \emph{non-functorial} definition of the other, and then show that this can be upgraded to a
full adjunction. In this section, we show that this is indeed possible, and we observe that a number
of important adjunctions and results are then immediately within reach. In particular, we shall use
this technique to prove that $\PSH{C}$ is cocomplete and, moreover, is the free cocompletion of $C$.

\subsection{Pointwise adjunctions to adjunctions}
\label{sec:adj:pw}

Let us begin by formalizing the notion of pointwise adjoints:
\begin{defi}
  We say that $\DeclVar{f}{\GM}{C \to D}$ is a pointwise left adjoint if the following type is
  inhabited:
  \[
    \Prod{\DeclVar{d}{\GM}{D}} \Con{isRepr}\prn{\Phi\prn{f^\dagger\prn{-}, d}}
  \]
  Dually, $f$ is a pointwise right adjoint if $f^\dagger : \Modify[\OM]{C} \to \Modify[\OM]{D}$ is a
  pointwise left adjoint.
\end{defi}

Our main theorem relies on two crucial preliminary results. The first shows that any pointwise left
adjoint $f$ gives rise to a function in the other direction picking out the various (necessarily
unique) representing objects for $\Phi\prn{f^\dagger\prn{-}, d}$.
\begin{lem}
  If $\DeclVar{f}{\GM}{C \to D}$ is a pointwise left adjoint, then the type of morphisms
  $\DeclVar{g}{\GM}{D \to C}$ equipped with a natural isomorphism
  $\iota : \Phi\prn{f^\dagger\prn{-}, -} \cong \Yo \circ g$ is contractible.
\end{lem}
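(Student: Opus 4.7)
The plan is to reduce the statement to the universal property of full subcategories combined with the fact that $\Yo$ is an equivalence onto the representables. Consider the functor $F : D \to \PSH{C}$ defined by $d \mapsto \Phi_D(f^\dagger(-), d)$; this is well-typed because $f^\dagger : \Modify[\OM]{C} \to \Modify[\OM]{D}$. Unfolding definitions, the hypothesis that $f$ is a pointwise left adjoint is precisely the assertion that $\Con{isRepr}(F(d))$ holds for every $\DeclVar{d}{\GM}{D}$.

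By Proposition~\ref{thm:prelims:full-subcat}(4) applied to the predicate $\Con{isRepr}$, the functor $F$ factors through the full subcategory $\PSH{C}_{\Con{isRepr}}$, and this factorization is unique since $\Con{isRepr}$ is a proposition (as noted after Lemma~\ref{lem:tw:repr}). Using Lemma~\ref{lem:tw:repr} to identify $\PSH{C}_{\Con{isRepr}}$ with $C$ along $\Yo$, I obtain a crisp $\DeclVar{g}{\GM}{D \to C}$ together with an identification $\Yo \circ g = F$. The Rezk condition for the functor category $\PSH{C}^D$ (itself a category because $\PSH{C}$ is) then converts this identity into a natural isomorphism $\iota : F \cong \Yo \circ g$, giving the center of contraction.

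For the uniqueness half, I observe that applying the Rezk condition again makes the ambient type equivalent to $\Sum{\DeclVar{g}{\GM}{D \to C}} \Yo \circ g = F$, i.e.~the fiber over $F$ of the postcomposition map $\Yo_* : (D \to C) \to (D \to \PSH{C})$. This map factors as the equivalence $(D \to C) \Equiv (D \to \PSH{C}_{\Con{isRepr}})$ supplied by Lemma~\ref{lem:tw:repr}, followed by the embedding on functor categories induced by the fully faithful inclusion of Proposition~\ref{thm:prelims:full-subcat}(2). Consequently $\Yo_*$ is itself fully faithful, so each of its fibers is a proposition; since our construction inhabits the fiber over $F$, it is contractible. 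The only mildly subtle point is the crisp/non-crisp interface between the hypothesis (phrased for crisp $d$) and the natural isomorphism (a non-crisp piece of data), but Proposition~\ref{thm:prelims:full-subcat}(4) is designed to bridge exactly this gap, so beyond invoking the above lemmas no further work is required.
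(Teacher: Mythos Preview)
Your proof is correct and follows essentially the same strategy as the paper: first argue the type is a proposition because $\Yo$ (and hence postcomposition $\Yo_*$) is an embedding, then exhibit an inhabitant by factoring $d \mapsto \Phi_D(f^\dagger(-), d)$ through $\PSH{C}_{\Con{isRepr}} \Equiv C$. The paper's proof is simply terser---it states ``Since $\Yo$ is an embedding, this type is a proposition'' and then constructs $g$ exactly as you do---so your additional remarks about Proposition~\ref{thm:prelims:full-subcat} and the Rezk condition merely make explicit what the paper leaves implicit.
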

\begin{proof}
  Since $\Yo$ is an embedding, this type is a proposition. It therefore suffices to show that it is
  inhabited. By assumption, $\bar{g} = \Phi\prn{f^\dagger\prn{-}, d}$ is representable for all
  $\DeclVar{d}{\GM}{D}$, and thus it factors through $\PSH{C}_{\Con{isRepr}}$. Post-composing
  with the equivalence $\PSH{C}_{\Con{isRepr}} \Equiv C$ yields the desired $g : D \to C$.
\end{proof}

Using this, we prove a universal case of the theorem improving a pointwise adjoint to an adjoint:
every $\DeclVar{g}{\GM}{D \to C}$ that is a cartesian fibration~\parencite{buchholtz:2023}
such that the fiber
over every $\DeclVar{c}{\GM}{C}$ has an initial object~\parencite{bardomiano:2021} admits a left
adjoint.

\begin{lem}
  \label{lem:adj:functoriality-of-init-objects}
  If $\DeclVar{g}{\GM}{D \to C}$ is cartesian and for
  each $\DeclVar{c}{\GM}{C}$ the fiber $D_c$ has an initial object, then there exists $f : C \to D$
  such that $f\prn{c}$ is initial in $D_c$ for all $c : C$.
\end{lem}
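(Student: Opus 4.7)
The plan is to construct $f$ as the inverse of an equivalence out of a suitable full subcategory of $D$. Let $\phi : \Modify[\GM]{D} \to \Prop$ be the crisp predicate asserting that $d$ is initial in the fiber $D_{g(d)}$, and form the full subcategory $D_\phi \subseteq D$ via Proposition~\ref{thm:prelims:full-subcat}. I would show that the restriction $g|_{D_\phi} : D_\phi \to C$ is an equivalence using Theorem~\ref{thm:prelims:ff-ess-surj-to-equiv}, and then take $f$ to be a chosen inverse composed with the inclusion $D_\phi \hookrightarrow D$. Since every object of $D_\phi$ is by construction initial in its fiber, $f(c)$ will be initial in $D_c$ for each $c$.

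For essential surjectivity of $g|_{D_\phi}$, note that for each $\DeclVar{c}{\GM}{C}$ the fiber $D_c$ has an initial object $d_c$ by hypothesis, and this crisply provides a preimage of $\MkMod[\GM]{c}$ in $\Modify[\GM]{D_\phi}$ via Proposition~\ref{thm:prelims:full-subcat}(3).

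For fully faithfulness, fix crisp $d,d' \in D_\phi$ with $g(d) = c$ and $g(d') = c'$. By Proposition~\ref{thm:prelims:full-subcat}(2), the hom-type in $D_\phi$ agrees with $\Hom[D]{d}{d'}$, so it suffices to show $\Hom[D]{d}{d'} \to \Hom[C]{c}{c'}$ is an equivalence. I would analyze this map fiberwise: given $\alpha : \Hom[C]{c}{c'}$, the cartesian property of $g$ produces a cartesian lift with codomain $d'$, and its universal property identifies the fiber over $\alpha$ with the type of vertical morphisms in $D_c$ from $d$ to the domain of this lift. Since $d$ is initial in $D_c$ by $\phi$, each such fiber is contractible, so $\Hom[D]{d}{d'} \to \Hom[C]{c}{c'}$ is an equivalence.

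The main obstacle will be extracting the cartesian lift factorization from the adjoint-style definition of cartesian fibrations and packaging it as a concrete fiberwise equivalence relating $\Hom[D]{d}{d'} \to \Hom[C]{c}{c'}$ to vertical mapping spaces. Once this factorization is in hand, initiality of $d$ in $D_c$ supplies the needed contractibility, Theorem~\ref{thm:prelims:ff-ess-surj-to-equiv} yields the equivalence $g|_{D_\phi} \Equiv C$, and composing an inverse with the inclusion delivers the desired $f$.
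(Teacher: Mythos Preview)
Your approach is different from the paper's and the core idea works, but there is a gap in the last step. The paper does not form a full subcategory; instead it shows directly that $g$ is a \emph{pointwise right adjoint} by computing $\Phi_C(\MkMod[\OM]{c},g(d)) \Equiv \Hom{\ObjInit{D_c}}{d} \Equiv \Phi_D(\MkMod[\OM]{\ObjInit{D_c}},d)$ (using exactly the cartesian-lift-plus-initiality reasoning you sketch for full faithfulness), and then invokes the preceding representability lemma to extract $f$ via the Yoneda embedding. Your route through $D_\phi$ is more elementary in that it avoids $\Phi$ and $\Yo$ entirely; the hom-equivalence you prove is precisely the step the paper labels ``$g$ is cartesian''. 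What the paper's packaging buys is that it slots directly into the narrative of Section~\ref{sec:adj:pw}, while yours would stand on its own without the twisted-arrow machinery.

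The gap is the claim that ``every object of $D_\phi$ is by construction initial in its fiber''. An element of $D_\phi$ carries a witness of $\Modify[\SM]{\phi(\MkMod[\GM]{d})}$, not of $\phi$ evaluated at $d$ itself, and there is no map $\Modify[\SM]{P} \to P$ in general, even for propositions. Proposition~\ref{thm:prelims:full-subcat}(3) only gives you initiality for \emph{crisp} elements of $D_\phi$, so at this stage you know $f(c)$ is initial in $D_c$ only when $\DeclVar{c}{\GM}{C}$. To upgrade to all $c : C$ you still need the paper's final move: rewrite ``$f(c)$ is initial in $D_c$ for all $c$'' as the assertion that the projection $\bigl(\Sum{d:D}\Hom[D_{g(d)}]{f(g(d))}{d}\bigr) \to D$ is an equivalence, and apply Theorem~\ref{thm:prelims:ff-ess-surj-to-equiv} to reduce to the $\GM$-annotated case you already have. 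With that addition your argument goes through.
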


\begin{proof}
  Note that $\Con{hasInitialObj}\prn{D_c}$ is a proposition and, therefore, by
  Axiom~\ref{ax:discrete-iff-crisp} we may assume $\Modify[\GM]{\Con{hasInitialObj}\prn{D_c}}$ holds
  for each $\DeclVar{c}{\GM}{C}$. With this observation to hand, we can show that $g$ is a pointwise
  right adjoint: if $\DeclVar{c}{\GM}{C}$, $\DeclVar{d}{}{D}$:
  \begin{align*}
    \Phi_C\prn{\MkMod[\OM]{c},g\prn{d}}
    &\Equiv \Hom{c}{g\prn{d}}
    \\
    &\Equiv \Hom{\ObjInit{D_c}}{d} && \text{$g$ is cartesian}
    \\
    &\Equiv \Phi_D\prn{\MkMod[\OM]{\ObjInit{D_c}},d}
  \end{align*}

  In this last step, we use our observation that $\Modify[\GM]{\Con{hasInitialObj}\prn{D_c}}$ and,
  crucially, that not only $\Con{hasInitialObj}\prn{D_c}$ holds. In particular, we rely on the fact
  that $\DeclVar{\ObjInit{D_c}}{\GM}{D_c}$.

  Accordingly, we obtain a function $\DeclVar{f}{\GM}{C \to D}$ which sends $\DeclVar{c}{\GM}{C}$ to
  $\ObjInit{D_c}$. It remains to show that $f\prn{c}$ is initial in $D_c$ for all $c : C$.
  Since $D = \Sum{c:C}D_c$, this amounts to the following map being an equivalence:
  $\prn{\Sum{d : D} \Hom[D_{g\prn{d}}]{f\prn{g\prn{d}}}{d}} \to D$.

  To prove this, we use Theorem~\ref{thm:prelims:ff-ess-surj-to-equiv} which allows us to reduce to the
  $\GM$-annotated case, where the conclusion follows from the fact that $f\prn{c}$ is then initial
  in $D_c$.
\end{proof}

\begin{thm}
  \label{thm:adj:pw-to-full-adj}
  Pointwise right adjoints are right adjoints.
\end{thm}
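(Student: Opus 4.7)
The plan is to derive the theorem from Lemma~\ref{lem:adj:functoriality-of-init-objects} by building, from $g : D \to C$, the comma category
\[ E \defeq \Sum{c : C} \Sum{d : D} \Hom[C]{c}{g(d)} \]
equipped with the projection $\pi : E \to C$ sending $(c,d,\alpha)$ to $c$. Verifying that $\pi$ is a cartesian fibration amounts to observing that for $u : \Hom{c}{c'}$ and $(c', d, \alpha) : E_{c'}$, the morphism $(u, \ArrId{d}) : (c, d, \alpha \circ u) \to (c', d, \alpha)$ is cartesian over $u$.

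I then identify the fibre $E_c = c/g \Equiv \Sum{d:D} \Hom{c}{g(d)}$ and argue that, whenever $\DeclVar{c}{\GM}{C}$, the fibre $E_c$ has an initial object. This is precisely the content of the pointwise right adjoint hypothesis: a representing object $f_0(c) : D$ for $\Hom{c}{g(-)}$ together with its universal element $\eta_c : \Hom{c}{g(f_0(c))}$ yields an initial object $(f_0(c), \eta_c)$ of $c/g$. Since having an initial object is a proposition, Axiom~\ref{ax:discrete-iff-crisp} freely interchanges $\GM$-annotated and unannotated existence. Applying Lemma~\ref{lem:adj:functoriality-of-init-objects} to $\pi$ then produces a functor $F : C \to E$, and I define $f : C \to D$ and the unit $\eta$ as the second and third components of $F$.

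Finally, the adjunction isomorphism $\Hom{f(c)}{d} \cong \Hom{c}{g(d)}$ falls out of the universality of $\eta_c$: a morphism $h : f(c) \to d$ corresponds to the morphism $(h, \ArrId{c}) : (f(c), \eta_c) \to (d, g(h) \circ \eta_c)$ in $E_c$, and initiality forces every object $(d, \alpha)$ to receive a unique such map. Naturality in $d$ is immediate, and naturality in $c$ follows from the functoriality of $F$ supplied by Lemma~\ref{lem:adj:functoriality-of-init-objects}, combined with Theorem~\ref{thm:prelims:ff-ess-surj-to-equiv} to extend from the $\GM$-annotated case. The principal obstacle I foresee is verifying that $\pi$ is cartesian in the sense of~\textcite{buchholtz:2023}, which requires unfolding the definition through the appropriate adjoint on path spaces; once that is in hand, the modal bookkeeping is routine thanks to Lemma~\ref{lem:adj:functoriality-of-init-objects} already handling the $\GM$-annotated reduction.
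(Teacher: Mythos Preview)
Your proposal is correct and matches the paper's approach: construct the comma category $\COMMA{C}{g}$, observe that its projection to $C$ is cartesian with each fibre having an initial object by the pointwise hypothesis, apply Lemma~\ref{lem:adj:functoriality-of-init-objects} to obtain a section, and read off the left adjoint and the adjunction isomorphism from its components. The only minor divergence is that Lemma~\ref{lem:adj:functoriality-of-init-objects} already delivers initiality of the section for \emph{all} $c:C$, so the paper derives the hom-equivalence directly for unannotated $c$ and $d$ without your final appeal to Theorem~\ref{thm:prelims:ff-ess-surj-to-equiv}.
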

\begin{proof}
  Given a map $\DeclVar{g}{\GM}{D \to C}$, consider the cartesian family
    \[
      \pi : \COMMA{C}{g} = \prn{\Sum{c : C}\Sum{d : D} \Hom{c}{g\prn{d}}} \to C
    \]
  Since $g$ is a
  pointwise right adjoint, each fiber of $\pi$ over $\DeclVar{c}{\GM}{C}$ has an initial object. We
  then apply Lemma~\ref{lem:adj:functoriality-of-init-objects} to obtain $\bar{f} : C \to
  \COMMA{C}{g}$. Finally, the composite $\Proj[2] \circ \bar{f}$ is the desired left adjoint to $g$:
  \begin{align*}
    &\Hom[C]{c}{g\prn{d}}
    \\
    &\quad\Equiv \Sum{\alpha : \Hom[C]{c}{g\prn{d}}}
      \Hom[\COMMA{C}{g}_c]{\bar{f}\prn{c}}{\prn{c, d, \alpha}}
    \\
    &\quad\Equiv
      \Sum{\alpha : \Hom[C]{c}{g\prn{d}}} \Sum{\beta : \Hom[D]{f\prn{c}}{d}}
      g\prn{\beta} \circ \Proj[3]\prn{\bar{f}\prn{c}} = \alpha
    \\
    &\quad\Equiv \Hom[D]{f\prn{c}}{d}
  \end{align*}
  The first step uses the initiality of $\bar{f}\prn{c}$ in the fiber over $c$ and the second
  unfolds the definition of a morphism in $\COMMA{C}{g}$.
\end{proof}

\subsection{Examples of adjunctions}

We take advantage of Theorem~\ref{thm:adj:pw-to-full-adj} to produce vital examples of
adjoints.%
The most important is the
following:

\begin{thm}
  \label{thm:adj:precomp-is-right-adj}
  If $\DeclVar{f}{\GM}{D \to C}$ and $D$ is small, then
  $\PSH{f} \defeq (f^\dagger)^* : \PSH{C} \to \PSH{D}$ is a right adjoint with left adjoint $f_!$.
\end{thm}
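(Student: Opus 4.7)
The plan is to invoke Theorem~\ref{thm:adj:pw-to-full-adj}: it suffices to show that $\PSH{f}$ is a pointwise right adjoint. Concretely, for each $\DeclVar{Y}{\GM}{\PSH{D}}$ I must exhibit $f_!(Y) : \PSH{C}$ together with a natural equivalence $\Hom_{\PSH{C}}(f_!(Y), X) \Equiv \Hom_{\PSH{D}}(Y, X \circ f^\dagger)$ as $X$ ranges over $\PSH{C}$.

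The candidate is the standard pointwise left Kan extension along $f^\dagger$, which internally takes the form of a groupoid-completed coend: for $c : \Modify[\OM]{C}$,
\[
  f_!(Y)(c) := \Grpdify\Bigl(\sum_{d : \Modify[\OM]{D}} \Hom_{\Modify[\OM]{C}}(f^\dagger(d), c) \times Y(d)\Bigr).
\]
Smallness of $D$ is essential: it ensures the underlying $\sum$-type lies in a universe $\Uni[i]$, so that applying $\Grpdify$ lands in $\Space_i$ as required. Functoriality of $f_!(Y)$ in $c$ is automatic because $c$ appears only covariantly through the hom-type. Functoriality in $Y$, needed to promote the object-wise assignment to a functor $\PSH{D} \to \PSH{C}$, follows from functoriality of $\sum$ and $\Grpdify$.

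To establish the universal property, I would unfold both hom-types using Lemma~\ref{lem:tw:hom-of-psh} and chain together (i) the universal property of $\Grpdify$, whose target $X(c)$ is automatically a groupoid; (ii) the exchange of $\Prod{}$-types; and (iii) the Yoneda lemma (Lemma~\ref{lem:tw:fibrational-yoneda}) applied to $X$ at $f^\dagger(d)$ to collapse $\Prod{c : \Modify[\OM]{C}} \Hom_{\Modify[\OM]{C}}(f^\dagger(d),c) \to X(c)$ down to $X(f^\dagger(d))$. The net effect is $\Hom_{\PSH{C}}(f_!(Y), X) \Equiv \Prod{d} Y(d) \to X(f^\dagger(d)) \Equiv \Hom_{\PSH{D}}(Y, X \circ f^\dagger)$, as required.

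The main obstacle is performing this Yoneda step uniformly in a non-crisp $d : \Modify[\OM]{D}$; I expect to use the functorial Yoneda lemma (Theorem~\ref{thm:tw:strong-yoneda}) to avoid descending to the crisp case, and to use Corollary~\ref{cor:tw:tw-of-op} to reconcile the hom on $\Modify[\OM]{C}$ with the bifunctorial $\Phi$ inherited from Theorem~\ref{thm:tw:tw}. A secondary subtlety is checking naturality in $Y$ — that the assignment is itself a functor of presheaf categories rather than a mere object-wise operation — but this is routine given functoriality of $\sum$, $\Grpdify$, and the hom-type; Theorem~\ref{thm:adj:pw-to-full-adj} then upgrades the resulting pointwise adjunction to a full adjunction $f_! \Adjoint \PSH{f}$.
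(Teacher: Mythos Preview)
Your overall strategy matches the paper's: both reduce via Theorem~\ref{thm:adj:pw-to-full-adj} to constructing, for each $\DeclVar{Y}{\GM}{\PSH{D}}$, a presheaf $f_!(Y)$ with the expected universal property. Two points deserve attention, though. First, your plan to invoke Theorem~\ref{thm:tw:strong-yoneda} would be circular: that theorem explicitly relies on Corollary~\ref{cor:adj:push-pull-formula}, which in turn uses Theorem~\ref{thm:adj:precomp-is-right-adj} (see the remark preceding the proof of Theorem~\ref{thm:tw:strong-yoneda}). Fortunately this is unnecessary: Lemma~\ref{lem:tw:fibrational-yoneda} is already stated for an arbitrary (not $\GM$-annotated) $c : \Modify[\OM]{C}$, so your step (iii) goes through with $c = f^\dagger(d)$ for varying $d$ without any appeal to the functorial version.

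The more substantive gap is your claim that ``functoriality of $f_!(Y)$ in $c$ is automatic''. You need $c \mapsto \Grpdify\prn{\Sum_{d} \Hom_{\Modify[\OM]{C}}(f^\dagger d, c) \times Y(d)}$ to land in $\Space$, not merely in $\Uni$; knowing that each value is a groupoid is insufficient (consider $\lambda i.\,(i=0) : \Int \to \Uni$). The $\Sum$ before groupoidification is \emph{not} covariant in $c$: lifts of an arrow in $c$ may involve arbitrary morphisms in $\Modify[\OM]{D}$, so uniqueness fails. Applying $\Grpdify$ fibrewise does not obviously repair this, and indeed the whole point of introducing $\Phi$ was that the na\"ive bifunctoriality of $\Hom$ is unusable here. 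The paper sidesteps the issue entirely: rather than writing down a formula, it forms the composite $\TotalType{Y} \to \Modify[\OM]{D} \to \Modify[\OM]{C}$ and applies the orthogonal factorization system of \textcite[Theorem~2.41]{rijke:2020} (localizing at $\brc{0} \to \Int$) to obtain the free covariant replacement, which is a map into $\Space$ by construction and carries the required universal property for free. Your explicit formula then reappears only \emph{a posteriori}, for $\GM$-annotated inputs, as Corollary~\ref{cor:adj:push-pull-formula}.
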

\begin{proof}
  For notational simplicity, we replace $C$ and $D$ with $\Modify[\OM]{C}$ and $\Modify[\OM]{D}$.
  By Theorem~\ref{thm:adj:pw-to-full-adj}, it suffices to assume $\DeclVar{X}{\GM}{D \to \Space}$ and to
  construct $f_!\prn{X} : C \to \Space$ along with a natural bijection
  $\Prod{Y} \Hom{f_!\prn{X}}{Y} \Equiv \Hom{X}{f^*\prn{Y}}$. This is an immediate consequence of
  \textcite[Theorem 2.41]{rijke:2020} after localizing the composite $\Sum{c : C} X\prn{C} \to D$
  against the map $\brc{0} \to \Int$.
\end{proof}

\begin{cor}
  The left adjoint $f_! \Adjoint \PSH{f}$ satisfies $f_! \circ \Yo \cong \Yo \circ f$.
\end{cor}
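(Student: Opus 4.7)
The plan is to prove $f_! \circ \Yo \cong \Yo \circ f$ by combining the adjunction from Theorem~\ref{thm:adj:precomp-is-right-adj} with two applications of the Yoneda lemma. First I would reduce the comparison of the two functors $D \to \PSH{C}$ to showing that for every $Y : \PSH{C}$ the representables $\Hom{f_!(\Yo\prn{-})}{Y}$ and $\Hom{\Yo(f\prn{-})}{Y}$ are naturally isomorphic as functors of $d$; Yoneda in $\PSH{C}$ (Lemma~\ref{lem:tw:repr}) will then promote this to the desired isomorphism $f_! \circ \Yo \cong \Yo \circ f$ in the functor category.

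For the objectwise comparison, adjunction gives $\Hom{f_!(\Yo d)}{Y} \cong \Hom{\Yo d}{\PSH{f}(Y)}$; the Yoneda lemma (Lemma~\ref{lem:tw:weak-yoneda}) rewrites this as $\PSH{f}(Y)(\MkMod[\OM]{d})$, which, once one unfolds $\PSH{f} = (f^\dagger)^*$, equals $Y(\MkMod[\OM]{f d})$; a second application of Yoneda identifies this with $\Hom{\Yo(f d)}{Y}$. This chain is readily seen to be natural in $Y$ and at worst relies on standard properties of adjunctions already developed in Section~\ref{sec:adj:pw}.

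The main obstacle is naturality in $d$: Lemma~\ref{lem:tw:weak-yoneda} is stated only for $\GM$-annotated arguments, so the naive chain yields naturality only in a $\GM$-variable, which is too weak to assemble a morphism in the functor category $\PSH{C}^D$. The fix is to replace each appeal to Lemma~\ref{lem:tw:weak-yoneda} by the functorial Yoneda lemma (Theorem~\ref{thm:tw:strong-yoneda}), which upgrades the Yoneda equivalence to a natural isomorphism of bifunctors on $\Modify[\OM]{C} \times \PSH{C}$ with no restriction on the first variable. With this replacement, the entire chain of equivalences is natural in both $d$ and $Y$, and a final application of the Yoneda principle in $\PSH{C}$ yields the required natural isomorphism $f_! \circ \Yo \cong \Yo \circ f$.
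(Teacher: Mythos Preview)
Your proposal is in the right spirit but has two problems.

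First, organizationally: the paper explicitly notes (in the remark following Theorem~\ref{thm:tw:strong-yoneda}) that the functorial Yoneda lemma is not used until Section~\ref{sec:adj:psh}, whereas this corollary appears strictly earlier. There is no actual circularity---Theorem~\ref{thm:tw:strong-yoneda} depends only on Corollary~\ref{cor:adj:push-pull-formula}, which in turn does not need the present statement---but the paper evidently intends a more elementary argument here.

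Second, and more substantively, there is a variance mismatch you gloss over. The adjunction isomorphism is stated for $\Hom$ and is natural in $X : \PSH{D}$, hence (after precomposing with $\Yo_D$) in $d$ as an element of $D$. The functorial Yoneda lemma, however, identifies $\Phi_{\PSH{D}}(\Yo_D^\dagger(-),-)$ with evaluation, naturally in the first variable \emph{as an element of} $\Modify[\OM]{D}$. Splicing these together would require first upgrading the adjunction to the $\Phi$-level, i.e.\ establishing $\Phi_{\PSH{C}}(f_!^\dagger(-),-) \cong \Phi_{\PSH{D}}(-,\PSH{f}(-))$ as bifunctors on $\Modify[\OM]{\PSH{D}} \times \PSH{C}$. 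This is provable (from the unit and counit together with Theorem~\ref{thm:tw:nat-iso}) but is itself a nontrivial lemma you have not supplied.

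The paper gives no proof, but the argument it presumably has in mind avoids both issues. Functoriality of the modality $\Modify[\TM]{-}$ together with the naturality of $\Coe^{\pi_i^{\TM}}$ (Remark~\ref{rem:prelims:coe-natural}) yields a fibred map $\Modify[\TM]{D} \to \Modify[\TM]{C}$ over $f^\dagger \times f$, hence a natural transformation $\Yo_D \to \PSH{f} \circ \Yo_C \circ f$; transposing across the adjunction produces a candidate $\alpha : f_! \circ \Yo_D \to \Yo_C \circ f$ in $\PSH{C}^D$. Your pointwise chain (adjunction plus Lemma~\ref{lem:tw:weak-yoneda}, for $\GM$-annotated $d$) then shows each $\alpha_d$ is invertible, and Theorem~\ref{thm:tw:nat-iso} upgrades this to invertibility of $\alpha$.
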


\begin{cor}
  $\Space$ is small cocomplete: $\Const : \Space \to \Space^C$ is a right adjoint with left
  adjoint $\Colim{}$ for small categories $\DeclVar{C}{\GM}{\Uni}$. Explicitly, if
  $\DeclVar{X}{\GM}{C \to \Space}$, then $\Colim{C} X = \Grpdify\prn{\Sum{c : C} X\prn{c}}$.
\end{cor}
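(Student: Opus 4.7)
The plan is to invoke Theorem~\ref{thm:adj:precomp-is-right-adj} with the unique crisp functor $f : \Modify[\OM]{C} \to \ObjTerm{}$, after observing that $\Modify[\OM]{C}$ is small whenever $C$ is. This yields an adjunction $f_! \Adjoint \PSH{f}$ with $\PSH{f} : \PSH{\ObjTerm{}} \to \PSH{\Modify[\OM]{C}}$.

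The next step is to identify both sides of this adjunction with the data in the statement. Since $\Modify[\OM]{\ObjTerm{}} \Equiv \ObjTerm{}$, we have $\PSH{\ObjTerm{}} \Equiv \Space$, and the $\tau$-isomorphism $\Modify[\OM]{\Modify[\OM]{-}} \cong \ArrId{}$ identifies $\PSH{\Modify[\OM]{C}} = \Space^{\Modify[\OM]{\Modify[\OM]{C}}}$ with $\Space^{C}$. Under these equivalences $\PSH{f} = (f^\dagger)^*$ sends $X : \Space$ to $\lambda c.\,X(f^\dagger(c))$; since $f^\dagger$ factors through a terminal type, this is exactly $\Const\prn{X}$. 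Hence $\PSH{f} \cong \Const$, and its left adjoint is what deserves the name $\Colim{C}$.

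For the explicit formula I would unfold the construction of $f_!$ from the proof of Theorem~\ref{thm:adj:precomp-is-right-adj}: $f_!\prn{X}$ is the localization of the projection $\Sum{c} X(c) \to \ObjTerm{}$ against $\brc{0} \to \Int$. But localizing a map into $\ObjTerm{}$ against this inclusion is precisely groupoidification of the source, so $f_!\prn{X} \Equiv \Grpdify\prn{\Sum{c : C} X\prn{c}}$, using once more $\Modify[\OM]{\Modify[\OM]{C}} \Equiv C$ to rewrite the indexing type.

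The only mild wrinkle is bookkeeping around the double opposite and verifying that smallness is preserved by $\Modify[\OM]{-}$; both are routine consequences of the axiomatic framework of Section~\ref{sec:prelims}, so no genuine obstacle is anticipated.
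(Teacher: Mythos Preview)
Your approach is correct and is precisely the specialization the paper intends: this corollary is stated without proof immediately after Theorem~\ref{thm:adj:precomp-is-right-adj}, and applying that theorem with the terminal map (and reading off the localization formula from its proof) is exactly the expected argument. One small correction: the identification $\Modify[\OM]{\Modify[\OM]{-}} \Equiv \ArrId{}$ is \emph{not} the $\tau$-isomorphism (that one is $\TM \cong \TM \circ \OM$); it comes instead from the mode-theory equation $\OM \circ \OM = \ArrId{}$ together with Proposition~\ref{thm:prelims:mtt-facts}.
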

\begin{rem}
  One can prove $\Space$ is complete (that $\Const \Adjoint \Lim{}$) by a result of
  \textcite{gratzer:2024}. In particular, they show $\Prod{c : C} X\prn{c} : \Space$ whenever
  $X : C \to \Space$ and $\DeclVar{C}{\GM}{\Uni}$. Corollary~\ref{cor:prelims:dua} and Lemma~\ref{lem:tw:hom-of-psh} then
  imply that $\Hom{A}{\Prod{c : C} X\prn{c}} \Equiv \Hom{\Con{const}\,A}{X}$.
\end{rem}

\begin{lem}
  \label{lem:adj:eval-at-pt-is-radj}
  If $\DeclVar{c}{\GM}{C}$ then $\PSH{c} : \PSH{C} \to \Space$ is a left adjoint.
\end{lem}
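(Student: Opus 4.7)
The strategy is to invoke the opposite of Theorem~\ref{thm:adj:pw-to-full-adj}: it suffices to show that $\PSH{c}$ is pointwise left adjoint, \ie{}, for each $\DeclVar{A}{\GM}{\Space}$ the presheaf $\Phi_\Space(\PSH{c}^\dagger(-), A) : \Modify[\OM]{\PSH{C}} \to \Space$ is represented by some $R(A) : \PSH{C}$. Applying the weak Yoneda lemma (Lemma~\ref{lem:tw:weak-yoneda}) to the putative representing object against a crisp representable $\Yo(c_0)$ forces the pointwise formula $R(A)(\MkMod[\OM]{c_0}) \cong \Hom[C]{c}{c_0} \to A$ for $\DeclVar{c_0}{\GM}{C}$, matching the right Kan extension along $c^\dagger : \Delta^0 \to \Modify[\OM]{C}$.

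This prescription is realized by the composite
\[
  R(A) \defeq \Phi_\Space(-, A) \circ (\Phi_C(\MkMod[\OM]{c}, -))^\dagger : \Modify[\OM]{C} \to \Space.
\]
Both factors are well-typed thanks to the crispness of $c$ and $A$: the inner $\Phi_C(\MkMod[\OM]{c}, -) : C \to \Space$ is a top-level function, so $(-)^\dagger$ promotes it to $\Modify[\OM]{C} \to \Modify[\OM]{\Space}$, after which post-composition with $\Phi_\Space(-, A)$ returns us to $\Space$. Theorem~\ref{thm:tw:tw} applied in both $C$ and $\Space$ confirms $R(A)(\MkMod[\OM]{c_0}) \cong \Hom[C]{c}{c_0} \to A$ as desired.

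Representability then amounts to a natural equivalence $\Phi_\Space(\PSH{c}^\dagger(\bar X), A) \cong \Phi_{\PSH{C}}(\bar X, R(A))$ of presheaves on $\PSH{C}$. I plan to assemble this by splicing together the natural isomorphism $\PSH{c}(-) \cong \Phi_{\PSH{C}}(\MkMod[\OM]{\Yo(c)}, -)$ supplied by the functorial Yoneda lemma (Theorem~\ref{thm:tw:strong-yoneda}) with the very definition of $R(A)$; a pointwise check on crisp representables $\Yo(c_0)$ reduces both sides to $\Hom[C]{c}{c_0} \to A$ via Lemma~\ref{lem:tw:weak-yoneda}, and the passage to arbitrary $\bar X$ is handled by Corollary~\ref{cor:prelims:crisp-ff-is-ff} together with the naturality baked into Theorem~\ref{thm:tw:strong-yoneda}.

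The main obstacle I anticipate is the modal bookkeeping: ensuring the composite formula for $R(A)$ is correctly typed given the interplay of $\GM$- and $\OM$-locks in \MTT{}, and arranging the natural equivalence so that its verification on crisp representable input genuinely propagates to arbitrary $\bar X : \Modify[\OM]{\PSH{C}}$ rather than merely to crisp $\bar X$. The functorial Yoneda lemma is expected to carry most of the burden of naturality and coherence.
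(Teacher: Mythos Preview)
Your plan has a circularity problem. You want to invoke Theorem~\ref{thm:tw:strong-yoneda} (the functorial Yoneda lemma) to verify that your candidate $R(A)$ represents $\Phi_\Space(\PSH{c}^\dagger(-),A)$, but the paper's proof of Theorem~\ref{thm:tw:strong-yoneda} goes through Corollary~\ref{cor:adj:push-pull-formula}, and that corollary is derived \emph{from} Lemma~\ref{lem:adj:eval-at-pt-is-radj} (together with Theorem~\ref{thm:adj:precomp-is-right-adj}). The remark after Theorem~\ref{thm:tw:strong-yoneda} flags this dependency explicitly: the functorial Yoneda lemma is not available until Section~\ref{sec:adj:psh}. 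So appealing to it here begs the question.

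The paper avoids this by working more elementarily. After swapping $C$ for $\Modify[\OM]{C}$, it writes down the right adjoint directly as $c_*X = \lambda c'.\,X^{\Hom{c'}{c}}$ (your $R(A)$ unwound, but without routing through $\Phi_\Space$), observes that this lands in $\Space$ by prior results, and then checks the adjunction by a bare-hands hom-set calculation:
\[
  \Hom{Y}{c_*X}
  \Equiv \Prod{c'} Y(c') \to X^{\Hom{c'}{c}}
  \Equiv \Grpdify\Bigl(\Sum{c'} \Hom{c'}{c} \times Y(c')\Bigr) \to X
  \Equiv Y(c) \to X,
\]
where the last step is just the fibrational Yoneda lemma (Lemma~\ref{lem:tw:fibrational-yoneda}) phrased as $Y(c)\Equiv\Grpdify(\Sum{c'}\Hom{c'}{c}\times Y(c'))$. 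No $\Phi_{\PSH{C}}$, no functorial Yoneda, no density argument. Your appeal to Corollary~\ref{cor:prelims:crisp-ff-is-ff} is also off-target: that corollary upgrades full faithfulness from crisp inputs to all inputs for a \emph{fixed} functor, it does not let you conclude that two presheaves agree from their values on crisp representables---for that you would again need density or functorial Yoneda, reintroducing the circularity.
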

\begin{proof}
  For simplicity, we once more replace $C$ with $\Modify[\OM]{C}$ and invoke
  Theorem~\ref{thm:adj:pw-to-full-adj}. We then assume that we are given $\DeclVar{X}{\GM}{\Space}$ and
  define $c_*X$ to be $\lambda c'.\,X^{\Hom{c'}{c}}$. This is covariant in $c'$
  \parencite{gratzer:2024}. It then suffices to show that the map $X^{\Hom{c}{c}} \to X$ given by
  evaluation at the identity map induces an equivalence
  $\Hom[C \to \Space]{Y}{c_*X} \Equiv \Hom[\Space]{Y\prn{c}}{X}$. This, in turn, is a consequence of
  the fact that the map $Y\prn{c} \to \Grpdify\prn{\Sum{c' : C} \Hom{c'}{c} \times Y\prn{c'}}$ is an
  equivalence, so we may decompose the evaluation map as follows:
  \begin{align*}
    &\Hom[C \to \Space]{Y}{c_*X}
    \\
    &\Equiv \Prod{c' : C} Y\prn{c'} \to X^{\Hom{c'}{c}}
    \\
    &\Equiv \Prod{c' : C} \Hom{c'}{c} \times Y\prn{c'} \to X
    \\
    &\Equiv \Grpdify \prn{\Sum{c' : C} \Hom{c'}{c} \times Y\prn{c'}} \to X
    \\
    &\Equiv Y\prn{c} \to X \qedhere
  \end{align*}
\end{proof}

Combining Theorem~\ref{thm:adj:precomp-is-right-adj} and Lemma~\ref{lem:adj:eval-at-pt-is-radj}, we obtain the
following characterization of $f_!X$:

\begin{cor}
  \label{cor:adj:push-pull-formula}
  If $\DeclVar{X}{\GM}{\PSH{C}}$, $\DeclVar{f}{\GM}{C \to D}$, and $\DeclVar{d}{\GM}{D}$ then we may
  explicitly identify
  $\prn{f_!X}\,d$ as the following type:
  \[
    \Grpdify \prn{\Sum{c : \Modify[\OM]{C}} X\prn{c} \times \Hom{f^\dagger c}{d}}
  \]
\end{cor}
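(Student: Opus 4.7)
The plan is to identify $\prn{f_!X}\,d$ by a Yoneda-style argument in $\Space$: for each $Y : \Space$ I exhibit a natural equivalence $\Hom{\prn{f_!X}\,d}{Y} \cong \Hom{G}{Y}$, where $G$ abbreviates the candidate type $\Grpdify \prn{\Sum{c : \Modify[\OM]{C}} X\prn{c} \times \Hom{f^\dagger c}{d}}$. Since $\Space$ is a category, the Yoneda lemma (ultimately a consequence of Lemma~\ref{lem:tw:weak-yoneda}) then identifies the two corepresenting objects and yields the desired equivalence.

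The natural equivalence is assembled by chaining three adjunctions. First, the evaluation adjunction $\PSH{d} \Adjoint d_*$ of Lemma~\ref{lem:adj:eval-at-pt-is-radj} rewrites $\Hom{\prn{f_!X}\,d}{Y}$ as $\Hom{f_!X}{d_*Y}$, where $d_*Y = \lambda d'.\,Y^{\Hom{d'}{d}}$. Second, the adjunction $f_! \Adjoint \PSH{f}$ of Theorem~\ref{thm:adj:precomp-is-right-adj}, applied with the roles of $C,D$ swapped, transforms this into $\Hom{X}{\PSH{f}\prn{d_*Y}}$; unfolding $\PSH{f}$ as precomposition with $f^\dagger$ identifies the target presheaf as $\lambda c.\,Y^{\Hom{f^\dagger c}{d}}$. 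Third, Lemma~\ref{lem:tw:hom-of-psh} rewrites the hom as $\Prod{c : \Modify[\OM]{C}} X\prn{c} \to Y^{\Hom{f^\dagger c}{d}}$, and currying yields $\prn{\Sum{c : \Modify[\OM]{C}} X\prn{c} \times \Hom{f^\dagger c}{d}} \to Y$. Finally, since $Y$ is a groupoid, the universal property of $\Grpdify$ collapses this to $\Hom{G}{Y}$.

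The main obstacle is less conceptual than technical: each step above is an adjunction-induced natural equivalence in $Y$, so naturality of the composite is automatic, but one must take care with the crispness annotations---the $\GM$ on $X$, $f$, and $d$ ensure that the adjunction isomorphisms apply uniformly---and with the variance conventions that make $\PSH{f}$ precomposition with $f^\dagger : \Modify[\OM]{C} \to \Modify[\OM]{D}$. Once these are straightened out, the chain of natural equivalences together with the Yoneda argument delivers the claimed formula for $\prn{f_!X}\,d$.
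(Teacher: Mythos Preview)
Your proof is correct and follows essentially the same route as the paper's: both identify $\prn{f_!X}\,d$ with $\PSH{d}\prn{f_!X}$, transpose through the two adjunctions $\PSH{d} \Adjoint d_*$ (Lemma~\ref{lem:adj:eval-at-pt-is-radj}) and $f_! \Adjoint \PSH{f}$ (Theorem~\ref{thm:adj:precomp-is-right-adj}), unfold the resulting hom via Lemma~\ref{lem:tw:hom-of-psh}, curry, and conclude by the universal property of $\Grpdify$. The only cosmetic difference is that the paper phrases the last step as ``$\PSH{d}f_!X$ satisfies the universal property of $\Grpdify\prn{\ldots}$'' rather than invoking Yoneda explicitly.
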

\begin{proof}
  We first observe that $\PSH{d}f_!\prn{X} = \prn{f_!X}\,d$. Transposing, we have
  $\Hom{\PSH{d}f_!X}{Z} \Equiv \Hom{X}{\PSH{f}d_*Z}$ for every $\DeclVar{Z}{\GM}{\Space}$. We
  calculate $\Hom{X}{\PSH{f}d_*Z}$ using the definition of $d_*$ provided above:
  \begin{align*}
    &\Hom[\PSH{C}]{X}{f^*d_*Z}
    \\
    &\Equiv \Prod{\DeclVar{c}{}{\Modify[\OM]{C}}} X\prn{c} \to \Hom{f^\dagger c}{d} \to Z
    \\
    &\Equiv \prn{\Sum{\DeclVar{c}{}{\Modify[\OM]{C}}} X\prn{c} \times \Hom{f^\dagger c}{d}} \to Z
  \end{align*}
  Therefore $\PSH{d}f_!X$ satisfies the universal property of
  $\Grpdify \prn{\Sum{\DeclVar{c}{}{\Modify[\OM]{C}}} X\prn{c} \times \Hom{f^\dagger c}{d}}$
\end{proof}

The following lemma does not require Theorem~\ref{thm:adj:pw-to-full-adj}, but is merely a consequence of
manipulating natural transformations:
\begin{lem}
  \label{lem:adj:comp}
  If $f : C \to D$ is an adjoint so is $f_* : C^A \to D^A$.
\end{lem}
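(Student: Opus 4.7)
The plan is to transpose the adjunction isomorphism pointwise along $A$. Without loss of generality, suppose $f$ is a left adjoint, so there is some $g : D \to C$ with a natural equivalence $\iota : \Prod{c, d} \Hom[D]{f\prn{c}}{d} \Equiv \Hom[C]{c}{g\prn{d}}$. I would then exhibit $f_* \Adjoint g_*$ where $g_*\prn{G} \defeq g \circ G$; the case where $f$ is a right adjoint is strictly dual.

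First I would record the key observation that hom-types in functor categories compute pointwise: for any category $X$ and functors $F, G : A \to X$, there is an equivalence
\[
  \Hom[X^A]{F}{G} \Equiv \Prod{a : A} \Hom[X]{F\prn{a}}{G\prn{a}},
\]
obtained by currying $\Int \to X^A \Equiv A \to X^\Int$ together with function extensionality applied to the boundary constraints $\gamma\prn{0} = F$ and $\gamma\prn{1} = G$.

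Armed with this, I would chain equivalences
\begin{align*}
  \Hom[D^A]{f_* F}{G}
  &\Equiv \Prod{a : A} \Hom[D]{f\prn{F\prn{a}}}{G\prn{a}} \\
  &\Equiv \Prod{a : A} \Hom[C]{F\prn{a}}{g\prn{G\prn{a}}} \\
  &\Equiv \Hom[C^A]{F}{g_* G},
\end{align*}
where the middle step applies $\iota$ at each $a : A$.

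The main thing to check is that the resulting equivalence is natural in $F$ and $G$, as is required for an adjunction rather than a mere pointwise family of hom-equivalences. This naturality is, however, automatic in \STT{}: each step in the chain is constructed as a function in the type theory, so it is automatically functorial in $F$ and $G$, and no additional coherence data needs to be produced by hand. This is precisely the sort of situation that \STT{}'s automatic functoriality is designed to streamline.
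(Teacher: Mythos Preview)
Your argument is correct and is precisely the kind of ``manipulation of natural transformations'' the paper gestures at; the paper does not actually spell out a proof beyond that remark, so your proposal fills in what is left implicit. The pointwise description of $\Hom[X^A]{F}{G}$ via currying and the type-theoretic axiom of choice is exactly right, and the observation that the composite equivalence is automatically natural in $F$ and $G$ because it is built as an ordinary term of type $\Prod{F,G}\cdots$ is the key point.
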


\begin{cor}
  \label{cor:adj:limits-are-pointwise}
  If $C$ is (co)complete so is $C^D$ and (co)limits are computed pointwise. In particular, $\PSH{C}$
  is (co)complete.
\end{cor}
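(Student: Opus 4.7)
The plan is to bootstrap from the adjunction witnessing (co)completeness of $C$ using Lemma~\ref{lem:adj:comp} together with the standard currying equivalence between iterated functor categories. I focus on the cocomplete case; the complete case is dual.

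Concretely, suppose $C$ is cocomplete, so that for every small category $\DeclVar{I}{\GM}{\Uni}$ the constant-diagram functor $\Const_I : C \to C^I$ is a right adjoint with left adjoint $\Colim{I} \Adjoint \Const_I$. By Lemma~\ref{lem:adj:comp}, postcomposition with $D$ preserves this adjunction, yielding
\[
  \prn{\Colim{I}}_* \Adjoint \prn{\Const_I}_* : C^D \to \prn{C^I}^D.
\]
I then want to transport this along the currying equivalence $\prn{C^I}^D \Equiv \prn{C^D}^I$, which exists because function types satisfy $A \to \prn{B \to X} \Equiv B \to \prn{A \to X}$, and this symmetry is functorial and therefore lifts to an equivalence of categories (all three of $C$, $C^I$, $C^D$ are categories, so hom-types are naturality types and the swap is checked directly from Lemma~\ref{lem:tw:hom-of-psh}-style reasoning). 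Under this swap, $\prn{\Const_I}_*$ is identified with the constant-diagram functor $C^D \to \prn{C^D}^I$ because both formulas send $F : C^D$ to the doubly-constant assignment $\prn{i,d}\mapsto F\prn d$. Hence $C^D$ is cocomplete, and the left adjoint is the composite of $\prn{\Colim{I}}_*$ with the swap equivalence.

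To see that colimits are pointwise, I would simply compute: under the swap equivalence, a diagram $F : I \to C^D$ corresponds to the bifunctor $\tilde{F} : D \to C^I$ with $\tilde{F}\prn d\prn i = F\prn i\prn d$, and the left adjoint above sends $F$ to $\prn{\Colim{I}}_*\prn{\tilde{F}} = \lambda d.\, \Colim{I}\prn{F\prn{-}\prn{d}}$. This is precisely the pointwise colimit formula, so evaluation at each $\DeclVar{d}{\GM}{D}$ commutes with colimits. The ``In particular'' clause then follows immediately: since $\Space$ is small (co)complete by Corollary~\ref{cor:adj:push-pull-formula} and the accompanying remark, the instance $C = \Space$, $D = \Modify[\OM]{C_0}$ gives cocompleteness (and completeness) of $\PSH{C_0} = \Space^{\Modify[\OM]{C_0}}$.

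The only step requiring mild care is the currying equivalence $\prn{C^I}^D \Equiv \prn{C^D}^I$ together with its compatibility with $\Const$; everything else is a mechanical application of Lemma~\ref{lem:adj:comp}. I expect this to be the main (and only) obstacle, but since $C^I$ and $C^D$ are manifestly categories when $C$ is, and function types strictly commute with one another, the swap equivalence and its naturality are automatic.
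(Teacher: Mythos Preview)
Your proposal is correct and matches the paper's intended argument: the corollary is stated immediately after Lemma~\ref{lem:adj:comp} with no explicit proof, so the implicit argument is precisely to post-compose the (co)completeness adjunction along $D$ and then curry, exactly as you describe. One trivial nit: the (co)completeness of $\Space$ is the unlabeled corollary and remark just \emph{before} Corollary~\ref{cor:adj:push-pull-formula}, not that corollary itself.
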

\begin{cor}
  \label{cor:adj:yo-preserves-limits}
  The Yoneda embedding preserves all limits.
\end{cor}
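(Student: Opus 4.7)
The plan is to reduce the claim to the fact that representables of the form $\Hom_C(c'', -)$ preserve limits \emph{by definition}, exploiting that limits in $\PSH{C}$ are computed pointwise. Fix a diagram $F : I \to C$ with limit $c$ witnessed by some $\alpha : \Hom[C^I]{\Const(c)}{F}$. Since $\PSH{C}$ is complete and limits there are pointwise (Corollary~\ref{cor:adj:limits-are-pointwise}), the limit $L = \lim_i \Yo \circ F$ exists in $\PSH{C}$, and we have a canonical comparison map $\kappa : \Yo(c) \to L$ induced by $\Yo^\dagger(\alpha)$ and the universal property of $L$. Our job is to show $\kappa$ is invertible.

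To check invertibility, I would apply Theorem~\ref{thm:tw:nat-iso} to reduce to showing $\kappa$ is pointwise invertible on $\GM$-annotated arguments. Concretely, for each $\DeclVar{c''}{\GM}{C}$ (equivalently, at $\MkMod[\OM]{c''} : \Modify[\OM]{C}$), pointwise-computation of limits in $\PSH{C}$ identifies $L(\MkMod[\OM]{c''})$ with $\lim_i \Yo(F_i)(\MkMod[\OM]{c''})$ in $\Space$. By Theorem~\ref{thm:tw:tw} (which identifies $\Phi(\MkMod[\OM]{c''}, -)$ with $\Hom_C(c'', -)$ when $c''$ is $\GM$-annotated), the latter becomes $\lim_i \Hom_C(c'', F_i)$, while $\Yo(c)(\MkMod[\OM]{c''}) = \Hom_C(c'', c)$. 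Under these identifications, $\kappa$ at $\MkMod[\OM]{c''}$ becomes the standard comparison $\Hom_C(c'', c) \to \lim_i \Hom_C(c'', F_i)$ induced by $\alpha$.

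This last map is an equivalence essentially by the definition of limit: the limit property of $c$ gives $\Hom_C(c'', c) \cong \Hom_{C^I}(\Const(c''), F)$, and a natural transformation $\Const(c'') \to F$ is exactly a cone with apex $c''$, which is precisely $\lim_i \Hom_C(c'', F_i)$ in $\Space$ (here we use that $\Space$ is complete with limits formed in the natural way, as noted after Theorem~\ref{thm:adj:precomp-is-right-adj}). Chasing the identifications back shows that the resulting equivalence agrees with $\kappa$ at $\MkMod[\OM]{c''}$, completing the pointwise verification.

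The main obstacle I anticipate is bookkeeping rather than conceptual: one must verify that the comparison map one gets pointwise from the tautological equivalence ``cones equal limits of hom-spaces'' really coincides with the canonical $\kappa$ induced by $\Yo^\dagger(\alpha)$. This is a naturality chase involving the isomorphism of Theorem~\ref{thm:tw:tw} (in particular the coherence in Lemma~\ref{lem:tw:simple-coherence} for $\Coe^{\pi^\TM_1}$ applied to $\alpha$), and requires care to ensure the identifications of $\Yo(F_i)(\MkMod[\OM]{c''})$ with $\Hom_C(c'', F_i)$ are natural in $i : I$ so that taking limits is legitimate. Once that naturality is in place, the remaining content is exactly the definition of $c = \lim F$.
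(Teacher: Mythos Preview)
Your proposal is correct and follows essentially the same approach as the paper: construct the comparison map $\Yo(\Lim F)\to\Lim(\Yo\circ F)$, reduce to checking invertibility at each $\GM$-annotated point via Theorem~\ref{thm:tw:nat-iso} and the pointwise description of limits in $\PSH{C}$, and then identify the resulting map with $\Hom{c''}{\Lim F}\to\Lim\Hom{c''}{F}$, which is an equivalence by the very definition of limit. Your write-up is in fact more explicit than the paper's (which compresses the last step into a one-line appeal), and your caution about the naturality bookkeeping is well-placed but not a genuine obstacle here.
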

\begin{proof}
  If $\DeclVar{F}{\GM}{I \to C}$ and $\Lim{} F$ exists, then functoriality of $\Yo$ induces a map
  $\Yo{\Lim{} F} \to \Lim{}\prn{\Yo \circ F}$, so it suffices to check that this map is
  invertible at all $\DeclVar{c}{\GM}{C}$. Unfolding, we must argue that
  $\Hom{c}{\Lim{} F} \Equiv \Lim{} \Hom{c}{F}$ is an equivalence, but this is immediate by
  Lemma~\ref{lem:tw:hom-of-psh}.
\end{proof}

Finally, we show the full subcategories $\Space_{\le n}$ of $\Space$ defined by $n$-truncated types
form reflective subcategories of $\Space$. The idea is simple: use the truncation HITs. However, it
is not automatic that they restrict to $\Trunc{-}_{n} : \Space \to \Space_{\le n}$. We prove this
alongside with the reflectivity of $\Space_{\le n}$ using Theorem~\ref{thm:adj:pw-to-full-adj}:
\begin{cor}
  The inclusion $\Space_{\le n} \to \Space$ is a right adjoint.
\end{cor}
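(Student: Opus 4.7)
The plan is to apply Theorem~\ref{thm:adj:pw-to-full-adj} to the inclusion $\iota : \Space_{\le n} \to \Space$, reducing the problem to showing that $\iota$ is a pointwise right adjoint. The natural candidate for the left adjoint is the $n$-truncation HIT: given $\DeclVar{X}{\GM}{\Space}$, we form $\Trunc{X}_n$ together with its unit $\eta : X \to \Trunc{X}_n$, enjoying the universal property that for every $n$-truncated $Y$ the precomposition map $\prn{\Trunc{X}_n \to Y} \to \prn{X \to Y}$ is an equivalence.

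The main obstacle---exactly the reason the corollary is not immediate---is that the HIT a priori only produces a type in $\Uni$, whereas we need $\Trunc{X}_n$ to lie in $\Space_{\le n}$, i.e., to be a \emph{groupoid}. I would establish this by showing that when $X$ is $\Int$-null, so is $\Trunc{X}_n$; together with $n$-truncatedness this places $\Trunc{X}_n$ in $\Space_{\le n}$. The constants map $c : \Trunc{X}_n \to \Trunc{X}_n^\Int$ admits evaluation at $0$ as a retraction, so the point is to show the other composite is the identity. Since $\Trunc{X}_n^\Int$ is itself $n$-truncated (as $n$-truncated types are closed under $\Prod{}$), the universal property of $\Trunc{X}_n$ reduces this to a statement about maps $X \to \Trunc{X}_n^\Int$, which we settle by using $X \Equiv X^\Int$. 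Equivalently, one may observe that the nullification $\Grpdify$ is lex in the intended models, so $\Trunc{-}_n$ preserves groupoid-ness.

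With $\Trunc{X}_n$ identified as an object of $\Space_{\le n}$, the universal property of the HIT combined with Lemma~\ref{lem:tw:hom-of-psh} and directed univalence (Corollary~\ref{cor:prelims:dua})---which identify morphisms in both $\Space$ and $\Space_{\le n}$ with honest functions---yields a natural equivalence $\Hom[\Space]{X}{\iota Y} \Equiv \Hom[\Space_{\le n}]{\Trunc{X}_n}{Y}$ in $\DeclVar{Y}{\GM}{\Space_{\le n}}$. This exhibits $\Trunc{X}_n$ as the representing object for $\Phi\prn{\iota^\dagger\prn{-}, X}$, verifying the pointwise right adjoint hypothesis; Theorem~\ref{thm:adj:pw-to-full-adj} then upgrades this to a genuine adjunction, and as a by-product confirms that the HIT restricts to a functor $\Trunc{-}_n : \Space \to \Space_{\le n}$ left adjoint to the inclusion.
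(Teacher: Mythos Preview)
Your overall strategy matches the paper's exactly: apply Theorem~\ref{thm:adj:pw-to-full-adj} with the $n$-truncation HIT as the candidate pointwise left adjoint, the crux being that $\Trunc{X}_n$ must be shown to lie in $\Space$ (i.e., be $\Int$-null) when $X$ does. The paper flags this same point (``it is not automatic that they restrict to $\Trunc{-}_{n} : \Space \to \Space_{\le n}$'') and likewise does not spell out the argument, so the match in approach is close.

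That said, the specific argument you sketch for $\Int$-nullness of $\Trunc{X}_n$ does not go through as written. You want $c \circ \mathrm{ev}_0 = \ArrId{}$ as maps $\Trunc{X}_n^\Int \to \Trunc{X}_n^\Int$, and you say the universal property of $\Trunc{X}_n$ ``reduces this to a statement about maps $X \to \Trunc{X}_n^\Int$''. But the universal property only controls maps \emph{out of} $\Trunc{X}_n$: it gives $(\Trunc{X}_n \to Y) \Equiv (X \to Y)$ for $n$-types $Y$. It says nothing about maps out of $\Trunc{X}_n^\Int$, so it cannot compare the two endomaps $c \circ \mathrm{ev}_0$ and $\ArrId{}$ of $\Trunc{X}_n^\Int$. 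Tracing through, the best you get is that $c$ corresponds to the map $x \mapsto \lambda i.\,\eta(x) : X \to \Trunc{X}_n^\Int$, and using $X \Equiv X^\Int$ only reproduces this same map, yielding a circularity rather than a proof that $c$ is invertible. Your fallback via lexness of $\Grpdify$ is valid semantically, but it is a model-level observation rather than an internal argument.

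A workable internal route uses the modal machinery: the adjunction underlying $\Con{transp}$ exhibits $\Modify[\GM]{-}$ as (crisply) left adjoint to $\Modify[\SM]{-}$, so $\Modify[\GM]{-}$ commutes with crisp HITs in the sense of \textcite{shulman:2018}; in particular $\Modify[\GM]{\Trunc{X}_n} \Equiv \Trunc{\Modify[\GM]{X}}_n$. For $\DeclVar{X}{\GM}{\Space}$ the type $X$ is discrete by Axiom~\ref{ax:discrete-iff-crisp}, hence $\Modify[\GM]{\Trunc{X}_n} \Equiv \Trunc{X}_n$, so $\Trunc{X}_n$ is discrete and therefore $\Int$-null by the same axiom. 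With this in hand, the rest of your outline (directed univalence to identify hom-types with function types, then Theorem~\ref{thm:adj:pw-to-full-adj}) is correct.
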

\begin{cor}
  $\Space_{\le n}$ is (co)complete.
\end{cor}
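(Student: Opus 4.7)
The plan is to lift (co)completeness from $\Space$ to $\Space_{\le n}$ by exploiting the reflective subcategory structure established in the preceding corollary: the fully-faithful inclusion $\iota : \Space_{\le n} \to \Space$ is a right adjoint with left adjoint $\Trunc{-}_n$, and by Corollary~\ref{cor:adj:limits-are-pointwise} together with the subsequent remark, $\Space$ is itself small (co)complete. The two directions are then treated separately: completeness by restricting the ambient limit along the inclusion, and cocompleteness by applying the reflector to the ambient colimit.

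For completeness, given $F : I \to \Space_{\le n}$ with $I$ small, I would form $L = \Lim{}\prn{\iota \circ F}$ in $\Space$ and argue that $L$ itself lies in $\Space_{\le n}$. Using the explicit formula $\Lim{} G \Equiv \Prod{c : I} G\prn{c}$ recorded in the remark after Corollary~\ref{cor:adj:limits-are-pointwise}, this reduces to the standard \HoTT{} fact that $n$-truncated types are closed under dependent products and identity types. Full-faithfulness of $\iota$ then transports the universal property, exhibiting $L$ as the limit of $F$ in $\Space_{\le n}$. For cocompleteness, given $F : I \to \Space_{\le n}$, I would take $K = \Colim{}\prn{\iota \circ F}$ in $\Space$ and set $\Colim{} F \defeq \Trunc{K}_n$; its universal property then follows from the chain
\[
  \Hom[\Space_{\le n}]{\Trunc{K}_n}{X}
  \Equiv \Hom[\Space]{K}{\iota\prn{X}}
  \Equiv \Hom[\Space^I]{\iota \circ F}{\Const\prn{\iota\prn{X}}}
  \Equiv \Hom[\Space_{\le n}^I]{F}{\Const\prn{X}},
\]
combining the adjunction $\Trunc{-}_n \Adjoint \iota$, the universal property of $K$ in $\Space$, the identification $\Const\prn{\iota\prn{X}} \cong \iota \circ \Const\prn{X}$, and the fully-faithfulness of post-composition with $\iota$.

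The main obstacle is the closure step in the completeness direction: verifying that the limit in $\Space$ of an $n$-truncated diagram is itself $n$-truncated. Morally this is automatic, but a careful argument must unfold the explicit description of $\Lim{}$ and invoke the truncation-level closure properties of $\Pi$-types. The cocompleteness direction, by contrast, is essentially formal once the reflection is in hand. A minor subtlety throughout is that one should work with $\GM$-annotated diagrams when invoking the Segal/Rezk machinery to promote the pointwise bijections above to bona fide morphisms of functor categories, but this follows the pattern already established in Theorem~\ref{thm:tw:nat-iso} and is by now routine.
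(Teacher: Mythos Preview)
The paper states this corollary without proof, so there is no explicit argument to compare against; your proposal supplies the standard argument that a reflective subcategory of a (co)complete category is (co)complete, and it is correct. The chain of equivalences you give for cocompleteness is exactly right, and for completeness the closure step is indeed handled by the explicit $\Pi$-formula for limits in $\Space$ together with the \HoTT{} fact that $n$-types are closed under dependent products.

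Two minor remarks. First, your final paragraph about $\GM$-annotations and invoking the machinery of Theorem~\ref{thm:tw:nat-iso} is unnecessary: the universal property of a (co)limit in this paper is already phrased as an equivalence of hom-groupoids (see the definition after Lemma~\ref{lem:tw:hom-of-psh} and the one from \textcite{bardomiano:2021}), and each step in your displayed chain is such an equivalence, so nothing further needs to be promoted. Second, your citation of Corollary~\ref{cor:adj:limits-are-pointwise} for the (co)completeness of $\Space$ is slightly off---that corollary is about functor categories; the relevant facts are the corollary immediately following Theorem~\ref{thm:adj:precomp-is-right-adj} (cocompleteness) and the remark right after it (completeness). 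Neither point affects the correctness of your argument.
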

The same methodology applies to the subcategory of modal types associated to an idempotent
monad~\parencite{rijke:2020}.

\begin{exa}[Isbell conjugation]
  If $\DeclVar{C}{\GM}{\Uni}$, then the Isbell conjugation map $\phi$ is a left adjoint:
  \begin{align*}
    &\phi : \PSH{C} \to \Modify[\OM]{C \to \Space}
    \\
    &\phi\prn{X} = \MkMod[\OM]{\lambda c.\,\Phi\prn{X,\Yo{c}}}
  \end{align*}
\end{exa}

\subsection{The universal property of presheaf categories}
\label{sec:adj:psh}

Next, we generalize Theorem~\ref{thm:adj:precomp-is-right-adj} to show that if $\DeclVar{f}{\GM}{C \to E}$
where $C$ is a small category and $E$ is a cocomplete category, then
$\Phi\prn{f^\dagger\prn{-},-} : E \to \PSH{C}$ is a right adjoint loosely following the argument
given by \textcite{cisinski:2019}. We begin with a few general lemmas. In what follows, fix $C$
and $E$ as above.

First, as a corollary of the proof of Theorem~\ref{thm:adj:precomp-is-right-adj}:
\begin{lem}
  The colimit of $\Yo : C \to \PSH{C}$ is $\ObjTerm{\PSH{C}} = \lambda \_.\,\ObjTerm{}$.
\end{lem}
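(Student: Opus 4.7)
The strategy is to compute the colimit pointwise and then identify the resulting groupoid reflection as terminal. By Corollary~\ref{cor:adj:limits-are-pointwise}, colimits in $\PSH{C}$ are computed pointwise, so for each $c' : \Modify[\OM]{C}$ we have $\prn{\Colim{C} \Yo}\prn{c'} \Equiv \Colim{C} \Phi\prn{c',-}$ in $\Space$, and by the explicit formula recorded after Theorem~\ref{thm:adj:precomp-is-right-adj} this equals $\Grpdify\prn{\Sum{c : C}\Phi\prn{c', c}}$. It therefore suffices to show this groupoid reflection is terminal for every $c' : \Modify[\OM]{C}$.

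To handle arbitrary $c'$ I would first reduce to the crisp case: the comparison map $\Colim{C} \Yo \to \ObjTerm{\PSH{C}}$ is a natural transformation of crisp presheaves, so by Theorem~\ref{thm:tw:nat-iso} it suffices to establish invertibility on $\DeclVar{c'}{\GM}{\Modify[\OM]{C}}$. Using $\GM = \GM \circ \OM$, such a $c'$ may be written as $\MkMod[\OM]{c_0}$ with $\DeclVar{c_0}{\GM}{C}$; then Theorem~\ref{thm:tw:tw} yields $\Phi\prn{\MkMod[\OM]{c_0}, -} \cong \Hom{c_0}{-}$. So I am reduced to showing $\Grpdify\prn{\Sum{c : C}\Hom{c_0}{c}} \Equiv \ObjTerm{}$.

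The distinguished point $\prn{c_0, \ArrId{c_0}}$ gives a section $\sigma : \ObjTerm{} \to \Sum{c : C}\Hom{c_0}{c}$ of the unique map to the terminal. To show $\Grpdify\prn{\sigma}$ is also a retraction, I would define a contracting map $H : \Int \to \prn{\Sum{c : C}\Hom{c_0}{c}}^{\Sum{c : C}\Hom{c_0}{c}}$ by $H\prn{t}\prn{c, f} = \prn{f\prn{t},\, \lambda s.\, f\prn{s \land t}}$, which evaluates to $\Const\prn{c_0, \ArrId{c_0}}$ at $t = 0$ and to $\ArrId{}$ at $t = 1$. Since $\Grpdify$ is nullification at $\Int$, the inclusions $\brc{0}, \brc{1} : \ObjTerm{} \to \Int$ become identified after applying $\Grpdify$, forcing $\Grpdify\prn{\Const\prn{c_0,\ArrId{c_0}}}$ and $\Grpdify\prn{\ArrId{}}$ to coincide and making $\Grpdify\prn{\sigma}$ an equivalence.

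The main obstacle is this concluding argument: deducing the equality of the two endpoint maps in $\Grpdify$ directly from $\Int$-nullification, without relying on $\Grpdify$ commuting with products (which it does not, being a non-left-exact reflection). The cleanest way is to work entirely with the induced maps $\Grpdify\prn{H} \circ \Grpdify\prn{i_0}$ and $\Grpdify\prn{H} \circ \Grpdify\prn{i_1}$ and invoke that $\Grpdify\prn{i_0} = \Grpdify\prn{i_1}$ since both factor through the contractible $\Grpdify\prn{\Int}$.
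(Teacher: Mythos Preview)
Your overall strategy is correct and is a reasonable expansion of what the paper leaves implicit (the paper simply cites this as ``a corollary of the proof of Theorem~\ref{thm:adj:precomp-is-right-adj}'' without further detail). The reduction to crisp $c'$ via Theorem~\ref{thm:tw:nat-iso}, the identification of the pointwise colimit with $\Grpdify$ of the coslice, and the explicit contracting homotopy $H$ are all fine.

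The gap is in your final paragraph. The conclusion $\Grpdify\prn{i_0} = \Grpdify\prn{i_1}$ is true, but not for the reason you give: the inclusions $i_0, i_1 : X \to \Int \times X$ do not factor through $\Int$, and you cannot isolate the $\Int$-coordinate without precisely the product-preservation of $\Grpdify$ you correctly disclaimed. A correct justification is that both $i_0$ and $i_1$ are sections of the projection $\pi_2 : \Int \times X \to X$, and $\pi_2$ is a $\Grpdify$-equivalence: for any groupoid $Y$ one has $Y^{\Int \times X} \Equiv \prn{Y^\Int}^X \Equiv Y^X$ since $Y$ is $\Int$-null, so $\Grpdify\prn{\pi_2}$ is invertible and $\Grpdify\prn{i_0} = \Grpdify\prn{\pi_2}^{-1} = \Grpdify\prn{i_1}$.

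Alternatively, and more in the spirit of your own suggestion, work with the composite $\eta_X \circ H : \Int \times X \to \Grpdify\prn{X}$ directly. Currying gives a map $\Int \to \Grpdify\prn{X}^X$, and the codomain is $\Int$-null (groupoids form an exponential ideal), so this map is constant. Hence $\eta_X \circ H\prn{0,-} = \eta_X \circ H\prn{1,-}$ as maps $X \to \Grpdify\prn{X}$, and the universal property of $\Grpdify$ then forces $\Grpdify\prn{H\prn{0,-}} = \Grpdify\prn{H\prn{1,-}}$ as maps $\Grpdify\prn{X} \to \Grpdify\prn{X}$, which is what you need.
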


From the above, and further inspection of colimits, we are able to derive a result of independent
interest: Every presheaf is the colimit of representable presheaves.
\begin{lem}[Density of $\Yo$]
  \label{lem:adj:density}
  If $\DeclVar{X}{\GM}{\PSH{C}}$, then
  $X \cong \Colim{\Modify[\OM]{\TotalType{X}}} \Yo \circ \pi^\dagger$,
  where $\TotalType X=\Sum{c:\Modify[\OM]C}X\prn c$.
\end{lem}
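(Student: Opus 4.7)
The strategy is to verify the claimed isomorphism by reducing to a pointwise calculation in $\Space$, where it becomes an instance of the standard co-Yoneda/coend formula. By Corollary~\ref{cor:adj:limits-are-pointwise}, colimits in $\PSH C$ are computed pointwise, and by the explicit formula for colimits in $\Space$ (the corollary immediately following Theorem~\ref{thm:adj:precomp-is-right-adj}), evaluating at $c:\Modify[\OM]{C}$ gives
\[
  \Bigl(\Colim{\Modify[\OM]{\TotalType X}} \Yo \circ \pi^\dagger\Bigr)(c)
  \;\Equiv\;
  \Grpdify\Bigl(\Sum{(d,x) : \Modify[\OM]{\TotalType X}} \Phi(c,\pi^\dagger(d,x))\Bigr).
\]
It therefore suffices to exhibit, naturally in $c$, an equivalence between this groupoid and $X(c)$.

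The canonical map in one direction is produced by constructing a cocone $\Yo \circ \pi^\dagger \Rightarrow \Const X$: via the functorial Yoneda lemma (Theorem~\ref{thm:tw:strong-yoneda}), morphisms $\Yo(\pi^\dagger(d,x)) \to X$ in $\PSH C$ correspond naturally to points of $X$ at the appropriate object, and the element $x$ itself supplies this datum, naturally in $(d,x) : \Modify[\OM]{\TotalType X}$. Evaluating at $c$ this cocone induces a map $\Grpdify(\Sum{(d,x)}\Phi(c,\pi^\dagger(d,x))) \to X(c)$. In the opposite direction, an element $x : X(c)$ is sent to the class of $((c,x),\ArrId c)$.

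To prove these are mutually inverse, I would recognize the groupoid on the right as a formula already computed elsewhere in the paper. Concretely, using Corollary~\ref{cor:adj:push-pull-formula} applied to the functor $\pi^\dagger : \Modify[\OM]{\TotalType X} \to C$, the right-hand side is naturally identified with $(\pi^\dagger)_!(\ObjTerm)(c)$, the value of the left Kan extension of the terminal presheaf along $\pi^\dagger$. Identifying this with $X(c)$ then amounts to recognizing $X$ as the presheaf classified by the cocartesian fibration $\pi : \TotalType X \to \Modify[\OM]{C}$: the fiber of $X$ over $c$ is precisely the groupoid of pairs $((d,x),f:c\to d)$ after quotienting by the naturality of cocartesian lifts, which is exactly the left-hand side. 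Once this identification is in place, naturality in $c$ follows because all constructions involved are natural, and both the candidate map and its inverse are manifestly natural in $c$.

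The main obstacle I anticipate is this last identification: although the statement is essentially that ``$X$ is classified by its own category of elements,'' making the argument precise in \STT{} requires a careful traversal of the $\Modify[\OM]{-}$ modality and the definitions involved. A potentially cleaner route, if the pointwise identification proves delicate, is to verify the universal property globally: show that for every $Y : \PSH C$, cocones $\Yo \circ \pi^\dagger \Rightarrow \Const Y$ are in natural bijection with morphisms $X \to Y$, using Theorem~\ref{thm:tw:strong-yoneda} to translate each leg of the cocone into a point of $Y$ and Lemma~\ref{lem:tw:hom-of-psh} to reassemble these into a natural transformation $X \Rightarrow Y$.
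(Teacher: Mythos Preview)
Your outline is sound, but it takes a longer road than the paper and leans on results that, in the paper's ordering, come \emph{after} density (Corollary~\ref{cor:adj:push-pull-formula} and Theorem~\ref{thm:tw:strong-yoneda}). There is no circularity---neither of those depends on density---but the paper's argument is shorter and stays strictly within prior material. The paper does not compute pointwise at all. Instead it first identifies the colimit with $\pi^\dagger_!\ObjTerm{}$ by a different route: the preceding lemma gives $\ObjTerm{} \cong \Colim{}\Yo$ (over $\Modify[\OM]{\TotalType X}$), the left adjoint $\pi^\dagger_!$ preserves colimits, and the corollary $f_!\circ\Yo \cong \Yo\circ f$ turns $\pi^\dagger_!\circ\Yo$ into $\Yo\circ\pi^\dagger$. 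This replaces your pointwise step and your appeal to Corollary~\ref{cor:adj:push-pull-formula}. Second, the identification $\pi^\dagger_!\ObjTerm{}\cong X$ is done exactly by your ``cleaner route,'' but without Theorem~\ref{thm:tw:strong-yoneda}: one simply transposes across $\pi^\dagger_! \dashv (\pi^\dagger)^*$ to get $\Hom{\pi^\dagger_!\ObjTerm{}}{Z} \Equiv \Hom{\ObjTerm{}}{Z\circ\pi} \Equiv \Prod{(c,x)} Z(c) \Equiv \Prod{c} X(c)\to Z(c) \Equiv \Hom{X}{Z}$, and concludes by Yoneda. So the cocone is never built explicitly, which is precisely what lets the paper avoid the functorial Yoneda lemma here. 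Your pointwise identification (``$X$ is classified by its own elements'') is correct but, as you note, is the delicate step; the transposition argument sidesteps it entirely.
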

\begin{proof}
  We begin with the following computation where $\pi : \TotalType{X} \to \Modify[\OM]{C}$
  and $\pi^\dagger_! : \Space^{\TotalType{X}} \to \PSH C$:
  \[
    \pi^\dagger_!\ObjTerm{}
    \cong
    \pi^\dagger_!\prn{\Colim{\Modify[\OM]{\TotalType{X}}} \Yo}
    \cong
    \Colim{\Modify[\OM]{\TotalType{X}}} \pi^\dagger_! \circ \Yo
    \cong
    \Colim{\Modify[\OM]{\TotalType{X}}} \Yo \circ \pi^\dagger
  \]
  We have used the fact that $\pi^\dagger_!$, a left adjoint, commutes with
  colimits~\parencite{bardomiano:2021}.
  To show $\pi^\dagger_!\ObjTerm{} \cong X$, we note that for all $Z : \PSH{C}$:
  \begin{align*}
    \Hom{\pi^\dagger_!\ObjTerm{}}{Z}
    &\Equiv \Hom[{\TotalType{X}} \to \Space]{\ObjTerm{}}{Z \circ \pi}
    \\
    &\Equiv \Prod{\prn{c, x} : \Sum{c : \Modify[\OM]{C}} X\prn{c}} Z\prn{c}
    \\
    &\Equiv \Prod{c : \Modify[\OM]{C}} X\prn{c} \to Z\prn{c}
    \\
    &\Equiv \Hom{X}{Z}
  \end{align*}
  The conclusion now follows from the Yoneda lemma.
\end{proof}

\begin{lem}
  $\Nv_f = \Phi\prn{f^\dagger\prn{-},-} : E \to \PSH{C}$ is a right adjoint.
\end{lem}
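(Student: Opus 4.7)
The plan is to apply Theorem~\ref{thm:adj:pw-to-full-adj}: it suffices to show that $\Nv_f$ is a pointwise right adjoint, which amounts to producing, for each crisp $\DeclVar{X}{\GM}{\PSH{C}}$, an object $L(X) : E$ together with a natural isomorphism $\Hom_E\prn{L(X), e} \cong \Hom_{\PSH{C}}\prn{X, \Nv_f(e)}$ in $e$.

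Given such a crisp $X$, Lemma~\ref{lem:adj:density} presents it as a colimit of representables
\[
  X \cong \Colim{\Modify[\OM]{\TotalType{X}}}\prn{\Yo \circ \pi^\dagger},
\]
where $\pi : \TotalType{X} \to \Modify[\OM]{C}$ is the projection. Since $E$ is cocomplete, I define
\[
  L(X) := \Colim{\Modify[\OM]{\TotalType{X}}}\prn{f \circ \pi^\dagger} : E.
\]
To verify the representability of $\Hom_{\PSH{C}}\prn{X,\Nv_f(-)}$ by $L(X)$, I would chain as follows, natural in $e$: first use the density presentation together with the universal property of colimits to rewrite $\Hom_{\PSH{C}}(X,\Nv_f(e))$ as a limit of hom-spaces over $\Modify[\OM]{\TotalType{X}}$; next apply the functorial Yoneda lemma (Theorem~\ref{thm:tw:strong-yoneda}) to identify each $\Hom_{\PSH{C}}\prn{\Yo\,c, \Nv_f(e)}$ with the evaluation $\Nv_f(e)\prn{\MkMod[\OM]{c}} = \Phi_E\prn{\MkMod[\OM]{f c}, e}\cong\Hom_E\prn{f c, e}$; finally reassemble to obtain $\Hom_E\prn{\Colim{}\prn{f\circ\pi^\dagger},e} = \Hom_E\prn{L(X),e}$. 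Theorem~\ref{thm:adj:pw-to-full-adj} then promotes this pointwise representability to a full adjunction $L \Adjoint \Nv_f$.

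The principal obstacle is naturality in $e$ at the Yoneda step. Lemma~\ref{lem:tw:weak-yoneda} only delivers a fiberwise bijection at each crisp $c$, which is not enough to glue into a natural transformation once $c$ varies over the entire index $\Modify[\OM]{\TotalType{X}}$ --- and naturality in $e$ is exactly what we need to detect a representing object. This is precisely why the fully functorial Yoneda lemma (Theorem~\ref{thm:tw:strong-yoneda}), phrased as a natural isomorphism of bifunctors $\Modify[\OM]{C}\times\PSH{C}\to\Space$, was proved, and it explains the forward reference noted after Theorem~\ref{thm:tw:strong-yoneda}. The hypothesis that $X$ is crisp is also essential here to form the $\dagger$-liftings and $\MkMod[\OM]{-}$-transports uniformly across the colimit index.
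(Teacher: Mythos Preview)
Your proposal is correct and follows essentially the same approach as the paper: reduce to pointwise right adjointness via Theorem~\ref{thm:adj:pw-to-full-adj}, invoke the density lemma to present a crisp $X$ as a colimit of representables, use cocompleteness of $E$ to define the candidate left adjoint, and appeal to the functorial Yoneda lemma (Theorem~\ref{thm:tw:strong-yoneda}) for the representable case. The paper phrases the reduction slightly differently---it works throughout with $\Phi$ rather than $\Hom$ and invokes the dual of Corollary~\ref{cor:adj:yo-preserves-limits} to pass the colimit through $\Phi_{\PSH{C}}\prn{\MkMod[\OM]{-},\Nv_f\prn{e}}$, thereby reducing directly to $\MkMod[\OM]{X} = \Yo^\dagger\prn{c}$ for a \emph{non-crisp} $c$---but this is the same argument as your ``reassemble'' step.

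One small point of care: your identification $\Phi_E\prn{\MkMod[\OM]{f c}, e}\cong\Hom_E\prn{f c, e}$ is only available when $c$ is $\GM$-annotated, which it is not as it ranges over the diagram $\Modify[\OM]{\TotalType{X}}$. You clearly recognize this issue in your final paragraph, so the fix is just to stay with $\Phi$ throughout and use that $\Phi_E\prn{-,e}$ sends colimits in $\Modify[\OM]{E}$ to limits in $\Space$ (the dual of Corollary~\ref{cor:adj:yo-preserves-limits}) for the reassembly step.
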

\begin{proof}
  We will prove that $\Nv_f$ is a pointwise right adjoint. Accordingly, fixing
  $\DeclVar{X}{\GM}{\PSH{C}}$ we must construct $\DeclVar{e}{}{\Modify[\OM]{E}}$ such that
  $\Phi\prn{e,-} \cong \Phi\prn{\MkMod[\OM]{X},\Nv_f\prn{-}}$. Since
  $X \cong \Colim{\Modify[\OM]{\TotalType{X}}} \Yo \circ \pi^\dagger$ and $E$ is cocomplete, by the
  dual of Corollary~\ref{cor:adj:yo-preserves-limits} it suffices to assume
  $\MkMod[\OM]{X} = \Yo^\dagger\prn{c}$ with $\DeclVar{c}{}{\Modify[\OM]{C}}$.\footnote{Note the
    lack of $\GM$-annotation here: we must ensure that we are functorial in $c$ in order to obtain a
    \emph{diagram} in $E$.} Finally, take $e = f^\dagger\prn{c}$ and
  $\Phi\prn{f^\dagger\prn{c},-} \cong \Phi\prn{\Yo^\dagger\prn{c},\Nv_f\prn{-}}$ by
  Theorem~\ref{thm:tw:strong-yoneda}.
\end{proof}

We are now able to prove, as promised, the universal property of $\PSH{C}$. If we write
$\CoCont{\PSH{C}}{E}$ for the full subcategory of $\PSH{C} \to E$ spanned by functors preserving
all colimits, then $\Yo^* : \CoCont{\PSH{C}}{E} \to \prn{C \to E}$ is an equivalence. To prove this,
we essentially argue that there is a map sending $f$ to the left adjoint to $\Nv_f$ and that this is
the inverse to $\Yo^*$.

\begin{restatable}{thm}{universalpropofpsh}
  \label{thm:adj:cc}
  $\Yo^* : \CoCont{\PSH{C}}{E} \to \prn{C \to E}$ is an equivalence.
\end{restatable}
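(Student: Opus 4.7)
The plan is to invoke Theorem~\ref{thm:prelims:ff-ess-surj-to-equiv} and verify essential surjectivity and full faithfulness of $\Yo^*$; by Corollary~\ref{cor:prelims:crisp-ff-is-ff}, both checks reduce to the $\GM$-crisp case.

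For essential surjectivity, fix $\DeclVar{f}{\GM}{C \to E}$. The preceding lemma exhibits $\Nv_f : E \to \PSH{C}$ as a right adjoint; write $L_f : \PSH{C} \to E$ for its left adjoint. Then $L_f$ preserves colimits and so lies in $\CoCont{\PSH{C}}{E}$. A short computation using the adjunction, Lemma~\ref{lem:tw:weak-yoneda}, and Theorem~\ref{thm:tw:tw} yields, for crisp $c : C$ and arbitrary $e : E$,
\[
  \Hom(L_f(\Yo(c)), e) \Equiv \Hom(\Yo(c), \Nv_f(e)) \Equiv \Nv_f(e)(\MkMod[\OM]{c}) \Equiv \Hom(f(c), e),
\]
from which the Yoneda lemma supplies $L_f(\Yo(c)) \cong f(c)$, natural in $c$.

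For full faithfulness, fix crisp cocontinuous $F, G : \PSH{C} \to E$. Density (Lemma~\ref{lem:adj:density}) presents every $X : \PSH{C}$ as the colimit $\Colim{\Modify[\OM]{\TotalType{X}}} \Yo \circ \pi^\dagger$. Since $F$ and $G$ preserve colimits, $F(X)$ and $G(X)$ are, respectively, the colimits of $F \circ \Yo \circ \pi^\dagger$ and $G \circ \Yo \circ \pi^\dagger$, so any natural transformation $\alpha : F \circ \Yo \Rightarrow G \circ \Yo$ uniquely extends to $\bar\alpha : F \Rightarrow G$ by the universal property of colimits. This construction will be inverse to precomposition with $\Yo$, providing the required equivalence $\Hom(F, G) \Equiv \Hom(F \circ \Yo, G \circ \Yo)$.

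The main obstacle will be tracking the naturality coherences: the extension $\bar\alpha_X$ is pointwise induced by the colimit universal property, but assembling these pointwise definitions into a genuine natural transformation in $E^{\PSH{C}}$---and checking that the resulting map really inverts restriction along $\Yo$---is delicate. An alternative, perhaps cleaner, route is to upgrade $f \mapsto L_f$ into a left adjoint functor to $\Yo^*$ and then conclude that both its unit and counit are isomorphisms: the unit from essential surjectivity above and the counit from combining density with the cocontinuity of each $F \in \CoCont{\PSH{C}}{E}$.
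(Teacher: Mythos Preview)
Your essential-surjectivity argument is fine and matches the paper's: the left adjoint $L_f$ to $\Nv_f$ restricts along $\Yo$ to $f$.

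The full-faithfulness step, however, has a genuine gap which you partly anticipate. The density lemma (Lemma~\ref{lem:adj:density}) is stated and proved only for $\DeclVar{X}{\GM}{\PSH{C}}$, whereas to build a natural transformation $\bar\alpha : F \Rightarrow G$ you must define $\bar\alpha_X$ \emph{functorially} in an arbitrary (non-crisp) $X : \PSH{C}$. Even granting density pointwise, the indexing category $\Modify[\OM]{\TotalType{X}}$ varies with $X$, so the universal-property-of-colimits construction does not obviously assemble into a map $\PSH{C} \to E^{\Int}$. This is not merely a matter of ``tracking coherences''; in \STT{} that functoriality \emph{is} the content. Your alternative proposal of upgrading $f \mapsto L_f$ to a left adjoint has the same issue: you would need $L_f$ for non-crisp $f$.

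The paper sidesteps this with a different, short manoeuvre. It first establishes not just essential surjectivity but the two-sided object-level statement $(F \circ \Yo)_! \cong F$ for crisp $F$ (using Theorem~\ref{thm:tw:strong-yoneda} to build the comparison map and density to check it pointwise). Then, since both $\Hom(f_!,g_!)$ and $\Hom(f,g)$ are groupoids, it suffices to match their $\GM$-elements; but a crisp element of $\Hom(f,g)$ is, after transposing, a crisp functor $C \to E^{\Int}$, and a crisp element of $\Hom_{\CoCont{}{}}(f_!,g_!)$ is a crisp cocontinuous functor $\PSH{C} \to E^{\Int}$. Since $E^{\Int}$ is again cocomplete (Corollary~\ref{cor:adj:limits-are-pointwise}), the already-established bijection on crisp objects---now applied with $E^{\Int}$ in place of $E$---yields the required equivalence. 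This $E^{\Int}$ trick replaces your delicate extension by a direct reuse of the object-level result.
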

\begin{proof}
  We use Theorem~\ref{thm:prelims:ff-ess-surj-to-equiv}. If $\DeclVar{f}{\GM}{C \to E}$, then
  $f_! : \PSH{C} \to E$ satisfies $f_! \circ \Yo = f$ and so $\Yo^*$ is essentially surjective:
  \[
    \Phi\prn{\prn{f_!\circ \Yo}^\dagger\prn{-},-} \cong \Phi\prn{\Yo^\dagger\prn{-},\Nv_f\prn{-}}
    \cong \Nv_f = \Phi\prn{f^\dagger\prn{-},-}
  \]

  Moreover, if $\DeclVar{F}{\GM}{\CoCont{\PSH{C}}{E}}$, then $\prn{F\circ\Yo}_! \cong F$, so that
  $\Yo^*$ is a bijection on $\GM$-elements. Let us write $f = F \circ \Yo$. We first construct a
  comparison map $\Hom{f_!}{F}$ by constructing a natural transformation
  $\Hom{\ArrId{}}{\Nv_{f}\prn{F\prn{-}}}$. Currying, this is equivalent to constructing a natural
  transformation between maps $\Modify[\OM]{C} \times \PSH{C} \to \Space$ and, in this form,
  $\ArrId{}$ is given by evaluation $\epsilon$ and $\Nv_F\prn{F\prn{-}}$ is
  $\Phi\prn{f\prn{-},F\prn{-}}$. We can replace $\epsilon$ with $\Phi\prn{\Yo\prn{-},-}$ by
  Theorem~\ref{thm:tw:strong-yoneda} and
  $\Phi\prn{f\prn{-},F\prn{-}} = \Phi\prn{F\prn{\Yo\prn{-}}, F\prn{-}}$ by definition. Accordingly,
  the relevant map is supplied by $\Phi_F$. It is routine to check that this is pointwise an
  equivalence by Lemma~\ref{lem:adj:density}.

  Finally, we now show that $\Yo^*$ is fully faithful. To show that it is fully faithful, we must
  show that if $\DeclVar{f,g}{\GM}{C \to E}$, then $\Hom{f_!}{g_!} \cong \Hom{f}{g}$. Both sides are
  groupoids, so it suffices to consider $\GM$-annotated elements. If
  $\DeclVar{\alpha}{\GM}{\Hom{f}{g}}$, then by transposing we may regard $\alpha$ as an element of
  $\Modify[\GM]{C \to E^\Int}$ and the previous observation ensures that this type is equivalent to
  $\Modify[\GM]{\CoCont{\PSH{C}}{E^\Int}}$ which yields the desired conclusion after transposing.
\end{proof}

\section{The theory of Kan extensions}
\label{sec:kan}
A unifying concept in category theory are \emph{Kan extensions}, which are universal extensions of
functors along functors on the same domain. Mac Lane, one of the founders of category theory,
famously stated: ``The notion of Kan extensions subsumes all the other fundamental concepts of
category theory,'' such as (co)limits and adjunctions~\parencite{maclane:1978,riehl:2014}.

\begin{defi}[Kan extensions]
  Given a map $f : C \to D$ and a category $E$, the left (right) Kan extension $\Lan{f}$
  ($\Ran{f}$) is the left (right) adjoint to $f^* : E^D \to E^C$.
\end{defi}

While the definition makes sense in general, to use the results of the previous sections, we shall
assume $\DeclVar{f}{\GM}{C \to D}$ and $\DeclVar{E}{\GM}{\Uni}$.  In Section~\ref{sec:kan:existence} we
show that Kan extensions exist whenever $E$ is (co)complete and in
Sections~\ref{sec:kan:cofinal} and \ref{sec:kan:quillen-a} we put this to work by deducing two important results:
Quillen's theorem A and the properness of cocartesian fibrations. Our arguments for the existence of
Kan extensions and Quillen's theorem A adapt the (model-agnostic) $\infty$-categorical arguments of
\textcite{ramzi:bousfield-kan}.

\subsection{Existence and characterization of Kan extensions}
\label{sec:kan:existence}

We can prove that Kan extensions can be computed in an expected way. For $\DeclVar{d}{}{D}$, we
write $\SLICE{C}{d} \defeq C \times_D \SLICE{D}{d}$ and
$\COSLICE{C}{d} \defeq C \times_D \COSLICE{D}{d}$.
We assume that $C$ and $D$ are both small so each $\SLICE Cd$ is also small.
By Theorem~\ref{thm:adj:precomp-is-right-adj} and Lemma \ref{lem:adj:comp}:
\begin{lem}
  \label{lem:kan:lan-exists-for-psh}
  If $E = \PSH{A}$ for some category $\DeclVar A\GM\Uni$, then $\Lan{f}$ exists.
  Moreover, if $\DeclVar{X}{\GM}{C \to E}$ and $\DeclVar{d}{\GM}{D}$, then
  $\Lan{f}\,X\,d = \Colim{} \prn{\SLICE{C}{d} \to C \to E} = \Grpdify\prn{\Sum{\prn{c,\_} : \SLICE{C}{d}} X\prn{c}}$.
\end{lem}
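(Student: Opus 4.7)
The plan is to reduce to the case $E = \Space$ via currying, then combine Theorem~\ref{thm:adj:precomp-is-right-adj} with Lemma~\ref{lem:adj:comp}. By currying, $E^D = \PSH{A}^D \Equiv \Modify[\OM]{A} \to \Space^D$ and similarly $E^C \Equiv \Modify[\OM]{A} \to \Space^C$; under this identification, $f^* : E^D \to E^C$ corresponds to post-composition with the precomposition functor $f^* : \Space^D \to \Space^C$.

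To handle the $\Space$-valued case, I will rewrite $\Space^D = \Modify[\OM]{\Modify[\OM]{D}} \to \Space = \PSH{\Modify[\OM]{D}}$ (and analogously for $C$), so that $f^* : \Space^D \to \Space^C$ corresponds to $\PSH{f^\dagger} = ((f^\dagger)^\dagger)^*$ for the map $f^\dagger : \Modify[\OM]{C} \to \Modify[\OM]{D}$. Since $\Modify[\OM]{C}$ is small (as $C$ is), Theorem~\ref{thm:adj:precomp-is-right-adj} applies and produces the left adjoint $(f^\dagger)_!$. Lemma~\ref{lem:adj:comp} then lifts this adjunction by post-composition, giving that $f^* : E^D \to E^C$ is a right adjoint with left adjoint $\Lan{f}$ obtained by post-composing with $(f^\dagger)_!$.

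For the explicit formula, I view $X : C \to E$ after currying as $\tilde{X} : \Modify[\OM]{A} \to \Space^C$ with $\tilde{X}(a)(c) = X(c)(a)$, so $\Lan{f}(X)(d)(a) = (f^\dagger)_!(\tilde{X}(a))(d)$. Corollary~\ref{cor:adj:push-pull-formula} then yields $\Grpdify\prn{\Sum{c : C} X(c)(a) \times \Hom[D]{f(c)}{d}} = \Grpdify\prn{\Sum{(c,\_) : \SLICE{C}{d}} X(c)(a)}$, which is the $\Grpdify(\Sum{})$ formula pointwise. The $\Colim{}$ formula follows from Corollary~\ref{cor:adj:limits-are-pointwise} (colimits in $\PSH{A}$ are computed pointwise) together with the explicit description of $\Colim{}$ in $\Space$ as $\Grpdify$ of the total space. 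The main obstacle will be bookkeeping the modalities carefully, especially checking that $f^*$ commutes correctly with the currying equivalence $E^D \Equiv (\Space^D)^{\Modify[\OM]{A}}$ so that Lemma~\ref{lem:adj:comp} genuinely produces the desired left adjoint; once the correspondences are set up, the rest is routine assembly of the cited results.
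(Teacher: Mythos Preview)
Your proposal is correct and follows the paper's approach: the paper's entire proof is the sentence ``By Theorem~\ref{thm:adj:precomp-is-right-adj} and Lemma~\ref{lem:adj:comp}'' preceding the lemma, and you have spelled out exactly how those two results combine via currying. Your additional derivation of the explicit formula via Corollary~\ref{cor:adj:push-pull-formula} and Corollary~\ref{cor:adj:limits-are-pointwise} fills in detail the paper leaves implicit.
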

This yields more generally:
\begin{thm}
  \label{thm:kan:left-kan-exists}
  If $E$ is cocomplete, then $\Lan{f}$ exists, and if $\DeclVar{X}{\GM}{C \to E}$,
  $\DeclVar{d}{\GM}{D}$, then $\Lan{f}\,X\,d = \Colim{} \prn{\SLICE{C}{d} \to C \to E}$.
\end{thm}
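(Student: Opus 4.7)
The plan is to extend Lemma \ref{lem:kan:lan-exists-for-psh} from presheaf targets to arbitrary cocomplete $E$, using the pointwise-adjoint machinery of Section \ref{sec:adj:pw}. The candidate functor $L \defeq \Lan{f}(X) : D \to E$ is forced by the formula itself: for each $\DeclVar{d}{\GM}{D}$, take $L(d) \defeq \Colim{\SLICE{C}{d}}(X \circ \pi_d) \in E$, where $\pi_d : \SLICE{C}{d} \to C$ is the canonical projection; this exists because $\SLICE{C}{d}$ is small (as $C$ and $D$ are) and $E$ is cocomplete.

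To upgrade this pointwise definition to a functor $D \to E$, I would use a cartesian-fibration-of-cocones construction. Define $\mathcal{K}$ to be the category whose objects are triples $(d, e, \sigma)$ with $\sigma : X \circ \pi_d \to \Const e$ a cocone, and let $g : \mathcal{K} \to D$ be projection to the first coordinate. The map $g$ is cartesian, with lifts of $\alpha : d \to d'$ given by restricting cocones along the postcomposition functor $\alpha_* : \SLICE{C}{d} \to \SLICE{C}{d'}$, and each fiber over $\DeclVar{d}{\GM}{D}$ has an initial object: the colimit cocone exhibiting $L(d)$. Applying Lemma \ref{lem:adj:functoriality-of-init-objects} to $g$ yields a functor $D \to \mathcal{K}$, whose composite with the obvious projection to $E$ is $L$, equipped with a natural cocone $\eta : X \circ \pi_{(-)} \to \Const L(-)$.

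To verify the universal property $\Hom[E^D]{L}{Y} \cong \Hom[E^C]{X}{f^* Y}$, note that by Corollary \ref{cor:adj:limits-are-pointwise} natural transformations in $E^D$ are computed pointwise, and the colimit universal property of each $L(d)$ gives
\[
  \Hom[E]{L(d)}{Y(d)} \cong \Hom[E^{\SLICE{C}{d}}]{X \circ \pi_d}{\Const Y(d)}
\]
naturally in $d$. Unfolding both sides, a datum on either side is a family of maps $X(c) \to Y(f(c))$ indexed by $c : C$ compatible with morphisms of $C$, which repackages—via the slice construction—either as a natural transformation $X \to f^* Y$ or as a compatible family of cocones parametrized by $d$.

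The main obstacle is the functoriality step: pointwise existence of $L(d)$ is immediate from cocompleteness, but genuine functoriality in the modal theory requires the fibration-of-cocones construction together with Lemma \ref{lem:adj:functoriality-of-init-objects}. The overall structure closely mirrors the proof of Theorem \ref{thm:adj:pw-to-full-adj}, with $\mathcal{K}$ playing the role there played by the comma category $\COMMA{C}{g}$.
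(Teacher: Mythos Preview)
Your approach differs from the paper's and leaves a real gap. The paper does not build a fibration of cocones at all: it invokes the free-cocompletion universal property (Theorem~\ref{thm:adj:cc}) to factor $X$ as $\bar X \circ \Yo$ with $\bar X : \PSH{C} \to E$ cocontinuous, applies the already-established presheaf case (Lemma~\ref{lem:kan:lan-exists-for-psh}) to $\Yo$ to obtain $\Lan{f}\Yo : D \to \PSH{C}$, and sets $\Lan{f}X \defeq \bar X \circ \Lan{f}\Yo$. The transposition identity and the colimit formula then reduce to the presheaf case together with cocontinuity of $\bar X$. No new category needs to be built.

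Your route is the classical direct one, but in \STT{} a category is not specified by listing its objects; it must be exhibited as a type satisfying the Segal and Rezk conditions. You write ``Define $\mathcal{K}$ to be the category whose objects are triples $(d, e, \sigma)$'' without constructing such a type. The third component $\Hom[E^{\SLICE{C}{d}}]{X \circ \pi_d}{\Const e}$ lives in a functor category whose exponent $\SLICE{C}{d}$ itself varies with $d$, so even writing down $\mathcal{K}$ as a $\Sum$-type and then verifying Segal, Rezk, and cartesianness of the projection to $D$ is genuine work you have not carried out---and this is precisely the bookkeeping the paper's factorization through $\PSH{C}$ is designed to sidestep. Your final ``unfolding'' step, identifying a $\Prod{d:D}$ of fibrewise cocone-spaces with $\Hom[E^C]{X}{f^*Y}$, is likewise a non-trivial equivalence that needs its own argument in \STT{}. (Minor: Corollary~\ref{cor:adj:limits-are-pointwise} concerns pointwise (co)limits, not the pointwise description of hom-types in functor categories; the latter is more elementary.)
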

\begin{proof}
  It suffices to argue that precomposition is pointwise a right adjoint and so we fix
  $\DeclVar{X}{\GM}{C \to E}$. By Theorem~\ref{thm:adj:cc}, we may view $X$ as the composition
  $\bar X \circ \Yo$, where $\bar X : \PSH{C} \to E$ is the left adjoint to $\Nv_X$.
  Next, we observe by~Lemma~\ref{lem:kan:lan-exists-for-psh} that $\Yo: C \to \PSH C$
  admits an extension to $D$ along $f$, namely $\Lan{f} \Yo : D \to \PSH C$,
  and we claim that
  $\bar{X} \circ \Lan{f} \Yo$ is our desired extension of $f$.
  Fixing $Z : D \to E$, we calculate:
  \begin{align*}
    \Hom[D \to E]{\bar{X} \circ \Lan{f} \Yo}{Z}
    &\Equiv \Hom[D \to \PSH{C}]{\Lan{f} \Yo}{\Nv_X \circ Z}
    \\
    &\Equiv \Hom[C \to \PSH{C}]{\Yo}{\Nv_X \circ Z \circ f}
    \\
    &\Equiv \Hom[C \to E]{\bar X\circ\Yo}{Z \circ f}
    \\
    &= \Hom[C \to E]{X}{Z \circ f}
  \end{align*}
  The expected colimit formula continues to hold as a consequence of
  Lemma~\ref{lem:kan:lan-exists-for-psh} and the cocontinuity of $\bar{X}$.
\end{proof}

By duality, we obtain the following variant:
\begin{thm}
  \label{thm:kan:right-kan-exists}
  If $E$ is complete, then $\Ran{f}$ exists and is specified by the dual limit formula:
  $\Ran{f}\,X\,d = \Lim{} \prn{\COSLICE{C}{d} \to C \to E}$.
\end{thm}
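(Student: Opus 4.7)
The plan is to reduce to Theorem~\ref{thm:kan:left-kan-exists} by passing to opposite categories. Since $\OM \circ \OM = \ArrId{}$, the opposite modality $\Modify[\OM]{-}$ is an involution on categories that reverses arrows, swaps limits with colimits, and exchanges complete categories with cocomplete ones. In particular, $\Modify[\OM]{E}$ is cocomplete whenever $E$ is complete, and the crisp map $\DeclVar{f}{\GM}{C \to D}$ yields a crisp map $\DeclVar{f^\dagger}{\GM}{\Modify[\OM]{C} \to \Modify[\OM]{D}}$.

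Applying Theorem~\ref{thm:kan:left-kan-exists} to $f^\dagger$ and the cocomplete target $\Modify[\OM]{E}$ then produces a left adjoint $\Lan{f^\dagger}$ to $(f^\dagger)^* : (\Modify[\OM]{E})^{\Modify[\OM]{D}} \to (\Modify[\OM]{E})^{\Modify[\OM]{C}}$ together with the pointwise colimit formula $\Lan{f^\dagger}\,Y\,\MkMod[\OM]{d} \cong \Colim{}\bigl(\SLICE{\Modify[\OM]{C}}{\MkMod[\OM]{d}} \to \Modify[\OM]{C} \to \Modify[\OM]{E}\bigr)$. Transporting this adjunction back along the equivalences $\Modify[\OM]{E^C} \simeq (\Modify[\OM]{E})^{\Modify[\OM]{C}}$ and $\Modify[\OM]{E^D} \simeq (\Modify[\OM]{E})^{\Modify[\OM]{D}}$ identifies the resulting adjunction with the desired $f^* \Adjoint \Ran{f}$, so we take $\Ran{f}$ to be the transported left adjoint.

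For the pointwise formula, I would verify the identification $\SLICE{\Modify[\OM]{C}}{\MkMod[\OM]{d}} \simeq \Modify[\OM]{\COSLICE{C}{d}}$, which amounts to unfolding the pullback definition of the slice and applying Axiom~\ref{ax:op-of-int-is-int} to reverse morphisms. Composed with the fact that a colimit in $\Modify[\OM]{E}$ is a limit in $E$, the transported colimit formula then reads $\Ran{f}\,X\,d \cong \Lim{}\bigl(\COSLICE{C}{d} \to C \to E\bigr)$.

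The main technical obstacle is not the categorical duality itself, which is routine, but rather keeping the modal annotations straight throughout the transport. The crucial input is Proposition~\ref{thm:prelims:mtt-facts}, which gives the compatibility of $\Modify[\OM]{-}$ with function types at crisp arguments and ensures that the equivalences between functor categories used above are natural enough to transport the adjunction structure cleanly. Once these identifications are in place, the remainder of the proof is a direct dualization of the argument for Theorem~\ref{thm:kan:left-kan-exists}.
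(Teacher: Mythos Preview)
Your proposal is correct and is precisely what the paper intends: the paper's proof consists solely of the phrase ``by duality,'' and you have spelled out that duality argument via the $\OM$-modality in appropriate detail. The identifications you list (swapping complete/cocomplete, $\SLICE{\Modify[\OM]{C}}{\MkMod[\OM]{d}} \simeq \Modify[\OM]{\COSLICE{C}{d}}$, colimits in $\Modify[\OM]{E}$ as limits in $E$, and the transport of the adjunction through $\Modify[\OM]{E^C} \simeq (\Modify[\OM]{E})^{\Modify[\OM]{C}}$) are exactly the ingredients needed.
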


\subsection{Final and initial functors}
\label{sec:kan:cofinal}

It is frequently useful to show that the limit of a complex diagram $D$ can be calculated by first
restricting to a simpler diagram using $f : C \to D$ and calculating the limit there \eg{},
restricting from $\mathbb{Z}$ to $\mathbb{Z}_{\le 0}$. When this approach is valid, $f$ is said to
be initial:
\begin{defi}
  A functor $\DeclVar{f}{\GM}{C \to D}$ is \emph{initial} if for every
  $\DeclVar{X}{\GM}{D \to \Space}$ the map $\Lim{D} X \to \Lim{C} X \circ f$ is an
  equivalence. A map is \emph{final} if its opposite is initial.
\end{defi}
While this definition is asymmetrical in its treatment of initiality and finality,
we shall restore the symmetry as a consequence of Quillen's Theorem~A in the next section,
see~Corollary~\ref{cor:kan:right-cofinal-char}.

Recall that $\Lim{D} X = \Prod{d : D} X\prn{d}$ and so the definition of
initiality equivalently states that the
restriction map $\prn{\Prod{d : D} X\prn{d}} \to \prn{\Prod{c : C} X\prn{f\prn{c}}}$ is an
equivalence.

\begin{exa}
  The $\brc{0}$/$\brc{1}$ inclusion ${\Unit \to \Int}$ is initial/final.
\end{exa}

\begin{lem}
  \label{lem:kan:left-cofinal-preserves-limit}
  If $\DeclVar{f}{\GM}{C \to D}$ is initial and
  $\DeclVar{X}{\GM}{D \to E}$, then $\Lim{C} \prn{X \circ f}$ and $\Lim{D} X$ both exist
  whenever either exists and are canonically isomorphic.
\end{lem}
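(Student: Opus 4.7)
My plan is to reduce the general case to the special case $E = \Space$ that appears in the definition of initiality, by testing against $\Hom[E]{e}{-}$. Concretely, if $L : E$ is \emph{any} candidate limit for some diagram $Y : D \to E$, its universal property unfolds to a natural equivalence $\Hom[E]{e}{L} \cong \Hom[D \to E]{\Const(e)}{Y}$ for all $e : E$, and the right-hand side is by definition $\Prod{d : D} \Hom[E]{e}{Y(d)}$, i.e.\ the limit in $\Space$ of $\Hom[E]{e}{Y(-)} : D \to \Space$. So representability of $e \mapsto \Lim{D}\Hom[E]{e}{Y(-)}$ is exactly existence of $\Lim{D} Y$.

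With this reformulation, fix $e : E$ and consider the $\Space$-valued diagram $H_e := \Hom[E]{e}{X(-)} : D \to \Space$. Initiality of $f$ applied to $H_e$ gives a canonical equivalence
\[
  \Lim{D} H_e \;\cong\; \Lim{C} (H_e \circ f) \;=\; \Lim{C} \Hom[E]{e}{(X \circ f)(-)},
\]
and this equivalence is natural in $e$ (both sides are assembled out of $\Hom[E]{-}{-}$ and so are functorial in $e$ by the usual implicit functoriality of \STT{}, using Theorem~\ref{thm:tw:nat-iso} to upgrade pointwise equivalence to functorial equivalence if needed). Thus the two representability problems defining $\Lim{D} X$ and $\Lim{C}(X \circ f)$ coincide: a representing object $L$ for one is a representing object for the other, and the induced comparison map $\Lim{D} X \to \Lim{C}(X \circ f)$ witnesses that representability.

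Therefore, if $\Lim{D} X$ exists, then for each $e$ we have
\[
  \Hom[E]{e}{\Lim{D} X} \;\cong\; \Lim{D} H_e \;\cong\; \Lim{C} (H_e \circ f) \;\cong\; \Hom[D \to E]{\Const(e)}{X \circ f},
\]
so $\Lim{D} X$ represents the functor $e \mapsto \Hom[D \to E]{\Const(e)}{X \circ f}$ and hence serves as $\Lim{C}(X \circ f)$. Conversely, running the chain of equivalences in reverse shows that if $\Lim{C}(X \circ f)$ exists, then it also represents $e \mapsto \Hom[D \to E]{\Const(e)}{X}$ and is the limit $\Lim{D} X$.

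The only real subtlety I expect is the bookkeeping: to speak of ``the limit of $Y : D \to E$ exists'' as a representability condition I need to handle $\GM$-annotation on $e$ (since the universal property quantifies over all $e : E$, not merely $\GM$-annotated ones); this is exactly what Corollary~\ref{cor:prelims:crisp-ff-is-ff} is for, promoting a $\GM$-pointwise equivalence of presheaves on $E$ to a genuine equivalence. Everything else is a formal consequence of the definitions and the $\Space$-valued case of initiality.
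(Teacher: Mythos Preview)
Your proposal is correct and is essentially the same argument as the paper's, just unpacked. The paper's proof is the two-line ``By Corollary~\ref{cor:adj:yo-preserves-limits}, we replace $E$ with $\PSH{E}$ and by Corollary~\ref{cor:adj:limits-are-pointwise} we reduce to $\Space$ where the result is immediate.'' Your move of testing against $\Hom[E]{e}{-}$ \emph{is} the Yoneda embedding $E \hookrightarrow \PSH{E}$, and your observation that $\Hom[D\to E]{\Const(e)}{X} = \Prod{d:D}\Hom[E]{e}{X(d)}$ \emph{is} the statement that limits in $\PSH{E}$ are computed pointwise; the paper just invokes these as named corollaries rather than writing them out. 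The bookkeeping you flag about $\GM$-annotation on $e$ is exactly the content absorbed into the paper's use of Lemma~\ref{lem:tw:repr} (full-faithfulness of $\Yo$) and Theorem~\ref{thm:tw:nat-iso}, so you have identified the right issue and the right fix.
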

\begin{proof}
  By Corollary~\ref{cor:adj:yo-preserves-limits}, we replace $E$ with $\PSH{E}$ and by
  Corollary~\ref{cor:adj:limits-are-pointwise} we reduce to $\Space$ where the result is immediate.
\end{proof}

\begin{restatable}{lem}{locop}
  \label{lem:app:localization-op}
  If $\DeclVar{C}{\GM}{\Uni}$, then $\Grpdify{C} \Equiv \Grpdify{\Modify[\OM]{C}}$.
\end{restatable}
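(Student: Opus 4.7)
The plan is to exploit the universal property of $\Grpdify{-}$ as the reflection of types into groupoids: by the Yoneda lemma for groupoids, the desired equivalence will follow once we exhibit, naturally in a groupoid $G$, an equivalence $\prn{C \to G} \Equiv \prn{\Modify[\OM]{C} \to G}$. So the entire problem reduces to transferring functors out of $C$ between $C$ and $\Modify[\OM]{C}$ when the target is groupoidal.

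The key intermediate claim is that the opposite modality acts trivially on crisp groupoids: for every $\DeclVar{G}{\GM}{\Uni}$ that is a groupoid, there is a natural equivalence $G \Equiv \Modify[\OM]{G}$. To produce the map, I would combine the strict mode-theoretic equation $\GM = \GM \circ \OM$ with the whiskered counit $\epsilon \Whisker \OM$ to obtain a 2-cell $\GM \to \OM$, whose associated $\Coe$-map yields $\Modify[\GM]{G} \to \Modify[\OM]{G}$; Axiom~\ref{ax:discrete-iff-crisp} then identifies the source with $G$. To see this map is an equivalence, apply Axiom~\ref{ax:cubes-separate}: after applying $\Modify[\GM]{\Delta^n \to -}$, two invocations of the transposition from Proposition~\ref{thm:prelims:mtt-facts} together with the equivalence $\Modify[\OM]{\Delta^n} \Equiv \Delta^n$ (Axiom~\ref{ax:op-of-int-is-int}) identify $\Modify[\GM]{\Delta^n \to \Modify[\OM]{G}} \Equiv \Modify[\GM]{\Delta^n \to G}$; a naturality check through Remark~\ref{rem:prelims:coe-natural} shows the induced map matches this canonical equivalence.

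With the lemma in hand, for any crisp groupoid $G$ we compute
\[
  \Modify[\GM]{\Modify[\OM]{C} \to G} \Equiv \Modify[\GM]{\Modify[\OM]{C} \to \Modify[\OM]{G}} \Equiv \Modify[\GM]{C \to G},
\]
where the first step uses $G \Equiv \Modify[\OM]{G}$ and the second is the transposition of Proposition~\ref{thm:prelims:mtt-facts}. Since groupoids are closed under function types into them and are therefore discrete, the outer $\Modify[\GM]{-}$ may be stripped, and one further application of Axiom~\ref{ax:cubes-separate} extends the equivalence from crisp $G$ to arbitrary groupoids. The Yoneda lemma for groupoids then delivers the claimed identification $\Grpdify{C} \Equiv \Grpdify{\Modify[\OM]{C}}$.

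The principal obstacle is the coherence bookkeeping inside the key lemma: while Axiom~\ref{ax:cubes-separate} abstractly reduces invertibility to the crisp setting where both sides land on $\Modify[\GM]{\Delta^n \to G}$, one must verify that the specific $\Coe$-map built from $\epsilon \Whisker \OM$ really factors through these identifications as the canonical $\Con{transp}$-equivalence. This is a naturality computation in the 2-cell structure of the mode theory—delicate to set up, but not mathematically deep once Remark~\ref{rem:prelims:coe-natural} is brought to bear.
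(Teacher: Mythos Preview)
Your approach is correct and matches the paper's: both arguments show that $C$ and $\Modify[\OM]{C}$ have naturally equivalent mapping-out types into discrete targets, then conclude by Yoneda. The paper is terser only in that it tests directly against $\Modify[\GM]{X}$ rather than an arbitrary crisp groupoid $G$; since every crisp groupoid satisfies $G \Equiv \Modify[\GM]{G}$, this is the same content, and your explicit ``key lemma'' that $\Modify[\OM]{-}$ is trivial on crisp groupoids is exactly what underlies the paper's unexplained third displayed equivalence $\Modify[\GM]{C \to \Modify[\GM]{X}} \Equiv \Modify[\GM]{\Modify[\OM]{C} \to \Modify[\GM]{X}}$.

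One step you can drop: the appeal to Axiom~\ref{ax:cubes-separate} to extend from crisp $G$ to arbitrary groupoids is unnecessary. Since $C$ is crisp, both $\Grpdify{C}$ and $\Grpdify{\Modify[\OM]{C}}$ are crisp groupoids, so the Yoneda argument only ever needs the equivalence $\prn{C \to G} \Equiv \prn{\Modify[\OM]{C} \to G}$ instantiated at crisp $G$, which you already have after stripping $\Modify[\GM]{-}$.
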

\begin{proof}
  We observe that $\Grpdify{C} \Equiv \Modify[\GM]{\Grpdify{C}}$ and likewise for
  $\Modify[\OM]{C}$. Accordingly, we note that:
  \begin{gather*}
    \Modify[\GM]{
      \Modify[\OM]{C} \to \Modify[\GM]{X}
    }
    \Equiv
    \Modify[\GM]{
      \Grpdify{\Modify[\OM]{C}} \to \Modify[\GM]{X}
    }
    \\
    \Modify[\GM]{
      C \to \Modify[\GM]{X}
    }
    \Equiv
    \Modify[\GM]{
      \Grpdify{C} \to \Modify[\GM]{X}
    }
    \\
    \Modify[\GM]{
      C \to \Modify[\GM]{X}
    }
    \Equiv
    \Modify[\GM]{
      \Modify[\OM]{C} \to \Modify[\GM]{X}
    }
  \end{gather*}
  Finally, the result follows from a simple Yoneda argument.
\end{proof}

\begin{lem}
  \label{lem:kan:localization-is-cofinal}
  For every $C$, the canonical map $C \to \Grpdify{C}$ is both initial and final.
\end{lem}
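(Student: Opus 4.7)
The plan is to first prove initiality of $\eta_C : C \to \Grpdify{C}$ by exploiting the universal property of the groupoid reflection, and then deduce finality by dualizing via Lemma~\ref{lem:app:localization-op}.

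For initiality, fix $X : \Grpdify{C} \to \Space$. By the completeness of $\Space$, we identify $\Lim{\Grpdify{C}} X = \Prod{g : \Grpdify{C}} X(g)$ and $\Lim{C}(X \circ \eta_C) = \Prod{c : C} X(\eta_C(c))$, with the comparison map sending $s$ to $s \circ \eta_C$. Under the standard HoTT correspondence, these $\Pi$-types are respectively equivalent to the types of sections of the projections $\pi : E \to \Grpdify{C}$ and $\pi' : C \times_{\Grpdify{C}} E \to C$, where $E := \Sum{g:\Grpdify{C}} X(g)$. The key observation is that $E$ is itself a groupoid, being the total space of a family of groupoids over a groupoid, so by the universal property of the groupoid reflection, precomposition with $\eta_C$ induces an equivalence $(\Grpdify{C} \to E) \Equiv (C \to E)$. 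I would then argue that this equivalence restricts appropriately to sections: a section of $\pi$ is a $\sigma : \Grpdify{C} \to E$ with $\pi\circ\sigma = \ArrId{\Grpdify{C}}$, and a section of $\pi'$ is (by the pullback) a $\sigma' : C \to E$ with $\pi \circ \sigma' = \eta_C$. Applying the universal property once more to the mapping type $\Grpdify{C} \to \Grpdify{C}$ (which is a groupoid) identifies the constraint $\pi\circ\sigma = \ArrId{}$ with $\pi\circ\sigma\circ\eta_C = \eta_C$. This yields the desired equivalence $\Prod{g}X(g) \Equiv \Prod{c}X(\eta_C(c))$ and, by construction, is induced by restriction along $\eta_C$.

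For finality, by definition it suffices to show that $\Modify[\OM]{\eta_C} : \Modify[\OM]{C} \to \Modify[\OM]{\Grpdify{C}}$ is initial. By Lemma~\ref{lem:app:localization-op} we have $\Grpdify{C} \Equiv \Grpdify{\Modify[\OM]{C}}$; the right-hand side is a groupoid, whose opposite is canonically equivalent to itself via inversion, giving $\Modify[\OM]{\Grpdify{C}} \Equiv \Grpdify{\Modify[\OM]{C}}$. Under this identification, $\Modify[\OM]{\eta_C}$ corresponds to the localization map $\Modify[\OM]{C} \to \Grpdify{\Modify[\OM]{C}}$, which is initial by the first half of the argument.

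The main obstacle will be the bookkeeping in the initiality proof: the universal property of localization delivers an abstract bijection of mapping spaces, but one has to verify carefully that it restricts along the equational cut-out defining sections, and that it is implemented by the expected precomposition-with-$\eta_C$ map rather than some other canonical inverse. Once this identification is nailed down, everything else—including the dualization—is formal.
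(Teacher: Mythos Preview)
Your proposal is correct and follows essentially the same approach as the paper: both reduce one of initiality/finality to the other via Lemma~\ref{lem:app:localization-op}, and both establish initiality by appealing to the universal property of $\Grpdify$ applied to a diagram valued in groupoids. The only difference is presentational: the paper observes in one line that each $X(d)$ is discrete and invokes the (dependent) universal property of the reflector directly, whereas you unpack this by passing to the total space $E$, using the non-dependent universal property for maps into the groupoid $E$, and then restricting to sections---which is precisely the standard proof of the dependent elimination principle for a modality.
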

\begin{proof}
  By Lemma~\ref{lem:app:localization-op}, it suffices to argue that this map is initial. To this
  end, we must show the following map to be an equivalence for every $X : \Grpdify{C} \to \Space$:
  \[
    \prn{\Prod{d : \Grpdify{C}} X\prn{d}} \to \prn{\Prod{c : C} X\prn{\eta\prn{c}}}
  \]
  However, $X\prn{d}$ is discrete for every $d : \Grpdify{C}$ and so this is simply the universal
  property of $\Grpdify$.
\end{proof}
\begin{cor}
  \label{lem:kan:limit-over-contractible}
  If $\Grpdify{C} = \ObjTerm{}$, then $\Lim{C} A = A$ for $A : \Space$.
\end{cor}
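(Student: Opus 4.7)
The plan is to chain Lemma~\ref{lem:kan:localization-is-cofinal} with Lemma~\ref{lem:kan:left-cofinal-preserves-limit} to transfer the limit computation from $C$ to $\Grpdify{C}$, where the hypothesis $\Grpdify{C} = \ObjTerm{}$ makes the calculation trivial.

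More precisely, I would first note that a constant functor $A : C \to \Space$ factors through $\Grpdify{C}$: it equals $\Const{A} \circ \eta$, where $\eta : C \to \Grpdify{C}$ is the canonical map and $\Const{A} : \Grpdify{C} \to \Space$ is the constant diagram on $A$ (well-defined as $\Grpdify C$ is a groupoid, in particular a category). Then by Lemma~\ref{lem:kan:localization-is-cofinal} the map $\eta$ is initial, so Lemma~\ref{lem:kan:left-cofinal-preserves-limit} ensures
\[
  \Lim{C} A = \Lim{C}\prn{\Const{A} \circ \eta} \cong \Lim{\Grpdify{C}} \Const{A}.
\]
Finally, using the hypothesis $\Grpdify{C} = \ObjTerm{}$, the right-hand side is $\Const{A}\prn{\ast} = A$, as limits over the terminal category are computed by evaluation (indeed $\Lim{\ObjTerm{}} X = \Prod{\ast : \ObjTerm{}} X\prn{\ast} \Equiv X\prn{\ast}$).

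There is no real obstacle here: the only subtlety is the mild bookkeeping to ensure that ``$A$ viewed as a diagram $C \to \Space$'' is literally obtained by restricting a diagram on $\Grpdify{C}$, which is what enables the application of Lemma~\ref{lem:kan:left-cofinal-preserves-limit}. Once this factorization is noted, the rest is immediate.
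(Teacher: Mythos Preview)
Your proof is correct and follows the intended route: the corollary is immediate from Lemma~\ref{lem:kan:localization-is-cofinal}, which is exactly what the paper intends (it is stated as a corollary with no explicit proof). One small simplification: since the target category here is $\Space$, you do not need to invoke Lemma~\ref{lem:kan:left-cofinal-preserves-limit}; the defining property of an initial functor already gives $\Lim{\Grpdify{C}} \Const{A} \Equiv \Lim{C}\prn{\Const{A}\circ\eta}$ directly.
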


\subsection{Quillen's Theorem A}
\label{sec:kan:quillen-a}

Our next goal is to prove the $\infty$-categorical version of Quillen's theorem A. Unlike
traditional proofs, we follow \textcite{ramzi:bousfield-kan} and rely on having already established
the basic apparatus of Kan extensions to simplify our argument.

\begin{defi}
  A functor $\DeclVar{f}{\GM}{C \to D}$ is Quillen final if
  $\Grpdify(\COSLICE{C}{d}) \Equiv \Unit$ for all $\DeclVar{d}{\GM}{D}$
\end{defi}
\begin{thm}
  \label{thm:kan:theorem-a}
  A functor $\DeclVar{f}{\GM}{C \to D}$ is final if and only if
  it is Quillen final.
\end{thm}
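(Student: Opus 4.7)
The plan is to recognize finality of $f$ as the statement that $f_! \Unit_C$ is the terminal presheaf on $D$, and then to unpack that statement using the explicit pointwise formula for $f_!$ from Section~\ref{sec:adj:pw}.

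First, I would establish that $f$ is final if and only if $f_! \Unit_C \cong \Unit_D$ in $\PSH{D}$. Unfolding, $f$ final means $f^{op}$ is initial, so for every $X : \PSH{D}$ the restriction map $\Lim_{\Modify[\OM]{D}} X \to \Lim_{\Modify[\OM]{C}} f^* X$ is an equivalence. Identifying $\Lim_A X$ with $\Hom_{\PSH{A}}(\Unit_A, X)$ (both compute as $\Prod_a X(a)$) and invoking the $f_! \Adjoint f^*$ adjunction from Theorem~\ref{thm:adj:precomp-is-right-adj}, the restriction map rephrases as the natural map $\Hom(\Unit_D, X) \to \Hom(f_! \Unit_C, X)$ induced by the transpose of the unit $\Unit_C \to f^* \Unit_D$. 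By the Yoneda lemma (Lemma~\ref{lem:tw:weak-yoneda}), $f$ is final precisely when the transposed comparison $f_! \Unit_C \to \Unit_D$ is an isomorphism.

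Second, I would compute $f_! \Unit_C$ pointwise. Applying Corollary~\ref{cor:adj:push-pull-formula} with $X = \Unit_C$, for each $\DeclVar{d}{\GM}{D}$ we have $(f_! \Unit_C)(d) \cong \Grpdify(\Sum_{c : \Modify[\OM]{C}} \Hom(f^\dagger c, d))$. The summand is the slice in $\Modify[\OM]{D}$ along $f^\dagger$, which by the opposite-arrow duality supplied by Axiom~\ref{ax:op-of-int-is-int} is identified with $\Modify[\OM]{\COSLICE{C}{d}}$, and whose groupoidification is $\Grpdify \COSLICE{C}{d}$ by Lemma~\ref{lem:app:localization-op}. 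Pointwise, $f_! \Unit_C \cong \Unit_D$ therefore amounts to $\Grpdify \COSLICE{C}{d} \cong \Unit$ for all $\DeclVar{d}{\GM}{D}$. By Theorem~\ref{thm:tw:nat-iso}, pointwise isomorphy at $\GM$-points of the comparison $f_! \Unit_C \to \Unit_D$ upgrades to a genuine isomorphism in $\PSH{D}$, giving the equivalence with Quillen finality.

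The main obstacle I anticipate is keeping the $\OM$-modality straight: the sum in Corollary~\ref{cor:adj:push-pull-formula} ranges over $c : \Modify[\OM]{C}$, not $c : C$, and the hom lives in $\Modify[\OM]{D}$, not $D$, so one must translate using the opposite-arrow duality $\Hom_{\Modify[\OM]{D}}(\MkMod[\OM]{d_0}, \MkMod[\OM]{d_1}) \cong \Hom_D(d_1, d_0)$ from Axiom~\ref{ax:op-of-int-is-int}, together with Lemma~\ref{lem:app:localization-op} to slide $\Grpdify$ past $\Modify[\OM]{-}$. These are routine but the indexing deserves care, as does verifying that the comparison map produced abstractly agrees with the restriction map in the definition of initiality.
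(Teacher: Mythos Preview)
Your argument is correct and takes a genuinely different route from the paper's. The paper proves the implication ``Quillen final $\Rightarrow$ final'' via an intermediate step (Lemmas~\ref{lem:kan:theorem-a-for-psh} and \ref{lem:kan:theorem-a}): it first shows that a Quillen-final functor preserves colimits valued in any cocomplete category, by computing the right Kan extension $\Ran{f}\Delta_C$ via the limit formula of Theorem~\ref{thm:kan:right-kan-exists} and invoking Lemma~\ref{lem:kan:limit-over-contractible}; then it specializes to $E=\Modify[\OM]{\Space}$ to obtain limit-preservation. For the converse it essentially computes $\Colim{C}\hom(d,-)$, which is the same coslice calculation you perform, just packaged as a colimit rather than as a fiber of $f_!\Unit_C$.

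Your approach collapses both directions into the single identification ``$f$ final $\iff$ the canonical map $f_!\Unit_C\to\Unit_D$ is invertible'', then reads off the fibers from Corollary~\ref{cor:adj:push-pull-formula}. This is shorter and avoids the detour through Kan extensions entirely; the one point to be careful about is exactly what you flag, namely that the definition of finality only quantifies over $\GM$-annotated $X$. But since $f_!\Unit_C$ is itself $\GM$-annotated (being built from $\GM$-data), the standard Yoneda retract argument goes through, and the citation of Lemma~\ref{lem:tw:weak-yoneda} is slightly off: you really just need that precomposition with the unique map $f_!\Unit_C\to\Unit_D$ is an equivalence at $X=f_!\Unit_C$, plus terminality of $\Unit_D$. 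What the paper's longer route buys is that Lemmas~\ref{lem:kan:theorem-a-for-psh} and \ref{lem:kan:theorem-a} are recorded as standalone results (final functors preserve colimits in arbitrary cocomplete targets), which get reused later; your approach would need to derive those separately after the fact.
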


\begin{rem}
  This result shows that, in particular, finality doesn't depend on the particular universe $\Space$
  chosen.
\end{rem}

\begin{lem}
  \label{lem:kan:theorem-a-for-psh}
  If $\DeclVar{f}{\GM}{C \to D}$ is Quillen final and $\DeclVar{X}{\GM}{D \to \PSH{A}}$,
  then $\Colim D X \Equiv \Colim C X \circ f$.
\end{lem}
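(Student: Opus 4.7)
The plan is to reduce the claim, via Yoneda in $\PSH A$, to showing that the unit $\Const Z \to \Ran f \Const Z$ is an equivalence for every $Z : \PSH A$. This unit can then be analyzed pointwise using the Quillen finality hypothesis together with Corollary~\ref{lem:kan:limit-over-contractible}.

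First, since $\PSH A$ is complete by the dual of Corollary~\ref{cor:adj:limits-are-pointwise}, Theorem~\ref{thm:kan:right-kan-exists} provides a right adjoint $\Ran f : \prn{\PSH A}^C \to \prn{\PSH A}^D$ to $f^*$. Transposing along the adjunctions $\Colim \Adjoint \Const$ and $f^* \Adjoint \Ran f$ yields natural identifications $\Hom{\Colim D X}{Z} \Equiv \Hom{X}{\Const Z}$ and $\Hom{\Colim C \prn{X \circ f}}{Z} \Equiv \Hom{X}{\Ran f \Const Z}$, and the canonical comparison between the two colimits transposes to postcomposition with the unit $\eta : \Const Z \to \Ran f \Const Z$. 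So it will suffice to show that $\eta$ is an equivalence.

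Next, I will evaluate $\eta$ pointwise at $d : D$. By Theorem~\ref{thm:kan:right-kan-exists}, $\prn{\Ran f \Const Z}\prn{d} \Equiv \Lim{\COSLICE{C}{d}} \Const Z \circ \pi$. Since limits in $\PSH A$ are also pointwise (Corollary~\ref{cor:adj:limits-are-pointwise}), evaluating further at $a : \Modify[\OM]{A}$ gives $\Lim{\COSLICE{C}{d}} Z\prn{a}$ with $Z\prn{a} : \Space$ viewed as a constant diagram. The hypothesis $\Grpdify\prn{\COSLICE{C}{d}} \Equiv \Unit$ combined with Corollary~\ref{lem:kan:limit-over-contractible} then identifies this limit with $Z\prn{a}$, realized by the canonical constant-cone map---which is precisely the pointwise unit.

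The main bookkeeping step is to verify that the equivalence produced by Corollary~\ref{lem:kan:limit-over-contractible} really coincides with the unit under these identifications, but this is automatic since both arrows are the canonical constant-cone maps. Once this is established, composing the natural equivalences above yields $\Hom{\Colim C \prn{X \circ f}}{Z} \Equiv \Hom{\Colim D X}{Z}$ naturally in $Z$, and Yoneda concludes the proof.
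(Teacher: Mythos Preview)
Your proposal is correct and takes essentially the same approach as the paper: both pass to right adjoints, reduce to showing that the unit $\Const Z \to \Ran f\,\Const Z$ is invertible, and then verify this pointwise using the limit formula of Theorem~\ref{thm:kan:right-kan-exists} together with Corollary~\ref{lem:kan:limit-over-contractible}. The only cosmetic difference is that the paper reduces from $\PSH{A}$ to $\Space$ at the outset (since colimits in $\PSH{A}$ are pointwise), whereas you carry $\PSH{A}$ through and evaluate at $a : \Modify[\OM]{A}$ at the end; the content is the same.
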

\begin{proof}
  This statement is pointwise, so we quickly reduce to $\Space$ instead of $\PSH{A}$. In this
  situation, we wish to show that the following commutes:
  \[
    \begin{tikzcd}
      \Space^D \ar[rr,"f^*"]\ar[dr,"\Colim D"'] && \Space^C\ar[dl,"\Colim C"] \\
      & \Space &
    \end{tikzcd}
  \]
  Note that all three morphisms are left adjoints,
  and so it suffices to compare their right adjoints:
  the constant functors $\Delta_C$ and $\Delta_D$,
  along with the right Kan extension $\Ran f$.
  We next note that there is at least a comparison map
  $\Delta_D \to \Ran{f} \circ \Delta_C$ given by transposing the
  identity map  $f^* \circ \Delta_D \to \Delta_C$.
  We must argue that this map is pointwise invertible,
  and so we reduce to considering $\DeclVar{X}{\GM}{\Space}$
  and $\DeclVar{d}{\GM}{D}$,
  and we must show the following, using~Theorem~\ref{thm:kan:right-kan-exists}:
  $X \Equiv \Lim{\COSLICE{C}{d}} X$.
  This now follows from our assumption and~Lemma~\ref{lem:kan:limit-over-contractible}.
\end{proof}

\begin{lem}
  \label{lem:kan:theorem-a}
  If $\DeclVar{f}{\GM}{C \to D}$ is Quillen final, $E$ a cocomplete category, and
  $\DeclVar{X}{\GM}{D \to E}$, then $\Colim D X \Equiv \Colim C X \circ f$.
\end{lem}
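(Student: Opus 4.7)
The plan is to reduce the general cocomplete case to the presheaf case already established in Lemma~\ref{lem:kan:theorem-a-for-psh}, by using the universal property of $\PSH{D}$ to factor $X$ through a cocontinuous functor out of presheaves. Concretely, since $D$ is small and $E$ is cocomplete, Theorem~\ref{thm:adj:cc} gives a cocontinuous $\bar{X} : \PSH{D} \to E$ together with a natural isomorphism $\bar{X} \circ \Yo_D \cong X$.

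Given this factorization, the two colimits in question can be rewritten by pushing $\bar{X}$ outside:
\[
  \Colim{D} X \cong \Colim{D}\prn{\bar{X} \circ \Yo_D} \cong \bar{X}\prn{\Colim{D} \Yo_D},
\]
\[
  \Colim{C} X \circ f \cong \Colim{C}\prn{\bar{X} \circ \Yo_D \circ f} \cong \bar{X}\prn{\Colim{C} \Yo_D \circ f},
\]
using that $\bar{X}$ preserves colimits. It therefore suffices to produce an equivalence $\Colim{D} \Yo_D \cong \Colim{C} \Yo_D \circ f$ in $\PSH{D}$ and transport it along $\bar{X}$.

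This last equivalence is precisely Lemma~\ref{lem:kan:theorem-a-for-psh} instantiated with $A \defeq D$ and the $\GM$-annotated functor $\Yo_D : D \to \PSH{D}$, which applies because $f$ is assumed Quillen final. Composing the displayed isomorphisms yields the desired $\Colim{D} X \cong \Colim{C} X \circ f$.

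The only real subtlety is bookkeeping: we must ensure $X$ can be viewed $\GM$-crisply so that Theorem~\ref{thm:adj:cc} applies (which is given by the hypothesis $\DeclVar{X}{\GM}{D \to E}$), and we must verify that the comparison map constructed by functoriality of colimits agrees with the one obtained after factoring through $\bar{X}$. Both are routine: the latter is a naturality check using that $\bar{X} \circ \Yo_D \cong X$ as functors, and all higher coherences are automatic since we are working within \STT{}. No serious obstacle is anticipated; the argument is a direct application of the universal property of presheaf categories to bootstrap from the $\Space$-valued case.
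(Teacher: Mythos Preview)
Your proposal is correct and follows essentially the same route as the paper: factor $X$ through the Yoneda embedding as $\bar{X}\circ\Yo_D$ via Theorem~\ref{thm:adj:cc}, use cocontinuity of $\bar{X}$ to pull it past the colimits, and then invoke Lemma~\ref{lem:kan:theorem-a-for-psh} for $\Yo_D : D \to \PSH{D}$. The paper states this in a single sentence, but the content is identical; your extra remark about tracking the canonical comparison map is a reasonable bit of hygiene that the paper leaves implicit.
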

\begin{proof}
  We reduce to the case where $E = \PSH{D}$ (and therefore Lemma~\ref{lem:kan:theorem-a-for-psh}) by
  factoring $X$ as $\bar{X} \circ \Yo$ and noting that $\bar{X}$ preserves colimits by construction.
\end{proof}

\begin{proof}[Proof of Theorem~\ref{thm:kan:theorem-a}]
  To see that Quillen finality implies finality, we apply
  Lemma~\ref{lem:kan:theorem-a} to $\Modify[\OM]{\Space}$,
  and calculate:
  \[
    \Lim{\Modify[\OM]D} X
    \Equiv \Colim D X^\dagger
    \Equiv \Colim C X^\dagger \circ f
    \Equiv \Lim{\Modify[\OM]C} X \circ f^\dagger
  \]
  For the reverse, suppose that $f$ is final. We note that by the dual of
  Lemma~\ref{lem:kan:left-cofinal-preserves-limit} (again applied to $\Modify[\OM]{\Space}$), the
  canonical map $\Colim{C} X \circ f \to \Colim{D} X$ is an equivalence for any
  $\DeclVar{X}{\GM}{D \to \Space}$.  Fix $\DeclVar d\GM D$ and choose
  $X = \hom(d,-) = \Phi\prn{\MkMod[\OM]d,-}$ such that the colimits in question are precisely
  $\Grpdify{\COSLICE{D}{d}}$ and $\Grpdify{\COSLICE{C}{d}}$,
  using~Theorem~\ref{thm:adj:precomp-is-right-adj}.  This completes the proof since
  $\Grpdify{\COSLICE{D}{d}} = \ObjTerm{}$.
\end{proof}

\begin{cor}
  \label{cor:kan:right-cofinal-char}
  A functor $\DeclVar{f}{\GM}{C \to D}$ is final if and only if, for every
  $\DeclVar{X}{\GM}{D \to \Space}$ the map $\Colim{D} X \to \Colim{C} X \circ f$
  is an equivalence.
\end{cor}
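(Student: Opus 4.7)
The plan is to view this corollary as a direct repackaging of Theorem~\ref{thm:kan:theorem-a} together with Lemma~\ref{lem:kan:theorem-a}, using representable covariant homs on $D$ to reverse-engineer Quillen finality from colimit preservation. The two implications proceed as follows.

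For the forward direction, assume $f$ is final. Then by Theorem~\ref{thm:kan:theorem-a} it is Quillen final, so Lemma~\ref{lem:kan:theorem-a} (instantiated at the cocomplete category $\Space$) immediately yields $\Colim{D} X \Equiv \Colim{C} X \circ f$ for every $\DeclVar{X}{\GM}{D \to \Space}$. This is essentially a free consequence of having Theorem A on hand.

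For the reverse direction, assume that $\Colim{D} X \Equiv \Colim{C} X \circ f$ holds for every $\DeclVar{X}{\GM}{D \to \Space}$. I will show Quillen finality, after which Theorem~\ref{thm:kan:theorem-a} delivers finality. Fix $\DeclVar{d}{\GM}{D}$ and take $X \defeq \Phi(\MkMod[\OM]{d}, -) : D \to \Space$, i.e., the covariant hom $\Hom{d}{-}$. Unfolding the colimit formula of Theorem~\ref{thm:kan:left-kan-exists} along the unique map to $\ObjTerm{}$, we identify
\[
  \Colim{D} X \Equiv \Grpdify\bigl(\COSLICE{D}{d}\bigr),
  \qquad
  \Colim{C} X \circ f \Equiv \Grpdify\bigl(\COSLICE{C}{d}\bigr),
\]
the right-hand identification using that $\Sum{c : C} \Hom{d}{f(c)} \Equiv C \times_D \COSLICE{D}{d} = \COSLICE{C}{d}$. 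Since $(d, \ArrId{d})$ is an initial object of $\COSLICE{D}{d}$, the groupoidification of $\COSLICE{D}{d}$ is contractible, so the hypothesis forces $\Grpdify(\COSLICE{C}{d}) \Equiv \ObjTerm{}$, which is precisely Quillen finality at $d$.

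The only step deserving a remark is the contractibility of $\Grpdify(\COSLICE{D}{d})$. This is the standard observation that a category with an initial object has contractible groupoidification: the constant functor at the initial object is left adjoint to the unique map to $\ObjTerm{}$, and adjoints descend to inverse equivalences on groupoidifications. I do not expect any genuine obstacle here, as the corollary is essentially bookkeeping around Theorem~\ref{thm:kan:theorem-a} and Lemma~\ref{lem:kan:theorem-a}.
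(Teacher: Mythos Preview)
Your proof is correct and matches the paper's implicit argument. The corollary is stated without its own proof in the paper, but the reasoning you give---use Theorem~\ref{thm:kan:theorem-a} plus Lemma~\ref{lem:kan:theorem-a} for the forward direction, and instantiate at the covariant hom $\Phi(\MkMod[\OM]{d},-)$ to extract Quillen finality for the reverse---is exactly the maneuver already carried out inside the proof of Theorem~\ref{thm:kan:theorem-a} itself (see the last paragraph there, which invokes the same representable and the same identification of the colimits with $\Grpdify(\COSLICE{D}{d})$ and $\Grpdify(\COSLICE{C}{d})$).
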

This restores the symmetry between initial and final functors, as promised.  We offer another
symmetric definition of initiality and finality, informed by \textcite[Ch.~8]{cisinski:2024}.
\begin{defi}[Covariant equivalences]
  Fix $\DeclVar{p}{\GM}{C \to A}$ and $\DeclVar{q}{\GM}{D \to A}$ between categories $A,C,D$. Let
  $\DeclVar{f}{\GM}{C \to D}$ be a fibered map as follows:
  \[
    \begin{tikzcd}
     C && D \\
     & A
     \arrow["f", from=1-1, to=1-3]
     \arrow["p"', from=1-1, to=2-2]
     \arrow["q", from=1-3, to=2-2]
    \end{tikzcd}
  \]
  We call $f$ a \emph{covariant equivalence} if for all families $\DeclVar{X}{\GM}{A \to \Space}$
  reindexing gives rise to an equivalence, \ie,:
  \[
    f^* : \prn{\DelimMin{1}\Prod{a:A} D_a \to X_a} \to \prn{\DelimMin{1}\Prod{a:A} C_a \to X_a}
  \]
  Dually, $f$ is called \emph{contravariant equivalence} precomposition with respect to all contravariant families is an equivalence.
\end{defi}

\begin{lem}\label{lem:cov-eq-comp}
  Let $f$ as below be a covariant equivalence with respect to $p$ and $q$. Then, for any functor
  $\DeclVar{r}{\GM}{B \to A}$ it is also a covariant equivalence with respect to $rp$ and $rq$:
  \[
    \begin{tikzcd}
      C && D \\
      & A \\
      & B
      \arrow["f", from=1-1, to=1-3]
      \arrow["p"', from=1-1, to=2-2]
      \arrow["q", from=1-3, to=2-2]
      \arrow["r", from=2-2, to=3-2]
    \end{tikzcd}
  \]
\end{lem}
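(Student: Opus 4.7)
The plan is to reduce the statement to the hypothesised covariant equivalence over $A$ by reindexing the test family along $r$. Reading the displayed diagram, $r$ runs from $A$ to $B$ so that the composites $rp$ and $rq$ typecheck. Given any $Y : B \to \Space$, I set $X \defeq Y \circ r : A \to \Space$ and aim to identify the map $f^*$ for $Y$ over $B$ with the map $f^*$ for $X$ over $A$ via a commuting square of natural equivalences; the conclusion then follows by invoking the hypothesis on $X$.

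The core computation works uniformly for any $s : E \to A$ (to be specialised to $p$ or $q$). Writing $E'_b \defeq \Sum{e:E}\prn{rs(e) =_B b}$ for the fibre of $rs : E \to B$ and $E_a \defeq \Sum{e:E}\prn{s(e) =_A a}$ for the fibre of $s$, there is a chain of equivalences
\[
  \left(\Prod{b:B} E'_b \to Y_b\right)
  \;\Equiv\;
  \left(\Prod{e:E} Y_{rs(e)}\right)
  \;\Equiv\;
  \left(\Prod{a:A} E_a \to X_a\right),
\]
where each step is obtained by uncurrying into a dependent pair, swapping $\Pi$s, and contracting a singleton by path induction (using that $\Sum{b:B}\prn{rs(e) =_B b}$ and $\Sum{a:A}\prn{s(e) =_A a}$ are contractible). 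Neither manipulation references the fibred map $f$, so instantiating $E$ as $C$ or $D$ produces a commuting square whose horizontal arrows are the two $f^*$'s to be compared and whose vertical arrows are the equivalences just displayed.

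The hypothesis, applied to the family $X = Y \circ r$, states precisely that the $A$-side map $f^* : \left(\Prod{a:A} D_a \to X_a\right) \to \left(\Prod{a:A} C_a \to X_a\right)$ is an equivalence, and transporting across the square exhibits the $B$-side $f^*$ as an equivalence as well. Since $Y$ was arbitrary, $f$ is a covariant equivalence with respect to $rp$ and $rq$.

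I do not anticipate any serious obstacle: the argument is bookkeeping with $\Pi$- and $\Sigma$-types, and the only point requiring care is that the chain of equivalences is natural in the fibred domain, which is automatic because every step is a formal manipulation performed uniformly in $E$ and in particular not touching $f$.
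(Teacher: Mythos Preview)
Your argument is correct and follows essentially the same route as the paper: both build a commuting square whose vertical maps are the equivalences $\prod_{b:B} E'_b \to Y_b \Equiv \prod_{a:A} E_a \to (Y\circ r)_a$ (for $E=C,D$) and whose horizontal maps are the two instances of $f^*$, and then conclude by 3-for-2. Your version is in fact slightly more explicit than the paper's---you exhibit the common middle term $\prod_{e:E} Y_{rs(e)}$ and note the naturality in $E$ that makes the square commute---whereas the paper leaves the vertical equivalences and the commutativity implicit (and, incidentally, swaps in its text which horizontal map is the hypothesis and which is the goal; your identification is the correct one).
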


\begin{proof}
  We get the following induced square:
  \[
    \begin{tikzcd}
      {\big( \prod_{b:B} D_b \to X_b \big)} && {\big( \prod_{b:B} C_b \to X_b \big)} \\
      {\big( \prod_{a:A} D_a \to X_{r(a)} \big)} && {\big( \prod_{a:A} C_a \to X_{r(a)} \big)}
      \arrow["{f^*}", "\simeq"', from=1-1, to=1-3]
      \arrow["\simeq"', from=1-1, to=2-1]
      \arrow["\simeq", from=1-3, to=2-3]
      \arrow["{f^*}"', from=2-1, to=2-3]
    \end{tikzcd}
  \]
  The upper horizontal map is an equivalence by the preconditions. The goal is to show that the
  lower horizontal map is an equivalence, too. But this follows from $3$-for-$2$ for equivalences.
\end{proof}

\begin{lem}[Characterizations of initiality]
  \label{lem:kan:characterization-initiality}
  Let $\DeclVar{f}{\GM}{C \to D}$ be a functor. Then the following are equivalent:
  \begin{enumerate}[label=\textup{(\arabic*)}]
  \item\label{it:prop:char-left-cof-i} $f$ is initial.
  \item\label{it:prop:char-left-cof-ii} Let $\DeclVar{X}{\GM}{A \to \Space}$ be a family with
    associated left fibration $\DeclVar{\pi}{\GM}{\TotalType{X} \to A}$. Then any square of the
    following form has a filler $\overline{\varphi}$, uniquely up to homotopy:
    \[
      \begin{tikzpicture}
        \node (nw) {$C$};
        \node[below = 1.5cm of nw] (sw) {$D$};
        \node[right = 4cm of nw] (ne) {$\TotalType{X}$};
        \node[right = 4cm of sw] (se) {$A$};
        \path[->] (nw) edge node[above] {$\varphi$} (ne);
        \path[->] (nw) edge node[left] {$f$} (sw);
        \path[->] (ne) edge node[right] {$\pi$} (se);
        \path[->] (sw) edge node[below] {$\alpha$} (se);
        \path[->, unique exists] (sw) edge node[upright desc] {$\overline{\varphi}$} (ne);
      \end{tikzpicture}
    \]
  \item\label{it:prop:char-left-cof-iii} For any family $\DeclVar{X}{\GM}{A \to \Space}$ the
    following square is a pullback:
    \[
      \DiagramSquare{
        nw = \TotalType{X}^D,
        ne = \TotalType{X}^C,
        sw = A^D,
        se = A^C,
        nw/style = pullback,
        height = 1.5cm,
      }
    \]
  \item\label{it:prop:char-left-cof-iv} $f$ is a covariant equivalence with respect to any
    $\DeclVar{\alpha}{\GM}{D \to A}$.
  \end{enumerate}
  The analogous characterization holds for contravariant equivalences and final functores.
\end{lem}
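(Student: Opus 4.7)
The plan is to prove the four characterisations equivalent via the cycle $(1) \Leftrightarrow (4) \Leftrightarrow (3) \Leftrightarrow (2)$; the dual characterisation for finality and contravariant equivalences then follows by applying the result to $f^\dagger : \Modify[\OM]{C} \to \Modify[\OM]{D}$ and using Axiom~\ref{ax:op-of-int-is-int}.

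For $(1) \Leftrightarrow (4)$, condition (4) with $A \defeq D$ and $q \defeq \ArrId{D}$ unfolds to initiality (using $\Prod{a : D}(D_a \to X_a) \Equiv \Prod{d : D} X\prn{d}$), so $(4) \Rightarrow (1)$ is immediate. Conversely, given $\DeclVar{q}{\GM}{D \to A}$ and $\DeclVar{X}{\GM}{A \to \Space}$, a natural reindexing yields equivalences $\Prod{a : A}(D_a \to X_a) \Equiv \Prod{d : D} X\prn{q\prn{d}}$ and similarly for $C$ with $q \circ f$. Under these identifications $f^*$ becomes precomposition with $f$ applied to the composite family $X \circ q : D \to \Space$, so initiality supplies the desired equivalence. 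Lemma~\ref{lem:cov-eq-comp} is precisely what reduces the general $q$ to the identity case.

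For $(4) \Leftrightarrow (3)$, the square in (3) is a pullback if and only if the canonical map $\TotalType{X}^D \to A^D \times_{A^C} \TotalType{X}^C$ is an equivalence, which in turn holds iff it is a fibrewise equivalence over $A^D$. The fibre of $\TotalType{X}^D \to A^D$ over $\alpha : A^D$ identifies with $\Prod{d : D} X\prn{\alpha\prn{d}}$, and the fibre of $\TotalType{X}^C \to A^C$ over $\alpha \circ f$ identifies with $\Prod{c : C} X\prn{\alpha\prn{f\prn{c}}}$, with the induced map between them given by precomposition with $f$. This is precisely (4) applied to $q \defeq \alpha$.

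For $(2) \Leftrightarrow (3)$, the data of a filler $\bar\varphi$ making both triangles commute is exactly a lift of $\prn{\alpha, \varphi} : A^D \times_{A^C} \TotalType{X}^C$ through the canonical map $\TotalType{X}^D \to A^D \times_{A^C} \TotalType{X}^C$; contractibility of the space of such lifts (i.e., ``uniquely up to homotopy'') amounts to this canonical map being an equivalence, which is (3). The main obstacle is simply the careful bookkeeping of the reindexing equivalences in $(1)\Leftrightarrow(4)$ and verifying that they intertwine $f^*$ correctly; once this is set up, each remaining step is a formal manipulation of pullbacks and dependent products.
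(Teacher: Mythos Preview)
Your argument is correct and follows essentially the same route as the paper's proof. The paper likewise shows $(2)\Leftrightarrow(3)$ by commuting $\Sigma$ and $\Pi$, then argues $(4)\Leftrightarrow(2)$ by unfolding, and finally obtains $(1)\Leftrightarrow(4)$ by specializing to $\alpha=\ArrId{D}$ for one direction and invoking Lemma~\ref{lem:cov-eq-comp} for the other; your organization differs only in passing through $(3)$ rather than $(2)$ as the intermediary to $(4)$, which is an immaterial rearrangement.
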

\begin{proof}
  Conditions~\ref{it:prop:char-left-cof-ii} and~\ref{it:prop:char-left-cof-iii} are readily seen
  to be equivalent by commuting $\Prod{}$ and $\Sum{}$. Condition~\ref{it:prop:char-left-cof-iv}
  unfolds to the following: for any $\DeclVar{X}{\GM}{A \to \Space}$ reindexing along $f$ is an
  equivalence, namely
  \[
    f^* :
    \prn{\Prod{a:A} D_a \to X_a}
    \xrightarrow{\sim}
    \prn{\Prod{a:A} (\Sum{d:D_a} C_{a,d}) \to X_a}
  \]
  This, again, is readily seen to be equivalent to~\ref{it:prop:char-left-cof-ii}.

  We turn to the implication
  $\text{\ref{it:prop:char-left-cof-iv}} \implies \text{\ref{it:prop:char-left-cof-i}}$.
  But this is clear,
  since~\ref{it:prop:char-left-cof-i} says that $f$ is a covariant equivalence with respect to
  itself and $\ArrId{D}$.

  For the converse direction
  $\text{\ref{it:prop:char-left-cof-i}} \implies \text{\ref{it:prop:char-left-cof-iv}}$
  we use the insight
  just made together with Lemma~\ref{lem:cov-eq-comp}.
\end{proof}

The following alternative characterization is also often useful:

\begin{restatable}{lem}{charcofinal}
  \label{prop:kan:charcofinal}
  $\DeclVar{f}{\GM}{C \to D}$ is initial \textup(resp. final\textup) if and only if for every
  covariant \textup(resp. contravariant\textup) family $\DeclVar{\pi}{\GM}{X \to Y}$, $f$ is left
  orthogonal to $\pi$, \ie{}, $\IsEquiv\prn{X^D \to X^C \times_{Y^C} Y^D}$.
\end{restatable}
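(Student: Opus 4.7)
The plan is to read this proposition as a reformulation of condition~\ref{it:prop:char-left-cof-iii} of Lemma~\ref{lem:kan:characterization-initiality} under the straightening/unstraightening correspondence for covariant fibrations supplied by Proposition~\ref{thm:prelims:space}. I argue only the initial case, since the final case follows by replacing $f$ with $f^\dagger : \Modify[\OM]{C} \to \Modify[\OM]{D}$ and noting that contravariant families over $D$ are covariant families over $\Modify[\OM]{D}$.

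For the ``if'' direction, assume $f$ is left orthogonal to every covariant fibration. Given any family $\DeclVar{X}{\GM}{A \to \Space}$, the projection $\DeclVar{\pi}{\GM}{\TotalType{X} \to A}$ is covariant by Proposition~\ref{thm:prelims:space}. Orthogonality of $f$ against $\pi$ then says precisely that the square in condition~\ref{it:prop:char-left-cof-iii} of Lemma~\ref{lem:kan:characterization-initiality} is cartesian, hence $f$ is initial.

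For the ``only if'' direction, fix an initial $f$ and a covariant $\DeclVar{\pi}{\GM}{X \to Y}$; I must show that $X^D \to X^C \times_{Y^C} Y^D$ is an equivalence. Since $X$ and $Y$ are $\GM$-annotated, the fiber family $\lambda y.\,\Sum{x : X} \pi\prn{x} = y$ is a crisp covariant family of types, so by Proposition~\ref{thm:prelims:space} it factors through $\Space$ as some $\tilde\pi : Y \to \Space$, and this induces an equivalence $X \Equiv \TotalType{\tilde\pi}$ over $Y$. Condition~\ref{it:prop:char-left-cof-iii} of Lemma~\ref{lem:kan:characterization-initiality} applied to $\tilde\pi$ produces the pullback square $\TotalType{\tilde\pi}^D \Equiv \TotalType{\tilde\pi}^C \times_{Y^C} Y^D$, which transports along the equivalence $X \Equiv \TotalType{\tilde\pi}$ to give the desired orthogonality statement.

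The main subtlety is bookkeeping modal annotations: we need $Y$ and its fiber family to be $\GM$-annotated in order to apply Proposition~\ref{thm:prelims:space}, and we must check that the straightened pullback square transports to the one whose equivalence we are after. Beyond that, no serious obstacle is anticipated, since this proposition is essentially just the instance of Lemma~\ref{lem:kan:characterization-initiality}\ref{it:prop:char-left-cof-iii} phrased directly in terms of covariant fibrations rather than their $\Space$-valued straightenings.
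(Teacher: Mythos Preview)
Your proof is correct and follows essentially the same route as the paper: both reduce the statement to Lemma~\ref{lem:kan:characterization-initiality}. You invoke item~\ref{it:prop:char-left-cof-iii} and make the straightening step (via Proposition~\ref{thm:prelims:space}) explicit, whereas the paper simply says ``immediate'' and cites item~\ref{it:prop:char-left-cof-iv}; since the lemma already establishes the equivalence of these items, this difference is cosmetic. For the final case, your dualization via the definition ``final $=$ opposite is initial'' works directly; the paper additionally cites Corollary~\ref{cor:kan:right-cofinal-char}, but this is not strictly needed once the initial case is in hand.
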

\begin{proof}
  Immediate by Proposition~\ref{it:prop:char-left-cof-iv} for the initial case
  and by duality and~Corollary~\ref{cor:kan:right-cofinal-char} for the final case.
\end{proof}

\begin{cor}
  \label{cor:initial-covariant-ofs}
  There is an orthogonal factorization system in the sense of \cite{rijke:2020} with the left class given by the initial functors and the right class by covariant fibrations.
\end{cor}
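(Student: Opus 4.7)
The plan is to recognize this corollary as an instance of the general correspondence, established in \cite{rijke:2020}, between reflective subuniverses and orthogonal factorization systems. For each category $\DeclVar{D}{\GM}{\Uni}$, the covariant families over $D$ form a reflective subuniverse of the slice $\Uni/D$, with reflection obtained by localizing against $\brc{0} \to \Int$. This is the content of \cite{rijke:2020}'s Theorem~2.41 and has already been invoked several times in this work, for instance in the proof of Theorem~\ref{thm:adj:precomp-is-right-adj} and in Corollary~\ref{cor:adj:push-pull-formula}.

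Given this reflective subuniverse, the general machinery of \cite{rijke:2020} produces an orthogonal factorization system whose right class is precisely the modal maps (the covariant fibrations) and whose left class consists of the unit maps of the reflection together with all maps left orthogonal to the right class. The orthogonality half of the OFS is immediate from Lemma~\ref{prop:kan:charcofinal}: a functor is initial if and only if it is left orthogonal to every covariant fibration, which is exactly the defining condition of the left class of the factorization system obtained from the modality.

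For the factorization half, given any $\DeclVar{f}{\GM}{C \to D}$, we view $f$ as an object of $\Uni/D$ and let $\eta_f : f \to Lf$ be the unit of its reflection into the subuniverse of covariant families. This yields a factorization $C \to \TotalType{Lf} \to D$ in which the second map is a covariant fibration by construction, while the first map is left orthogonal to every covariant fibration over $D$ by the universal property of the reflection. Applying Lemma~\ref{prop:kan:charcofinal} once more shows that this first map is initial, completing the factorization.

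The main obstacle I expect is bookkeeping rather than mathematical substance: matching our formulation of initial functors (defined via $\Space$) with the orthogonality condition used in the reflective-subuniverse framework, and ensuring the modal context does not interfere. The former is resolved by Lemma~\ref{prop:kan:charcofinal}, and the latter follows from the fact that the localization producing the reflection is already available to us via Theorem~2.41 of \cite{rijke:2020}, which has been used throughout the paper in essentially this setting.
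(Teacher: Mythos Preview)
Your proposal is correct and aligns with the paper's implicit reasoning: the paper states this corollary without proof, expecting it to follow from the preceding Lemma~\ref{prop:kan:charcofinal} together with the general reflective-subuniverse/OFS machinery of \cite{rijke:2020}, which is exactly the route you take. Your explicit identification of the factorization as the unit of the $\brc{0}\to\Int$-localization (via Theorem~2.41 of \cite{rijke:2020}) spells out what the paper leaves tacit but uses freely elsewhere (e.g., in Proposition~\ref{prop:action-postcomp}, where the ``initial-covariant factorization'' is invoked by name).
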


As another consequence we get the dual of~Theorem~\ref{thm:kan:theorem-a}:
\begin{cor}
  \label{thm:kan:theorem-a-dual}
  A functor $\DeclVar{f}{\GM}{C \to D}$ is initial if and only if
  $\Grpdify(\SLICE{C}{d}) \Equiv \Unit$ for all $\DeclVar{d}{\GM}{D}$
  \textup(\emph{Quillen initial}\textup).
\end{cor}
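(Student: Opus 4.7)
The plan is to deduce this result directly from Theorem~\ref{thm:kan:theorem-a} and Lemma~\ref{lem:app:localization-op} by duality. Unfolding the definition of final, $f : C \to D$ is initial precisely when $f^\dagger : \Modify[\OM]{C} \to \Modify[\OM]{D}$ is final. Applying Theorem~\ref{thm:kan:theorem-a} to $f^\dagger$ then yields that $f$ is initial iff
$\Grpdify(\COSLICE{\Modify[\OM]{C}}{\MkMod[\OM]{d}}) \Equiv \Unit$ for every $\DeclVar{d}{\GM}{D}$, since $\GM$-annotated elements of $\Modify[\OM]{D}$ correspond via $\MkMod[\OM]{-}$ to $\GM$-annotated elements of $D$.

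It then remains to identify this with the Quillen-initial condition. The first step is to produce an equivalence $\COSLICE{\Modify[\OM]{C}}{\MkMod[\OM]{d}} \Equiv \Modify[\OM]{\SLICE{C}{d}}$. Unfolding $\SLICE{C}{d} = C \times_D \SLICE{D}{d}$ and using that $\Modify[\OM]{-}$ commutes with pullbacks (a corollary of the crisp identity axiom) and takes under-slices to over-slices of $\Modify[\OM]{D}$, this reduces to the elementary fact that reversing the direction of morphisms swaps over- and under-slices. The second step is to apply Lemma~\ref{lem:app:localization-op} to obtain $\Grpdify(\Modify[\OM]{\SLICE{C}{d}}) \Equiv \Grpdify(\SLICE{C}{d})$, which closes the chain of equivalences.

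The only step requiring any real care is the identification $\COSLICE{\Modify[\OM]{C}}{\MkMod[\OM]{d}} \Equiv \Modify[\OM]{\SLICE{C}{d}}$: while intuitively obvious, it must be verified in the modal type theory by taking the defining pullback of the slice apart and using both that $\Modify[\OM]{-}$ preserves pullbacks and that it acts on $D$ itself by swapping the two endpoint inclusions, which is precisely the content of Axiom~\ref{ax:op-of-int-is-int} applied to $\Delta^1$.
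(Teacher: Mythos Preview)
Your proposal is correct and is essentially the duality argument the paper leaves implicit (the paper states the corollary without proof, calling it ``the dual of Theorem~\ref{thm:kan:theorem-a}''). Your elaboration via $f$ initial $\iff$ $f^\dagger$ final, Theorem~\ref{thm:kan:theorem-a} for $f^\dagger$, the identification $\COSLICE{\Modify[\OM]{C}}{\MkMod[\OM]{d}} \Equiv \Modify[\OM]{\SLICE{C}{d}}$, and Lemma~\ref{lem:app:localization-op} is exactly how one would spell this out.
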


We demonstrate the utility of Theorem~\ref{thm:kan:theorem-a} by giving a new and far simpler
proof that cocartesian fibrations are \emph{proper}.
\begin{defi}
  \label{def:proper-maps}
  A functor $\DeclVar{\pi}{\GM}{E \to B}$ between categories is \emph{proper} if for all pullbacks
  (of $\GM$-functors) of the following form, $v$ is final if $u$ is final:
  \[
    \begin{tikzpicture}[diagram]
      \SpliceDiagramSquare{
        height = 1.35cm,
        ne = E,
        se = B,
        nw = E',
        sw = B',
        east = \pi,
        west = \pi',
        nw/style = pullback,
      }
      \node [pullback, left = of nw] (A) {$E''$};
      \node [left = of sw] (B) {$B''$};
      \path[->] (A) edge node[above] {$v$} (nw);
      \path[->] (B) edge node[below] {$u$} (sw);
      \path[->] (A) edge (B);
    \end{tikzpicture}
  \]
  We call $\pi$ \emph{smooth} if $\pi^\dagger : \Modify[\OM]{E} \to \Modify[\OM]{B}$ is proper.
\end{defi}
\begin{lem}
  Smooth and proper functors are closed under composition and pullback.\footnote{The definition of
    properness is formulated specifically to bake in the latter.}
\end{lem}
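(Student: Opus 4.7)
The plan is to split the statement into four parts and observe that the two smooth cases follow from the corresponding proper cases by applying $\Modify[\OM]{-}$, since smoothness is defined via properness of $\pi^\dagger$. So it suffices to treat properness, and I would handle pullback stability first (which is essentially by definition) before moving on to closure under composition (which uses the pullback pasting lemma).

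For closure under pullback, suppose $\DeclVar{\pi}{\GM}{E \to B}$ is proper and let $\DeclVar{f}{\GM}{B' \to B}$ be any $\GM$-functor with pullback $\pi' : E' \to B'$. I would observe that to check properness of $\pi'$, one is presented with a further pullback of $\pi'$ along some final $\DeclVar{u}{\GM}{B'' \to B'}$; but this is precisely the data appearing in the definition of properness for $\pi$, with the two-stage pullback $E'' \to E' \to E$ sitting over $B'' \to B' \to B$. Since $u$ is final, properness of $\pi$ immediately gives that $v : E'' \to E'$ is final. This is the content of the footnote: the two-stage formulation bakes closure under pullback directly into the definition.

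For closure under composition, suppose $\DeclVar{\pi}{\GM}{E \to B}$ and $\DeclVar{\sigma}{\GM}{B \to A}$ are both proper, and consider a two-stage pullback of $\sigma \circ \pi$ over $A'' \to A' \to A$ with $u : A'' \to A'$ final. The strategy is to intercalate the intermediate pullbacks through $B$. Setting $B' = A' \times_A B$ and $B'' = A'' \times_A B$, the pasting lemma for pullbacks yields that $B'' = A'' \times_{A'} B'$ as well, and similarly $E'' = B'' \times_{B'} E'$. Properness of $\sigma$ applied to the pair $(f : A' \to A,\ u : A'' \to A')$ gives that the induced map $B'' \to B'$ is final. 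Then properness of $\pi$ applied to the pair $(B' \to B,\ B'' \to B')$ gives that $v : E'' \to E'$ is final.

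The argument is essentially bookkeeping; there is no real obstacle, but I would take care to spell out that pullbacks and finality interact as expected and to invoke the pasting lemma explicitly when refactoring the two-stage pullback of $\sigma\circ\pi$ into a four-stage pullback passing through $B'$ and $B''$. The smooth cases then follow by dualizing via $\Modify[\OM]{-}$, using that $(\sigma\circ\pi)^\dagger = \sigma^\dagger \circ \pi^\dagger$ and that $\Modify[\OM]{-}$ preserves pullbacks (Corollary after Axiom~\ref{ax:twisted-arrow-equiv} on pullback-preservation by modalities).
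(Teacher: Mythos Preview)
The paper does not supply a proof of this lemma; it is stated without proof, the footnote indicating that pullback stability is immediate from the two-stage formulation of Definition~\ref{def:proper-maps}. Your argument is therefore not being compared against anything in the paper, but it is the standard one and is correct in outline.

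One point of imprecision: in your pullback-stability paragraph you write that checking properness of $\pi'$ amounts to ``a further pullback of $\pi'$ along some final $u : B'' \to B'$''. But properness of $\pi'$ requires a \emph{two}-stage pullback over $B'$, \ie{}, first along an arbitrary $g : C' \to B'$ and then along a final $u : C'' \to C'$. The fix is the obvious one you implicitly use for composition: by pasting, $C' \times_{B'} E' \simeq C' \times_B E$ and likewise for $C''$, so the two-stage diagram for $\pi'$ over $C'' \to C' \to B'$ is literally a two-stage diagram for $\pi$ over $C'' \to C' \to B$ (with $C' \to B$ the composite $f \circ g$), and properness of $\pi$ applies. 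Your composition argument is clean and correct as written.

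A minor citation slip: the corollary that each $\Modify[\mu]{-}$ preserves pullbacks appears just after the crisp-identity axiom, not after Axiom~\ref{ax:twisted-arrow-equiv}.
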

\begin{thm}
  \label{thm:kan:cocartesian-is-proper}
  Cartesian fibrations are smooth and cocartesian fibrations are proper.
\end{thm}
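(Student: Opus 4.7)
The plan is to first reduce to the cocartesian case, then apply Quillen's Theorem~A.  For the reduction, note that a cartesian fibration $\pi$ is defined so that $\pi^\dagger$ is cocartesian, and smoothness of $\pi$ unfolds directly to properness of $\pi^\dagger$; so it suffices to show that cocartesian fibrations are proper.

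Fix then a cocartesian $\pi : E \to B$ and consider a pullback diagram with $u : B'' \to B'$ final.  Write $\pi' : E' \to B'$ for the pullback of $\pi$ along $B' \to B$, which remains cocartesian by pullback-stability, and let $v : E'' \to E'$ be the pullback of $u$ along $\pi'$.  By Theorem~\ref{thm:kan:theorem-a}, showing $v$ is final reduces to proving $\Grpdify(\COSLICE{E''}{e'}) \simeq \Unit$ for each $\DeclVar{e'}{\GM}{E'}$, where we may invoke the hypothesis $\Grpdify(\COSLICE{B''}{b'}) \simeq \Unit$ for $b' := \pi'(e')$.

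The central construction is a left adjoint $s$ to the canonical projection $q : \COSLICE{E''}{e'} \to \COSLICE{B''}{b'}$.  Unpacking $\COSLICE{E''}{e'} \simeq E'' \times_{E'} \COSLICE{E'}{e'}$, I would define $s$ pointwise by cocartesian lifting: for $(b'', \phi : b' \to u(b''))$, cocartesianly lift $\phi$ in $E'$ to $\tilde\phi : e' \to \phi_!(e')$; since $\pi'(\phi_!(e')) = u(b'')$, the pair $(\phi_!(e'), b'')$ lives in $E'' = E' \times_{B'} B''$, giving the object $((\phi_!(e'), b''), \tilde\phi)$ of $\COSLICE{E''}{e'}$ sitting above $(b'', \phi)$.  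The universal property of cocartesian lifts identifies $s$ as a pointwise left adjoint to $q$, which one then promotes to a genuine functorial adjunction via Theorem~\ref{thm:adj:pw-to-full-adj}.

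With $s \dashv q$ in hand, passing to groupoid reflections is straightforward: any adjunction between categories descends to an equivalence between their groupoid reflections, since the unit and counit are natural transformations whose components become invertible in $\Grpdify$ of the target.  This yields $\Grpdify(\COSLICE{E''}{e'}) \simeq \Grpdify(\COSLICE{B''}{b'}) \simeq \Unit$, completing the proof.  The main obstacle is the careful treatment of the cocartesian lifting with the correct $\GM$-annotations so that Theorem~\ref{thm:adj:pw-to-full-adj} applies and so that the resulting adjunction can be honestly localized.  Once this bookkeeping is done, the argument is essentially one line per step, which is the source of the paper's claim that the type-theoretic proof is substantially simpler than its $\infty$-categorical analogues.
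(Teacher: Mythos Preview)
Your proposal is correct and shares the paper's overall architecture: reduce to the cocartesian case by duality, use pullback-stability to strip off the first square, and then invoke Theorem~\ref{thm:kan:theorem-a} so that finality of $v$ becomes contractibility of $\Grpdify(\COSLICE{E''}{e'})$.  Where you diverge is in the last step.  You build a genuine adjunction $s \dashv q$ between $\COSLICE{E''}{e'}$ and $\COSLICE{B''}{b'}$ (via pointwise cocartesian lifting and Theorem~\ref{thm:adj:pw-to-full-adj}), and then argue that adjunctions become equivalences after $\Grpdify$.  The paper instead performs a direct $\Sigma$-type decomposition: using cocartesianness of $\pi'$ it identifies $\COSLICE{E''}{e'}$ with $\Sum_{(a,f) \in \COSLICE{B''}{b'}} \COSLICE{(E'_{u(a)})}{f_! e'}$, observes that each fiber is a coslice and hence has trivial $\Grpdify$, and then applies the modality lemma that $\Grpdify$ of a $\Sigma$-type with $\Grpdify$-contractible fibers agrees with $\Grpdify$ of the base.

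The two arguments are closely related---your left adjoint $s$ picks out precisely the initial objects $f_! e'$ in those fibers---but the paper's route is lighter: it never needs to assemble the cocartesian lifts into a functor, never invokes Theorem~\ref{thm:adj:pw-to-full-adj}, and never needs the (admittedly easy) lemma that adjoints induce $\Grpdify$-equivalences.  Your approach buys a cleaner conceptual story (the coslice projection is a reflection) at the cost of the $\GM$-bookkeeping you flag, while the paper's equivalence chain is closer to a one-line computation.  Both are valid, and your acknowledged ``main obstacle'' is genuinely fillable.
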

\begin{proof}
  It suffices to treat the proper case. Fix a cocartesian fibration $\DeclVar{\pi}{\GM}{E \to B}$
  and note that since cocartesian fibrations are stable under pullbacks, it suffices show to that $v$ is
  final in the following pullback diagram if $u$ is final:
  \[
    \DiagramSquare{
      height = 1.5cm,
      width = 3cm,
      nw = { A \times_B E},
      sw = A,
      ne = E,
      se = B,
      nw/style = pullback,
      north = v,
      south = u,
      east = \pi,
    }
  \]
  We now use Theorem~\ref{thm:kan:theorem-a}. For $\DeclVar{e}{\GM}{E}$ we compute the fiber:
  \begin{align*}
    & (A \times_B E) \times_E \COSLICE{E}{e}
    \\
    &\quad
    \Equiv A \times_B \COSLICE{E}{e}
    \\
    &\quad
    \Equiv A \times_B \prn{\Sum{\DeclVar{b'}{}{B},\DeclVar{f}{}{\Hom{\pi\prn{e}}{b'}}}{\prn{E_{b'}}^{\Int}}}
    && \text{$\pi$ is cocartesian}
    \\
    &\quad
    \Equiv \Sum{\DeclVar{(a,f)}{}{A \times_B \COSLICE{B}{\pi\prn{e}}}}{\COSLICE{\prn{E_{u\prn{a}}}}{f_!e}}
  \end{align*}
  Applying $\Grpdify$ to each fiber yields $\Grpdify{\COSLICE{\prn{E_{u\prn{a}}}}{f_!e}} \Equiv \Unit$ (as
  coslices have initial elements) and $\Grpdify(A \times_B \COSLICE{B}{\pi\prn{e}}) \Equiv \Unit$
  since $u$ is final by assumption. This implies that applying $\Grpdify$ to the entire
  $\Sum{}$-type produces $\Unit$~\parencite{rijke:2020}.
\end{proof}

\begin{cor}
  If $\DeclVar{\pi}{\GM}{E \to B}$ is cocartesian and $\DeclVar{X}{\GM}{E \to D}$, then the left Kan
  extension $\Lan{\pi} X$ sends $\DeclVar{b}{\GM}{B}$ to $\Colim{}\prn{E_b \to E \to D}$.
\end{cor}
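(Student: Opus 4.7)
The strategy is to combine the pointwise colimit formula for left Kan extensions from Theorem~\ref{thm:kan:left-kan-exists}, namely $(\Lan{\pi}X)\,b \cong \Colim{}\prn{\SLICE{E}{b} \to E \to D}$, with the properness result established in Theorem~\ref{thm:kan:cocartesian-is-proper}. It thus suffices to show that the inclusion $v : E_b \hookrightarrow \SLICE{E}{b}$ of the strict fiber over $b$ into the slice is final; the desired identification of colimits will then follow from the dual of Lemma~\ref{lem:kan:left-cofinal-preserves-limit}.

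To prove $v$ final, I would arrange the situation as a pair of pullback squares
\[
  \begin{tikzcd}
    E_b \ar[r, "v"] \ar[d] & \SLICE{E}{b} \ar[d] \ar[r] & E \ar[d, "\pi"] \\
    \Unit \ar[r, "u"'] & \SLICE{B}{b} \ar[r] & B
  \end{tikzcd}
\]
where the right-hand square is the defining pullback of $\SLICE{E}{b} = E \times_B \SLICE{B}{b}$, and $u$ picks out the terminal object $(b,\ArrId{b})$ of $\SLICE{B}{b}$; a direct pasting calculation identifies the left-hand pullback as $E_b \cong E \times_B \Unit$. Since cocartesian fibrations are stable under pullback, the middle vertical map is cocartesian and hence proper by Theorem~\ref{thm:kan:cocartesian-is-proper}. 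The map $u$ is final, because it picks out a terminal object (an absolute colimit for any diagram on $\SLICE{B}{b}$). Applying properness to the left-hand square yields that $v$ is final, as required.

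The only ingredient not established explicitly in the paper is the stability of cocartesian fibrations under pullback, which is a routine consequence of the definition (the cocartesian lifts in the pullback are constructed from those in $\pi$). Granting this, the entire proof collapses to a one-line invocation of properness, nicely illustrating the value of packaging Theorem~\ref{thm:kan:cocartesian-is-proper} in proper/smooth language rather than stating it only in pointwise form.
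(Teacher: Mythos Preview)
Your proof is correct and is precisely the intended argument (the paper states the corollary without proof): combine the pointwise colimit formula of Theorem~\ref{thm:kan:left-kan-exists} with properness to replace $\SLICE{E}{b}$ by the strict fiber $E_b$. One small simplification: you do not need stability of cocartesian fibrations under pullback at all, since Definition~\ref{def:proper-maps} is already phrased for a two-square diagram with $\pi$ on the far right---so properness of $\pi$ itself directly yields that $v$ is final from $u$ final, without first arguing that the middle vertical map is proper.
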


\subsection{Smooth and proper base change}

We want to show that smooth and proper functors satisfy the \emph{Beck--Chevalley condition}. We follow \cite[Section 8.4]{cisinski:2024}; see also \cite{anel:2024} for a general discussion.

First, using the initial-covariant factorization (see \ref{cor:initial-covariant-ofs}) of a functor with small fibers we can compute the action of precomposition for (co)presheaves:
\begin{prop}[\protect{\cite[Theorem~8.1.18]{cisinski:2024}}]
  \label{prop:action-postcomp}
  Let $\DeclVar{u}{\GM}{A \to B}$ be a functor with small fibers. Then the left adjoint $\DeclVar{u_! \dashv u^*}{\GM}{\Space^B \to \Space^A}$ acts as follows: for $\DeclVar{F}{\GM}{A \to \Space}$, if
  \[\begin{tikzcd}
	E & {\Space_*} \\
	A & \Space
	\arrow[from=1-1, to=1-2]
	\arrow["\phi"', from=1-1, to=2-1]
	\arrow["\lrcorner"{anchor=center, pos=0.125}, draw=none, from=1-1, to=2-2]
	\arrow[from=1-2, to=2-2]
	\arrow["F", from=2-1, to=2-2]
  \end{tikzcd}\]
  denote by $E \stackrel{j}{\to} X \stackrel{\psi}{\to} B$ the initial-covariant factorization of $u \circ \phi$. Then $u_!(F)$ is the straightening of $\psi$:
  \[\begin{tikzcd}
	E & X & {\Space_*} \\
	A & B & \Space
	\arrow["j", from=1-1, to=1-2]
	\arrow["\phi"', from=1-1, to=2-1]
	\arrow[from=1-2, to=1-3]
	\arrow["\psi", from=1-2, to=2-2]
	\arrow["\lrcorner"{anchor=center, pos=0.125}, draw=none, from=1-2, to=2-3]
	\arrow[from=1-3, to=2-3]
	\arrow["u"', from=2-1, to=2-2]
	\arrow["{u_!(F)}"', from=2-2, to=2-3]
\end{tikzcd}\]
In particular, the unit $\DeclVar{\eta}{\GM}{F \to u^*u_!F}$ corresponds under directed univalence to the map $\widetilde{\eta}$:
\[\begin{tikzcd}
	E & Y & X \\
	& A & B
	\arrow["{\widetilde{\eta}}", dashed, from=1-1, to=1-2]
	\arrow["j"{description}, curve={height=-20pt}, from=1-1, to=1-3]
	\arrow["\phi"{description}, from=1-1, to=2-2]
	\arrow[from=1-2, to=1-3]
	\arrow["{u^*\psi}", from=1-2, to=2-2]
	\arrow["\lrcorner"{anchor=center, pos=0.125}, draw=none, from=1-2, to=2-3]
	\arrow["\psi", from=1-3, to=2-3]
	\arrow["u", from=2-2, to=2-3]
\end{tikzcd}\]
\end{prop}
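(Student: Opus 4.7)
The plan is to verify the stated description of $u_!F$ by checking the defining universal property of $u_! \dashv u^*$. Via directed univalence (Corollary~\ref{cor:prelims:dua}), functors into $\Space$ are the same data as covariant fibrations over the base, and natural transformations between them are the same as maps of such fibrations. The claim therefore unfolds to the following: given covariant fibrations $\phi : E \to A$ (corresponding to $F$) and $\gamma : Y \to B$ (corresponding to a test object $G : B \to \Space$), we must produce a natural equivalence between the maps $\psi \to \gamma$ over $B$ and the maps $\phi \to u^*\gamma$ over $A$, where $u^*\gamma$ denotes the pullback of $\gamma$ along $u$ and $\psi : X \to B$ is the covariant part of the initial-covariant factorization of $u \circ \phi$, supplied by Corollary~\ref{cor:initial-covariant-ofs}.

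The universal property of pullbacks identifies the second hom-set with maps $E \to Y$ over $B$, where $E$ sits over $B$ via $u \circ \phi = \psi \circ j$; the first is by definition the set of maps $X \to Y$ over $B$. The crucial step is then Proposition~\ref{prop:kan:charcofinal}: since $j : E \to X$ is initial and $\gamma : Y \to B$ is covariant, $j$ is left orthogonal to $\gamma$, so precomposition with $j$ induces an equivalence between these two hom-types. Naturality in $G$ follows from the naturality of each of the three translations (directed univalence, the pullback universal property, and orthogonality), so these assemble into a natural equivalence of hom-sets, identifying $u_!F$ with the straightening of $\psi$ up to canonical isomorphism.

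For the description of the unit, we chase $\ArrId{u_!F}$ through the adjunction isomorphism. Under the above identifications this becomes the identity on $\psi$ in the slice over $B$; the inverse of precomposition-by-$j$ sends it to $j : E \to X$ regarded as a map over $B$; and transporting back through the pullback universal property yields the unique $\widetilde{\eta} : E \to A \times_B X$ whose two components are $\phi$ and $j$, exactly matching the diagram in the statement. The conceptual heart of the proof is the orthogonality of initial functors against covariant fibrations; the smallness hypothesis on the fibers of $u$ is required only to guarantee that $u_!F$ takes values in the intended universe $\Space$, and no further ingredients are needed.
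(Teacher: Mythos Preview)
Your proof is correct and takes essentially the same approach as the paper's: translate via directed univalence to covariant fibrations, reduce the adjunction bijection to the statement that precomposition with the initial map $j$ is an equivalence on fibered maps into a covariant fibration, and invoke the orthogonality of initial functors against covariant fibrations (Proposition~\ref{prop:kan:charcofinal}, equivalently item~\ref{it:prop:char-left-cof-iv} of Lemma~\ref{lem:kan:characterization-initiality}). The paper phrases the intermediate step fiberwise as $\prod_{b:B}(X_b \to Z_b) \to \prod_{b:B}(E_b \to Z_b)$ where you phrase it as maps in the slice over $B$, but these are the same by definition; your extraction of the unit by chasing $\ArrId{u_!F}$ is likewise exactly what the paper's setup encodes.
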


\begin{proof}
  We want to show that an equivalence is given by the map
  \[ \DeclVar{\lambda \alpha.u^*\alpha \circ \eta}{\GM}{\hom_{B \to \Space}(u_!F, G) \to \hom_{A \to \Space}(F, u^*G)}.\]
  We write the unstraightening of $G$ as
  \[\begin{tikzcd}
	Z & {\Space_*} \\
	B & \Space
	\arrow[from=1-1, to=1-2]
	\arrow["\gamma"', from=1-1, to=2-1]
	\arrow["\lrcorner"{anchor=center, pos=0.125}, draw=none, from=1-1, to=2-2]
	\arrow[from=1-2, to=2-2]
	\arrow["G", from=2-1, to=2-2]
  \end{tikzcd}\]
  The above transposition map is equivalent to a map
  \[ \prn{\prod_{b:B} \hom_\Space(X_b, Z_b)} \to \prn{\prod_{a:A} \hom_\Space(E_a, (A \times_B Z)_a)} \simeq \prn{\prod_{b:B} \hom_\Space(E_b,Z_b)}.\]
  By directed univalence, this map corresponds to the map
  \[ \prn{\prod_{b:B} (X_b \to Z_b)} \to\prn{\prod_{b:B} (E_b \to Z_b)} \]
  which in turn acts as precomposition of fibered functors by $j$:
  \[\begin{tikzcd}
	E & X & Z \\
	& B
	\arrow["j", from=1-1, to=1-2]
	\arrow["{u \circ \varphi}"', from=1-1, to=2-2]
	\arrow[from=1-2, to=1-3]
	\arrow["\psi"', from=1-2, to=2-2]
	\arrow["\gamma", from=1-3, to=2-2]
\end{tikzcd}\]
Finally, due to Proposition~\ref{prop:kan:charcofinal}\ref{it:prop:char-left-cof-iv} this map is an equivalence.
\end{proof}

Let $\DeclVar{f}{\GM}{A \to B}$ be a functor with small fibers. Consider the induced pair of adjoints:
\[\begin{tikzcd}
	{\Space^B} & {\Space^A}
	\arrow[""{name=0, anchor=center, inner sep=0}, "{f^*}"', curve={height=12pt}, from=1-1, to=1-2]
	\arrow[""{name=1, anchor=center, inner sep=0}, "{f_!}"', curve={height=12pt}, from=1-2, to=1-1]
	\arrow["\dashv"{anchor=center, rotate=-90}, draw=none, from=1, to=0]
\end{tikzcd}\]
Thus, a square
\[\begin{tikzcd}
	{A'} & A \\
	{B'} & B
	\arrow["u", from=1-1, to=1-2]
	\arrow["{f'}"', from=1-1, to=2-1]
	\arrow["f", from=1-2, to=2-2]
	\arrow["v"', from=2-1, to=2-2]
\end{tikzcd}\]
induces the \emph{Beck--Chevalley square}
\[\begin{tikzcd}
	{\Space^{A'}} & {\Space^{A}} \\
	{\Space^{B'}} & {\Space^{B}}
	\arrow["{f'_!}"', from=1-1, to=2-1]
	\arrow["\beta"{description}, Rightarrow, from=1-1, to=2-2]
	\arrow["{u^*}"', from=1-2, to=1-1]
	\arrow["{f_!}", from=1-2, to=2-2]
	\arrow["{v^*}", from=2-2, to=2-1]
\end{tikzcd}\]
with the $2$-cell $\beta$ (a morphism in $\Space^A \to \Space^{B'}$) being the transpose of the map:
\[\begin{tikzcd}
	{u^*} & {u^*f^*f_!} & {f'^* v^* f_!}
	\arrow["{u^* \eta}", Rightarrow, from=1-1, to=1-2]
	\arrow["\simeq"{description}, draw=none, from=1-2, to=1-3]
\end{tikzcd}\]

Analogously to \ref{prop:action-postcomp}, we give a description of the Beck--Chevalley map as in \cite[Section 8.4]{cisinski:2024}.

Let $\DeclVar{F}{\GM}{A \to \Space}$ and $\varphi$ its unstraightening:
\[\begin{tikzcd}
	X & {\Space_*} \\
	A & \Space
	\arrow[from=1-1, to=1-2]
	\arrow["\varphi"', from=1-1, to=2-1]
	\arrow["\lrcorner"{anchor=center, pos=0.125}, draw=none, from=1-1, to=2-2]
	\arrow[from=1-2, to=2-2]
	\arrow["F", from=2-1, to=2-2]
\end{tikzcd}\]
Recall, that $f_!F$ is given by
\[\begin{tikzcd}
	X & Y & {\Space_*} \\
	A & B & \Space
	\arrow["j", from=1-1, to=1-2]
	\arrow["\varphi"', from=1-1, to=2-1]
	\arrow[from=1-2, to=1-3]
	\arrow["\psi"', from=1-2, to=2-2]
	\arrow["\lrcorner"{anchor=center, pos=0.125}, draw=none, from=1-2, to=2-3]
	\arrow[from=1-3, to=2-3]
	\arrow["f", from=2-1, to=2-2]
	\arrow["{f_!F}", from=2-2, to=2-3]
\end{tikzcd}\]
and that the unit $\DeclVar{\eta}{\GM}{F \to f^*f_!F}$ is given by the mediating map:
\[\begin{tikzcd}
	X & Z & Y \\
	& A & B
	\arrow["{\widetilde{\eta}}", dashed, from=1-1, to=1-2]
	\arrow["j", curve={height=-19pt}, from=1-1, to=1-3]
	\arrow["\varphi"', from=1-1, to=2-2]
	\arrow[from=1-2, to=1-3]
	\arrow[from=1-2, to=2-2]
	\arrow["\lrcorner"{anchor=center, pos=0.125}, draw=none, from=1-2, to=2-3]
	\arrow["\psi", from=1-3, to=2-3]
	\arrow["f", from=2-2, to=2-3]
\end{tikzcd}\]
Pulling back along $u$ and $v$, resp., and factoring the upper horizontal map yields the square
\[\begin{tikzcd}
	& {Y'} \\
	{A' \times_A X} && {B' \times_B Y} \\
	{A'} && {B'}
	\arrow["r", from=1-2, to=2-3]
	\arrow["{{j'}}", from=2-1, to=1-2]
	\arrow["{j '':={f' \times_fj}}", dashed, from=2-1, to=2-3]
	\arrow["{{u^*\varphi}}"', from=2-1, to=3-1]
	\arrow["{{v^*\psi}}", from=2-3, to=3-3]
	\arrow["{{f'}}", from=3-1, to=3-3]
\end{tikzcd}\]
where $j'$ is initial and $r$ is a covariant fibration. We consider the composite $\DeclVar{\psi' := v^*\psi \circ r}{\GM}{Y' \to B}$ whence
\[\begin{tikzcd}
	{Y'} & {\Space_*} & {B' \times_B Y} & Y & {\Space_*} \\
	B & {\Space_*} & {B'} & B & \Space
	\arrow[from=1-1, to=1-2]
	\arrow["{\psi'}"', from=1-1, to=2-1]
	\arrow["\lrcorner"{anchor=center, pos=0.125}, draw=none, from=1-1, to=2-2]
	\arrow[from=1-2, to=2-2]
	\arrow[from=1-3, to=1-4]
	\arrow["{v^*\psi}"', from=1-3, to=2-3]
	\arrow["\lrcorner"{anchor=center, pos=0.125}, draw=none, from=1-3, to=2-4]
	\arrow[from=1-4, to=1-5]
	\arrow["\psi"', from=1-4, to=2-4]
	\arrow["\lrcorner"{anchor=center, pos=0.125}, draw=none, from=1-4, to=2-5]
	\arrow[from=1-5, to=2-5]
	\arrow["{f'_!u^*(F)}", from=2-1, to=2-2]
	\arrow["v", from=2-3, to=2-4]
	\arrow["{v^*f_!F}"', curve={height=12pt}, from=2-3, to=2-5]
	\arrow["{f_!F}", from=2-4, to=2-5]
\end{tikzcd}\]
and the Beck--Chevalley transformation corresponds to the fibered map $\DeclVar{r}{\GM}{Y' \to_B B' \times_B Y}$ over $B$.

\begin{thm}[Smooth base change, \protect{\cite[Theorem 8.4.1]{cisinski:2024}}]
  Consider a pullback square
  \[\begin{tikzcd}
	{A'} & A \\
	{B'} & B
	\arrow["u", from=1-1, to=1-2]
	\arrow["{{f'}}"', from=1-1, to=2-1]
	\arrow["\lrcorner"{anchor=center, pos=0.125}, draw=none, from=1-1, to=2-2]
	\arrow["f", from=1-2, to=2-2]
	\arrow["v"', from=2-1, to=2-2]
\end{tikzcd}\]
of small categories where $v$ is smooth. Then, the Beck--Chevalley transformation is an equivalence
\[ f'_! u^* \simeq v^* f_!.\]
\end{thm}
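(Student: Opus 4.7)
The plan is to show that the Beck--Chevalley transformation is a pointwise equivalence by analyzing it, under directed univalence, as the fibered map $r: Y' \to B' \times_B Y$ over $B'$ exhibited immediately before the theorem. Both $\psi': Y' \to B'$ and $v^*\psi: B' \times_B Y \to B'$ are covariant fibrations (the former by the initial-covariant factorization of $f' \circ u^*\varphi$, the latter as a pullback of the covariant fibration $\psi$), and by construction $v^*\psi \circ r = \psi'$. Since Corollary~\ref{cor:initial-covariant-ofs} supplies the (initial, covariant) orthogonal factorization system, it suffices to exhibit $(j'', v^*\psi)$ as a second initial-covariant factorization of $f' \circ u^*\varphi$, where $j'' = r \circ j'$: any mediating map between two such factorizations is then automatically an equivalence, and this mediating map is precisely the Beck--Chevalley transformation.

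The problem therefore reduces to showing that $j'': A' \times_A X \to B' \times_B Y$ is initial. Using pullback pasting together with $A' = A \times_B B'$, one identifies $A' \times_A X \simeq B' \times_B X$, where the map $X \to B$ is $\psi \circ j = f \circ \varphi$. Under this identification $j''$ is literally $\mathrm{id}_{B'} \times_B j$, i.e., the base change of $j: X \to Y$ along $v: B' \to B$. It therefore suffices to prove that pullbacks along $v$ preserve initial functors.

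This last claim is exactly the dual of Definition~\ref{def:proper-maps} applied to $v^\dagger$. Unfolding: $v$ smooth means $v^\dagger$ is proper, i.e., iterated pullbacks along $v^\dagger$ in the opposite categories preserve finality. Translating back along $\Modify[\OM]{-}$---which is an involution commuting with pullbacks and exchanging initial with final---yields the desired statement for $v$. Since $j$ is initial by construction (as the left half of the initial-covariant factorization used in Proposition~\ref{prop:action-postcomp} to compute $f_! F$), smoothness of $v$ delivers initiality of $j''$, completing the reduction.

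The main obstacle I anticipate is verifying that the map produced by the universal property of the factorization system genuinely coincides with the Beck--Chevalley transformation as explicitly constructed in the paragraphs preceding the theorem. This amounts to chasing the unit $\eta$ through the pullback identifications above; it should be purely formal but requires diagrammatic care. Beyond this bookkeeping, the substantive content of the proof is the translation of smoothness into stability of initiality under base change, which is the role played by our definition of smooth and proper functors.
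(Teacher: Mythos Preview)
Your proposal is correct and follows essentially the same strategy as the paper: both reduce to showing that $j''$ is initial by identifying it as a base change of $j$ along (a pullback of) $v$ and invoking smoothness, then conclude that $r$ is an equivalence. The only cosmetic difference is that you phrase the last step via uniqueness of factorizations in the (initial, covariant) orthogonal factorization system, whereas the paper uses the equivalent cancellation property ($j'' = r \circ j'$ with $j', j''$ initial forces $r$ initial, hence an equivalence since $r$ is also a covariant fibration); the paper packages your pullback-pasting identification as the top face of a commutative cube.
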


\begin{proof}
We will show that the map $r$ from the preceding description is an equivalence by showing that it is also initial (and it is a covariant fibration by construction). We have the following commutative cube:
\[\begin{tikzcd}
	{A' \times_A X} && {B' \times_B Y} \\
	& X && Y \\
	{A'} && {B'} \\
	& A && B
	\arrow["{j''}"{description}, from=1-1, to=1-3]
	\arrow[from=1-1, to=2-2]
	\arrow["{u^*\phi}"{description, pos=0.3}, from=1-1, to=3-1]
	\arrow["\lrcorner"{anchor=center, pos=0.125}, draw=none, from=1-1, to=4-2]
	\arrow["{v'}"{description},from=1-3, to=2-4]
	\arrow["{v^*\psi}"{description, pos=0.3}, from=1-3, to=3-3]
	\arrow["\lrcorner"{anchor=center, pos=0.125}, draw=none, from=1-3, to=4-4]
	\arrow["j"{description, pos=0.4}, from=2-2, to=2-4, crossing over]
	\arrow["\psi"{description, pos=0.3}, from=2-4, to=4-4]
	\arrow["{f'}"{description, pos=0.7}, from=3-1, to=3-3]
	\arrow["u"{description}, from=3-1, to=4-2]
	\arrow["v"{description}, from=3-3, to=4-4]
	\arrow["f"{description, pos=0.6}, from=4-2, to=4-4]
  \arrow["\phi"{description, pos=0.3}, from=2-2, to=4-2, crossing over]
  \arrow["\lrcorner"{anchor=center, pos=0.125}, draw=none, from=3-1, to=4-4]
\end{tikzcd}\]
By the pullback lemma, the top square is a pullback, too. Since $v$ is smooth and $j$ is initial, $j''$ is initial. But $j'' = r \circ j'$ so $r$ is also initial due to cancelation.
\end{proof}

One can also prove the analogous \emph{proper} base change formula as in \cite[Theorem 8.4.6]{cisinski:2024}.

\section{Conclusions and future work}
\label{sec:conc}

We have introduced and studied the impact of the $\infty$-categorical Yoneda embedding in
\STT{}. This includes the development of classical concepts (Kan extensions, adjoints, (co)limits,
\etc{}), all in the synthetic $\infty$-categorical setting. While some of the basic theory had been
investigated in \STT{} already, we were able to produce the first non-trivial concrete examples of,
\eg{}, adjunctions (Theorem~\ref{thm:adj:precomp-is-right-adj}) and give several more refined versions of
existing theorems (Theorem~\ref{thm:tw:nat-iso}) which more closely match their standard counterparts.

\subsection{Related work}

There are several closely related type-theoretic approaches to synthetic ($\infty$-)category
theory. We may roughly divide these into (1) directed type theory, where every type is a category
but various operations ($\Prod{}$) must be restricted, and (2) variations on simplicial type
theory. For instance, many directed type theories have been proposed and studied over the years~%
\parencite{%
  licata:2011,%
  warren:2013,%
  nuyts:2015,%
  kavvos:directed:2019,%
  buchholtz:2019,%
  kolomatskaia:2023,%
  weaver:2020,%
  weaver:phd,%
  ahrens:2023,%
  north:2018,%
  nuyts:2020,%
  neumann:2024,%
  }.
In general, while these type theories are a promising approach to formalize category theory in type
theory, none of them have thus far received as much attention as \STT{} and, consequently, none have
developed category theory to the extent of this work. Furthermore, it is substantially harder to
design a directed type theory in this style (as it is a more radical alteration of the basic rules
of type theory) and most proposals handle only $1$-category theory rather than
$\prn{\infty,1}$-categories. We note, however, that some of these type theories do include a version
of Theorem~\ref{thm:tw:strong-yoneda} in the form of directed path
induction~\parencite{north:2018,nuyts:2020,neumann:2024}. Given, however, that few of our arguments
rely on types which are not categories, we expect many of them to transfer to sufficiently rich
future variants of directed type theory.

Other variations of simplicial type theory have been considered in the literature. For instance,
several papers use additional judgmental structure (extension types) to get more definitional
equalities around
hom-types~\parencite{riehl:2017,bardomiano:2021,weinberger:phd,buchholtz:2023,weinberger:twosided:2024,weinberger:sums:2024}
at the cost of making the interval a second-class type similar to two-level type
theory~\parencite{annenkov:2023,voevodsky:2013}. Other versions have favored a cubical
interval~\parencite{gratzer:2024} or even a cubical interval atop a cubical version of
\HoTT{}~\parencite{weaver:2020,weaver:phd}. Aside from the addition of modalities, our version of
\STT{} is deliberately minimalistic: we use only ordinary \HoTT{} with a handful of
postulates. Accordingly, our results can be interpret into essentially any incarnation of modal
\STT{} and does not rely on extra definitional equalities.

Finally, there are many attempts to formulate more conceptual and synthetic foundations for
$\infty$-category theory which do not rely on type theory. For instance, the $\infty$-cosmos program
of \textcite{riehl:2022} aims to give a systematic account of the formal category theory and
model-independence using 2-category theory. On the other hand, most practitioners in the field
attempt to give looser ``model independent'' arguments which avoid relying on explicit computations
as much as possible. We have successfully translated some of these arguments into our framework,
proving that this informal discipline is effective (\eg{}, Section~\ref{sec:kan}). More recently,
\textcite{cisinski:2024} have begun to redevelop $\infty$-category theory in a deliberately informal
and high-level language, splitting the difference between a formal theory like \STT{} and the usual
``model-independent'' discipline of practitioners. We expect that their arguments can be translated
into \STT{} and we have shown that some of their primitive axioms are \emph{provable}
in \STT{} (\eg{},
Theorems~\ref{thm:prelims:space} and \ref{thm:prelims:full-subcat} and Lemma~\ref{lem:adj:functoriality-of-init-objects}).

\subsection{Future work}

Many promising avenues for future work remain to be explored. While we have focused on presheaf
categories and immediate consequences of their theory, we plan to port other foundational results
from category theory (presentable and accessible categories, Bousfield localizations, topos theory,
\etc{}) into \STT{}.  It would also be desirable to adapt more parts of the internal
$\infty$-category theory and $\infty$-topos theory of Martini and
Wolf~\parencite{martini:cocartesianfibrations:2022,martini:colimitscocompletions:2021,martini:phd,martini:presentable:2022,martini:propermorphisms:2023,martini:yonedaslemma:2021,wolf:phd}
to \STT{}.
Additionally, we hope to extend a proof assistant like Agda~\parencite{agda}
with the necessary support for modalities to give machine-checked versions of the proofs in this
paper. On the foundational side, \STT{} presently relies on a handful of axioms
(Appendix~\ref{sec:axioms}) and therefore satisfies only normalization and not canonicity. In future work,
we hope to examine which of these principles can be given computational interpretations and to what
extent one can `compute' with synthetic $\infty$-categories.

\appendix
\section{The formal rules of \texorpdfstring{\MTT{}}{MTT}}
\label{sec:mtt-rules}

The formal syntax of \MTT{} is comprised of four judgments: $\IsCx{\Gamma}$,
$\IsSb{\delta}{\Delta}$, $\IsTm{a}{A}$, and $\IsTy{A}$. We list the relevant novel rules for these
judgments below:%
\def\MathparLineskip{\lineskip=11pt}
\begin{mathparpagebreakable}
  \JdgFrame{\IsCx{\Gamma}}
  \\
  \inferrule{ }{
    \IsCx{\EmpCx}
  }
  \and
  \inferrule{
    \IsCx{\Gamma}
  }{
    \IsCx{\LockCx{\Gamma}}
  }
  \and
  \inferrule{
    \IsCx{\Gamma}
    \\
    \IsTy[\LockCx{\Gamma}]{A}
  }{
    \IsCx{\MECx{\Gamma}{A}}
  }
  \\
  \inferrule{
    \IsCx{\Gamma}
  }{
    \EqCx{\LockCx{\Gamma}[\ArrId{}]}{\Gamma}
    \\
    \EqCx{\LockCx{\LockCx{\Gamma}[\mu]}[\nu]}{\LockCx{\Gamma}[\mu \circ \nu]}
  }
  \\
  \JdgFrame{\IsSb{\delta}{\Delta}}
  \\
  \inferrule{ }{
    \IsSb{\EmpSb}{\EmpCx}
  }
  \and
  \inferrule{
    \IsCx{\Gamma}
    \\
    \IsTy[\LockCx{\Gamma}]{A}
  }{
    \IsSb[\MECx{\Gamma}{A}]{\Wk}{\Gamma}
  }
  \and
  \inferrule{
    \IsSb{\delta}{\Delta}
    \\
    \IsTm[\LockCx{\Gamma}]{a}{A}
  }{
    \IsSb{\ESb{\delta}{a}}{\MECx{\Delta}{A}}
  }
  \and
  \inferrule{
    \IsSb{\delta}{\Delta}
  }{
    \IsSb[\LockCx{\Gamma}]{\LockSb{\delta}}{\LockCx{\Delta}}
  }
  \and
  \inferrule{
    \IsCx{\Gamma}
    \\
    \Mor[\alpha]{\mu}{\nu}
  }{
    \IsSb[\LockCx{\Gamma}[\nu]]{\KeySb{\alpha}{\Gamma}}{\LockCx{\Gamma}[\mu]}
  }
  \\
  \inferrule{
    \IsSb{\gamma}{\EmpCx}
  }{
    \EqSb{\EmpSb}{\gamma}{\EmpCx}
  }
  \and
  \inferrule{
    \IsSb{\delta}{\Delta}
    \\
    \IsTm[\LockCx{\Gamma}]{a}{A}
  }{
    \EqSb{\Wk \circ \prn{\ESb{\delta}{a}}}{\delta}{\Delta}
  }
  \and
  \inferrule{
    \IsSb{\delta}{\MECx{\Delta}{A}}
  }{
    \EqSb{\ESb{\prn{\Wk \circ \delta}}{\Sb{\Var}{\delta}}}{\delta}{\MECx{\Delta}{A}}
  }
  \and
  \inferrule{
    \IsSb{\delta}{\Delta}
  }{
    \EqSb{\LockSb{\delta}[\ArrId{}]}{\delta}{\Delta}
    \\
    \EqSb[
      \LockCx{\Gamma}[\nu\circ\mu]
    ]{
      \LockSb{\delta}[\nu\circ\mu]
    }{
      \LockSb{\LockSb{\delta}[\nu]}
    }{
      \LockCx{\Delta}[\nu\circ\mu]
    }
  }
  \and
  \inferrule{
    \IsCx{\Gamma}
  }{
    \EqSb[\LockCx{\Gamma}]{\KeySb{\ArrId{}}{\Gamma}}{\ArrId{}}{\LockCx{\Gamma}}
    \\
    \EqSb[
      \LockCx{\Gamma}[\xi]
    ]{
      \KeySb{\alpha}{\Gamma} \circ \KeySb{\beta}{\Gamma}
    }{
      \KeySb{\alpha \circ \beta}{\Gamma}
    }{
      \LockCx{\Gamma}[\mu]
    }
  }
  \and
  \inferrule{
    \IsSb{\delta}{\Delta}
    \\
    \mu \le \nu
  }{
    \EqSb[
      \LockCx{\Gamma}[\nu]
    ]{
      \KeySb{\alpha}{\Delta} \circ \LockCx{\delta}
    }{
      \LockSb{\delta} \circ \KeySb{\alpha}{\Gamma}
    }{
      \LockCx{\Delta}[\mu]
    }
  }
  \and
  \inferrule{
    \IsCx{\Gamma}
    \\
    \Mor[\alpha]{\mu_0}{\nu_0}
    \\
    \Mor[\beta]{\mu_1}{\nu_1}
  }{
    \LockCx{\Gamma}[\nu_1 \circ \nu_0] \vdash
    \substack{
      \LockCx{\KeySb{\beta}{\Gamma}}[\mu_0]
      \circ
      \KeySb{\alpha}{\Gamma}
      \\
      =
      \KeySb{\beta \HComp \alpha}{\Gamma}
    }
    : \LockCx{\Gamma}[\mu_1 \circ \mu_0]
  }
  \\
  \JdgFrame{\IsTy{A}}
  \\
  \inferrule{
    \IsTy[\LockCx{\Gamma}]{A}
  }{
    \IsTy{\Modify{A}}
  }
  \and
  \inferrule{
    \IsSb{\delta}{\Delta}
    \\
    \IsTy[\LockCx{\Delta}]{A}
  }{
    \EqTy{\Sb{\Modify{A}}{\delta}}{\Modify{\Sb{A}{\LockSb{\delta}}}}
  }
  \\
  \JdgFrame{\IsTm{a}{A}}
  \\
  \inferrule{
    \IsTy[\LockCx{\Gamma}]{A}
  }{
    \IsTm[\LockCx{\MECx{\Gamma}{A}}]{\Var}{\Sb{A}{\LockSb{\Wk}}}
  }
  \and
  \inferrule{
    \IsTm[\LockCx{\Gamma}]{a}{A}
  }{
    \IsTm{\MkMod{a}}{\Modify{A}}
  }
  \and
  \inferrule{
    \IsTy[\MECx{\Gamma}[\nu]{\Modify{A}}]{B}
    \\
    \IsTm[\MECx{\Gamma}[\nu\circ\mu]{A}]{b}{\Sb{B}{\ESb{\Wk}{\MkMod[\mu]{\Var}}}}
    \\
    \IsTm[\LockCx{\Gamma}[\nu]]{a}{\Modify{A}}
  }{
    \IsTm{\LetMod{a}{b}}{\Sb{B}{\ESb{\ISb}{a}}}
  }
  \\
  \inferrule{
    \IsTy[\MECx{\Delta}[\nu]{\Modify{A}}]{B}
    \\
    \IsTm[\MECx{\Delta}[\nu\circ\mu]{A}]{b}{\Sb{B}{\ESb{\Wk}{\MkMod[\mu]{\Var}}}}
    \\
    \IsTm[\LockCx{\Delta}[\nu]]{a}{\Modify{A}}
    \\
    \IsSb{\delta}{\Delta}
  }{
    \IsTm{
      \substack{
        \LetMod{\Sb{a}{\LockSb{\delta}[\nu]}}{\Sb{b}{\ESb{\prn{\delta \circ \Wk}}{\Var}}}
        \\
        {}={}
        \Sb{\prn{\LetMod{a}{b}}}{\delta}
      }
    }{
      \Sb{B}{\ESb{\delta}{a}}
    }
  }
  \and
  \inferrule{
    \IsTm[\LockCx{\Gamma}]{a}{A}
    \\
    \IsSb{\delta}{\Delta}
  }{
    \EqTm{\Sb{\MkMod{a}}{\delta}}{\MkMod{\Sb{a}{\LockSb{\delta}}}}{\Modify{\Sb{A}{\LockSb{\delta}}}}
  }
  \and
  \inferrule{
    \IsSb{\delta}{\Delta}
    \\
    \IsTm[\LockCx{\Gamma}]{a}{\Sb{A}{\LockSb{\delta}}}
    \\
    \IsTy[\LockCx{\Delta}]{A}
  }{
    \EqTm[\LockCx{\Gamma}]{\Sb{\Var}{\LockCx{\ESb{\delta}{a}}}}{a}{\Sb{A}{\LockSb{\delta}}}
  }
  \and
  \hspace*{-5pt}\inferrule{
    \IsTy[\MECx{\Gamma}[\nu]{\Modify{A}}]{B}
    \\
    \IsTm[\MECx{\Gamma}[\nu\circ\mu]{A}]{b}{\Sb{B}{\ESb{\Wk}{\MkMod[\mu]{\Var}}}}
    \\
    \IsTm[\LockCx{\Gamma}[\nu]]{a}{\Modify{A}}
  }{
    \EqTm{\prn{\LetMod{\MkMod{a}}{b}}}{\Sb{b}{\ESb{\ISb}{a}}}{\Sb{B}{\ESb{\ISb}{\MkMod{a}}}}
  }
\end{mathparpagebreakable}

\section{The complete list of axioms}
\label{sec:axioms}

\intaxiom*
\crispidaxiom*
\intopaxiom*
\intdiscaxiom*
\intglobalaxiom*
\cubessepaxiom*
\twarraxiom*

The following \emph{duality axiom} was first studied by \textcite{blechschmidt:2023} and implies
that, \eg{}, $\Int$ is a category. Closely related axioms and consequences are considered by
\textcite{pugh:2025} and \textcite{cherubini:2023}. We did not introduce it in the main body of the
paper as it was not explicitly invoked in any of our proofs.
\begin{axiom}
  If $A$ is a finitely presented $\Int$-algebra \textup(\ie{}, $A$ is a bounded distributive lattice
  equivalent to $\Int[x_1,\ldots,x_n]$ quotiented by finitely many relations\textup) and
  $\Hom[\Int\mathrm{Alg}]{A}{\Int}$ is the type of $\Int$-algebra homomorphisms, then the map
  $\lambda a\,f.\,f\prn{a} : A \to \prn{\Hom[\Int\mathrm{Alg}]{A}{\Int} \to \Int}$ is an equivalence.
\end{axiom}

\printbibliography

\end{document}